\tikzstyle{block} = [rectangle,draw,text width=10em,text centered,rounded corners,minimum height=4em]
\tikzstyle{line} = [draw, -latex']
\theoremstyle{plain}
\newtheorem{defn}{Definition}
\newtheorem{lem}{Lemma}
\newtheorem{thm}{Theorem}
\newtheorem*{thm*}{Theorem}
\newtheorem{cor}{Corollary}
\newtheorem*{cor*}{Corollary}
\newtheorem{rmk}{Remark}
\newtheorem{prop}{Proposition}
\newtheorem*{prop*}{Proposition}
\numberwithin{equation}{section}
\numberwithin{defn}{section}
\numberwithin{lem}{section}
\numberwithin{thm}{section}
\numberwithin{conj}{section}
\numberwithin{cor}{section}
\numberwithin{rmk}{section}
\numberwithin{prop}{section}
\numberwithin{ex}{section}
\DeclareFontFamily{OMX}{MnSymbolE}{}
\DeclareSymbolFont{MnLargeSymbols}{OMX}{MnSymbolE}{m}{n}
\DeclareFontShape{OMX}{MnSymbolE}{m}{n}{
    <-6>  MnSymbolE5
   <6-7>  MnSymbolE6
   <7-8>  MnSymbolE7
   <8-9>  MnSymbolE8
   <9-10> MnSymbolE9
  <10-12> MnSymbolE10
  <12->   MnSymbolE12
}{}
\DeclareFontShape{OMX}{MnSymbolE}{b}{n}{
    <-6>  MnSymbolE-Bold5
   <6-7>  MnSymbolE-Bold6
   <7-8>  MnSymbolE-Bold7
   <8-9>  MnSymbolE-Bold8
   <9-10> MnSymbolE-Bold9
  <10-12> MnSymbolE-Bold10
  <12->   MnSymbolE-Bold12
}{}
\let\llangle\@undefined
\let\rrangle\@undefined
\DeclareMathDelimiter{\llangle}{\mathopen}%
                     {MnLargeSymbols}{'164}{MnLargeSymbols}{'164}
\DeclareMathDelimiter{\rrangle}{\mathclose}%
                     {MnLargeSymbols}{'171}{MnLargeSymbols}{'171}
\def\bbZ{\mathbb{Z}}
\def\bbC{\mathbb{C}}
\def\leq{\leqslant}
\def\geq{\geqslant}
\def\id{\text{id}}
\def\dd{\textup{d}}
\definecolor{lgray}{cmyk}{0,0,0,0.2}
\definecolor{yellow}{rgb}{1.0,0.85,0.25}
\newcommand{\ba}{\[\begin{aligned}~}
\newcommand{\ea}{\end{aligned}\]}
\newcommand{\Z}{\mathbb{Z}}
\newcommand{\N}{\mathbb{N}}
\newcommand{\ii}{\mathrm{i}}
\newcommand{\inp}{p^{(0)}}
\newcommand*{\Scale}[2][4]{\scalebox{#1}{$#2$}}
\newcommand{\Pcross}{\mathbb{P}_{s_1,s_2}(\mu;t)}
\newcommand{\PB}{\mathbb{P}_{s_1,s_2}^{\textrm{B}}(t)}
\newcommand{\PS}{\mathbb{P}_{s_1,s_2}^{\textrm{S}}(t)}
\renewcommand{\tikz}[2]{
\begin{tikzpicture}[scale=#1,baseline=(current bounding box.center),>=stealth]
#2
\end{tikzpicture}}
\def\I{\bm{I}}
\def\K{\bm{K}}
\newcommand{\Is}[2]{\I_{[#1,#2]}}
\g@addto@macro{\endabstract}{\@setabstract}
\newcommand{\authorfootnotes}{\renewcommand\thefootnote{\@fnsymbol\c@footnote}}%
\begin{document}
		\begin{center}
			\LARGE 
			Transition probability and total crossing events in the multi-species asymmetric exclusion process \par \bigskip
			
			\normalsize
			\authorfootnotes
			Jan de Gier \footnote{\href{mailto:jdgier@unimelb.edu.au}{jdgier@unimelb.edu.au}}, William Mead\footnote{\href{mailto:wmead@student.unimelb.edu.au}{wmead@student.unimelb.edu.au}} and
			Michael Wheeler \footnote{\href{mailto:wheelerm@unimelb.edu.au}{wheelerm@unimelb.edu.au}} \par \bigskip
			
			ARC Centre of Excellence for Mathematical and Statistical Frontiers (ACEMS), School of Mathematics and Statistics, University of Melbourne, Victoria 3010, Australia\par \bigskip
			
			June 1, 2023
		\end{center}
	
	\setcounter{footnote}{0}

\begin{abstract}
We present explicit formulas for total crossing events in the multi-species asymmetric exclusion process ($r$-ASEP) with underlying $U_q(\widehat{\mathfrak{sl}}_{r+1})$ symmetry. In the case of the two-species TASEP these can be derived using an explicit expression for the general transition probability on $\mathbb{Z}$ in terms of a multiple contour integral derived from a nested Bethe ansatz approach. For the general $r$-ASEP we employ a vertex model approach within which the probability of total crossing can be derived from partial symmetrization of an explicit high rank rainbow partition function. In the case of $r$-TASEP, the total crossing probability can be show to reduce to a multiple integral over the product of $r$ determinants. For $2$-TASEP we additionally derive convenient formulas for cumulative total crossing probabilities using Bernoulli-step initial conditions for particles of type 2 and type 1 respectively.
\end{abstract}

\section{Introduction}
\label{se:introduction}
The asymmetric simple exclusion processes (ASEP) was originally introduced as a biophysical model for protein synthesis on RNA \cite{Macdonald68,Macdonald69}. As a many-body system where particles perform asymmetric random walks with lattice site exclusion, it exhibits interesting phase behaviour and collective phenomena related to the Kardar-Parisi-Zhang (KPZ) universality class \cite{KPZ1986,BS1995}. The ASEP is considered to be one of the most fundamental processes in the theory of stochastic interacting particle systems \cite{Liggett1985,Liggett1999,Spohn1991}. 

The ASEP is an integrable model \cite{Golinelli_2006} and closely related to the solvable six-vertex lattice model. Bethe ansatz studies of the spectral gap of the asymmetric simple exclusion process (ASEP) on finite geometries with periodic and open boundary conditions have confirmed the KPZ dynamical exponent \cite{PhysRevA.46.844,PhysRevLett.68.725,PhysRevE.52.3512,Gier2005,Gier2006,Gier2008}. The integrability of the ASEP also allows one to solve the master equation exactly and write down a closed form expression for its time-dependent transition probability, or Green's function \cite{schutz_exact_1997,tracy2008integral}. Many recent studies have utilized the integrability to study non-Gaussian fluctuations in the ASEP. In 2000, Johansson showed that the current distribution of the totally asymmetric simple exclusion process (TASEP) on $\Z$ with the step initial condition is given by the GUE Tracy-Widom distribution in the long time limit \cite{J2000}, and that the fluctuation exponent is equal to 1/3, characteristic of the one-dimensional KPZ class. In two dimensions the ASEP has a $\log(t)^{2/3}$ law \cite{Y2004}. There have been a number of generalisations of limit laws for various types of asymmetric exclusion processes, including the ASEP under  several different initial conditions see e.g. \cite{BaikRains00,PS2002,NS2004,Sasamoto2005,TW2009a,Sasamoto2010a,Sasamoto2010b,Sasamoto2010c,bcs,Amir2010,CalabreseDoussal,BC2014}.

Most of these results on limiting distributions have been established based on explicit multiple integral expressions related to random matrix theory \cite{J2000,NS2004,BC2014,bcs}, from which asymptotics can be derived, for example using steepest decent analysis of Fredholm determinant expressions. Almost all models for which limiting distributions have been studied so far have turned out to be a special or a limiting case of a stochastic higher spin six vertex model \cite{MANGAZEEV201470,povolotsky2013,Corwin_2015,BP2016}, whose similarity transformed non-stochastic version has long been known to be (Yang-Baxter) integrable \cite{Kulish1981YangBaxterEA,Kirillov_1987}. 

\subsection{Multi-species exclusion processes}
Much less is known rigorously for multi-species asymmetric exclusion processes. These multi-species models were introduced a long time ago, see e.g. \cite{ferrari_microscopic_1991,Alcaraz_Stochastic_1998,PhysRevE.59.205,Mallick_1999,Derrida_exactsolution,Schutz_Critical_2003}, and several other multi-species stochastic processes have been proposed recently in \cite{kuniba2016multispecies}. In this paper we study asymmetric exclusion processes based on $U_q(\widehat{\mathfrak{sl}}_{r+1})$, which have been most studied. In particular we consider the $r$-ASEP, which has $r$ distinguishable particle species which all evolve in isolation like the ASEP. However, when these different species are adjacent they may interchange positions with rate dictated by the higher species. Their stationary measures were given in \cite{ferrari2007,PEM2009} under periodic boundary conditions and in \cite{Belitsky_Self_2018} under reflecting conditions. While in \cite{Cantini_2015} these were put in the context of Macdonald polynomial theory, the Knizhnik-Zamolodchikov equation. The $n$-TASEP was shown to be related to the combinatorial $R$-matrix and solutions of the tetrahedron equation in \cite{kuniba2016multispecies,Kuniba_2016}. Early work on dynamical properties of multi-species models included dynamic matrix product ansatz analysis \cite{Popkov_2002}, a nested algebraic Bethe ansatz derivation of the Bethe equations \cite{Cantini_Algebraic_2008}, as well as results on Bethe ansatz transition and non-crossing probabilities \cite{Chatterjee_2010}.

In \cite{ChenGW,Kuan_2018,Kuan_2021} methods have been developed to construct multi-species duality functionals, and many other algebraic properties and connections to non-symmetric Macdonald polynomials and partition functions are discussed in \cite{borodin_coloured_2018}. Dualities for the multi-species ASEP are also derived in \cite{Belitsky_Self_2018}. More recently the transition probability and limit laws have been discussed in two-species models with a small number of particles of a second type \cite{tracy2013asymmetric,kuan2019probability,lee2018exact,FNG2019,Nejjar2019}. The fluctuation exponent of $n$-ASEP was addressed in finite size scaling of the gap of its generator \cite{arita2009spectrum}, and the Bethe ansatz for the 2-ASEP with open boundaries was considered in \cite{Zhang_2019}. To our knowledge, full explicit limiting distributions for multi-species models have only been derived recently for the two-species Arndt-Heinzl-Rittenberg (AHR) model \cite{chen_exact_2018,chen_exact_2019,chen_exact_2021}, or in cases where there is a relation to a single species model such as shift colour-position symmetry \cite{borodin2019colorposition,borodin2020shiftinvariance} and the coalescing random walk \cite{kuan_determinantal_2020}. 

In this paper we give explicit multiple integral expressions for the transition probability of the two-species asymmetric exclusion process with arbitrary number of particles of type 1 and type 2, which has an underlying $U_q(\widehat{\mathfrak{sl}}_3)$ symmetry, as well as a representation of the general $r$-ASEP transition probability as a partition function. We furthermore give multiple integral expressions with factorized kernels for certain total crossing probabilities. Our approache is two-fold. We first solve the master equation for the $2$-TASEP using the nested Bethe ansatz, which is a traditional approach to analyse higher rank solvable lattice models \cite{YY1966}. A second approach makes use of a general spin high rank solution of the Yang-Baxter equation \cite{Bosnjak_2016,Garbali2017,borodin_coloured_2018} from which it is possible to express the transition probability for the $n$-ASEP as a partition function, in the spirit of the general Macdonald stochastic processes setup \cite{BC2014}.

\subsection{Model outline}
We consider here a fixed number of two distinguishable species of particles on the infinite one-dimensional lattice $\bbZ$. We have $n$ particles in total, $m$ of which are type 2. This leaves $n-m$ particles of type 1. We dictate that these particle species evolve in continuous time with simple Poisson rates which define the two-species totally asymmetric simple exclusion process (2-TASEP) as considered in \cite{ferrari_microscopic_1991},
\begin{align*}
	(1,0) &\mapsto (0,1) \text{ at rate 1},\\
	(2,0) &\mapsto (0,2) \text{ at rate 1},\\
	(2,1) &\mapsto (1,2) \text{ at rate 1}.
\end{align*}

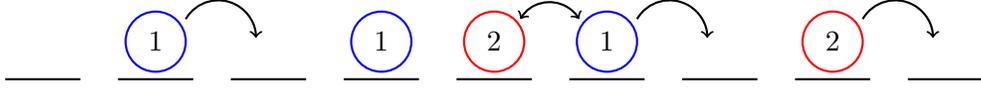
\begin{figure}
	\begin{center}
		\begin{tikzpicture}
		\draw[thick] (0,0) -- (1,0);
		\draw[thick] (1.5,0) -- (2.5,0);
		\draw[thick] (3,0) -- (4,0);
		\draw[blue,thick] (2,0.5) circle (0.4cm);
		\node (1) at (2,0.5) {$1$};
		\draw[thick,->] (2.4,0.8) arc (150:0:0.5);
		
		\draw[thick] (4.5,0) -- (5.5,0);
		\draw[thick] (6,0) -- (7,0);
		\draw[thick] (7.5,0) -- (8.5,0);
		\draw[thick] (9,0) -- (10,0);
		\draw[blue,thick] (5,0.5) circle (0.4cm);
		\node (1) at (5,0.5) {$1$};
		\draw[red,thick] (6.5,0.5) circle (0.4cm);
		\node (2) at (6.5,0.5) {$2$};
		\draw[thick,<->] (6.85,0.8) arc (150:30:0.45);	
		\draw[blue,thick] (8,0.5) circle (0.4cm); 
		\node (1) at (8,0.5) {$1$};
		\draw[thick,->] (8.4,0.8) arc (150:0:0.5);
		
		\draw[thick] (10.5,0) -- (11.5,0);
		\draw[thick] (12,0) -- (13,0);	
		\draw[red,thick] (11,0.5) circle (0.4cm);
		\node (2) at (11,0.5) {$2$};	
		\draw[thick,->] (11.4,0.8) arc (150:0:0.5); 	 	 		 		
		\end{tikzpicture}
	\end{center}
	\caption{2-TASEP on a small lattice showing the possible evolution of the current state.}
	\label{2TASEP rules fig}
\end{figure}
This enforces that only type 2 particles may overtake type 1 particles but not vice versa as demonstrated in Figure \ref{2TASEP rules fig}. Note that if we could not distinguish between the two species of particle, then we would simply find the TASEP evolution. Let $\mathbb{P}(\mathcal{C};t)$ be the probability to find the 2-TASEP in configuration $\mathcal{C}$ at time $t$. The rate of change of $\mathbb{P}(\mathcal{C};t)$ is prescribed by balancing the gain and loss terms in and out of configuration $\mathcal{C}$ through the Markov chain master equation
\begin{equation}
\label{eq:Markovgeneral}
\frac{\dd}{\dd t} \mathbb{P}(\mathcal{C};t) = \sum_{\mathcal{C}'\neq \mathcal{C}} W(\mathcal{C}' \to \mathcal{C}) \mathbb{P}(\mathcal{C}';t) - \sum_{\mathcal{C}'\neq \mathcal{C}} W(\mathcal{C} \to \mathcal{C}') \mathbb{P}(\mathcal{C};t),
\end{equation}
where $W(\mathcal{C}' \to \mathcal{C})$ is the rate of an event from configuration $\mathcal{C}'$ into $\mathcal{C}$.

We now describe this time evolution in more detail. Let us denote the coordinates of all particles by $\nu = \{\nu_1, \dots, \nu_n\} \in \bbZ^n$ with $\nu_1 < \nu_2 < \ldots < \nu_n$, and the indices of type 2 particles by  $p = \{p_1, \dots, p_m\} \in \{1,\dots,n\}^m$ with $p_1 < p_2 < \ldots < p_m$. In other words, a particle at position $\nu_i$ is of type 2 if $i \in p$ and of type 1 if $i \notin p$. We furthermore denote by $\Psi(\nu,p;t)$ an un-normalized general solution of \eqref{eq:Markovgeneral}, where $(\nu,p)$ is a 2-TASEP in configuration at time $t$. 

When all particles are far enough apart, i.e $\nu_{i+1}-\nu_i \ge 2$ for $i=1,\ldots,n-1$, the time evolution of the 2-TASEP is given by the following free evolution equation,
\begin{equation}
\label{2TASEP master equation}
	\frac{\dd}{\dd t} \Psi(\nu,p ;t) = \sum_{i=1}^n \Psi(\{\nu_1,\dots,\nu_i-1, \dots, \nu_n\},p;t)  - n\Psi(\nu,p;t).
\end{equation}
This free evolution is of the same form as that of the single-species TASEP \cite{schutz_exact_1997}, and does not take exclusion into account, nor the possibility that type 2 particles may overtake those of type 1. These interactions can be accounted for by extending the validity of \eqref{2TASEP master equation} to include cases where $\nu_{i+1}-\nu_i=1$ and enforce the following boundary conditions for $1\leq i \leq n-1$,
\begin{enumerate}
	\item If $i \in p$ and $i+1 \notin p$ then
	\begin{equation}
	\label{2TASEP BC1}
		\Psi(\{\nu_1,\dots,\nu_i,\nu_{i+1}=\nu_i,\dots,\nu_n\},p;t)= 0.
	\end{equation}
	
	\item If $i \notin p$ and $i+1 \in p$ then
	\begin{multline}
		\label{2TASEP BC2}
			\Psi(\{\nu_1,\dots,\nu_i,\nu_{i+1}=\nu_i,\dots,\nu_n\},p;t) = \Psi(\{\nu_1,\dots,\nu_i,\nu_{i+1}=\nu_i+1,\dots,\nu_n\},p+\{i\} - \{i+1\} ;t)\\ 
		+ \Psi(\{\nu_1,\dots,\nu_i,\nu_{i+1}=\nu_i+1,\dots,\nu_n\},p;t).
	\end{multline}
	
	\item Otherwise
	\begin{equation}
	\label{2TASEP BC3}
		\Psi(\{\nu_1,\dots,\nu_i,\nu_{i+1}=\nu_i,\dots,\nu_n\},p;t) = \Psi(\{\nu_1,\dots,\nu_i,\nu_{i+1}=\nu_i+1,\dots,\nu_n\},p;t).
	\end{equation}
\end{enumerate}
The un-normalized probability $\Psi(\nu,p;t)$ then forms the general solution to the 2-TASEP master equation.

To interpret the boundary conditions we consider the two-particle system with $\Psi(\{\nu_1,\nu_1+1\},\{1\};t)$; the probability to observe the configuration with two neighbouring particles, one of type 2 at $\nu_1 \in \Z$ and another of type 1 at position $\nu_1+1$. The imposition of the first boundary condition \eqref{2TASEP BC1} corresponds to setting $\Psi(\{\nu_1,\nu_1\},\{1\};t) = 0$ so that equation \eqref{2TASEP master equation} becomes
\begin{equation*}
		\frac{\dd}{\dd t} \Psi\left(\{\nu_1,\nu_1+1\},\{1\};t\right) = \Psi(\{\nu_1-1,\nu_1+1\},\{1\};t) - 2 \Psi(\{\nu_1,\nu_1+1\},\{1\};t).
\end{equation*}
This says that the system may only evolve into the configuration $\nu = \{\nu_1,\nu_1+1\}, p = \{1\}$ by the type 2 particle stepping from $\nu_1-1$ to $\nu_1$, and that there are two ways to evolve out of configuration $\nu$ at rate 1. These are the type 1 particle stepping $\nu_1+1 \mapsto \nu_1+2$ and the two particles interchanging positions.

To examine the second boundary condition we consider $\Psi(\{\nu_1,\nu_1+1\},\{2\};t)$; the same as above but with the particle types swapped. The imposition of boundary condition \eqref{2TASEP BC2} leads the the master equation becoming
\begin{equation*}
		\frac{\dd}{\dd t} \Psi(\{\nu_1,\nu_1+1\},\{2\};t) = \Psi(\{\nu_1-1,\nu_1+1\},\{2\};t) +\Psi(\{\nu_1,\nu_1+1\},\{1\};t) - \Psi(\{\nu_1,\nu_1+1\},\{2\};t),
\end{equation*}
which dictates that there are two ways to evolve into the configuration $\nu$ and only one way to evolve out of it.

The third boundary condition \eqref{2TASEP BC3} is the interaction between two particles of the same type and is identical to the TASEP boundary condition which enforces the general exclusion property between two indistinguishable particles. 

The $r$-ASEP is Yang-Baxter integrable and its generator is constructed from the R-matrix of an underlying $U_q(\widehat{\mathfrak{sl}}_{r+1})$ algebra \cite{AlcarazBariev00,ALCARAZ1993377,PerkSchultz}. It is a consequence of the integrability of $r$-ASEP that all multi-particle interactions factorize into two-particle interactions. The simultaneous imposition of the two-particle boundary conditions therefore enforces the general true time evolution of the 2-TASEP.

\subsection{Main results}
In this section we summarize our main results. The first key result is an explicit expression for the transition probability of the 2-TASEP for arbitrary initial conditions, which we prove in Section~\ref{2TASEP integral formula section}  using the nested Bethe ansatz.  
\begin{thm}[Theorem~\ref{2TASEP Green's function Theorem} in the main text]
	The transition probability for the 2-TASEP with $n$ particles of which $m$ are type 2, initially at positions $\mu$ and in order $p^{(0)}$, and transitioning to positions $\nu$ and order $p$ is given by the transition probability
	\begin{multline}
		\label{2TASEP Greens intro}
		\mathbb{P}(\mu\to\nu,\inp\to p;t) = \\ \oint \prod_{j=1}^m \frac{\dd u_j}{2 \pi \ii}  \prod_{i=1}^n \frac{\dd z_i}{2 \pi \ii}\ P(\nu,p;t)\
		\prod_{i=1}^n z_i^{-\mu_i -1} (1-z_i)^m\prod_{i=1}^m\left[\prod_{j=1}^{\inp_i} \frac{1}{u_i - z_j}\prod_{j=\inp_i+1}^n\frac{1}{1 - z_j}\right],
	\end{multline}
	where $P(\nu,p;t)$ is an eigenfunction of the 2-TASEP generator satisfying the free evolution \eqref{2TASEP master equation} and three boundary conditions, given by 
	\begin{multline}
		\label{2TASEP_P intro}
		P(\nu,p;t) = \sum_{\pi \in S_n} (-1)^{\abs{\pi}} \prod_{i=1}^n \left(\frac{1-z_i}{1-z_{\pi_i}}\right)^i z_{\pi_{i}}^{\nu_{i}} e^{(z^{-1}_i-1)t} \prod_{i=1}^m \prod_{j=1}^{p_i} \frac{1}{1-z_{\pi_j}}
		\\ \times\sum_{\sigma \in S_m} (-1)^{\abs{\sigma}} \prod_{i=1}^m \left(\frac{1-u_i}{1-u_{\sigma_i}} \right)^i  \prod_{i=1}^m \prod_{j=1}^{p_i-1} (u_{\sigma_i} - z_{\pi_j}).
	\end{multline}
	The $z$-contours in the integrals surround the origin but do not enclose the finite poles at $z_i = 1, u_j$ for all $i,j$. The $u$-contours enclose the $z$-contours and residues are taken at the origin after evaluation of the $z$ integration. The transition probability \eqref{2TASEP Greens intro} satisfies the initial condition
	\begin{equation*}
		\mathbb{P}(\mu\to\nu,\inp\to p;0) = \prod_{i=1}^{n} \delta_{\nu_i,\mu_i} \prod_{i=1}^m \delta_{p_i,\inp_i}.
	\end{equation*} 
\end{thm}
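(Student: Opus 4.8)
The plan is to exhibit the right-hand side of \eqref{2TASEP Greens intro} as \emph{a} solution of the 2-TASEP master equation carrying the stated initial data, and then appeal to uniqueness. Concretely it suffices to check: (i) the integrand satisfies the free evolution \eqref{2TASEP master equation} when all particles are far apart; (ii) it satisfies the three boundary conditions \eqref{2TASEP BC1}--\eqref{2TASEP BC3}; (iii) it collapses to $\prod_{i=1}^n\delta_{\nu_i,\mu_i}\prod_{i=1}^m\delta_{p_i,\inp_i}$ at $t=0$. Since the transition probability of a finite-particle exclusion process is the unique uniformly bounded solution of \eqref{eq:Markovgeneral}, (i)--(iii) force \eqref{2TASEP Greens intro} to equal $\mathbb{P}(\mu\to\nu,\inp\to p;t)$. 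Point (i) is immediate: the only dependence on $(\nu,p,t)$ in \eqref{2TASEP Greens intro} sits inside $P(\nu,p;t)$, and within each $(\pi,\sigma)$-summand of \eqref{2TASEP_P intro} it is carried by $\prod_{i=1}^n z_{\pi_i}^{\nu_i}e^{(z_i^{-1}-1)t}$, which solves \eqref{2TASEP master equation} term by term because $\sum_{i=1}^n z_{\pi_i}^{-1}=\sum_{i=1}^n z_i^{-1}$ for every $\pi\in S_n$.

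\textbf{Boundary conditions: the nested Bethe ansatz step.} Write the outer amplitude in \eqref{2TASEP_P intro} as $A_\pi=(-1)^{\abs{\pi}}\prod_{k=1}^n(1-z_{\pi_k})^{-k}\cdot\prod_{k=1}^n(1-z_{\pi_k})^{-c_k}\cdot Q(\pi)$, where $c_k=\#\{a\in\{1,\dots,m\}:p_a\ge k\}$ records the colour-dependent factor $\prod_{i=1}^m\prod_{j=1}^{p_i}(1-z_{\pi_j})^{-1}$ and $Q(\pi)$ is the inner sum over $\sigma\in S_m$. For each $i$ one pairs $\pi$ with $\pi s_i$, where $s_i$ transposes positions $i$ and $i+1$; when $\nu_i=\nu_{i+1}$ the monomial $\prod_k z_{\pi_k}^{\nu_k}$ is invariant under this pairing, so every boundary condition turns into a relation between $A_\pi$ and $A_{\pi s_i}$. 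One checks that $c_i=c_{i+1}$ exactly when $i\notin p$, and that $Q(\pi s_i)=Q(\pi)$ whenever $i+1\notin p$, since then $z_{\pi_{i+1}}$ enters $Q$ through exactly the same factors as $z_{\pi_i}$. A short computation then gives $A_{\pi s_i}/A_\pi=-(1-z_{\pi_{i+1}})/(1-z_{\pi_i})$ when $i,i+1\notin p$ --- precisely the exchange relation that makes the same-type condition \eqref{2TASEP BC3} hold, as for single-species TASEP --- and $A_{\pi s_i}/A_\pi=-1$ when $i\in p$, $i+1\notin p$, so the two paired terms cancel and \eqref{2TASEP BC1} follows. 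The delicate cases are \eqref{2TASEP BC3} for two type-$2$ particles and, above all, \eqref{2TASEP BC2}: here $Q$ is no longer symmetric in $z_{\pi_i},z_{\pi_{i+1}}$, one must re-pair simultaneously in the inner sum over $\sigma$ using the rank-one exchange identity for the $u$-variables, and for \eqref{2TASEP BC2} one must further split $Q(\pi)$ according to the presence of the factor $(u_{\sigma_a}-z_{\pi_i})$ for the colour $a$ whose index hops from $i+1$ to $i$; this split is what produces the two right-hand terms of \eqref{2TASEP BC2}, one evaluated at $p$ and one at $p+\{i\}-\{i+1\}$. Reconciling the nested $u$-sum with this colour relabelling is where I expect the real work to lie.

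\textbf{Initial condition.} At $t=0$ every exponential in \eqref{2TASEP_P intro} equals $1$. The $z_i$-contour encircles only the origin, so $\oint\frac{\dd z_i}{2\pi\ii}\,z_i^{-\mu_i-1}(\cdots)$ extracts the coefficient of $z_i^{\mu_i}$ in the remaining factor, which we expand as a power series about $z_i=0$ --- legitimate because the poles at $z_i=1$ and $z_i=u_j$ lie outside the $z$-contour. Since $P(\nu,p;0)$ supplies $z_{\pi_i}^{\nu_i}$ while the prefactor supplies $z_i^{-\mu_i-1}$, the integrand has a pole at the origin of the required order only for $\pi=\id$ together with $\nu_i=\mu_i$ for all $i$; any other permutation, or any $\nu_i\ne\mu_i$, leaves the integrand regular enough at $0$ to integrate to zero. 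This collapses the $z$-integrations and enforces $\nu=\mu$. The surviving $u$-integrations, whose contours lie outside the collapsed $z$-contours, are then handled identically through the inner sum $Q$: only $\sigma=\id$ and $p=\inp$ contribute nontrivial residues at the origin, leaving exactly $\prod_{i=1}^n\delta_{\nu_i,\mu_i}\prod_{i=1}^m\delta_{p_i,\inp_i}$. Combined with (i) and the boundary-condition verification, this identifies \eqref{2TASEP Greens intro} with the transition probability.
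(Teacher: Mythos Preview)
Your three-step plan matches the paper's exactly, and your treatment of the free evolution, of \eqref{2TASEP BC1}, and of \eqref{2TASEP BC3} for two type-1 particles is correct. The genuine gap is where you yourself locate it, at \eqref{2TASEP BC2}; however, the paper's resolution is simpler than you anticipate and does \emph{not} require re-pairing in $\sigma$. With the decomposition $P=\sum_{\pi,\sigma}A_\pi B_\pi(p)C_{\pi,\sigma}(p)\prod_i z_{\pi_i}^{\nu_i}e^{(z_i^{-1}-1)t}$, take $l\notin p$, $l+1\in p$, write $p_r=l+1$ and $p'=p-\{l+1\}+\{l\}$. Since $B_\pi(p)=(1-z_{\pi_{l+1}})^{-1}B_\pi(p')$ and $C_{\pi,\sigma}(p)=(u_{\sigma_r}-z_{\pi_l})\,C_{\pi,\sigma}(p')$, the three-term combination in \eqref{2TASEP BC2} carries, for each fixed $(\pi,\sigma)$, the bracket
\[
\frac{u_{\sigma_r}-z_{\pi_l}}{1-z_{\pi_{l+1}}}-\frac{z_{\pi_{l+1}}(u_{\sigma_r}-z_{\pi_l})}{1-z_{\pi_{l+1}}}-z_{\pi_{l+1}}
\;=\;u_{\sigma_r}-z_{\pi_l}-z_{\pi_{l+1}},
\]
which is symmetric in $z_{\pi_l},z_{\pi_{l+1}}$. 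Because $l\in p'$ and $l+1\notin p'$, both $A_\pi B_\pi(p')$ (by your own BC1 computation) and $C_{\pi,\sigma}(p')$ are already symmetric under $z_{\pi_l}\leftrightarrow z_{\pi_{l+1}}$, so the $\pi\leftrightarrow\pi s_l$ pairing kills the sum term by term in $\sigma$. Your proposed ``split of $Q(\pi)$ according to the factor $(u_{\sigma_a}-z_{\pi_i})$'' is not the mechanism; the point is that after pulling out $B_\pi(p')C_{\pi,\sigma}(p')$ the three terms combine into a single symmetric expression.

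Your instinct about \eqref{2TASEP BC3} with $l,l+1\in p$ is, by contrast, on target: there $B_\pi(p)C_{\pi,\sigma}(p)$ carries the asymmetric factor $(u_{\sigma_{a+1}}-z_{\pi_l})/(1-z_{\pi_l})$ (with $p_a=l$, $p_{a+1}=l+1$), and a simultaneous pairing $(\pi,\sigma)\leftrightarrow(\pi s_l,\sigma s_a)$ is what forces the cancellation; the paper's written argument is terse on this point. For the initial condition, your sketch is right in outline but ``regular enough at $0$'' needs the inductive argument the paper supplies: having integrated $z_1,\dots,z_{h-1}$ (each step forcing $\pi_i=i$ and $\nu_i=\mu_i$), the $z_h$-residue at the origin is governed by $z_h^{\nu_k-\mu_h-1}$ with $\pi_k=h$, and the standard-regime inequalities $\nu_k\ge\mu_k\ge\mu_h$ force $k=h$ and $\nu_h=\mu_h$. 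The surviving $u$-integral is then identified with the $t=0$ single-species TASEP Green's function in the variables $u_1,\dots,u_m$ with positions $(p,\inp)$, which supplies the remaining $\prod_i\delta_{p_i,\inp_i}$.
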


\subsubsection{Results for $r$ species}
Formulas such as \eqref{2TASEP Greens intro} and \eqref{2TASEP_P intro} are fully explicit but still contain large sums over symmetric groups and therefore have a significant level of complexity. As we will see later, these formulas simplify for specific initial and final positions of the particles. To gain further insight into why such a simplification occurs, and how this generalizes to arbitrary number of species, we now turn to the vertex model approach developed in \cite{borodin_coloured_2018}.

We show in Section~\ref{se:vertex models} that the transition probability for a system in which all the particles have a different colour\footnote{We refer to this as the rainbow case.} can be realized as a partition function of a vertex model, also when backhopping is allowed, see Theorem~\ref{th:ASEPtransition}. A partial symmetrization then gives transition probabilities for multi-species asymmetric exclusion processes expressed as partition functions of a vertex model.  

A second key result of this paper is the transition probability for total crossing, for which the point of view of vertex models is particularly useful because the rainbow partition function has a convenient expression as a multiple integral with factorized kernel, see equation \eqref{n-colours-total}. For the $r$-species ASEP, a total crossing event is characterized by an initial condition in which all particles of type $r'$ are to the left of all particles of type $r'-1$ for all $r'=1,\ldots,r$, and a final condition at time $t$ in which this order is reversed. Using the vertex model approach of Section~\ref{se:vertex models}, the following expression for the probability of total crossing for the $r$-ASEP can be derived.
\begin{thm}[Theorem~\ref{thm:total-cross} in the main text]
	Fix an integer $r \geq 1$ and let $(n_1,\dots,n_r)$ be a vector of positive integers such that $\sum_{k=1}^{r} n_k = n$. Fix two vectors of strict signatures
	\begin{align*}
		\vec{\mu} &= \left( \mu^{(1)}, \dots, \mu^{(r)} \right),
		\qquad
		\mu^{(i)} = \left( \mu^{(i)}_1 > \cdots > \mu^{(i)}_{n_i} \right),
		\\
		\vec{\lambda} &= \left( \lambda^{(1)}, \dots, \lambda^{(r)} \right),
		\qquad
		\lambda^{(i)} = \left( \lambda^{(i)}_1 > \cdots > \lambda^{(i)}_{n_i} \right),
	\end{align*}
	which satisfy the constraints $\mu^{(i)}_a > \mu^{(j)}_b$, $\lambda^{(i)}_a < \lambda^{(j)}_b$ for all $1\leq i <j \leq r$ and all $a \in \{1,\dots,n_i\}$, $b \in \{1,\dots,n_j\}$ (these constraints imply, in particular, admissibility of both vectors). We then have the following formula:
	\begin{multline}
		\label{main-formula}
		\mathbb{P}(\vec{\mu} \rightarrow \vec{\lambda};t)
		=
		\left(\frac{1-q}{2\pi\ii}\right)^n
		\oint_{C}
		\dd z_1
		\cdots 
		\oint_{C}
		\dd z_n
		\prod_{1 \leq i<j \leq n}
		\left(
		\frac{z_j-z_i}{z_j-q z_i}
		\right)
		\\
		\times
		\prod_{j=1}^{n}
		\exp\left[\frac{(1-q)^2 z_j t}{(1-z_j)(1-q z_j)}\right]
		\frac{1}{(1-z_j)(1-qz_j)}
		\prod_{k=1}^{r}
		\xi_{\mu^{(k)}} 
		\left( z^{(k)}_1,\dots,z^{(k)}_{n_k} \right)
		{\sf F}_{\lambda^{(k)}}
		\left( z^{(k)}_1,\dots,z^{(k)}_{n_k} \right)
	\end{multline}
	where we have defined the rational functions 
	\begin{align}
		\nonumber
		\xi_{\mu}(u_1,\dots,u_N)
		&=
		\prod_{j=1}^{N}
		\left(
		\frac{1-qu_j}{1-u_j}
		\right)^{\mu_j},
		\\
		{\sf F}_{\lambda}(u_1,\dots,u_N)
		&=
		\sum_{\sigma \in S_N}
		\prod_{1 \leq i<j \leq N}
		\left( \frac{u_{\sigma_j} - q u_{\sigma_i}}{u_{\sigma_j} - u_{\sigma_i}} \right)
		\prod_{i=1}^{N}
		\left( \frac{1-u_{\sigma_i}}{1-qu_{\sigma_i}} \right)^{\lambda_i}.
		\label{F}
	\end{align}
	where $q$ is the rate for backhopping, the contours are small circles around $z_i=1$ for all $1\le i \le n$.
\end{thm}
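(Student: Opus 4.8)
The plan is to obtain \eqref{main-formula} from the rainbow (all particles distinctly coloured) transition probability, exactly as the general $r$-species transition probability of Section~\ref{se:vertex models} is built from the rainbow partition function of Theorem~\ref{th:ASEPtransition}. First I would lift the $r$-species total-crossing data to the rainbow setting, colouring the $n$ particles so that they fall into consecutive blocks $B_1,\dots,B_r$ of sizes $n_1,\dots,n_r$, with block $B_k$ carrying the initial positions $\mu^{(k)}$ and final positions $\lambda^{(k)}$. By the colour-merging (lumpability) property of the $r$-ASEP, the coloured process projects onto a Markov process on the merged colours, so that $\mathbb{P}(\vec\mu\to\vec\lambda;t)$ is recovered from the rainbow transition probability by the partial-symmetrization operation of Section~\ref{se:vertex models}; at the level of the closed formula this amounts to summing the rainbow expression over the $n_1!\cdots n_r!$ rainbow refinements of $\vec\lambda$ (with a fixed refinement of $\vec\mu$), i.e. over $S_{n_1}\times\cdots\times S_{n_r}$.

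Next I would insert the closed form: by Theorem~\ref{th:ASEPtransition} and its multiple-integral reduction \eqref{n-colours-total}, each rainbow summand is an $n$-fold contour integral over small circles about $z=1$, whose integrand is the global Cauchy-type kernel $\prod_{1\le i<j\le n}\tfrac{z_j-z_i}{z_j-qz_i}$ times the one-body factors $\exp[(1-q)^2 z_j t/((1-z_j)(1-qz_j))]\,(1-z_j)^{-1}(1-qz_j)^{-1}$ times a factorized boundary factor depending on the chosen refinement. The core step is to carry out the sum over $S_{n_1}\times\cdots\times S_{n_r}$ inside the integral. Relabelling the integration variables within each block $B_k$ brings the initial factor to the block-multiplicative form $\prod_{k=1}^r\xi_{\mu^{(k)}}(z^{(k)}_1,\dots,z^{(k)}_{n_k})$; the cross-block factors $\prod_{i\in B_k,\,j\in B_{k'},\,k<k'}\tfrac{z_j-z_i}{z_j-qz_i}$ depend only on the fixed blockwise grouping of the variables, so they are untouched and reassemble into the global kernel of \eqref{main-formula}; and the within-block part of the kernel, together with the weights of the symmetrization and the permuted $\lambda$-exponents, must collapse --- via the exchange relations for the kernels $\prod_{i<j}\tfrac{u_{\sigma_j}-qu_{\sigma_i}}{u_{\sigma_j}-u_{\sigma_i}}$ --- onto exactly the symmetric function ${\sf F}_{\lambda^{(k)}}(z^{(k)}_1,\dots,z^{(k)}_{n_k})$ of \eqref{F}, one factor per block. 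The nesting hypotheses $\mu^{(i)}_a>\mu^{(j)}_b$ and $\lambda^{(i)}_a<\lambda^{(j)}_b$ for $i<j$ enter here essentially: they ensure that the rainbow lift lands in the total-crossing sector, that no boundary corrections appear in the evaluation, and that the block decomposition of both the Cauchy kernel and the sum is the one stated.

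Two checks then close the argument. For the contours, one uses that $\prod_{i<j}\tfrac{z_j-z_i}{z_j-qz_i}$ has no pole inside a small enough circle about $z=1$ (the only singularity of the integrand there being the explicit factor $(1-z_j)^{-1}(1-qz_j)^{-1}$), so after the manipulations above all $n$ variables may be taken on the single small circle $C$ around $z=1$. At $t=0$ the exponentials equal $1$ and the integral must reduce to $\prod_k\prod_a\delta_{\lambda^{(k)}_a,\mu^{(k)}_a}$; this follows from the analogous property of \eqref{n-colours-total} together with the residue at $z_j=1$, using admissibility of $\vec\mu$ and $\vec\lambda$. Sending $q\to0$ and taking the residue at the origin recovers the $r$-TASEP specialization as a product of $r$ determinants, and for $r=2$ one may cross-check against a nested-Bethe-ansatz evaluation in the spirit of Theorem~\ref{2TASEP Green's function Theorem}.

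I expect the real obstacle to be this collapse of the blockwise symmetrization onto $\prod_k{\sf F}_{\lambda^{(k)}}$: one must establish the exact symmetric-function identity equating the $S_{n_k}$-sum of the relabelled within-block kernels against the permuted $\lambda$-exponents with ${\sf F}_{\lambda^{(k)}}$, while simultaneously confirming that the cross-block kernels recombine into the single global factor $\prod_{1\le i<j\le n}\tfrac{z_j-z_i}{z_j-qz_i}$. This is where the nesting constraints must be used, and it requires care with the poles at $z_i=qz_j$: although these lie outside the contours $C$, individual terms produced by the symmetrization are rational functions carrying such poles, so one must organize the computation --- via the Hecke-algebra relations, or by a direct residue analysis --- so that the spurious singularities cancel before any contour is deformed.
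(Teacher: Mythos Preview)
Your overall strategy---lift to the rainbow ASEP and recover the $r$-species probability by colour-merging, i.e.\ by summing over the $n_1!\cdots n_r!$ within-block refinements of $\vec\lambda$ with $\vec\mu$ fixed---is precisely the route the paper takes (this is equation~\eqref{n-to-p}). The gap is in how you represent the individual summands. You invoke the fully factorized integral \eqref{n-colours-total} for \emph{each} rainbow refinement, so that every term contributes a simple product of ``permuted $\lambda$-exponents'' $((1-z_j)/(1-qz_j))^{\nu^\sigma_j}$. But \eqref{n-colours-total} was obtained from \eqref{asep-int} by applying the complete factorization \eqref{f-factor} of $f_\nu$, which requires $\nu_1<\cdots<\nu_n$. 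Only the single anti-dominant refinement of $\vec\lambda$ satisfies this; for every nontrivial within-block permutation $\sigma$ the boundary factor in \eqref{asep-int} is the partition function $f_{\nu^\sigma}$, which is not a product of one-body exponents. Consequently your proposed collapse identity---summing permuted $\lambda$-exponents against the within-block Cauchy kernel and Hecke-type weights---has the wrong left-hand side, and no relabelling of integration variables repairs it, since $f_{\nu^\sigma}$ transforms under Hecke operators rather than simple permutations of the $z$-variables.

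The paper avoids this by staying with \eqref{asep-int} and using the weaker block factorization of Proposition~\ref{prop:mu-block}: the inter-block ordering $\lambda^{(i)}_a<\lambda^{(j)}_b$ alone suffices to split $f_{\nu^\sigma}=\prod_k f_{\nu^{(k),\sigma_k}}$, with no constraint on the intra-block ordering. The blockwise sums then collapse via the established colour-blindness identity \eqref{f-to-F}, giving $\sum_{\sigma_k} f_{\nu^{(k),\sigma_k}}=F_{\lambda^{(k)}}$, and a short change of variables (stated at the end of the proof) converts $F_{\lambda^{(k)}}$ into ${\sf F}_{\lambda^{(k)}}$. So the ``real obstacle'' you identify is indeed the heart of the argument, but it is resolved by citing \eqref{f-to-F} rather than by manipulating the factorized kernels directly; the Hecke exchange relations you mention live inside the proof of \eqref{f-to-F} for the functions $f_\mu$, not at the level of the integrand of \eqref{n-colours-total}.
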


A straightforward corollary brings us to the following result for the multi-species TASEP in which the integrand is expressed as a product of determinants. We will given an alternative proof of this corollary for 2-TASEP in Section~\ref{2TASEP integral formula section} via the transition probability \eqref{2TASEP Greens intro} obtained by the nested Bethe ansatz.

\begin{cor}[Corollary~\ref{cor:pTASEPcross} in the main text]
	\label{cor:pTASEPcross_intro}
	Taking $q = 0$, the permutation sums can be computed explicitly to give determinants. Let $\vec{\mu}$ and $\vec{\lambda}$ be vectors of strict signatures, prescribed in the same way as in the previous Theorem. The probability of total crossing of blocks in the multi-species TASEP is given by
	\begin{multline*}
		\mathbb{P}(\vec{\mu} \rightarrow \vec{\lambda};t)
		=
		\left(\frac{1}{2\pi\ii}\right)^n
		\oint_{C}
		\dd z_1
		\cdots 
		\oint_{C}
		\dd z_n
		\prod_{j=1}^{n}
		\exp\left[\frac{z_j t}{1-z_j}\right]
		\prod_{1 \leq k<\ell \leq r}
		\prod_{i=1}^{n_k}
		\prod_{j=1}^{n_{\ell}}
		\left( z^{(\ell)}_j-z^{(k)}_i \right)
		\\
		\times
		\prod_{k=1}^{r}
		\det\left[ z^{(k)\left(i-j-N_k\right)}_j 
		\left(1-z^{(k)}_j\right)^{\lambda^{(k)}_i-\mu^{(k)}_j-1} \right]_{1 \leq i,j \leq n_k}
	\end{multline*}
	where for each $1 \leq k \leq r$ we define $N_k = \sum_{a=1}^{k-1} n_a$.
\end{cor}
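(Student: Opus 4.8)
Here is how I would establish the Corollary from the Theorem.

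The plan is to deduce the Corollary directly from formula \eqref{main-formula} of the preceding Theorem by specialising $q=0$ and recognising the permutation sum ${\sf F}_\lambda$ as a ratio of a determinant to a Vandermonde. First I would take the limit $q\to 0$ inside the contour integral; this is legitimate because the contours $C$ are small circles around each $z_i=1$, so the extra poles carried by the denominators $1-qz_j$ and $z_j-qz_i$ (located near $z_j=1/q$ and $z_j=qz_i$ respectively) stay uniformly away from $C$ as $q\to 0$, and the integrand depends analytically on $q$ near $0$ on $C^{\times n}$. The specialisation turns the prefactor $(1-q)^n$ into $1$, the exponential into $\exp[z_jt/(1-z_j)]$, the factor $1/((1-z_j)(1-qz_j))$ into $1/(1-z_j)$, the interaction kernel into $\prod_{i<j}(z_j-z_i)/z_j$, and for each block it sends $\xi_\mu(u_1,\dots,u_N)$ to $\prod_j(1-u_j)^{-\mu_j}$ and ${\sf F}_\lambda(u_1,\dots,u_N)$ to $\sum_{\sigma\in S_N}\prod_{i<j}\frac{u_{\sigma_j}}{u_{\sigma_j}-u_{\sigma_i}}\prod_i(1-u_{\sigma_i})^{\lambda_i}$.

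Next I would rewrite this $q=0$ sum as a determinant. Using $\prod_{i<j}(u_{\sigma_j}-u_{\sigma_i})=\sign(\sigma)\prod_{i<j}(u_j-u_i)$ and $\prod_{i<j}u_{\sigma_j}=\prod_j u_{\sigma_j}^{\,j-1}$, the sum collapses to
\[
\frac{1}{\prod_{1\le i<j\le N}(u_j-u_i)}\sum_{\sigma\in S_N}\sign(\sigma)\prod_{i=1}^N u_{\sigma_i}^{\,i-1}(1-u_{\sigma_i})^{\lambda_i}=\frac{\det\big[u_j^{\,i-1}(1-u_j)^{\lambda_i}\big]_{1\le i,j\le N}}{\prod_{1\le i<j\le N}(u_j-u_i)} .
\]
Absorbing the factor $\prod_j(1-u_j)^{-\mu_j}$ coming from $\xi_\mu$ into column $j$ of this determinant gives, for each block $k$ (with $u=z^{(k)}$, $N=n_k$), the identity $\xi_{\mu^{(k)}}{\sf F}_{\lambda^{(k)}}=\det\big[(z^{(k)}_j)^{i-1}(1-z^{(k)}_j)^{\lambda^{(k)}_i-\mu^{(k)}_j}\big]/\prod_{i<j}(z^{(k)}_j-z^{(k)}_i)$.

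Finally I would match everything against the global interaction factor. Writing $\prod_{1\le i<j\le n}(z_j-z_i)/z_j=\big(\prod_{1\le i<j\le n}(z_j-z_i)\big)\prod_{j=1}^n z_j^{-(j-1)}$ and using the block factorisation of the global Vandermonde,
\[
\prod_{1\le i<j\le n}(z_j-z_i)=\prod_{k=1}^r\prod_{1\le i<j\le n_k}(z^{(k)}_j-z^{(k)}_i)\ \prod_{1\le k<\ell\le r}\prod_{i=1}^{n_k}\prod_{j=1}^{n_\ell}(z^{(\ell)}_j-z^{(k)}_i),
\]
which holds with no extra sign because the global order of the variables runs block by block, the within-block Vandermondes cancel the denominators produced in the previous step, while the cross-block products reproduce $\prod_{k<\ell}\prod_{i=1}^{n_k}\prod_{j=1}^{n_\ell}(z^{(\ell)}_j-z^{(k)}_i)$ in the statement. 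The residual monomials $\prod_j z_j^{-(j-1)}=\prod_k\prod_{a=1}^{n_k}(z^{(k)}_a)^{-(N_k+a-1)}$ and the leftover $\prod_j 1/(1-z_j)$ are then pushed into the columns of the $k$-th determinant: scaling its $j$-th column by $(z^{(k)}_j)^{-(N_k+j-1)}(1-z^{(k)}_j)^{-1}$ changes the entries to $(z^{(k)}_j)^{i-j-N_k}(1-z^{(k)}_j)^{\lambda^{(k)}_i-\mu^{(k)}_j-1}$, which is precisely the determinant appearing in the Corollary; together with the surviving $(1/2\pi\ii)^n$ and $\prod_j\exp[z_jt/(1-z_j)]$ this is the claimed formula. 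I do not expect any serious difficulty: the only points needing genuine care are the justification that the $q\to 0$ limit commutes with the contour integrals (handled by the location of the contours) and the index bookkeeping in the last step — getting the exponents $i-j-N_k$ and $\lambda^{(k)}_i-\mu^{(k)}_j-1$ exactly right, and checking that the Vandermonde splitting introduces no sign given the block ordering conventions adopted for $\vec{\mu}$ and $\vec{\lambda}$.
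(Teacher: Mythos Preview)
Your proposal is correct and follows essentially the same route as the paper: set $q=0$ in \eqref{main-formula}, replace each ${\sf F}_{\lambda^{(k)}}$ by the determinant-over-Vandermonde formula \eqref{F-det}, and then cancel the within-block Vandermondes against the block factors of $\prod_{i<j}(z_j-z_i)/z_j$, absorbing the leftover monomials and $(1-z_j)^{-1}$ into the columns. The paper's proof is stated in a single sentence (``partial cancellation between Vandermonde factors''), so your version simply spells out the bookkeeping that the paper leaves implicit.
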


\subsubsection{Cumulative total crossing for 2-TASEP}
A third key result concerns the cumulative total crossing probability in 2-TASEP for Bernoulli initial condition for type 2 particles and step initial condition for type 1 particles, such that at time $t$ all particles of type 1 end up in between position $s_1$ and $s_2$, and all particles of type 2 end up beyond position $s_2$.
\begin{prop}[Proposition~\ref{prop:cumtotalcross} in the main text]
	The cumulative total crossing probability with Bernoulli initial condition with density $\rho$ for type 2 particles and step initial condition for type 1 particles is given by
	\begin{multline}
		\label{Bernoulli Pcross 2}
		\PB = \frac{\rho^m}{m!}\oint_{0,1,1-\rho} \prod_{i=1}^m \frac{\dd z_{i}}{2 \pi \ii} \prod_{i \neq j} (z_j-z_i) \prod_{i=1}^m \frac{e^{(z_i-1)t}z_i^{-s_2-m+1}}{(z_i-1)^{n}(z_i-1+\rho)}\\ 
		\times \oint_{0,1} \prod_{i=1}^{n-m} \frac{\dd w_{i}}{2 \pi \ii}  \prod_{i=1}^m\prod_{j=1}^{n-m} (z_i-w_j) \prod_{i=1}^{n-m} \frac{e^{(w_i-1)t}w_i^{-s_1-m}}{(w_i-1)^{n-m-i+1}} \\ \times \det\left(w_i^{n-m-j} - w_i^{n-m+s_1-s_2-1}\right)_{1\leq i,j \leq n-m}.
	\end{multline}
\end{prop}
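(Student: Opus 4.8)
The plan is to derive \eqref{Bernoulli Pcross 2} from the fixed-configuration total crossing formula for the multi-species TASEP, namely Corollary~\ref{cor:pTASEPcross_intro} specialised to $r=2$, by three operations: (i) imposing the step initial condition on the type $1$ particles; (ii) summing the resulting transition probability over all admissible final configurations in the prescribed windows --- type $1$ particles in the finite interval bounded by $s_1$ and $s_2$, type $2$ particles in the semi-infinite interval beyond $s_2$; and (iii) averaging over the Bernoulli initial law of the type $2$ particles. With $n_1 = n-m$, $n_2 = m$, $N_1 = 0$ and $N_2 = n-m$, the step condition fixes $\mu^{(1)}$ to a consecutive block, and the objects to be manipulated are the cross-block product $\prod_{i,j}(z^{(2)}_j - z^{(1)}_i)$, the two determinants, and the weights $\exp[z_j t/(1-z_j)]$.

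First I would sum over the final positions of the type $1$ particles. These positions $\lambda^{(1)}_1 > \cdots > \lambda^{(1)}_{n-m}$ range over a \emph{finite} window and enter the $k=1$ determinant only through the row factors $(1-z^{(1)}_j)^{\lambda^{(1)}_i}$; carrying out the truncated geometric sums row by row and using multilinearity of the determinant replaces each entry by a difference of two monomials, which after the relabelling $z^{(1)} \to w$ produces exactly the determinant $\det(w_i^{n-m-j} - w_i^{n-m+s_1-s_2-1})_{1\le i,j\le n-m}$ together with the accompanying powers of $w_i$ and $(w_i - 1)$ in \eqref{Bernoulli Pcross 2}. Because the type $1$ initial data is deterministic, the $k=1$ determinant is not symmetrised, consistent with the $i$-dependent exponent $n-m-i+1$ on the right-hand side.

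Next I would handle the type $2$ data. The $\mu^{(2)}$-dependence of the $k=2$ determinant is a pure column factor $\prod_j (1-z^{(2)}_j)^{-\mu^{(2)}_j}$, so the Bernoulli average is a sum of geometric series over the decreasing sequence $\mu^{(2)}$, weighted by $\rho^m$ times a power of $1-\rho$; likewise the sum over the final positions $\lambda^{(2)}_1 > \cdots > \lambda^{(2)}_m \ge s_2$ is a semi-infinite sum of the type
\[
\sum_{\ell_1 > \cdots > \ell_N \ge a} \det\!\left[ c_i^{\,\ell_j} \right]_{i,j=1}^{N}
= \prod_{i=1}^{N} \frac{c_i^{\,a}}{1-c_i} \prod_{1 \le i < j \le N} \frac{c_i - c_j}{1 - c_i c_j},
\]
valid for $|c_i|$ small. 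Combining these two collapses the $k=2$ determinant entirely and leaves a $z^{(2)}$-integrand that is symmetric, carrying the Vandermonde $\prod_{i\ne j}(z_j - z_i)$, the prefactor $\rho^m/m!$ from the symmetrisation, and denominators $(z_i - 1)$ and $(z_i - 1 + \rho)$ (the latter the image of the pole of the Bernoulli sum). A subtlety here is that these geometric sums converge only on contours with $|1-z^{(2)}_i| < 1$ and $|(1-\rho)/(1-z^{(2)}_i)| < 1$, so before summing I would deform the small circles around $z^{(2)}_i = 1$ into this regime and afterwards deform back; the residues crossed in doing so are precisely the poles at $z_i = 0$, $z_i = 1$ and $z_i = 1-\rho$ recorded in the prescription $\oint_{0,1,1-\rho}$. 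Finally, a Möbius change of variables in each contour variable (sending $\exp[z t/(1-z)]$ to $\exp[(z-1)t]$ and $(1-z)^{k}$ to $z^{-k}$) brings the integrand to the form displayed in \eqref{Bernoulli Pcross 2}, with the $w$-contours reduced to $\oint_{0,1}$.

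The main obstacle is step (iii) together with the interlocking bookkeeping: one must (a) make the contour deformations around $z=1$ rigorous so that both semi-infinite sums (the $\lambda^{(2)}$-sum to $+\infty$ and the Bernoulli $\mu^{(2)}$-sum to $-\infty$) converge and can be deformed back with the correct residues, and (b) pin down the precise meaning of the Bernoulli initial condition for a fixed number $m$ of particles --- whether the rightmost occupied site is pinned and the exact exponent of $1-\rho$ --- and then match every sign, the factor $1/m!$, the window endpoints, and the monomial exponents $n-m-j$ and $n-m+s_1-s_2-1$ against the Cauchy--Binet / Lindström--Gessel--Viennot manipulations used for the two determinant sums. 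An entirely parallel derivation can be run from the nested Bethe ansatz transition probability \eqref{2TASEP Greens intro}, where the type $2$ initial order is simply $\inp = (1,\dots,m)$, the Bernoulli average acts on the factors $z_i^{-\mu_i - 1}$, and the same symmetrisation and contour work produce \eqref{Bernoulli Pcross 2}.
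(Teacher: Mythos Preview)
Your proposal is correct and follows essentially the same route as the paper: the paper first sums over final positions (your step (ii)) via the nested geometric series plus the Tracy--Widom symmetrization identity (your displayed determinant sum is exactly the combination of Lemmas~\ref{summation identity 1} and~\ref{symmetrization identity 1}) to reach Proposition~\ref{2TASEP step Pcross formula proposition}, then averages over the Bernoulli initial data by reusing the same two identities (your step (iii)), and finally inverts the integration variables (your M\"obius step) in Lemma~\ref{Bernoulli Pcross integration inversion}. The only organisational difference is that the paper works throughout with the nested--Bethe contours around the origin, where all the geometric sums converge directly, so the contour deformations you flag as the main obstacle never arise; your starting point, Corollary~\ref{cor:pTASEPcross_intro}, differs from Proposition~\ref{2TASEP crossing det formula statement} only by the substitution $z\mapsto 1-z$.
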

If position $s_1\le -m$, the Bernoulli-step initial condition renders the dependence on $s_1$ irrelevant because type 1 particles in 2-TASEP can never reach this position. In this case the cumulative total crossing reduces to the following formula.
\begin{cor}[Corollary~\ref{cor:cumtotalcross} in the main text]
	When $s_1 \leq -m$ we have 
	\begin{equation}
	\PB = \rho^m \oint_{0}\frac{\dd w}{2\pi\ii} \frac{e^{(w-1)t}w^{n-2m-s_2-1}}{w-1} 
	\det(\oint_{0,1,1-\rho}\frac{\dd z}{2\pi\ii} \frac{e^{(z-1)t}z^{i+j-s_2-m-1}}{(z-1)^{m+1}(z-1+\rho)}(w-z))_{1\leq i,j \leq m}.
	\end{equation}
\end{cor}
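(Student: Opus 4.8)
The plan is to start from the $n$-fold contour integral \eqref{Bernoulli Pcross 2} of Proposition~\ref{prop:cumtotalcross} and to carry out the $n-m$ integrations over the variables $w_1,\dots,w_{n-m}$, the hypothesis $s_1\le -m$ being used precisely to control the pole structure. When $s_1\le -m$ the factor $w_i^{-s_1-m}$ is a genuine polynomial, so the ``polynomial'' term $w_i^{\,n-m-j}$ of the size-$(n-m)$ determinant contributes only a pole at $w_i=1$, whereas the column-independent term $-w_i^{\,n-m+s_1-s_2-1}$ combines with $w_i^{-s_1-m}$ into the single monomial $-w_i^{\,n-2m-s_2-1}$, which is then the only possible source of a pole at $w_i=0$. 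The first step is a matrix-determinant-lemma rewriting: since the determinant has the form $\det(w_i^{\,n-m-j}-c_i)$ with $c_i=w_i^{\,n-m+s_1-s_2-1}$ independent of the column index, and $\det(w_i^{\,n-m-j})$ is a (reverse) Vandermonde in the $w_i$, one has $\det(w_i^{\,n-m-j}-c_i)=\det(w_i^{\,n-m-j})\big(1-\sum_{i=1}^{n-m}L_i(1)\,c_i\big)$, where $L_i(x)=\prod_{i'\ne i}\tfrac{x-w_{i'}}{w_i-w_{i'}}$ is the Lagrange basis at the nodes $w_1,\dots,w_{n-m}$. This splits the $w$-integral into a ``$1$''-part and a ``$-\sum_i L_i(1)c_i$''-part.

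For the ``$1$''-part there is no pole at any $w_i=0$ (all relevant exponents are $\ge 0$), so every integral reduces to a residue at $w_i=1$; the Vandermonde antisymmetrises the row-dependent weights $\tfrac{e^{(w-1)t}w^{-s_1-m}\prod_k(z_k-w)}{(w-1)^{\,n-m-i+1}}$, and the Andr\'eief (Heine) identity turns this into a determinant of one-variable residues which, by a short leading-coefficient computation (the factorials telescoping), collapses to the scalar $\prod_{k=1}^{m}(z_k-1)^{\,n-m}$ up to an explicit sign. For the ``$-\sum_iL_i(1)c_i$''-part the key observation is that $L_i(1)$ supplies the factor $\prod_{i'\ne i}(1-w_{i'})$, lowering by one the order of the pole at each $w_{i'}=1$; in particular, when $i\ne n-m$ the integrand in $w_{n-m}$ becomes holomorphic inside its contour $\oint_{0,1}$, so that summand vanishes, and \emph{only the summand $i=n-m$ survives}. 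In that summand $\prod_{i'\ne n-m}\tfrac{1}{w_{n-m}-w_{i'}}$ cancels exactly the $w_{n-m}$-dependent factors of the Vandermonde, leaving a Vandermonde in $w_1,\dots,w_{n-m-1}$; the integrals in $w_1,\dots,w_{n-m-1}$ are again residues at $w=1$ and collapse (Andr\'eief plus leading coefficients) to the scalar $\prod_{k=1}^{m}(z_k-1)^{\,n-m-1}$ up to sign, while the integral in $w_{n-m}$ is left as $\oint_{0,1}\tfrac{\dd w}{2\pi\ii}\,\tfrac{e^{(w-1)t}w^{\,n-2m-s_2-1}\prod_k(z_k-w)}{w-1}$.

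The third step is to split this remaining $w_{n-m}$-integral into its residue at $w=1$ and its residue at $w=0$. The residue at $w=1$ equals $\prod_k(z_k-1)$, so together with the scalar $\prod_k(z_k-1)^{\,n-m-1}$ it reproduces the ``$1$''-part up to sign, and one checks that the two contributions cancel. What survives is the residue at $w=0$, i.e.\ the outer integral $\oint_{0}\tfrac{\dd w}{2\pi\ii}\,\tfrac{e^{(w-1)t}w^{\,n-2m-s_2-1}}{w-1}$ with the factor $\prod_{k=1}^{m}(z_k-w)$ retained inside, multiplied by the scalar proportional to $\prod_k(z_k-1)^{\,n-m-1}$. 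Feeding this back into \eqref{Bernoulli Pcross 2}, the singular factor $\prod_i(z_i-1)^{-n}$ becomes $\prod_i(z_i-1)^{-(m+1)}$, exactly as in the claimed formula, and one is left with the $m$-fold $z$-integral of $\tfrac{1}{m!}\prod_{i\ne j}(z_j-z_i)\prod_{i=1}^m\tfrac{e^{(z_i-1)t}z_i^{-s_2-m+1}(z_i-w)}{(z_i-1)^{m+1}(z_i-1+\rho)}$. Writing $\prod_{i\ne j}(z_j-z_i)=(-1)^{\binom{m}{2}}\big(\prod_{i<j}(z_j-z_i)\big)^2$ and expanding each Vandermonde as $\det(z_i^{\,b-1})_{1\le i,b\le m}$, the Andr\'eief identity with common weight $\tfrac{e^{(z-1)t}z^{-s_2-m+1}(z-w)}{(z-1)^{m+1}(z-1+\rho)}$ converts the $\tfrac{1}{m!}$-normalised integral into $\det\big(\oint_{0,1,1-\rho}\tfrac{\dd z}{2\pi\ii}\,z^{\,i+j-s_2-m-1}\tfrac{e^{(z-1)t}(z-w)}{(z-1)^{m+1}(z-1+\rho)}\big)_{1\le i,j\le m}$; rewriting $(z-w)=-(w-z)$ and checking that all accumulated signs cancel yields exactly the formula in the statement, the $z$-contours around $0,1,1-\rho$ and the $w$-contour around $0$ being inherited from \eqref{Bernoulli Pcross 2} and from the residue evaluations.

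The main obstacle I anticipate is the bookkeeping concentrated in steps two and three: proving that \emph{only} the summand $i=n-m$ survives in the ``$-\sum_iL_i(1)c_i$''-part (the exact point at which $s_1\le -m$ is used, resting on the vanishing of contour integrals of functions holomorphic inside the contour); carrying out the leading-coefficient collapses of the two Andr\'eief determinants to $\prod_k(z_k-1)^{\,n-m}$ and $\prod_k(z_k-1)^{\,n-m-1}$; and — most delicately — tracking every sign (from the reverse-Vandermonde conventions, the matrix-determinant-lemma split, the $(1-w)$ factors, the residue evaluations, the two applications of Andr\'eief and the final $(z-w)=-(w-z)$) so that the cancellation of the ``$1$''-part against the $w=1$ residue is exact and the surviving term emerges with precisely the normalisation in the statement. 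The remaining ingredients — the contour deformations separating the residues at $w=0$ and $w=1$, and absorbing the monomial $z^{i+j-2}$ into the $z$-weight — are routine.
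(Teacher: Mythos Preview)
Your proposal is correct, but it takes a noticeably different route from the paper's.

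In the paper, the Corollary is an immediate consequence of Proposition~\ref{2TASEP Pcross integral collapse}: once the $(n-m)$-fold $w$-integral has already been collapsed to the single integral in \eqref{2TASEP Pcross 3}, one simply writes $\prod_{i\neq j}(z_j-z_i)$ as the product of two Vandermonde determinants and applies the Cauchy--Binet (Andr\'eief) identity to the $z$-variables. The heavy lifting --- the $w$-collapse --- is done separately in Proposition~\ref{2TASEP Pcross integral collapse} via column operations on $\det(w_i^{n-m-j}-w_i^{n-m+s_1-s_2-1})$, expansion along the bottom row to isolate the unique term with a pole at $w_{n-m}=0$, and then cascading simple-pole residues at $w_i=1$.

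You instead bypass the intermediate Proposition and go directly from \eqref{Bernoulli Pcross 2} to the final formula. Your matrix-determinant-lemma split $\det(w_i^{n-m-j}-c_i)=\det(w_i^{n-m-j})\bigl(1-\sum_i L_i(1)c_i\bigr)$ is a clean way to separate the polynomial part from the column-independent correction; your argument that only $i=n-m$ survives in the correction is exactly right and uses $s_1\le -m$ at the same place the paper does. The cost is that you must then establish a cancellation between the ``$1$''-part and the residue-at-$1$ contribution of the surviving correction term --- a cancellation the paper's column-operation approach sidesteps entirely by never generating those two pieces separately. Your leading-coefficient collapse of the determinant of one-variable residues to $\prod_k(z_k-1)^{n-m}$ is correct (the factorials from the Vandermonde in $p_j=n-m-j$ cancel those from the Taylor coefficients of the higher-order poles), so the cancellation does go through. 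One terminological quibble: your first ``Andr\'eief'' step is really just $\int\prod_i\dd\mu_i(w_i)\,\det(\phi_j(w_i))=\det\bigl(\int\dd\mu_i\,\phi_j\bigr)$ with row-dependent measures $\dd\mu_i$, not the symmetric Andr\'eief/Heine identity; genuine Andr\'eief enters only at the end for the $z$-integrals, exactly as in the paper. Your flagged concern about sign bookkeeping is well placed but surmountable.
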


The remainder of this paper is set up as follows. In Section~\ref{2TASEP integral formula section} we describe the nested Bethe ansatz approach to diagonalize the 2-TASEP generator. This approach provides an explicit expression of the full transition probability of 2-TASEP. In Section~\ref{se:vertex models} we describe the rainbow vertex model approach in which particles all have different colour. Partial symmetrization then leads to results for a model of particles of two different types. 

As highlighted above, Corollary~\ref{cor:pTASEPcross} of the vertex model approach leads to a multiple integral expression over determinants for the transition probability of total crossing, events in which a sequence of blocks of different types of particle reverses. The nested Bethe ansatz approach reproduces this result in the case of 2-TASEP, see Proposition~\ref{2TASEP crossing det formula statement}. In Section~\ref{2TASEP step initial condition section} we study the cumulative distribution function of such events for Bernoulli-step initial conditions.

\section*{Acknowledgments} We warmly thank Zeying Chen, Mario Kieburg and Tomohiro Sasamoto for discussions, and gratefully acknowledge financial support from the Australian Research Council.

\section{Nested Bethe ansatz approach}
\label{2TASEP integral formula section}
In this section we find an exact integral formula for the transition probability (or Green's function) of the 2-TASEP. We also outline some consequences of this formula as well as some useful results for later when we study total crossing events.

\begin{defn}
	The standard regime of a system of $n$ particles, $m$ of which are type 2, with initial coordinates $\mu \in \Z^n$, and type 2 indices $\inp \in \{1,\dots,n\}^m$, satisfying
	\[\mu_1 < \mu_2 < \cdots < \mu_n, \hspace{2cm} \inp_1 < \inp_2 < \cdots < \inp_m, \] 
	is the set of final particle coordinates and indices $(\nu,p)$. These satisfy
	\[\nu_1 < \nu_2 < \cdots < \nu_n, \hspace{2cm} p_1 < p_2 < \cdots < p_m,\]
	along with $\nu_i\geq\mu_i$, $p_j\geq \inp_j$ for all $i,j$.
\end{defn} 

\begin{thm}
	\label{2TASEP Green's function Theorem}	
	The transition probability for the 2-TASEP with $n$ particles of which $m$ are type 2, initially at positions $\mu$ and transitioning to $\nu$ in the standard regime is given by the transition probability
	\begin{multline}
		\label{2TASEP Greens with P}
		\mathbb{P}(\mu\to\nu,\inp\to p;t) = \\ \oint \prod_{j=1}^m \frac{\dd u_j}{2 \pi \ii}  \prod_{i=1}^n \frac{\dd z_i}{2 \pi \ii}\ P(\nu,p;t)\
		\prod_{i=1}^n z_i^{-\mu_i -1} (1-z_i)^m\prod_{i=1}^m\left[\prod_{j=1}^{\inp_i} \frac{1}{u_i - z_j}\prod_{j=\inp_i+1}^n\frac{1}{1 - z_j}\right],
	\end{multline}
	where $P(\nu,p;t)$ is an eigenfunction of the 2-TASEP generator satisfying the free evolution \eqref{2TASEP master equation} and three boundary conditions, given by
	\begin{multline}
		\label{2TASEP_P}
		P(\nu,p;t) = \sum_{\pi \in S_n} (-1)^{\abs{\pi}} \prod_{i=1}^n \left(\frac{1-z_i}{1-z_{\pi_i}}\right)^i z_{\pi_{i}}^{\nu_{i}} e^{(z^{-1}_i-1)t} \prod_{i=1}^m \prod_{j=1}^{p_i} \frac{1}{1-z_{\pi_j}}
		\\ \times\sum_{\sigma \in S_m} (-1)^{\abs{\sigma}} \prod_{i=1}^m \left(\frac{1-u_i}{1-u_{\sigma_i}} \right)^i  \prod_{i=1}^m \prod_{j=1}^{p_i-1} (u_{\sigma_i} - z_{\pi_j}).
	\end{multline}
	The eigenfunction $P(\nu,p;t)$ has an implicit dependence on the spectral parameters $\{z_i\},\{u_i\}\subset \mathbb{C}$. The integrals in \eqref{2TASEP Greens with P} are set-up as in Figure \ref{fig:2TASEP u,z contour diagram}, i.e. the $z$-contours surround the origin but do not enclose the finite poles at $z_i = 1,u_j$ for all $i,j$. The transition probability \eqref{2TASEP Greens with P} satisfies the initial condition
	\begin{equation}
	\label{2TASEP initial condition}
	\mathbb{P}(\mu\to\nu,\inp\to p;0) = \prod_{i=1}^{n} \delta_{\nu_i,\mu_i} \prod_{i=1}^m \delta_{p_i,\inp_i}.
	\end{equation} 
\end{thm}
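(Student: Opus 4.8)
The plan is to verify the three defining properties of the transition probability in turn: that $\mathbb{P}(\mu\to\nu,\inp\to p;t)$ as defined by \eqref{2TASEP Greens with P}--\eqref{2TASEP_P} satisfies the free evolution equation \eqref{2TASEP master equation}, the three boundary conditions \eqref{2TASEP BC1}--\eqref{2TASEP BC3}, and the initial condition \eqref{2TASEP initial condition}; by uniqueness of the solution to the master equation on $\bbZ^n$ with these data, this identifies the integral formula with the genuine transition probability. The free evolution is immediate: each summand of $P(\nu,p;t)$ contains the factor $\prod_i z_{\pi_i}^{\nu_i} e^{(z_i^{-1}-1)t}$, and the operator $\Psi\mapsto \sum_i \Psi(\ldots,\nu_i-1,\ldots) - n\Psi$ acts on this by $\sum_i (z_{\pi_i}^{-1}-1)$, which is exactly $\tfrac{\dd}{\dd t}$ acting on $\prod_i e^{(z_i^{-1}-1)t}$; since all other factors in the integrand are $t$- and (in the relevant sense) shift-compatible, \eqref{2TASEP master equation} holds termwise inside the contour integral.

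For the initial condition I would set $t=0$, so $P(\nu,p;0)$ loses its exponential factors, and then evaluate the $z$- and $u$-integrals by residues. The $z$-contours encircle only the origin, so picking up residues forces, through the factor $\prod_i z_i^{-\mu_i-1}$ together with the polynomial-in-$z$ content of $P(\nu,p;0)$, that the only surviving term has $\nu_i=\mu_i$ for all $i$; a parallel analysis of the $u$-integrals, using that the $u$-contours enclose the $z$-contours and the structure $\prod_i (u_i-z_j)^{-1}$ in \eqref{2TASEP Greens with P} against the factors $\prod_i\prod_{j\le p_i-1}(u_{\sigma_i}-z_{\pi_j})$ in \eqref{2TASEP_P}, collapses the sum over $S_m$ and forces $p_i=\inp_i$. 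The combinatorics of which permutations $\pi,\sigma$ contribute a nonzero residue is the fiddly part here, but it is the standard Bethe-ansatz orthogonality computation and should reduce to a product of Kronecker deltas.

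The heart of the proof, and the step I expect to be the main obstacle, is checking the three boundary conditions \eqref{2TASEP BC1}--\eqref{2TASEP BC3}. Because the integrand is symmetrized over $S_n$ (and nested over $S_m$), each boundary condition becomes an identity that must hold after summation over the symmetric groups rather than termwise; the natural strategy is to pair up each permutation $\pi$ with $\pi\circ s_i$ (the transposition swapping the values at positions $i,i+1$) and show that the $s_i$-swapped pair of terms, when the coincidence $\nu_{i+1}=\nu_i$ (resp.\ $\nu_{i+1}=\nu_i+1$) is imposed, reorganizes into the desired right-hand side. Condition \eqref{2TASEP BC3}, between like particles, is the familiar TASEP two-body exchange and should follow from the ratio $(1-z_i)^i/(1-z_{\pi_i})^i$ absorbing the shift $\nu_{i+1}\to\nu_{i+1}+1$; condition \eqref{2TASEP BC1} should come out as the vanishing of an antisymmetric combination; and condition \eqref{2TASEP BC2}, which mixes the type-1/type-2 label and hence couples the $\pi$-sum to the $\inp$-dependent product $\prod_{j\le p_i}(1-z_{\pi_j})^{-1}$ and the nested $u$-sum with $\prod_{j\le p_i-1}(u_{\sigma_i}-z_{\pi_j})$, is the genuinely new ingredient: one must track how shifting $\nu_{i+1}$ and simultaneously exchanging the membership $i\leftrightarrow i+1$ in $p$ produces the two terms on the right of \eqref{2TASEP BC2}. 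I would isolate the factors in \eqref{2TASEP_P} that depend on whether $i\in p$, rewrite the identity to be proved purely in terms of these, and then verify it as a rational-function identity in $z_{\pi_i},z_{\pi_{i+1}}$ and the $u_j$ — it should ultimately reduce to the nested Bethe ansatz consistency relation (an $S$-matrix factorization of the form encoded in the $U_q(\widehat{\mathfrak{sl}}_3)$ $R$-matrix), so the real work is bookkeeping rather than discovering anything deep.
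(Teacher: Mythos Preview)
Your proposal is correct and follows essentially the same route as the paper: verify the free evolution termwise via the exponential factor, check each boundary condition by pairing $\pi$ with $\pi s_l$ and showing that the $z_{\pi_l},z_{\pi_{l+1}}$-dependent factor becomes symmetric (for \eqref{2TASEP BC2} the paper finds the bracket simplifies to $u_{\sigma_r}-z_{\pi_l}-z_{\pi_{l+1}}$, exactly the rational-function identity you anticipate), and handle the initial condition by an inductive residue argument showing only $\pi=\id_n$, $\nu=\mu$ survive, after which the remaining $u$-integral is identified with the single-species TASEP initial condition. The only places where the paper is slightly more explicit than your sketch are the induction for the $z$-residues (Lemma~\ref{2TASEP Initial condition Lemma 1 }) and the observation that the leftover $u$-integral literally coincides with the Sch\"utz/Tracy--Widom $t=0$ formula, but these are precisely the ``fiddly parts'' you flag.
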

Due its length and technical nature we defer the proof of Theorem \ref{2TASEP Green's function Theorem} until Appendix \ref{2TASEP Green's function proof section}.

\begin{rmk}
	The above theorem has been derived using the nested Bethe ansatz approach which is the standard method to diagonalize integrable higher rank integrable Hamiltonians and transfer matrices. In Section~\ref{se:vertex models} we will discuss a more recent approach suitable for stochastic generators.
\end{rmk}

\begin{cor}
	\label{cor:Cor2TASEP}
	When $p$ is empty, that is when all particles are of type 1, the transition probability \eqref{2TASEP Greens with P} satisfying the conditions in Theorem \ref{2TASEP Green's function Theorem} reduces to the TASEP transition probability from \cite{schutz_exact_1997}.
\end{cor}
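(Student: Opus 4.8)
The plan is to specialize Theorem~\ref{2TASEP Green's function Theorem} to $m=0$ and read off a determinant. When $p$ is empty there are no $u$-variables and no factors referring to type~$2$ particles: in \eqref{2TASEP_P} the inner sum over $\sigma\in S_m=S_0$ and the products $\prod_{i=1}^m(\cdots)$ are empty (hence equal to $1$), and in \eqref{2TASEP Greens with P} one has $(1-z_i)^m=1$ and the bracket $\prod_{i=1}^m[\cdots]$ disappears. The eigenfunction collapses to $P(\nu;t)=\sum_{\pi\in S_n}(-1)^{|\pi|}\prod_{i=1}^n\big(\tfrac{1-z_i}{1-z_{\pi_i}}\big)^i z_{\pi_i}^{\nu_i}e^{(z_i^{-1}-1)t}$, so that
\[
\mathbb{P}(\mu\to\nu;t)=\sum_{\pi\in S_n}(-1)^{|\pi|}\oint\prod_{i=1}^n\frac{\dd z_i}{2\pi\ii}\ \prod_{i=1}^n(1-z_i)^{i}\,(1-z_{\pi_i})^{-i}\,z_{\pi_i}^{\nu_i}\,z_i^{-\mu_i-1}\,e^{(z_i^{-1}-1)t},
\]
where each $z_i$ now runs over a small contour around the origin avoiding $z=1$ (no $u$-contours remain).

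The key step is to notice that, for fixed $\pi$, collecting the powers carried by each variable $z_k$ the integrand already factorizes as a product over $k$ with no change of variables: $(1-z_k)$ appears with total exponent $k-\pi^{-1}(k)$ and $z_k$ with total exponent $\nu_{\pi^{-1}(k)}-\mu_k-1$. Hence the $\pi$-th summand equals $(-1)^{|\pi|}\prod_{k=1}^n N_{k,\pi^{-1}(k)}$, where
\[
N_{k\ell}\ :=\ \frac{1}{2\pi\ii}\oint \dd z\ (1-z)^{k-\ell}\,z^{\,\nu_\ell-\mu_k-1}\,e^{(z^{-1}-1)t}.
\]
Substituting $\sigma=\pi^{-1}$ (with $(-1)^{|\pi|}=(-1)^{|\sigma|}$) turns the sum into $\sum_{\sigma\in S_n}(-1)^{|\sigma|}\prod_k N_{k,\sigma(k)}=\det_{1\le k,\ell\le n} N_{k\ell}$, a Sch\"utz-type determinant of single contour integrals.

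It then remains to match $N_{k\ell}$ with the kernel appearing in the TASEP Green's function of \cite{schutz_exact_1997}: the substitution $z\mapsto 1/z$ brings $N_{k\ell}$ to Sch\"utz's form $F_{k-\ell}(\nu_k-\mu_\ell;t)$, producing only row- and column-dependent prefactors that cancel by multilinearity of the determinant; any residual overall sign or normalization is fixed because both sides solve the $m=0$ master equation \eqref{2TASEP master equation} with the same initial condition \eqref{2TASEP initial condition}, which determines the transition probability uniquely. A one-line sanity check is the case $n=1$: $N_{11}=\tfrac{1}{2\pi\ii}\oint \dd z\,z^{\nu_1-\mu_1-1}e^{(z^{-1}-1)t}=e^{-t}t^{\nu_1-\mu_1}/(\nu_1-\mu_1)!$, the single-particle Poisson propagator.

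The computation involves no real obstacle. The trickiest point is the bookkeeping in the factorization step — making sure each $z_k$ carries exactly the exponents $k-\pi^{-1}(k)$ and $\nu_{\pi^{-1}(k)}-\mu_k-1$ so that $\pi$ genuinely manifests as the index pattern of $N$ — together with reconciling the contour orientation and sign conventions with those of \cite{schutz_exact_1997} in the final identification.
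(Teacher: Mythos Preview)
The paper gives no proof of this corollary; it is stated as an immediate specialization of Theorem~\ref{2TASEP Green's function Theorem} at $m=0$ and left to the reader. Your argument is correct and spells out exactly what that specialization entails: the collapse of all $u$-dependent and $m$-indexed factors, the variable-by-variable factorization of the integrand for each fixed $\pi$, and the recognition of the resulting permutation sum as a determinant of single-variable contour integrals. This is more detail than the paper provides, but it is the right computation.

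One minor slip in your bookkeeping: your $N_{k\ell}$ depends on $\nu_\ell$ and $\mu_k$, not on $\nu_k$ and $\mu_\ell$, so the identification with Sch\"utz's kernel $F_{k-\ell}(\nu_k-\mu_\ell;t)$ is really via the transposed matrix. This is harmless since $\det N = \det N^{\mathsf{T}}$, and in any case your final uniqueness argument---both expressions solve the same master equation with the same delta initial condition---already closes the identification without needing to track index conventions precisely.
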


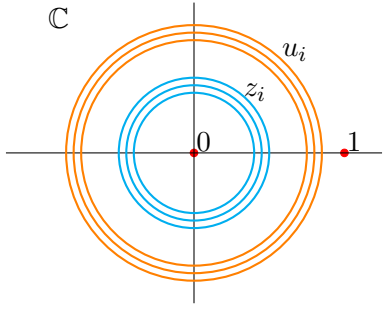
\begin{figure}
	\begin{center}
		\begin{tikzpicture}
		\node (1) at (0.13,0.15) {$0$};
		\node (2) at (2.13,0.15) {$1$};
		\node (3) at (0.81,0.81) {$z_i$};
		\node (4) at (1.34,1.34) {$u_i$};
		\node (5) at (-1.8,1.8) {$\bbC$};
		
		\draw[red,fill=red] (0,0) circle (0.05cm);
		\draw[red,fill=red] (2,0) circle (0.05cm);
		
		\draw[black] (-2.5,0) -- (2.5,0);
		\draw[black] (0,-2) -- (0,2);
		
		\draw[cyan,thick] (0,0) circle (0.8cm);
		\draw[cyan,thick] (0,0) circle (0.9cm); 
		\draw[cyan,thick] (0,0) circle (1cm);
		
		\draw[orange,thick] (0,0) circle (1.5cm);
		\draw[orange,thick] (0,0) circle (1.6cm); 
		\draw[orange,thick] (0,0) circle (1.7cm);    	 	 	 		 		
		\end{tikzpicture}
	\end{center}
	\caption{Diagram showing the setup of the integrals of the transition probability.}
	\label{fig:2TASEP u,z contour diagram}
\end{figure} 

The integration contours for the transition probability \eqref{2TASEP Greens with P} are shown in Figure \ref{fig:2TASEP u,z contour diagram}. Note that we can choose to intergrate along the $z$-contours around $0$ first, and then around the $u$-contours surrounding $0$. Alternatively we can integrate along the $u$-contours first around the poles $u_i=z_j$ before performing any integration in the $z$-variables. Because the $u$-singularities initially enclosed by the contours are simple poles the $u$ integrals are easily evaluated.

\begin{cor}
	For $n$ particles when $\inp=p = \{1,\dots, n\}$, that is when all particles are of type 2, then the transition probability \eqref{2TASEP Greens with P} reduces to the TASEP transition probability from \cite{schutz_exact_1997}.
\end{cor}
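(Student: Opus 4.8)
The plan is to substitute $\inp = p = \{1,\dots,n\}$ directly into the transition probability \eqref{2TASEP Greens with P} and show that the $u$-integrations can be performed explicitly, collapsing the nested structure down to the classical Schütz determinant formula. First I would specialize the eigenfunction \eqref{2TASEP_P}. When $p = \{1,\dots,n\}$ we have $p_i = i$, so the product $\prod_{i=1}^m \prod_{j=1}^{p_i} \frac{1}{1-z_{\pi_j}}$ becomes $\prod_{i=1}^n \prod_{j=1}^{i} \frac{1}{1-z_{\pi_j}} = \prod_{j=1}^n (1-z_{\pi_j})^{-(n-j+1)}$, and the factor $\prod_{i=1}^n \left(\frac{1-z_i}{1-z_{\pi_i}}\right)^i$ combines with it; after reindexing the $\pi$-permutation one should find that all the $(1-z)$-prefactors either cancel against the $(1-z_i)^m = (1-z_i)^n$ weight in \eqref{2TASEP Greens with P} or get absorbed into a redefinition of contours. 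Simultaneously, the inner $\sigma$-sum in \eqref{2TASEP_P}, namely $\sum_{\sigma \in S_m}(-1)^{|\sigma|}\prod_{i=1}^m\left(\frac{1-u_i}{1-u_{\sigma_i}}\right)^i \prod_{i=1}^m\prod_{j=1}^{p_i-1}(u_{\sigma_i}-z_{\pi_j})$ becomes, with $p_i - 1 = i-1$, a sum whose $u$-dependence is polynomial of controlled degree in each $u_{\sigma_i}$.

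Next I would carry out the $u$-integrations. Per the remark following Figure~\ref{fig:2TASEP u,z contour diagram}, one may integrate the $u$-contours first, picking up residues at the simple poles $u_i = z_j$ coming from the factor $\prod_{i=1}^m \prod_{j=1}^{\inp_i}\frac{1}{u_i - z_j}$, which for $\inp_i = i$ reads $\prod_{i=1}^n\prod_{j=1}^{i}\frac{1}{u_i - z_j}$. The key observation is that after summing over all ways of choosing which $z_j$ each $u_i$-residue sits at, together with the antisymmetry of the $\sigma$-sum, the Vandermonde-type factor $\prod_{i}\prod_{j=1}^{i-1}(u_{\sigma_i}-z_{\pi_j})$ from \eqref{2TASEP_P} kills all "diagonal" residue assignments and forces a single consistent pairing $u_i \leftrightarrow z_i$ (up to the permutation being tracked). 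I expect the $u$-residue sum to evaluate to a constant (independent of the $z$'s) times a sign that exactly cancels the $\sigma$-summation, leaving behind only the $z$-integral with the Schütz integrand $\prod_i z_i^{-\mu_i - 1} e^{(z_i^{-1}-1)t}$ and the single permutation sum $\sum_{\pi \in S_n}(-1)^{|\pi|}\prod_i z_{\pi_i}^{\nu_i}(\text{ratio of }(1-z)\text{ powers})$; the latter ratio must then reduce to the $F$-type kernel of \cite{schutz_exact_1997} after the contour is deformed off $z_i = 1$.

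The main obstacle will be bookkeeping the $(1-z_i)$ and $(1-u_i)$ powers: the eigenfunction \eqref{2TASEP_P} carries the factors $\left(\frac{1-z_i}{1-z_{\pi_i}}\right)^i$ and $\prod_{j=1}^{p_i}\frac{1}{1-z_{\pi_j}}$ and $\left(\frac{1-u_i}{1-u_{\sigma_i}}\right)^i$, while \eqref{2TASEP Greens with P} carries $(1-z_i)^m$ and $\prod_{j=\inp_i+1}^n\frac{1}{1-z_j}$; when $m = n$ the latter product is empty, so one must check that the $(1-z_i)^n$ numerator exactly absorbs the $z_{\pi_j}$-denominators that arise from $p_i = i$, and similarly that taking residues at $u_i = z_j$ converts the $(1-u)$-powers into $(1-z)$-powers that further cancel — a delicate but finite computation. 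An alternative, perhaps cleaner route is to bypass the explicit cancellation entirely: observe that when all particles are type 2 the boundary conditions \eqref{2TASEP BC1} and \eqref{2TASEP BC2} never apply and only \eqref{2TASEP BC3}, the TASEP exclusion condition, remains active; hence $\mathbb{P}(\mu\to\nu,\inp\to p;t)$ solves the same initial-value problem as the single-species TASEP Green's function, and by the uniqueness of solutions to the master equation with delta initial data \eqref{2TASEP initial condition} the two must coincide. I would present this uniqueness argument as the proof and relegate the direct residue computation to a remark.
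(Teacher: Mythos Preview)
Your proposal is correct in substance, but the route you choose to present differs from the paper's. The paper does the direct residue computation, not the uniqueness argument. Concretely, the paper shows that upon setting $m=n$ and $p_j=\inp_j=j$, the $u$- and $z$-dependent pieces of the integrand combine, after cancellation, into the single factor
\[
\prod_{1\le i<j\le n}\frac{u_{\sigma_i}-z_{\pi_j}}{u_i-z_j}\;\prod_{i=1}^n\frac{1}{u_i-z_i},
\]
and then argues that integrating $u_1,\dots,u_n$ in order forces the only nonvanishing contribution to come from $\sigma=\pi$; the residue theorem then collapses everything to the Sch\"utz formula. This is sharper than your sketch of the direct computation, which speaks vaguely of ``a single consistent pairing $u_i\leftrightarrow z_i$'' and leaves the $(1-z)$-bookkeeping as an obstacle; in the paper's organization those factors cancel cleanly once the displayed product above has been isolated, and the identification of the surviving term is $\sigma=\pi$ (for each fixed $\pi$) rather than a residue assignment independent of the permutations. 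Your phrase ``kills all `diagonal' residue assignments'' is also backwards relative to what actually happens: it is precisely the diagonal residues $u_i=z_i$ that survive.

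Your alternative uniqueness argument---that with all particles of type~2 only the TASEP boundary condition \eqref{2TASEP BC3} is active, so the formula solves the single-species master equation with the correct initial data and must therefore coincide with the Sch\"utz Green's function---is a legitimate and economical proof. It buys you freedom from the residue bookkeeping entirely, at the cost of invoking uniqueness for the Kolmogorov forward equation on $\mathbb{Z}$ (standard for exclusion processes, but worth one sentence of justification). The paper's computation, by contrast, verifies the reduction at the level of the integral representation itself, which has the merit of showing explicitly how the nested structure collapses and is consistent with the rest of Section~\ref{2TASEP integral formula section}, where the $u$-integrals are manipulated directly.
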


\begin{proof}
	Setting $n=m$ and  $p_j = \inp_j = j$ generates the following factor in the integrand of \eqref{2TASEP Greens with P} after simplification \[\prod_{1 \leq i < j \leq n} \frac{u_{\sigma_i} - z_{\pi_j}}{u_i - z_j}\prod_{i=1}^n \frac{1}{u_i - z_i}.\]
	It then follows when integrating in the $u$-variables first in the order $u_1, \dots, u_m$ that the only term which contributes to the integration is $\sigma = \pi$. The result follows with the residue theorem. 
\end{proof}

We can further simplify our computation of the transition probability under certain conditions using the following lemma, which shows that some of the singularities in the $u$-variables are removable.
\begin{lem}
	\label{2TASEP u_i = z_i residue lemma}
	When all type 2 particles are initially to the left of all type 1 particles, that is $\inp_i = i$ for $1 \leq i \leq m$ and performing the integration in the $u$-variables first, the singularities at $u_j = z_i$ for $1 \leq i \leq j-1$ are removable.
\end{lem}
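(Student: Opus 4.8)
The plan is to examine the integrand of \eqref{2TASEP Greens with P} as a function of a single variable $u_j$, holding all other variables fixed, and to show that the apparent simple pole at $u_j = z_i$ for $i \le j-1$ actually has vanishing residue once the $\sigma$-sum inside $P(\nu,p;t)$ is taken into account. With the specialization $\inp_i = i$ for $1 \le i \le m$, the $u$-dependent part of the integrand coming from the kernel in \eqref{2TASEP Greens with P} is $\prod_{i=1}^m \prod_{j=1}^{\inp_i}\frac{1}{u_i - z_j} = \prod_{i=1}^m \frac{1}{u_i - z_i}$ (since $\prod_{j=\inp_i+1}^n \frac{1}{1-z_j}$ carries no $u$-dependence), so the kernel contributes only the poles $u_i = z_i$, not $u_j = z_i$ for $i < j$. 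Hence the only place a pole at $u_j = z_i$ with $i \le j-1$ can arise is from the denominator $\prod_{i=1}^m \prod_{k=1}^{p_i - 1}\frac{1}{u_{\sigma_i} - z_{\pi_k}}$ — wait, more precisely, from the structure of $P(\nu,p;t)$: the factor $\prod_{i=1}^m \prod_{j=1}^{p_i-1}(u_{\sigma_i} - z_{\pi_j})$ is polynomial in the $u$'s, so it creates no poles; the genuine $u$-poles in $P$ come only through the prefactors $\prod_{i=1}^m\bigl(\tfrac{1-u_i}{1-u_{\sigma_i}}\bigr)^i$, which have poles at $u_{\sigma_i}=1$, not at $u_j=z_i$. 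So I would first carefully identify exactly which term produces the spurious pole; I expect it is an artifact of reorganizing the product $\prod_i \tfrac{1}{u_i - z_i}$ after the change of summation variable in the nested sum, or of an intermediate step in the derivation of \eqref{2TASEP Greens with P} that the lemma is meant to clean up.

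Assuming the pole at $u_j = z_i$ ($i<j$) is present in some natural intermediate form of the integrand, the key step is a residue-cancellation argument within the $\sigma$-sum. Group the permutations $\sigma \in S_m$ into pairs related by a transposition that swaps the preimages of $i$ and $j$ (equivalently, pair $\sigma$ with $\sigma \circ (i\,j)$ acting on positions). The residue of the integrand at $u_j = z_i$ from the term labeled $\sigma$ should be shown to be the negative of the residue from the paired term $\sigma'$; the sign $(-1)^{\abs{\sigma}}$ flips, and one checks that the remaining $u_j$-dependent factors — namely $\bigl(\tfrac{1-u_j}{1-u_{\sigma_j}}\bigr)^{\,\cdot}$ type prefactors and the polynomial $\prod(u_{\sigma_i}-z_{\pi_k})$ factors — evaluate to the same quantity after setting $u_j = z_i$ and relabeling. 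This is the standard antisymmetrization trick: a would-be pole is killed because the residue is an antisymmetric function evaluated on a symmetric locus.

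The main obstacle, and where I would spend the most care, is bookkeeping: matching up precisely which $z_{\pi_k}$-factors and which $u_{\sigma_i}$-prefactors survive at $u_j = z_i$, and verifying that the constraint $i \le j-1$ (as opposed to $i=j$) is exactly what makes the pairing $\sigma \leftrightarrow \sigma\circ(i\,j)$ preserve the relevant product $\prod_{k=1}^{p_i-1}$ ranges — the condition $\inp_i = i$ ensures the product ranges $\prod_{j=1}^{p_i-1}$ and the kernel product $\prod_{j=1}^{\inp_i}$ interlock in just the right way. I would organize the computation by writing the full $u_j$-dependence of a single $\sigma$-term as an explicit rational function $R_\sigma(u_j)$, compute $\operatorname{Res}_{u_j = z_i} R_\sigma$, and then show $\operatorname{Res}_{u_j=z_i} R_\sigma + \operatorname{Res}_{u_j=z_i} R_{\sigma'} = 0$ termwise; summing over the $m!/2$ pairs gives the claim. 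A secondary check is that this pairing does not disturb the $z$-contour prescription, which it does not since the cancellation happens before any $z$-integration.
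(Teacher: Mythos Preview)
Your first computation is wrong, and the error propagates. With $\inp_i=i$ the kernel factor is
\[
\prod_{i=1}^m \prod_{j=1}^{\inp_i}\frac{1}{u_i-z_j}=\prod_{i=1}^m\prod_{j=1}^{i}\frac{1}{u_i-z_j}=\prod_{1\le j\le i\le m}\frac{1}{u_i-z_j},
\]
not $\prod_i \frac{1}{u_i-z_i}$. So the poles $u_j=z_i$ for $i<j$ are genuinely present in the kernel, and they are exactly the ones the lemma is about. There is nothing mysterious to locate; the statement is that after summing over $\sigma$ these simple poles of the kernel are cancelled.

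Your transposition-pairing instinct is correct and is what the paper uses: one pairs $\sigma$ with $\sigma T$ where $T$ swaps the two relevant $u$-indices, and shows the combined contribution vanishes at the would-be pole. But the crucial structural point you are missing is that this cancellation does \emph{not} hold in the raw integrand. Write out the $u_i,u_j$-dependence of a single $\sigma$-term (with $\tau=\sigma^{-1}$): it is
\[
(1-u_i)^{i-\tau_i}(1-u_j)^{j-\tau_j}\prod_{b=1}^{p_{\tau_i}-1}(u_i-z_{\pi_b})\prod_{b=1}^{p_{\tau_j}-1}(u_j-z_{\pi_b})\cdot\frac{1}{\prod_{a\le i}(u_i-z_a)\prod_{a\le j}(u_j-z_a)}.
\]
Setting $u_j=z_i$ here and in the paired term $\sigma T_{i,j}$, you will find the two residues do not match: they differ by powers of $(1-u_i)$ versus $(1-z_i)$ and by products $\prod(u_i-z_{\pi_b})$ versus $\prod(z_i-z_{\pi_b})$, and $u_i$ is still a free variable. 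The paired residues only coincide once $u_i$ has already been replaced by $z_i$.

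This is why the paper's argument is an \emph{induction} on the order of integration $u_1,\dots,u_m$. One integrates $u_1$ (single pole at $z_1$), then shows that in the resulting expression the paired terms $\sigma,\sigma T_{1,k}$ combine to have a factor vanishing at $u_k=z_1$ for each $k>1$. Inductively, after $u_1,\dots,u_{l-1}$ have all been evaluated at $z_1,\dots,z_{l-1}$, one integrates $u_l$ at its sole surviving pole $z_l$, and then the pairing $\sigma\leftrightarrow\sigma T_{l,k}$ produces a bracket
\[
(1-z_l)^{\tau_k}(1-u_k)^{\tau_l}\prod(z_l-z_{\pi_b})-(1-z_l)^{\tau_l}(1-u_k)^{\tau_k}\prod(u_k-z_{\pi_b})
\]
that visibly vanishes at $u_k=z_l$. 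The substitution $u_l\to z_l$ from the previous integration step is what makes the two halves of the bracket symmetric in $u_k\leftrightarrow z_l$; without it the mechanism breaks. Your proposal, which tries to run the pairing before any $u$-integration, therefore cannot close.
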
 
\noindent
Lemma~\ref{2TASEP u_i = z_i residue lemma} is particularly powerful because the task of evaluating the $u$-integrals now amounts to evaluating the residue of a single simple pole in each $u$-variable. The proof of this result is technical and is deferred until Appendix \ref{u_i=z_i proof section} as a result.

\subsection{Total crossing events}

In this section we focus on the probability of total crossing events using the 2-TASEP transition probability \eqref{2TASEP Greens with P} obtained from the nested Bethe ansatz. We will recover the explicit expression as a multiple integral over the product of two determinants as in Corollary~\ref{cor:pTASEPcross_intro}, which was derived using the methods we will discuss in Section~\ref{se:vertex models}. The nested Bethe ansatz method was also used for the integrable Arndt-Heinzel-Rittenberg (AHR) two-species exclusion processes in \cite{chen_exact_2018}.

To characterize a total crossing event we impose  conditions on the indices $\inp_i =i,p_j=n-m+j$, so that we initially have all type 2 particles to the left of those of type 1. An example of the total crossing a space-time trajectories is given in Figure \ref{Total crossing event fig}. We are then interested in calculating the probability of total crossing.
\begin{figure}
	\begin{center}
		\begin{tikzpicture}
		\draw[->] (0,0) -- (14,0);
		\draw[->] (0,0) -- (-1,0);
		\draw[->] (0,0) -- (0,6);
		
		\draw[fill=red,thick] (1,0) circle (0.3cm); 
		\draw[fill=red,thick] (2,0) circle (0.3cm); 
		\draw[fill=red,thick] (3,0) circle (0.3cm);
		\draw[fill=blue,thick] (4,0) circle (0.3cm); 
		\draw[fill=blue,thick] (5,0) circle (0.3cm); 
		\draw[fill=blue,thick] (6,0) circle (0.3cm);  
		\draw[fill=blue,thick] (7,0) circle (0.3cm);  
		
		\draw[thick,red,->] (1,0) -- (4,3) -- (5.2,3.2) -- (5.8,3.8) -- (7,4) -- (8.2,5.2) -- (9.4,5.4) -- (9.6,5.6) -- (10.8,5.8) -- (11,6);
		\draw[thick,red,->] (2,0) -- (3.5,1.5) -- (4.7,1.7) -- (6,3) -- (7.2,3.2) -- (8.5,4.5) -- (9.7,4.7) -- (10,5) -- (11.2,5.2) -- (12,6);
		\draw[thick,red,->] (3,0) -- (4,1) -- (5.2,1.2) -- (5.8,1.8) -- (7,2) -- (8,3) -- (9.2,3.2) -- (10,4) -- (10.2,4.2) -- (11.4,4.4) -- (13,6);
		\draw[thick,blue,->] (4,0) -- (5,1) -- (4.2,1.2) -- (4.5,1.5) -- (3.7,1.7) -- (5,3) -- (4.2,3.2) -- (7,6);
		\draw[thick,blue,->] (5,0) -- (6.8,1.8) -- (6,2) -- (7,3) -- (6.2,3.2) -- (6.8,3.8) -- (6,4) -- (8,6);
		\draw[thick,blue,->] (6,0) -- (7,1) -- (9,3) -- (8.2,3.2) -- (9.5,4.5) -- (8.7,4.7) -- (9.2,5.2) -- (8.4,5.4) -- (9,6);
		\draw[thick,blue,->] (7,0) -- (11.2,4.2) -- (10.4,4.4) -- (11,5) -- (10.2,5.2) -- (10.6,5.6) -- (9.8,5.8) -- (10,6);		
		
		\node (x) at (14.2,0) {$\nu$};
		\node (t) at (0.2,6) {$t$};
		\end{tikzpicture}
	\end{center}
	\caption{Typical behaviour of a total crossing event with step initial condition on a space-time diagram.}
	\label{Total crossing event fig}
\end{figure}
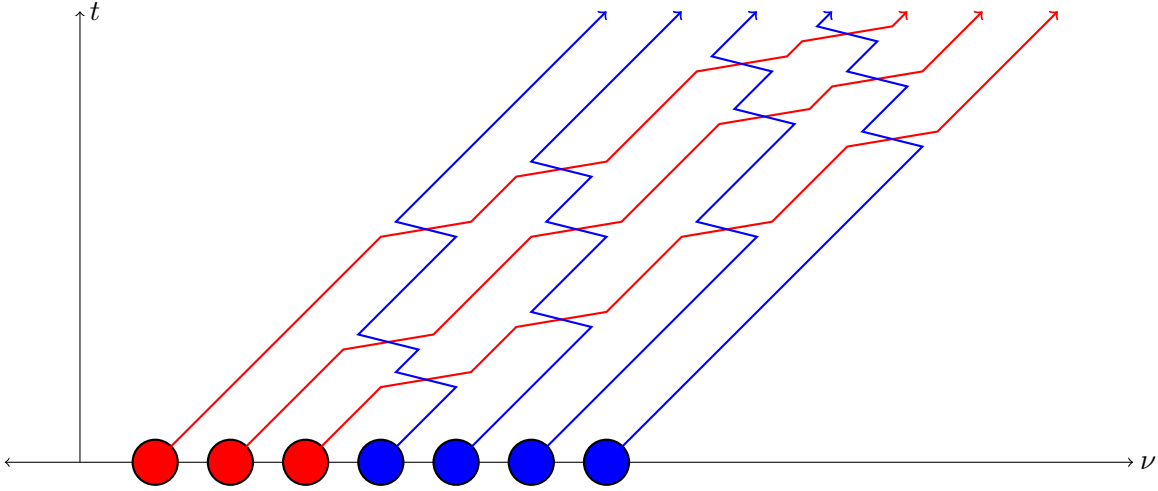

We first state the following symmetrization identity.
\begin{lem}
	\label{2TASEP u factorisation lemma}
	For all $\pi \in S_n$, when $p_j = n-m+j$ the $u$-dependence of the eigenfunction $P(\nu,p;t)$ from \eqref{2TASEP_P} factorizes as 
	\begin{multline}
		\label{2TASEP u factorisation equation}
		\sum_{\sigma \in S_m} (-1)^\abs{\sigma} \prod_{i=1}^{m}\left(\frac{1-u_{i}}{1-u_{\sigma_i}}\right)^{i} \prod_{i=1}^{m} \prod_{j=1}^{p_{i}-1}\left(u_{\sigma_{i}}-z_{\pi_{j}}\right) \\
		=\prod_{i=1}^{m} \prod_{j=1}^{n-m}\left(u_{i}-z_{\pi_{j}}\right) \frac{\prod_{1 \leq i < j \leq m}\left(u_{i}-u_{j}\right)}{\prod_{i=1}^{m} \left(1-u_i\right)^{m-i}} \prod_{i=1}^{m-1}\left(z_{\pi_{n-m+i}}-1\right)^{m-i}.
	\end{multline}
\end{lem}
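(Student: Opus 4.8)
The plan is to prove the symmetrization identity \eqref{2TASEP u factorisation equation} by recognizing the left-hand side as a Schur-like alternant that can be evaluated via a Vandermonde-type determinant argument. With the constraint $p_i = n-m+i$, each product $\prod_{j=1}^{p_i-1}(u_{\sigma_i}-z_{\pi_j})$ splits as $\prod_{j=1}^{n-m}(u_{\sigma_i}-z_{\pi_j}) \cdot \prod_{j=1}^{i-1}(u_{\sigma_i}-z_{\pi_{n-m+j}})$. The first factor depends on $u_{\sigma_i}$ but not on $i$, so after collecting it over all $i$ it becomes the $\sigma$-independent prefactor $\prod_{i=1}^m\prod_{j=1}^{n-m}(u_i-z_{\pi_j})$, which matches the first factor on the right-hand side. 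The remaining task is therefore to show
\[
\sum_{\sigma \in S_m} (-1)^{\abs{\sigma}} \prod_{i=1}^m (1-u_{\sigma_i})^{-i} \prod_{i=1}^m \prod_{j=1}^{i-1} (u_{\sigma_i} - z_{\pi_{n-m+j}})
= \frac{\prod_{1\leq i<j\leq m}(u_i-u_j)}{\prod_{i=1}^m(1-u_i)^m} \prod_{i=1}^{m-1}(z_{\pi_{n-m+i}}-1)^{m-i},
\]
after also extracting the $\sigma$-independent factor $\prod_i (1-u_i)^i$ from both sides (the extra $\prod_i (1-u_i)^{-i}$ on the left combining with $\prod_i (1-u_i)^i$ to produce the denominator structure).

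The key step is to write the alternating sum on the left as a determinant. First I would factor out $\prod_{i=1}^m (1-u_{\sigma_i})^{-i}$; writing $-i = -m + (m-i)$ lets us pull out the $\sigma$-independent piece $\prod_i (1-u_i)^{-m}$ and reduces the exponent to $(m-i)$. Then the sum becomes $\det\!\big[ (1-u_j)^{m-i} \prod_{k=1}^{i-1}(u_j - z_{\pi_{n-m+k}}) \big]_{1\le i,j\le m}$ up to sign. Now the $i$-th row is a polynomial of degree exactly $m-1$ in $u_j$: it is $(1-u_j)^{m-i}$ times a monic degree-$(i-1)$ polynomial in $u_j$ with roots $z_{\pi_{n-m+1}},\dots,z_{\pi_{n-m+i-1}}$. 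Column operations (equivalently, expanding in a convenient polynomial basis) reduce this to a Vandermonde-type determinant in the $u_j$; one evaluates the leading coefficient of each row polynomial to read off the overall constant. The leading coefficient of row $i$ in $u_j^{m-1}$ is $(-1)^{m-i}$ from the $(1-u_j)^{m-i}$ factor, and the ``correction'' that converts the naive Vandermonde into the exact row polynomials contributes precisely $\prod_{i=1}^{m-1}(z_{\pi_{n-m+i}}-1)^{m-i}$ via a telescoping evaluation at $u_j=1$. Assembling these gives the claimed right-hand side.

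The main obstacle will be bookkeeping the constant factor — tracking the sign $(-1)^{\abs{\sigma}}$ against the sign arising from the determinant expansion, and verifying that the product of leading coefficients combined with the "Vandermonde correction" yields exactly $\prod_{i=1}^{m-1}(z_{\pi_{n-m+i}}-1)^{m-i}$ with the right power of $(1-u_i)$ in the denominator. A clean way to pin this down is to check the identity on specializations: setting $m=1$ and $m=2$ directly, and then arguing by induction on $m$, expanding the $m\times m$ determinant along its last row (the row containing the newly introduced variable $z_{\pi_n}$ and the highest power $(1-u_j)^0$) and matching against the $(m-1)\times(m-1)$ case. An alternative route, which sidesteps the determinant entirely, is to observe that both sides are antisymmetric polynomials in $u_1,\dots,u_m$ of the same (multi-)degree after clearing the common denominator $\prod_i(1-u_i)^m$; hence their ratio is a symmetric rational function, and comparing leading behavior as $u_i \to \infty$ together with the residues at $u_i = 1$ forces the ratio to be the stated constant. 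Either way, the essential content is the degree count plus one normalization, and I expect the write-up to be short once the determinant is set up correctly.
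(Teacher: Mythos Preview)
Your proposal is correct and follows essentially the same route as the paper. Both arguments first strip off the symmetric factor $\prod_{i=1}^{m}\prod_{j=1}^{n-m}(u_i-z_{\pi_j})$, then show that the remaining numerator
\[
g(u_1,\dots,u_m)=\sum_{\sigma\in S_m}(-1)^{|\sigma|}\prod_{j=1}^{m}(1-u_{\sigma_j})^{m-j}\prod_{k=1}^{j-1}(u_{\sigma_j}-z_{\pi_{n-m+k}})
\]
is an antisymmetric polynomial of degree $m-1$ in each $u_i$, hence a constant multiple of the Vandermonde, and finally pin down the constant. The only cosmetic differences are: (i) you recognise $g$ directly as the determinant $\det\big[(1-u_j)^{m-i}\prod_{k<i}(u_j-z_{\pi_{n-m+k}})\big]$, which makes antisymmetry automatic, whereas the paper establishes antisymmetry by pairing $\sigma$ with $\sigma T_{\alpha,\beta}$; and (ii) for the constant, you propose specialising $u_m=1$ and inducting on $m$ (using that column $m$ then has a single nonzero entry $\prod_{k}(1-a_k)$, and that the remaining $(m-1)\times(m-1)$ block factors as $\prod_{j<m}(1-u_j)$ times the size-$(m-1)$ determinant), while the paper extracts the constant via the contour integral $\oint_1\prod_{i<m}\frac{du_i}{2\pi\ii}\,f(u_1,\dots,u_m)$, which is effectively the same $u_i\to 1$ evaluation done iteratively. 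Your vaguer remarks about ``leading coefficients'' are not needed and would indeed be awkward since all rows have the same top degree; the induction you outline is the clean way to finish.
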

\begin{proof}
	We first define 
	\begin{equation}
	\label{2 TASEP u facorisation lemma proof f}
	f(u_1,\dots,u_m) = \sum_{\sigma \in S_m} (-1)^\abs{\sigma} \prod_{i=1}^{m}\left(\frac{1-u_{i}}{1-u_{\sigma_i}}\right)^{i} \prod_{j=1}^{i-1}\left(u_{\sigma_{i}}-z_{\pi_{n-m+j}}\right),
	\end{equation}
	and note that the left-hand side of \eqref{2TASEP u factorisation equation} is equal to $	f(u_1,\dots,u_m) \prod_{i=1}^m \prod_{j=1}^{n-m} \left(u_i - z_{\pi_j}\right)$. We then proceed to factorize $f$, which is a rational function in $u_1, \dots , u_m$ where we regard the $z$-variables as constants. This gives 
	\begin{equation}
	f(u_1,\dots,u_m)  = \frac{g(u_1,\dots,u_m)}{\prod_{i=1}^m (1-u_i)^{m -i}},
	\end{equation}
	where
	\begin{equation}
	\label{2TASEP u factorisation lemma proof P}
	g(u_1,\dots,u_m)  =  \sum_{\sigma \in S_m} (-1)^\abs{\sigma} \prod_{j=1}^m (1-u_{\sigma_j})^{m-j} \prod_{k=1}^{j-1} (u_{\sigma_j} - z_{\pi_{n-m+k}}).
	\end{equation}
	We note that $g$ is a polynomial of order $m-1$ in all variables $u_1, \dots, u_m$, and we will now find all of its zeros.
	
	Let $1 \leq a < b \leq m$ and consider the limit of $g$ as $u_a \to u_b$. For a fixed permutation $\sigma \in S_m$, let $\alpha, \beta \in \{1, \dots,m\}$ so that $\sigma_\alpha = a$ and $\sigma_\beta = b$. We then consider the $u_a,u_b$-dependence of the term in \eqref{2TASEP u factorisation lemma proof P} associated with $\sigma$
	\begin{multline}
		\label{2TASEP u factorisation lemma limit 1}
		\lim_{u_a \to u_b} (1-u_a)^{m-\alpha}(1-u_b)^{m-\beta} \prod_{k=1}^{\alpha-1} (u_a - z_{\pi_{n-m+k}}) \prod_{k=1}^{\beta-1} (u_b - z_{\pi_{n-m+k}}) 
		\\ = (1-u_b)^{2m - \alpha - \beta} \prod_{k=1}^{\alpha -1 } (u_b - z_{\pi_{n-m+k}})^2 \prod_{k=\alpha}^{\beta -1 } (u_b - z_{\pi_{n-m+k}}),
	\end{multline}
	where we have assumed without loss of generality that $\alpha < \beta$. Then let $T_{\alpha,\beta}\in S_m$ be the transposition of $\alpha$ and $\beta$. We consider the the same limit as above but now for the term associated with the  permutation $ \sigma T_{\alpha, \beta}$. This limit evaluates as
	\begin{multline}
		\label{2TASEP u factorisation lemma limit 2}
		\lim_{u_a \to u_b} (1-u_a)^{m-\beta}(1-u_b)^{m-\alpha} \prod_{k=1}^{\beta-1} (u_a - z_{\pi_{n-m+k}}) \prod_{k=1}^{\alpha-1} (u_b - z_{\pi_{n-m+k}}) 
		\\ = (1-u_b)^{2m - \alpha - \beta} \prod_{k=1}^{\alpha -1 } (u_b - z_{\pi_{n-m+k}})^2 \prod_{k=\alpha}^{\beta -1 } (u_b - z_{\pi_{n-m+k}}).
	\end{multline}
	We then notice that the limits \eqref{2TASEP u factorisation lemma limit 1},\eqref{2TASEP u factorisation lemma limit 2} are equal, and noting that the signature satisfies $(-1)^\abs{\sigma T_{\alpha, \beta}} = -(-1)^{\abs{\sigma}}$ we may conclude that for all $a \neq b$ the term associated $\sigma$ cancels with the one associated with $\sigma T_{\alpha, \beta}$. This leads to
	\[\lim_{u_a \to u_b} g(u_1,\dots,u_m) = 0.\]
	We conclude that $g$ has zeros at $u_i = u_j$ for all $i \neq j$ which accounts for all $m-1$ zeros in each $u$-variable. This gives 
	\begin{equation}
	\label{2TASEP u factorisation lemma proof c}
	g(u_1,\dots,u_m) = c \prod_{1 \leq i < j \leq m} (u_i - u_j),
	\end{equation}
	for some constant $c$, which can be evaluated as a function of the $z$-variables. This is performed by evaluating the integral
	\begin{equation*}
		\oint_1 \prod_{i=1}^{m-1} \frac{\dd u_i}{2 \pi \ii}  f(u_1,\dots,u_m),
	\end{equation*}
	where each integral is over a contour surrounding the singularity at $u_i = 1$, on which $f$ is analytic. The integral is only in variables $u_1, \dots, u_{m-1}$ as $f$ is analytic when $u_m=1$. Evaluating the contours in the order $u_{m-1},u_{m-2}, \dots,u_1$ means only evaluating the residues of simple poles. This calculation yields $c = \prod_{j=1}^{m-1} (z_{\pi_{n-m+j}}-1)^{m-j}$, and the result follows. 
\end{proof}
Using the factorisation identity Lemma \ref{2TASEP u factorisation lemma} we wish to perform the integration in the $u$-variables. This gives the following result which shows that when considering total crossing events the type 1 and 2 particle bulks are in essence separable.

When studying total crossing events, the systems configuration is now only specified by the initial and final particle coordinates.
\begin{prop}
	\label{2TASEP crossing det formula statement}
	The probability of total crossing is given by the transition probability
	\begin{multline}
		\mathbb{P}(\mu\to\nu,\inp\to p;t) =\oint \prod_{i=1}^{m} \frac{\dd z_i }{2\pi \ii} \prod_{j=1}^{n-m} \frac{\dd w_j }{2\pi \ii} \prod_{i=1}^m\frac{e^{(z_i^{-1}-1)t}}{\left(1-z_{i}\right)^{n-m}} \prod_{i=1}^{n-m}e^{(w_i^{-1}-1)t}\prod_{i=1}^{m} \prod_{j=1}^{n-m} \left(w_{j}-z_{i}\right) \\
		\det \left(z_{i}^{\nu_{n-m+j} - \mu_i-1} \left(1-z_{i}\right)^{i-j}\right)_{1\leq i,j \leq m} \det\left(w_{i}^{\nu_{j}-\mu_{m+i}-1}\left(1-w_{i}\right)^{i-j}\right)_{1 \leq i,j \leq n-m},
	\end{multline}
	where we have relabelled the integration variables $w_i = z_{m+i}$. All contours surround the origin only.
\end{prop}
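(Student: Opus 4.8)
The plan is to specialise the transition probability of Theorem~\ref{2TASEP Green's function Theorem} to the total-crossing data $\inp_i=i$ for $1\leq i\leq m$ and $p_j=n-m+j$ for $1\leq j\leq m$, and then to evaluate all $m$ of the $u$-integrals explicitly. First I would substitute these values of $\inp,p$ into \eqref{2TASEP Greens with P}--\eqref{2TASEP_P}. With $\inp_i=i$ the kernel in \eqref{2TASEP Greens with P} becomes $\prod_{i=1}^m\big[\prod_{j=1}^{i}(u_i-z_j)^{-1}\prod_{j=i+1}^{n}(1-z_j)^{-1}\big]$, in which the factor $\prod_{i=1}^m\prod_{j=i+1}^n(1-z_j)^{-1}$ depends only on the $z$-variables. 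Since $p_j=n-m+j$, Lemma~\ref{2TASEP u factorisation lemma} applies, and for every $\pi\in S_n$ I may replace the inner $\sigma$-sum of $P(\nu,p;t)$ by the right-hand side of \eqref{2TASEP u factorisation equation}. After this substitution, collecting all factors not involving the $u$-variables into a prefactor, the entire $u$-dependence of the integrand is
\[
\frac{\prod_{i=1}^{m}\prod_{j=1}^{n-m}(u_i-z_{\pi_j})\,\prod_{1\leq a<b\leq m}(u_a-u_b)}{\prod_{i=1}^{m}(1-u_i)^{m-i}\,\prod_{i=1}^{m}\prod_{j=1}^{i}(u_i-z_j)}.
\]

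Next I would perform the $u$-integrations. By Lemma~\ref{2TASEP u_i = z_i residue lemma} the only genuine singularity enclosed by the $u_i$-contour is at $u_i=z_i$ (the factor $(1-u_i)^{-(m-i)}$ has its pole at $u_i=1$, which lies outside the contour), so each $u_i$-integral reduces to the residue at $u_i=z_i$: $\prod_{a<b}(u_a-u_b)$ has no poles and evaluates to $\prod_{a<b}(z_a-z_b)$; the residue of $\prod_{j=1}^{i}(u_i-z_j)^{-1}$ at $u_i=z_i$ is $\prod_{j=1}^{i-1}(z_i-z_j)^{-1}$; and $(1-u_i)^{-(m-i)}$ evaluates to $(1-z_i)^{-(m-i)}$, whose pole at $z_i=1$ lies outside the final $z_i$-contour. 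Using $\prod_{a<b}(z_a-z_b)/\prod_{i}\prod_{j<i}(z_i-z_j)=(-1)^{\binom{m}{2}}$ and $\prod_{i=1}^{m-1}(z_{\pi_{n-m+i}}-1)^{m-i}=(-1)^{\binom{m}{2}}\prod_{i=1}^{m}(1-z_{\pi_{n-m+i}})^{m-i}$, the two signs cancel, and the $u$-integrated integrand carries the factor $\prod_{i=1}^{m}\prod_{j=1}^{n-m}(z_i-z_{\pi_j})\,\prod_{i=1}^{m}(1-z_i)^{i-m}\,\prod_{i=1}^{m}(1-z_{\pi_{n-m+i}})^{m-i}$, multiplying the $\pi$-independent prefactor and the remaining $z$-dependence of $P(\nu,p;t)$, namely $\prod_{i=1}^{n}(1-z_{\pi_i})^{-i}z_{\pi_i}^{\nu_i}e^{(z_i^{-1}-1)t}$ and $\prod_{i=1}^{m}\prod_{j=1}^{n-m+i}(1-z_{\pi_j})^{-1}$.

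The crucial observation is that $\prod_{i=1}^m\prod_{j=1}^{n-m}(z_i-z_{\pi_j})$ vanishes identically unless $\{\pi_1,\dots,\pi_{n-m}\}$ is disjoint from $\{1,\dots,m\}$, that is, unless $\{\pi_1,\dots,\pi_{n-m}\}=\{m+1,\dots,n\}$; so $\pi$ restricts to a bijection $\{1,\dots,n-m\}\to\{m+1,\dots,n\}$ and hence $\{n-m+1,\dots,n\}\to\{1,\dots,m\}$. Writing such $\pi$ as $\pi_j=m+\rho_j$ for $j\leq n-m$ with $\rho\in S_{n-m}$ and $\pi_{n-m+i}=\tau_i$ for $i\leq m$ with $\tau\in S_m$, one has $(-1)^{|\pi|}=(-1)^{m(n-m)}(-1)^{|\rho|}(-1)^{|\tau|}$, and after the relabelling $w_i:=z_{m+i}$ the sum over $\pi\in S_n$ factorises into a sum over $\rho\in S_{n-m}$ in the $w$-variables and a sum over $\tau\in S_m$ in the $z$-variables. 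The coupling factor becomes $\rho$-independent, $\prod_{i=1}^m\prod_{j=1}^{n-m}(z_i-z_{\pi_j})=(-1)^{m(n-m)}\prod_{i=1}^m\prod_{j=1}^{n-m}(w_j-z_i)$, absorbing the block sign; the $\rho$-sum collapses to a determinant in the $w$'s built only from the type 1 data $\nu_1<\cdots<\nu_{n-m}$, $\mu_{m+1}<\cdots<\mu_n$, and the $\tau$-sum to a determinant in the $z$'s built only from the type 2 data $\nu_{n-m+1}<\cdots<\nu_n$, $\mu_1<\cdots<\mu_m$. Transposing each determinant, pulling out the row-dependent powers of $1-w_i$ and of $1-z_i$, and multiplying back the prefactor collected above, all powers recombine: the $w$-determinant acquires entries $w_i^{\nu_j-\mu_{m+i}-1}(1-w_i)^{i-j}$, while the $z$-determinant becomes $\prod_{i=1}^m(1-z_i)^{-(n-m)}$ times $\det\big(z_i^{\nu_{n-m+j}-\mu_i-1}(1-z_i)^{i-j}\big)_{1\leq i,j\leq m}$, the factors $e^{(z_i^{-1}-1)t}$ and $e^{(w_i^{-1}-1)t}$ being inherited directly from $P(\nu,p;t)$. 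Since all residues have been taken at $u_i=z_i$ and neither $z_i=1$ nor $z_i=0$ is enclosed by the remaining contours, these surround the origin only, which is precisely the stated formula.

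The step I expect to be the main obstacle is the $u$-integration. It relies essentially on Lemma~\ref{2TASEP u_i = z_i residue lemma} to collapse each $u_i$-integral onto a single simple pole, and one must then keep careful account of the powers of $1-z_i$ and of the signs $(-1)^{\binom{m}{2}}$ and $(-1)^{m(n-m)}$ generated by those residues, by the factor $(1-u_i)^{-(m-i)}$ produced by Lemma~\ref{2TASEP u factorisation lemma}, and by the block decomposition of the $S_n$-sum. Once this bookkeeping is in place, recognising the two determinants and verifying the final form is a routine computation.
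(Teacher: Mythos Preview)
Your proposal is correct and follows essentially the same route as the paper: apply Lemma~\ref{2TASEP u factorisation lemma} to factorise the $\sigma$-sum, use Lemma~\ref{2TASEP u_i = z_i residue lemma} to reduce each $u_i$-integral to the residue at $u_i=z_i$, observe that the resulting factor $\prod_{i=1}^m\prod_{j=1}^{n-m}(z_i-z_{\pi_j})$ forces $\{\pi_1,\dots,\pi_{n-m}\}=\{m+1,\dots,n\}$, and then split the $S_n$-sum into $S_{n-m}\times S_m$ to obtain the two determinants. The only slip is in your final sentence: the remaining $z$- and $w$-contours do enclose the origin (that is precisely what ``surround the origin only'' means); what they do not enclose is the pole at $1$.
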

This result is also derived in an independent manner for the general $r$-species TASEP in Corollary~\ref{cor:pTASEPcross} in the next section.

\begin{proof}[Proof of Proposition \ref{2TASEP crossing det formula statement}]
	We first note that by Lemma \ref{2TASEP u factorisation lemma} when $p_j = n-m +j, \inp_j = j$ we have
	\begin{multline}
		\label{2TASEP det crossing proof with u}
		\mathbb{P}(\mu\to\nu,\inp\to p;t) =  \oint \prod_{i=1}^n \frac{\dd z_i }{2\pi \ii} \prod_{j=1}^m \frac{\dd u_j }{2\pi \ii}\ \sum_{\pi\in S_n} (-1)^{\abs{\pi}}
		\prod_{i=1}^n \left(\frac{1-z_i}{1-z_{\pi_i}}\right)^{i}
		z_{\pi_i}^{\nu_i} z_{i}^{-\mu_i-1} e^{(z^{-1}_i-1)t}
		\\
		\times
		\prod_{i=1}^m \left[ \frac{\prod_{j=1}^{n-m}(u_i-z_{\pi_j})}{\prod_{j=1}^i (u_i-z_j)} \right] \prod_{i=1}^m \frac{(z_i-1)^{m-i+1}}{z_{\pi_{n-m+i}}-1} 
		\prod_{i=1}^{n-m}\frac{1}{(1-z_{\pi_i})^m} 
		\frac{\prod_{1\leq i<j\leq m}(u_i-u_j)}{\prod_{i=1}^{m}(1-u_i)^{m-i}}.
	\end{multline}
	Then Lemma \ref{2TASEP u_i = z_i residue lemma} tells us that first integrating with respect to the $u$-variables amounts to evaluating the residues at $u_i = z_i$ for all $1 \leq i \leq m$. 
	
	By observing the $\prod_{i=1}^m \left[ \frac{\prod_{j=1}^{n-m}(u_i-z_{\pi_j})}{\prod_{j=1}^i (u_i-z_j)} \right]$ factor in \eqref{2TASEP det crossing proof with u} we see that there will only be a singularity at all $u_i = z_i$ for $1 \leq i \leq m$ if $\pi_k > m $ for all $1 \leq k \leq n-m$. This leaves us with the only non-zero terms in \eqref{2TASEP det crossing proof with u} being the ones associated with the permutation $\pi \in S_n$ such that $\pi_k > m$ for $1 \leq k \leq n-m$. This also forces $\pi_{n-m+k} \leq m$ for $1 \leq k \leq m$. The permutations $\pi \in S_n$ which satisfy this requirement are effectively total crossing permutations, and may be split into two smaller subsets of permutations. These are
	\[S_{n-m}^{(1)} = \{\tau: \{1, \dots, n-m \} \to \{m+1, \dots, n\}\} \cong S_{n-m},\]
	\[S_{m}^{(2)} = \{\sigma: \{n-m+1,\dots, n \} \to \{1, \dots, m\}\} \cong S_{m}.\]
	We also note that the signature of these permutations must satisfy $(-1)^{\abs{\pi}} = (-1)^{\abs{\tau}+\abs{\sigma}+m(n-m)}$ when constructing from $\pi \in S_n$.
	
	We re-label the integration variables of the type 2 particles by $z_{m+i} = w_i$ for $i = 1,\dots,n-m$. We can then integrate in the $u$-variables using Lemma \ref{2TASEP u_i = z_i residue lemma}, so that \eqref{2TASEP det crossing proof with u} simplifies to 
	\begin{multline}
		\mathbb{P}(\mu\to\nu,\inp\to p;t) = \oint \prod_{i=1}^{m} \frac{\dd z_i}{2\pi \ii} \prod_{j=1}^{n-m} \frac{\dd w_j }{2\pi \ii} (-1)^{m(n-m)} \prod_{i=1}^me^{(z_i^{-1}-1)t} \prod_{i=1}^{n-m}e^{(w_i^{-1}-1)t}\\	
		\times \sum_{\sigma \in S_m^{(2)}} (-1)^{\abs{\sigma}} \prod_{i=1}^m{z_{\sigma_i}^{\nu_{n-m+i}}z_i^{-\mu_i-1}} \frac{(1-z_i)^i}{(1-z_{\sigma_i})^i} \frac{(z_i-1)^{m-i+1}}{(1-z_i)^{m-i}}\frac{1}{z_{\sigma_i} -1}\\
		\times \sum_{\tau \in S_{n-m}^{(1)}} (-1)^{\abs{\tau}} \prod_{i=1}^{n-m} w_{\tau_i}^{\nu_i} w_{i}^{-\mu_{i+m}-1} \frac{(1-w_i)^{m+i}}{(1-w_{\tau_i})^{m+i}} \prod_{i=1}^{n-m} \frac{1}{(1-w_{\tau_i})^m}  \prod_{i=1}^{m} \prod_{j=1}^{n-m} (z_i - w_{\tau_j}).
	\end{multline}
	We note here that the extra sign outside the sums may be eliminated by bringing all factors into the form $(1-z_i),(1-w_i),(w_i-z_j)$.
	
	After performing the obvious cancellations in the above we then use the isomorphisms $S_{n-m}^{(1)} \cong S_{n-m}, S_m^{(2)} \cong S_m$ and simplify down using the symmetry of some products above to give
	\begin{multline}
		\mathbb{P}(\mu\to\nu,\inp\to p;t) = \oint \prod_{i=1}^{m} \frac{\dd z_i }{2\pi \ii} \prod_{j=1}^{n-m} \frac{\dd w_j }{2\pi \ii} \prod_{i=1}^me^{(z_i^{-1}-1)t} \prod_{i=1}^{n-m}e^{(w_i^{-1}-1)t} \prod_{j=1}^{m} \prod_{k=1}^{n-m} (w_k-z_j) \\	
		\times \sum_{\sigma \in S_m} (-1)^{\abs{\sigma}} \prod_{i=1}^m{z_{\sigma_i}^{\nu_{n-m+i}-\mu_{\sigma_i}-1}} (1-z_{\sigma_i})^{\sigma_i-i} \prod_{i=1}^m \frac{1}{(1-z_i)^{n-m}}\\
		\times \sum_{\tau \in S_{n-m}} (-1)^{\abs{\tau}} \prod_{i=1}^{n-m}{w_{\tau_{i}}^{\nu_{i}-\mu_{m+\tau_{i}}-1}} (1-w_{\tau_i})^{\tau_i-i}  .
	\end{multline}
	Above we have multiplied the sums by $\prod_{i=1}^m (1-z_i)^{m-n} \prod_{i=1}^{n-m} (1-w_i)^m$ to account for the shifting of the factor $\prod_{i=1}^{m}(1-z_i)^i \prod_{i=1}^{n-m}(1-w_i)^i$ under the mapping of groups $S_{n-m}^{(1)} \to S_{n-m}, S_{m}^{(2)} \to S_{m}$. We are then able to write the sums above as determinant to arrive at the result. 
\end{proof}

\section{Vertex models and partition functions}
\label{se:vertex models}

In this section we show how results of \cite{borodin_coloured_2018}, suitably specialized, lead to an alternative formulation of transition probabilities in the multi-species (T)ASEP. We begin by giving some basic definitions related to (higher-spin) coloured vertex models in Section \ref{ssec:coloured-vertex}, before recalling the definition of several families of multivariate rational functions from \cite{borodin_coloured_2018}, in Section \ref{ssec:multivariate}. 

One of these families, $G_{\mu/\nu}$, is symmetric in its alphabet $(y_1,\dots,y_\ell)$ and may be viewed as the entries of a multivariate transfer matrix. In Section \ref{ssec:transition}, we recall that a gauge-transformed version of $G_{\mu/\nu}$ provides the probability of transitioning from a state $\nu$ in the vertex model to the state $\mu$ after $\ell$ discrete time steps; we provide an explicit integral formula for this probability, using the orthogonality results of \cite{borodin_coloured_2018}.

The models introduced to this point are rather generic; aside from the variables $y_i$ assigned to horizontal lattice lines, they depend on a global parameter $s$. In Section \ref{ssec:reduction-asep} we make use of the freedom to appropriately tune these parameters; taking a particular limit of them, we descend to the $n$-species ASEP, with each of the colours $\{1,\dots,n\}$ appearing exactly once in the system. Taking the corresponding limit of the integral from Section \ref{ssec:transition}, we obtain an integral formula of the transition probabilities in the $n$-species ASEP (with $n$ distinguishable particles).

In Section \ref{ssec:symmetrize}, we show how a certain symmetrization procedure applied to the integral in Section \ref{ssec:reduction-asep} provides transition probabilities in the $r$-species ASEP in which the particles need no longer be distinguishable (each of the colours $\{1,\dots,r\}$ may appear multiple times within the system). In Section \ref{ssec:tasep}, we examine the reduction to the $r$-species TASEP by taking $q=0$ in the equations of Section \ref{ssec:symmetrize}.

\subsection{Coloured vertex models}
\label{ssec:coloured-vertex}

In the sequel, we follow closely the conventions of \cite[Chapter 2]{borodin_coloured_2018}.
The vertex models that we consider assign weights to finite collections of paths drawn on a square grid. Each vertex produces a weight that depends on the configuration of all the paths that go through it. The total weight for a collection of paths is the product of weights of the vertices that the paths traverse. We assume the normalization in which the weight of an empty vertex is always equal to $1$.

Our paths are \emph{coloured}; each path carries a colour that is a number between 1 and $n$, where $n\geq 1$ is a fixed parameter. We will assume that each horizontal edge of the underlying square grid can carry no more than one path, while vertical edges can be occupied by multiple paths. Thus, the states of the horizontal edges can be encoded by an integer between $0$ and $n$, with $0$ denoting an edge that is not occupied by a path, while the states of the vertical edges are encoded by $n$-dimensional (nonnegative-valued) vectors which specify the number of times each colour $\{1,\dots,n\}$ appears at that edge.

Paths will always travel upward in the vertical direction, and in the horizontal direction a path can travel rightward or leftward, depending on the partition function being considered.

\subsubsection{Rightward travel}

Vertex weights in the regions of rightward travel are denoted as 
\begin{align}
	\label{generic-L-intro}
	L_{z,q,s}(\I,j; \K,l)
	=
	\tikz{0.7}{
		\draw[lgray,line width=1.5pt,->] (-1,0) -- (1,0);
		\draw[lgray,line width=4pt,->] (0,-1) -- (0,1);
		\node[left] at (-1,0) {\tiny $j$};\node[right] at (1,0) {\tiny $l$};
		\node[below] at (0,-1) {\tiny $\I$};\node[above] at (0,1) {\tiny $\K$};
	}
	,
	\quad
	0\le j,l\le n,
	\quad
	\I,\K \in \{0,1,2,\dots\}^n,
\end{align}
where the vectors $\I = (I_1,\dots,I_n)$, $\K = (K_1,\dots,K_n)$ are chosen such that $I_i$ (respectively, $K_i$) gives the number of paths of colour $i$ present at the bottom (respectively, top) edge of the vertex. Demanding that any path which enters a vertex (via either of the edges marked $\I$ or $j$) must also exit it (via either of the edges marked $\K$ or $l$) leads to the constraint
\begin{align}
	\label{conserve}
	L_{z,q,s}(\I,j; \K,l)
	=
	0,
	\quad
	\text{unless}
	\quad
	\I + \bm{e}_j = \K + \bm{e}_l,
\end{align}
where $\bm{e}_j$ denotes the $j$-th $n$-dimensional Euclidean unit vector (and $\bm{e}_0 = \bm{0}$). We refer to the constraint \eqref{conserve} as {\it path conservation}. The explicit values of the weights \eqref{generic-L-intro} which respect path conservation are provided in the table below:
\begin{align}
	\label{s-weights-intro}
	\begin{tabular}{|c|c|c|}
		\hline
		\quad
		\tikz{0.7}{
			\draw[lgray,line width=1.5pt,->] (-1,0) -- (1,0);
			\draw[lgray,line width=4pt,->] (0,-1) -- (0,1);
			\node[left] at (-1,0) {\tiny $0$};\node[right] at (1,0) {\tiny $0$};
			\node[below] at (0,-1) {\tiny $\I$};\node[above] at (0,1) {\tiny $\I$};
		}
		\quad
		&
		\quad
		\tikz{0.7}{
			\draw[lgray,line width=1.5pt,->] (-1,0) -- (1,0);
			\draw[lgray,line width=4pt,->] (0,-1) -- (0,1);
			\node[left] at (-1,0) {\tiny $i$};\node[right] at (1,0) {\tiny $i$};
			\node[below] at (0,-1) {\tiny $\I$};\node[above] at (0,1) {\tiny $\I$};
		}
		\quad
		&
		\quad
		\tikz{0.7}{
			\draw[lgray,line width=1.5pt,->] (-1,0) -- (1,0);
			\draw[lgray,line width=4pt,->] (0,-1) -- (0,1);
			\node[left] at (-1,0) {\tiny $0$};\node[right] at (1,0) {\tiny $i$};
			\node[below] at (0,-1) {\tiny $\I$};\node[above] at (0,1) {\tiny $\I^{-}_i$};
		}
		\quad
		\\[1.3cm]
		\quad
		$\dfrac{1-s z q^{\Is{1}{n}}}{1-sz}$
		\quad
		& 
		\quad
		$\dfrac{(z-sq^{I_i}) q^{\Is{i+1}{n}}}{1-sz}$
		\quad
		& 
		\quad
		$\dfrac{z(1-q^{I_i}) q^{\Is{i+1}{n}}}{1-sz}$
		\quad
		\\[0.7cm]
		\hline
		\quad
		\tikz{0.7}{
			\draw[lgray,line width=1.5pt,->] (-1,0) -- (1,0);
			\draw[lgray,line width=4pt,->] (0,-1) -- (0,1);
			\node[left] at (-1,0) {\tiny $i$};\node[right] at (1,0) {\tiny $0$};
			\node[below] at (0,-1) {\tiny $\I$};\node[above] at (0,1) {\tiny $\I^{+}_i$};
		}
		\quad
		&
		\quad
		\tikz{0.7}{
			\draw[lgray,line width=1.5pt,->] (-1,0) -- (1,0);
			\draw[lgray,line width=4pt,->] (0,-1) -- (0,1);
			\node[left] at (-1,0) {\tiny $i$};\node[right] at (1,0) {\tiny $j$};
			\node[below] at (0,-1) {\tiny $\I$};\node[above] at (0,1) 
			{\tiny $\I^{+-}_{ij}$};
		}
		\quad
		&
		\quad
		\tikz{0.7}{
			\draw[lgray,line width=1.5pt,->] (-1,0) -- (1,0);
			\draw[lgray,line width=4pt,->] (0,-1) -- (0,1);
			\node[left] at (-1,0) {\tiny $j$};\node[right] at (1,0) {\tiny $i$};
			\node[below] at (0,-1) {\tiny $\I$};\node[above] at (0,1) {\tiny $\I^{+-}_{ji}$};
		}
		\quad
		\\[1.3cm] 
		\quad
		$\dfrac{1-s^2 q^{\Is{1}{n}}}{1-sz}$
		\quad
		& 
		\quad
		$\dfrac{z(1-q^{I_j}) q^{\Is{j+1}{n}}}{1-sz}$
		\quad
		&
		\quad
		$\dfrac{s(1-q^{I_i})q^{\Is{i+1}{n}}}{1-sz}$
		\quad
		\\[0.7cm]
		\hline
	\end{tabular} 
\end{align}
where we assume that $1 \leq i<j \leq n$. Here $q$ is the \emph{quantization parameter}, $s$ is the \emph{spin parameter}, and $z$ is the \emph{spectral parameter}; we have also made use of the notations 
\begin{align*}
	\I^{+}_{i}
	=
	\I + \bm{e}_i,
	\quad
	\I^{-}_{i}
	=
	\I - \bm{e}_i,
	\quad
	\I^{+-}_{ij}
	=
	\I + \bm{e}_i - \bm{e}_j,
	\quad
	\Is{i}{n}
	=
	\sum_{k=i}^{n} I_k.
\end{align*}
At $s=q^{-\frac{N}{2}}$ with $N \in \mathbb{Z}_{\geq1}$, the model forbids vertical edges occupied by more than $N$ paths. More precisely, assuming that the state $\I$ at the bottom vertical edge of the vertex contains exactly $N$ paths 
(whereby $\Is{1}{n} = N$), we see that the weight of the bottom-left vertex in \eqref{s-weights-intro} vanishes when $s=q^{-\frac{N}{2}}$; hence, if we start with $N$ or fewer paths entering the bottom vertical edge of a vertex, we may never transition to more than $N$ paths exiting via the top vertical edge. In the reduction to the multi-species ASEP, we will be interested in the case $N=1$.

\subsubsection{Leftward travel}

In the regions of leftward travel, we will use a different set of weights denoted as 
\begin{align}
	\label{generic-M-intro}
	M_{z,q,s}(\I,j;\K,l)
	=
	\tikz{0.7}{
		\draw[lgray,line width=1.5pt,<-] (-1,0) -- (1,0);
		\draw[lgray,line width=4pt,->] (0,-1) -- (0,1);
		\node[left] at (-1,0) {\tiny $l$};\node[right] at (1,0) {\tiny $j$};
		\node[below] at (0,-1) {\tiny $\I$};\node[above] at (0,1) {\tiny $\K$};
	},
	\quad
	0\le j,l \le n,
	\quad
	\I,\K \in \{0,1,2,\dots\}^n,
\end{align}
and defined by 
\begin{align}
	\label{sym1-intro}
	M_{z,q,s}(\I,j;\K,l)
	&=
	(-s)^{\bm{1}_{j \geq 1}-\bm{1}_{l \geq 1}}
	L_{z^{-1},q^{-1},s^{-1}}(\I,j;\K,l).
\end{align}

\subsubsection{Stochasticity property}

The models \eqref{generic-L-intro} and \eqref{generic-M-intro} have a sum-to-unity property that can be used to construct discrete-time Markov processes. This property is given in the following proposition:
\begin{prop}
	For any fixed $\I \in (\mathbb{Z}_{\geq 0})^n$ and $j \in \{0,1,\dots,n\}$, there holds
	\begin{align}
		\label{sum-unity1}
		\sum_{\K \in (\mathbb{Z}_{\geq 0})^n}
		\sum_{l = 0}^{n}
		L_{z,q,s}(\I,j;\K,l) (-s)^{\bm{1}_{l \geq 1}}
		&=
		1,
		\\
		\label{sum-unity2}
		\sum_{\K \in (\mathbb{Z}_{\geq 0})^n}
		\sum_{l = 0}^{n}
		M_{z,q,s}(\I,j;\K,l) (-s)^{-\bm{1}_{j \geq 1}}
		&=
		1.
	\end{align}
\end{prop}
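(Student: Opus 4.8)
The plan is to use path conservation to collapse the sums to short explicit expressions, and then to verify the resulting rational-function identities by telescoping. For fixed $\I$ and $j$, the constraint \eqref{conserve} forces $\K = \I + \bm{e}_j - \bm{e}_l$, so for each outgoing colour $l \in \{0,1,\dots,n\}$ there is at most one admissible $\K$ and the sum over $\K$ in \eqref{sum-unity1} collapses; one is left with a sum of $n+1$ explicit weights drawn from the table \eqref{s-weights-intro}, each multiplied by the corresponding sign $(-s)^{\bm{1}_{l \geq 1}}$. Whenever the prescribed $\K$ has a negative entry the corresponding weight already vanishes, because of the factor $1 - q^{I_i}$ with $I_i = 0$, so no separate treatment of those terms is needed.

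Next I would split into cases according to $j$. If $j = 0$, the only vertices that occur are the top-left ($l = 0$) and top-right ($l = i$, $1 \leq i \leq n$) vertices of the first row of \eqref{s-weights-intro}; after multiplying by $(-s)^{\bm{1}_{l\geq1}}$ and clearing the common denominator $1-sz$, the claim \eqref{sum-unity1} reduces to
\[
q^{\Is{1}{n}} + \sum_{i=1}^{n}\bigl(q^{\Is{i+1}{n}} - q^{\Is{i}{n}}\bigr) = 1 ,
\]
which is immediate since the sum telescopes to $1 - q^{\Is{1}{n}}$. If $j = i_0$ with $1 \leq i_0 \leq n$, the outgoing colour $l$ takes the values $0$, $i_0$, $\{i : i < i_0\}$ and $\{i : i > i_0\}$, corresponding respectively to the left vertex of the second row, the middle vertex of the first row, and the two colour-exchange vertices of the second row. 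Collecting these, clearing the denominator $1 - sz$, and using $q^{I_{i_0}} q^{\Is{i_0+1}{n}} = q^{\Is{i_0}{n}}$, the identity \eqref{sum-unity1} becomes the statement that
\[
\bigl(1 - s^2 q^{\Is{1}{n}}\bigr) - sz\, q^{\Is{i_0+1}{n}} + s^2 q^{\Is{i_0}{n}} - sz \sum_{i > i_0}\bigl(1 - q^{I_i}\bigr) q^{\Is{i+1}{n}} - s^2 \sum_{i < i_0}\bigl(1 - q^{I_i}\bigr) q^{\Is{i+1}{n}} = 1 - sz .
\]
Both remaining sums telescope, $\sum_{i > i_0}\bigl(q^{\Is{i+1}{n}} - q^{\Is{i}{n}}\bigr) = 1 - q^{\Is{i_0+1}{n}}$ and $\sum_{i < i_0}\bigl(q^{\Is{i+1}{n}} - q^{\Is{i}{n}}\bigr) = q^{\Is{i_0}{n}} - q^{\Is{1}{n}}$; substituting these, every term carrying a factor $s^2$ cancels and the coefficient of $sz$ collapses to $-1$, leaving exactly $1 - sz$, which proves \eqref{sum-unity1}.

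Finally, \eqref{sum-unity2} follows from \eqref{sum-unity1} with no further computation: by the definition \eqref{sym1-intro},
\[
\sum_{\K,l} M_{z,q,s}(\I,j;\K,l)\,(-s)^{-\bm{1}_{j\geq1}}
=
\sum_{\K,l} (-s)^{-\bm{1}_{l\geq1}}\, L_{z^{-1},q^{-1},s^{-1}}(\I,j;\K,l) ,
\]
and since $(-s)^{-\bm{1}_{l\geq1}} = (-s^{-1})^{\bm{1}_{l\geq1}}$, the right-hand side is precisely the left-hand side of \eqref{sum-unity1} with $(z,q,s)$ replaced by $(z^{-1}, q^{-1}, s^{-1})$, hence equals $1$. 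The only genuine bookkeeping is in the case $j \geq 1$: one must correctly assign each pair $(j,l)$ to the right row of the weight table according to whether $l = 0$, $l = j$, $l < j$ or $l > j$, and keep track of the sign factors; once these six families of terms are assembled, the two telescoping identities do all the remaining work, so I expect this matching step to be the main --- and only mild --- obstacle.
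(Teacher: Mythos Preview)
Your proof is correct and follows essentially the same approach as the paper: the paper's proof simply states that \eqref{sum-unity1} may be directly checked from the conservation property \eqref{conserve} and the weight table \eqref{s-weights-intro}, and that \eqref{sum-unity2} then follows from \eqref{sum-unity1} via the definition \eqref{sym1-intro}. You have carried out exactly this direct verification, supplying the telescoping details that the paper omits.
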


\begin{proof}
	The first statement \eqref{sum-unity1} may be directly checked using the conservation property \eqref{conserve} and the explicit form \eqref{s-weights-intro} of the vertex weights. The second statement \eqref{sum-unity2} follows from \eqref{sum-unity1} and the definition \eqref{sym1-intro} of the weights for leftward travel.
\end{proof}

We note also the following positivity criterion for the vertex weights:
\begin{prop}
	\label{prop:pos}
	There exist choices of parameters $z,q,s \in \mathbb{R}_{> 0}$ with $z < 1$, $q > 1$, $s < 1$ such that the weights $L_{z,q,s}(\I,j;\K,l) (-s)^{\bm{1}_{l \geq 1}}$ and $M_{z,q,s}(\I,j;\K,l) (-s)^{-\bm{1}_{j \geq 1}}$ are positive for all $\I,j,\K,l$.
\end{prop}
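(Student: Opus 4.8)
The plan is to produce an explicit family of parameters and verify positivity of every vertex weight by direct inspection, the point being merely existence. The first observation is that, since $q>1$, the numerator $1-szq^{\Is{1}{n}}$ of the $(j,l)=(0,0)$ weight in \eqref{s-weights-intro} cannot stay positive once the vertical occupation $\Is{1}{n}$ is allowed to be arbitrarily large; one is therefore forced into a sector of bounded occupation. I would take $s=q^{-1/2}$, i.e.\ the value $s=q^{-N/2}$ at $N=1$, so that (as explained after \eqref{s-weights-intro}) every vertical edge has $\Is{1}{n}\in\{0,1\}$ and every component $I_i\in\{0,1\}$. This is exactly the sector relevant to the multi-species ASEP reduction of Section~\ref{ssec:reduction-asep}, and it is the sense in which ``for all $\I,j,\K,l$'' is to be understood (outside this sector the weights vanish identically). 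Note that $s=q^{-1/2}<1$ is equivalent to $q>1$, so the imposed inequalities are mutually consistent.

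First I would treat the rightward weights $L_{z,q,s}(\I,j;\K,l)\,(-s)^{\bm{1}_{l\geq 1}}$. For $0<z,s<1$ the common denominator $1-sz$ is positive, so it suffices to determine the sign of the numerator of each of the six entries of \eqref{s-weights-intro}, after attaching the factor $-s$ exactly when $l\geq 1$. Using only $q>1$ one checks that the four types $(0,i)$, $(i,0)$, $(i,j)$ and $(j,i)$ (with $i<j$) are automatically positive: in the first, third and fourth of these the occupation forced by path conservation makes the factor $1-q^{I_i}$ (or $1-q^{I_j}$) equal to $1-q<0$, which is cancelled by the $-s$ attached to $l\geq 1$, while the $(i,0)$ weight equals $(1-s^2)/(1-sz)=(1-q^{-1})/(1-sz)$, manifestly positive. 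Only the two ``diagonal'' types are left: $(0,0)$ with $\Is{1}{n}=1$ has numerator $1-szq=1-q^{1/2}z$, and $(i,i)$ with $I_i=0$ equals $-s(z-s)$ times a positive factor; both are positive precisely when $z<q^{-1/2}$. Hence all the $L$-weights are positive as soon as $q>1$ and $0<z<q^{-1/2}$.

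The leftward weights are then handled through the defining relation \eqref{sym1-intro}, which gives $M_{z,q,s}(\I,j;\K,l)\,(-s)^{-\bm{1}_{j\geq 1}}=(-s)^{-\bm{1}_{l\geq 1}}\,L_{z^{-1},q^{-1},s^{-1}}(\I,j;\K,l)$. The one subtlety requiring care is that the common denominator is now $1-s^{-1}z^{-1}=(sz-1)/(sz)$, which is \emph{negative} rather than positive; accordingly the sign requirement on each numerator is reversed, and one re-runs the same six-case analysis at the reciprocal parameters $(z^{-1},q^{-1},s^{-1})$. Using $z<q^{-1/2}$ — so that $z^{-1}>q^{1/2}$ and $q^{-1}<1$ — one finds positivity again holds under this identical condition. (One cannot simply invoke the $L$-result at the reciprocal parameters, since $q^{-1}<1$ violates the hypothesis $q>1$.) Choosing, for instance, $q=2$, $s=2^{-1/2}$ and $z=\tfrac1{10}$ — in fact any $q>1$ with $0<z<q^{-1/2}$ — then establishes the proposition.

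I do not expect a genuine obstacle: once the correct sector $s=q^{-1/2}$ is identified, everything reduces to an elementary finite verification. The places that demand attention are the choice to specialize $s$ (without which the $(0,0)$ weight alone defeats positivity for large occupation), the sign flip of the denominator in passing from $L$ to $M$ via \eqref{sym1-intro}, and the bookkeeping of which vertical states $\I$ can actually occur in the $N=1$ sector — the statement being about the nonzero weights, namely those that arise.
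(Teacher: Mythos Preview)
Your route differs from the paper's. The paper does not specialize $s$; it argues that taking $z$ and $s$ sufficiently small with $z<s$ makes each of the numerators $1-szq^{\Is{1}{n}}$, $sq^{I_i}-z$, $q^{I_i}-1$, $1-s^2q^{\Is{1}{n}}$ positive. You are right that this is impossible for unbounded $\Is{1}{n}$, and the paper tacitly relies on a bound --- in the partition functions \eqref{f-pf} and \eqref{G-pf} where the proposition is used there are $n$ paths in total, so $\Is{1}{n}\le n$, and for any such fixed bound the paper's ``sufficiently small'' choice genuinely works. Your alternative of pinning $s=q^{-1/2}$ forces the $N=1$ sector $\Is{1}{n}\le 1$; this matches the ASEP limit of Section~\ref{ssec:reduction-asep} but is strictly narrower than what Section~\ref{ssec:transition} needs, where Proposition~\ref{prop:pos} is first invoked (via the non-negativity claim preceding \eqref{G-prob-ell}) and where vertical occupations up to $n$ occur. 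So what the paper's approach buys is generality in $s$ at the cost of an unstated occupation bound; what yours buys is an explicit finite case analysis at the cost of covering only the $N=1$ regime.

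One correction to your argument: the claim that weights ``vanish identically'' outside the $N=1$ sector is not right. Setting $s=q^{-1/2}$ makes that sector \emph{invariant} --- the $(i,0)$ weight at $\Is{1}{n}=1$ is zero, so one cannot transition out --- but the $L$-weights at $\Is{1}{n}\ge 2$ are generically nonzero and can be negative (e.g.\ the $(0,0)$ numerator $1-q^{-1/2}zq^{2}=1-q^{3/2}z$ is negative for $q^{-3/2}<z<q^{-1/2}$). So your reinterpretation of ``for all $\I$'' as ``for all $\I$ in the $N=1$ sector'' needs to rest on invariance of the sector under the dynamics, not on vanishing of the weights outside it. Your treatment of the $M$-weights through \eqref{sym1-intro}, noting the sign flip of the denominator $1-s^{-1}z^{-1}$, is more explicit than the paper's, which simply declares the $M$-case analogous.
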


\begin{proof}
	This boils down to analysis of the numerators and denominators that appear in the table of weights; we show how this works in the case of $L_{z,q,s}(\I,j;\K,l) (-s)^{\bm{1}_{l \geq 1}}$. The denominators all take the form $1-sz$, which is automatically positive for $0 < z < 1$, $0 < s < 1$. The positivity of the numerators that appear in $L_{z,q,s}(\I,j;\K,l) (-s)^{\bm{1}_{l \geq 1}}$ is ensured by making each of the combinations
	\begin{align*}
		1-s z q^{\I_{[1,n]}},
		\quad
		s q^{I_i} -z,
		\quad
		q^{I_i}-1,
		\quad
		1-s^2 q^{\I_{[1,n]}}
	\end{align*}
	positive. This is clearly possible as long as $z$ and $s$ are sufficiently small and $z<s$.
\end{proof}

As a consequence of these propositions, we may view the vertex weights $L_{z,q,s}(\I,j;\K,l) (-s)^{\bm{1}_{l \geq 1}}$ and $M_{z,q,s}(\I,j;\K,l) (-s)^{-\bm{1}_{j \geq 1}}$ as discrete probability distributions, describing the probability to transition to outgoing states $\K,l$ given input states $\I,j$.

\subsection{Multivariate rational functions}
\label{ssec:multivariate}

In this section we recall the definition of certain families of multivariate rational functions given in \cite[Chapters 3--4]{borodin_coloured_2018}, as well as some of their key properties.

\subsubsection{The family $f_{\mu}$}

Our first functions are indexed by compositions of nonnegative integers, and are defined via the partition functions \eqref{f-pf}; we subsequently extend their definition to compositions over all integers, via equation \eqref{shift-defn}.

\begin{defn}[The function $f_\mu$]
	Fix a composition $\mu = (\mu_1,\dots,\mu_n) \in (\mathbb{Z}_{\geq 0})^n$ and for all $k \geq 0$ introduce the vectors
	\begin{align}
		\label{AB-states1}
		\bm{A}(k) = \sum_{j=1}^{n} (\bm{1}_{\mu_j = k}) \bm{e}_j.
	\end{align}
	%
	%
	Let $(z_1,\dots,z_n)$ be a collection of complex parameters. We define $f_{\mu}(z_1,\dots,z_n) \equiv f_{\mu}(z_1,\dots,z_n;q,s)$ as the following partition function:
	\begin{align}
		\label{f-pf}
		f_{\mu}(z_1,\dots,z_n)
		&=
		\tikz{0.8}{
			\foreach\y in {1,...,5}{
				\draw[lgray,line width=1.5pt,->] (1,\y) -- (8,\y);
			}
			\foreach\x in {2,...,7}{
				\draw[lgray,line width=4pt,->] (\x,0) -- (\x,6);
			}
			\node[left] at (0.5,1) {$z_1 \rightarrow$};
			\node[left] at (0.5,2) {$z_2 \rightarrow$};
			\node[left] at (0.5,3) {$\vdots$};
			\node[left] at (0.5,4) {$\vdots$};
			\node[left] at (0.5,5) {$z_n \rightarrow$};
			\node[below] at (7,0) {$\cdots$};
			\node[below] at (6.1,0) {$\cdots$};
			\node[below] at (5.1,0) {$\cdots$};
			\node[below] at (4.1,0) {$\cdots$};
			\node[below] at (3,0) {\footnotesize$\bm{0}$};
			\node[below] at (2,0) {\footnotesize$\bm{0}$};
			\node[above] at (7,6) {$\cdots$};
			\node[above] at (6.1,6) {$\cdots$};
			\node[above] at (5.1,6) {$\cdots$};
			\node[above] at (4.1,6) {$\cdots$};
			\node[above] at (3,6) {\footnotesize$\bm{A}(1)$};
			\node[above] at (2,6) {\footnotesize$\bm{A}(0)$};
			\node[right] at (8,1) {$0$};
			\node[right] at (8,2) {$0$};
			\node[right] at (8,3) {$\vdots$};
			\node[right] at (8,4) {$\vdots$};
			\node[right] at (8,5) {$0$};
			\node[left] at (1,1) {$1$};
			\node[left] at (1,2) {$2$};
			\node[left] at (1,3) {$\vdots$};
			\node[left] at (1,4) {$\vdots$};
			\node[left] at (1,5) {$n$};
		}
	\end{align}
	In this picture, a single path of colour $i$ enters via the left edge of row $i$ of the lattice; it exits from the top of column $\mu_i$ (where numbering starts at $0$, for the leftmost column). The zeros on the bottom and right edges mean that no paths enter and exit there. The function $f_{\mu}(z_1,\dots,z_n)$ is computed by summing the weights of all configurations of paths through the lattice, where the weight of a configuration is given by the product of the weights of the individual vertices. The vertex weights are given by \eqref{s-weights-intro}, where we set $z \equiv z_i$ for a vertex in row $i$ of the lattice.
\end{defn}

\subsubsection{Factorization}

\begin{prop}
	\label{prop:factor1}
	Fix an anti-dominant composition $\delta = (\delta_1,\dots,\delta_n) \in (\mathbb{Z}_{\geq 0})^n$ such that $\delta_i \leq \delta_{i+1}$ for all $1 \leq i \leq n-1$. The function $f_{\delta}(z_1,\dots,z_n)$ factorizes as follows:
	\begin{align}
		\label{f-factor}
		f_{\delta}(z_1,\dots,z_n)
		=
		\prod_{j \geq 0} (s^2;q)_{m_j(\delta)}
		\cdot
		\prod_{i=1}^{n}
		\frac{1}{1-s z_i}
		\left( \frac{z_i-s}{1-s z_i} \right)^{\delta_i},
	\end{align}
	where $m_j(\delta) = |\{i: \delta_i=j\}|$ and $(a;q)_m = \prod_{k=1}^{m} (1-a q^{k-1})$ for all $a \in \mathbb{C}$, $m \geq 0$.
\end{prop}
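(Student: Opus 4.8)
The plan is to use the fact that, for an anti-dominant composition $\delta$, the partition function \eqref{f-pf} is \emph{frozen}: only one configuration of lattice paths has nonzero weight, so $f_\delta$ is simply the product of the weights of the vertices visited by that single configuration. Everything then reduces to identifying this configuration and multiplying up the weights from the table \eqref{s-weights-intro}.

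\textbf{Step 1: the configuration is frozen.} I claim the unique admissible configuration is the ``staircase'' in which the path of colour $i$ occupies the horizontal edges of row $i$ from column $0$ to column $\delta_i$ and then runs straight up column $\delta_i$ to the top of the lattice. I would prove this by downward induction on $i=n,n-1,\dots,1$. For $i=n$ the path is forced: colour $n$ lives in the top row, where it can only move rightward or leave through the top, so to exit above column $\delta_n$ it must travel along row $n$ to column $\delta_n$ and then turn up. For the inductive step, assume colours $i+1,\dots,n$ follow their staircase paths, and let $c$ be the first column at which colour $i$ leaves row $i$; since paths never descend and colour $i$ must exit through the top, $\delta_i\geq c$. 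If colour $i$ were to turn rightward again at a vertex in some row $\rho\geq i+1$ and column $c$, then, because $\delta_\rho\geq\delta_i\geq c$ by anti-dominance, colour $\rho$ has not yet turned up at column $c$ in its already-fixed staircase path, hence colour $\rho$ already occupies the horizontal edge to the immediate right of that vertex, contradicting the rule that each horizontal edge carries at most one path — unless $c=\delta_\rho$, which forces $c=\delta_i=\delta_\rho$ and makes a rightward move impossible since colour $i$ must exit at column $c$. So colour $i$ ascends straight up from $(i,c)$, forcing $c=\delta_i$. Path conservation \eqref{conserve} together with the prescribed exit data confirms admissibility; as a consistency check, the top edge of column $k$ carries precisely $\bm{A}(k)=\sum_{j:\delta_j=k}\bm{e}_j$ from \eqref{AB-states1}.

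\textbf{Step 2: reading off the weights, column by column.} Off-staircase vertices are empty and contribute $1$, so $f_\delta$ is the product over the staircase vertices of the weights \eqref{s-weights-intro}. Put $k=m_0(\delta)$, so anti-dominance gives $\delta_1=\cdots=\delta_k=0<\delta_{k+1}\leq\cdots\leq\delta_n$. In column $0$, colour $r$ with $1\leq r\leq k$ turns up at a vertex with bottom state $\bm{e}_1+\cdots+\bm{e}_{r-1}$, left state $r$ and right state $0$, of weight $\frac{1-s^2q^{r-1}}{1-sz_r}$; colour $r$ with $k<r\leq n$ passes straight through column $0$ at a vertex with bottom state $\bm{e}_1+\cdots+\bm{e}_k$ and left $=$ right $=r$, of weight $\frac{z_r-s}{1-sz_r}$ (using $I_r=0$ and $\I_{[r+1,n]}=0$). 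Since colours $1,\dots,k$ live entirely in column $0$, the remaining columns form exactly the staircase of the anti-dominant composition $\delta'=(\delta_{k+1}-1,\dots,\delta_n-1)$ in the variables $z_{k+1},\dots,z_n$. Hence
\[
f_\delta(z_1,\dots,z_n)=\frac{(s^2;q)_k}{\prod_{r=1}^{k}(1-sz_r)}\prod_{r=k+1}^{n}\frac{z_r-s}{1-sz_r}\cdot f_{\delta'}(z_{k+1},\dots,z_n).
\]

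\textbf{Step 3: induction and the main obstacle.} I would finish by induction on $n$. The base case $\delta=(0,\dots,0)$ (i.e. $k=n$) is the direct evaluation $f_{(0,\dots,0)}=\prod_{r=1}^{n}\frac{1-s^2q^{r-1}}{1-sz_r}$, matching \eqref{f-factor}. For the inductive step, substitute \eqref{f-factor} for $f_{\delta'}$ into the displayed recursion, use $m_j(\delta')=m_{j+1}(\delta)$ for all $j\geq0$ (so $\prod_{j\geq0}(s^2;q)_{m_j(\delta')}=\prod_{j\geq1}(s^2;q)_{m_j(\delta)}$, the missing factor $(s^2;q)_{m_0(\delta)}=(s^2;q)_k$ being supplied by the column-$0$ term), and collect the $(1-sz_r)^{-1}$ and $\left(\frac{z_r-s}{1-sz_r}\right)$ factors, noting that the $r\leq k$ terms carry exponent $\delta_r=0$; this reproduces \eqref{f-factor}. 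The only genuine difficulty is Step 1: it is the interplay of anti-dominance with the ``at most one path per horizontal edge'' constraint that prevents paths from meandering, and this deserves to be spelled out carefully. Once freezing is established, Steps 2--3 are routine bookkeeping with the vertex weights.
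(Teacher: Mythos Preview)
Your proposal is correct and follows exactly the route the paper indicates: identify the unique (``frozen'') staircase configuration for anti-dominant $\delta$, then multiply up the vertex weights. The paper states this in one line and cites \cite{borodin_coloured_2018} for the details; you have supplied those details explicitly via the column-stripping recursion.

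One small point to tighten: your induction variable should not be $n$. When $k=m_0(\delta)=0$ (i.e.\ $\delta_1\geq 1$) your recursion produces $\delta'=(\delta_1-1,\dots,\delta_n-1)$ with the \emph{same} number of parts, so $n$ does not decrease. Inducting on $|\delta|$ (with your stated base case $\delta=(0,\dots,0)$, which is indeed $|\delta|=0$) fixes this, since the recursion always strictly decreases $|\delta|$ by $n-k\geq 1$ whenever $\delta\neq(0,\dots,0)$. With that adjustment the argument is complete.
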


\begin{proof}
	See \cite[Chapter 5, Section 1]{borodin_coloured_2018}; this result is a consequence of the fact that for anti-dominant compositions $\delta$, exactly one configuration of coloured paths is possible in the partition function $f_{\delta}(z_1,\dots,z_n)$. This configuration is the one in which (for all $1 \leq i \leq n$) the path of colour $i$ travels horizontally until column $\delta_i$, where it turns and travels vertically to its exit position at the top of the lattice:
	\begin{align}
		\label{factor-pic}
		\tikz{0.8}{
			\foreach\y in {2,...,5}{
				\draw[lgray,line width=1.5pt] (1,\y) -- (8,\y);
			}
			\foreach\x in {2,...,7}{
				\draw[lgray,line width=4pt] (\x,1) -- (\x,6);
			}
			\draw[ultra thick,orange,->] (1,5) -- (7,5) -- (7,6.4);
			\draw[ultra thick,yellow,->] (1,4) -- (6,4) -- (6,6.4);
			\draw[ultra thick,green,->] (1,3) -- (5,3) -- (5,6.4);
			\draw[ultra thick,blue,->] (1,2) -- (3,2) -- (3,6.4);
			\node[above] at (7,6.5) {$\delta_n$};
			\node[above] at (6,6.5) {$\cdots$};
			\node[above] at (5,6.5) {$\delta_2$};
			\node[above] at (3,6.5) {$\delta_1$};
			\node[left] at (1,2) {$1$};
			\node[left] at (1,3) {$2$};
			\node[left] at (1,4) {$\vdots$};
			\node[left] at (1,5) {$n$};
		}
	\end{align}
	Reading the weight of that configuration, one immediately recovers \eqref{f-factor}.
\end{proof}

Now we present an analogue of Proposition \ref{prop:factor1}, which describes a {\it block factorization} property of the functions under consideration.

\begin{prop}
	\label{prop:mu-block}
	Let $\mu = (\mu_1,\dots,\mu_n) \in (\mathbb{Z}_{\geq 0})^n$ be a composition that splits into $r$ blocks, with $1 \leq r \leq n$:
	\begin{align}
		\label{mu-order}
		\mu
		=
		\mu^{(1)}
		\cup
		\mu^{(2)}
		\cup
		\cdots
		\cup
		\mu^{(r)}
		=
		\left( \mu^{(1)}_1,\dots,\mu^{(1)}_{n_1} \right)
		\cup
		\left( \mu^{(2)}_1,\dots,\mu^{(2)}_{n_2} \right)
		\cup
		\cdots
		\cup
		\left( \mu^{(r)}_1,\dots,\mu^{(r)}_{n_r} \right),
	\end{align}
	where $\sum_{k=1}^{r} n_k = n$, and such that $\mu^{(j)}_a < \mu^{(k)}_b$ for all $j<k$, $a \in \{1,\dots,n_j\}$, $b \in \{1,\dots,n_k\}$. Consider the same partitioning of the alphabet $(z_1,\dots,z_n)$:
	\begin{align*}
		(z_1,\dots,z_n)
		=
		\left( z^{(1)}_1,\dots,z^{(1)}_{n_1} \right)
		\cup
		\left( z^{(2)}_1,\dots,z^{(2)}_{n_2} \right)
		\cup
		\cdots
		\cup
		\left( z^{(r)}_1,\dots,z^{(r)}_{n_r} \right).
	\end{align*}
	Under these choices for $\mu$ and $(z_1,\dots,z_n)$, we have
	\begin{align}
		\label{p-factor}
		f_{\mu}(z_1,\dots,z_n)
		=
		\prod_{k=1}^{r}
		f_{\mu^{(k)}}\left(z^{(k)}_1,\dots,z^{(k)}_{n_k}\right).
	\end{align}
\end{prop}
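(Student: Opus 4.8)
The plan is to reduce the statement to the anti-dominant case, where Proposition~\ref{prop:factor1} already supplies the full factorization, and then to transport the result back to $\mu$ using the exchange relations for $f_\mu$ under permutations of the alphabet established in \cite{borodin_coloured_2018}. The structural input used throughout is precisely the hypothesis that the $r$ blocks occupy pairwise disjoint intervals of values.

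First I would pass to the anti-dominant model. For each $1\leq k\leq r$ let $\delta^{(k)}$ be the weakly increasing rearrangement of $\mu^{(k)}$, and put $\delta=\delta^{(1)}\cup\cdots\cup\delta^{(r)}$; since the blocks have separated value-ranges and each $\delta^{(k)}$ is weakly increasing, $\delta$ is anti-dominant, so Proposition~\ref{prop:factor1} applies to $f_\delta$ and to every $f_{\delta^{(k)}}$. The one point to observe is that the Pochhammer prefactor decouples: as each value $j\geq 0$ lies in at most one block, $m_j(\delta)=m_j(\delta^{(k)})$ for the block $k$ containing $j$ and $m_j(\delta^{(k')})=0$ otherwise, so that $\prod_{j\geq 0}(s^2;q)_{m_j(\delta)}=\prod_{k=1}^{r}\prod_{j\geq 0}(s^2;q)_{m_j(\delta^{(k)})}$. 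The remaining factor $\prod_{i=1}^{n}\frac{1}{1-sz_i}\big(\frac{z_i-s}{1-sz_i}\big)^{\delta_i}$ splits over the block partition of $(z_1,\dots,z_n)$ by inspection, and hence
\begin{align*}
f_\delta(z_1,\dots,z_n)=\prod_{k=1}^{r}f_{\delta^{(k)}}\big(z^{(k)}_1,\dots,z^{(k)}_{n_k}\big).
\end{align*}

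Next I would move from $\delta$ to $\mu$. Since the inter-block order is already correct in both compositions, $\mu$ is obtained from $\delta$ by permuting parts \emph{within} each block only. Picking a reduced word for the within-block rearrangement in block $k$ and applying the corresponding Hecke-type exchange operators of \cite{borodin_coloured_2018} (the ones relating $f_\nu$ and $f_{s_i\nu}$), each such operator acts only on a pair $(z_i,z_{i+1})$ of variables lying in block $k$, hence as the identity on every factor $f_{\delta^{(k')}}$ with $k'\neq k$. Applying these operators block by block therefore carries the product just displayed to $\prod_{k=1}^{r}f_{\mu^{(k)}}\big(z^{(k)}_1,\dots,z^{(k)}_{n_k}\big)$, while by the exchange relations the same operators carry $f_\delta$ to $f_\mu$; this is \eqref{p-factor}. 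If a block has repeated parts, $f$ need not be symmetric in the corresponding variables, but the Hecke relations still connect $f_{\delta^{(k)}}$ and $f_{\mu^{(k)}}$ along such a chain, so nothing changes.

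The main obstacle is bookkeeping rather than any hard estimate: one has to check that $\delta$ really is anti-dominant and that the prefactor decouples (both resting on the disjoint-interval hypothesis), and that the operators used to pass from $\delta$ to $\mu$ can always be taken within a single block (which relies on the inter-block order being preserved at every intermediate step). A more combinatorial route is also available: the hypothesis forces every path of block $k$ to turn upward only in columns of index at least $\min_a\mu^{(k)}_a$ — otherwise the jammed horizontal lines above it drive it out of the lattice at the wrong column — so that paths of distinct blocks never share a vertical edge; turning this into the weight identity is less immediate, since configurations still contain vertices at which a block-$k$ path runs horizontally through a vertical edge occupied by a lower block's path, and there one must use that the weights \eqref{s-weights-intro} depend on the vertical state only through colour counts. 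Either way, it is the disjointness of the blocks' value-intervals that makes the $r$ factors decouple.
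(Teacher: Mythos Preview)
Your argument is correct, but the paper proceeds quite differently. The paper argues directly on the partition function picture \eqref{f-pf}: because the blocks occupy disjoint column-ranges, the configurations in the lattice separate into $r$ ``bundles'' that interact with one another only in the trivial way already seen in the strict anti-dominant picture \eqref{factor-pic} (the crossings of a higher block's horizontal paths through a lower block's vertical paths produce exactly the weights one would get if the lower block were absent). This yields \eqref{p-factor} without any appeal to the Hecke exchange relations.

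Your route instead establishes the identity first for the anti-dominant rearrangement $\delta$ via Proposition~\ref{prop:factor1}, and then transports it to $\mu$ by applying Demazure--Lusztig type operators that act only on pairs of adjacent variables lying inside a single block. This is a perfectly valid alternative; its advantage is that it reduces the combinatorial content to the already-proved Proposition~\ref{prop:factor1}, avoiding the (slightly delicate) lattice argument you allude to at the end. Its cost is that it imports the exchange machinery from \cite{borodin_coloured_2018}, which the present section does not otherwise develop, and one must take some care with the direction of the exchange relation (they typically read $T_i f_\nu = f_{s_i\nu}$ only under a one-sided inequality on $\nu_i,\nu_{i+1}$, so one should say explicitly that from anti-dominant $\delta$ any reduced word for the within-block permutation uses only swaps of the permitted type, or else invoke $T_i^{-1}$). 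Your final paragraph's combinatorial remark is in fact close to the paper's actual proof, and could be made into a complete argument with a bit more work on why block-$1$ paths are forced to travel purely vertically in rows above $n_1$.
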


\begin{proof}
	The proof is very similar to that of Proposition \ref{prop:factor1}. In order to sketch it, consider once again the formula \eqref{f-factor}. Note that in the case $\delta = (\delta_1,\dots,\delta_n)$ with $\delta_1 < \cdots < \delta_n$ (strict inequalities between parts) the functions $m_j(\delta)$ are either equal to $0$ or $1$ for all $j \geq 0$. In that case all $q$-Pochhammer functions reduce to $1$ or $1-s^2$, and the formula reads
	\begin{align}
		\label{f-factor2}
		f_{\delta}(z_1,\dots,z_n)
		=
		\prod_{i=1}^{n}
		\frac{1-s^2}{1-s z_i}
		\left( \frac{z_i-s}{1-s z_i} \right)^{\delta_i}
		=
		\prod_{i=1}^{n}
		f_{\delta_i}(z_i),
	\end{align}
	where $f_{\delta_i}(z_i)$ denotes the partition function \eqref{f-pf} in the case of a single row (and accordingly, a single variable). We thus observe a factorization into $n$ single-row partition functions; what yields this result is the fact that the crossings of paths in \eqref{factor-pic} do not produce any extra multiplicative weights.
	
	This special case illustrates how the proof of \eqref{p-factor} works in general. One replaces the strict anti-dominant composition $\delta$ by the composition \eqref{mu-order}, which has strictly ordered blocks. Carrying out the computation as in \eqref{factor-pic}, replacing each path $i$ by a bundle of $n_i$ paths (all of different colours), and the variable $z_i$ by the bundle of variables 
	$\left(z^{(i)}_1,\dots,z^{(i)}_{n_i}\right)$, one easily recovers the formula \eqref{p-factor}.
\end{proof}

\subsubsection{Stability}

\begin{prop}
	For any composition $\mu = (\mu_1,\dots,\mu_n) \in (\mathbb{Z}_{\geq0})^n$ and integer $k \geq 0$, one has the stability property
	\begin{align}
		\label{stab}
		f_{\mu+k^n}(z_1,\dots,z_n)
		=
		\prod_{i=1}^{n}
		\left( \frac{z_i-s}{1-sz_i} \right)^k
		\cdot
		f_{\mu}(z_1,\dots,z_n),
	\end{align}
	where $\mu+k^n = (\mu_1+k,\dots,\mu_n+k)$.
\end{prop}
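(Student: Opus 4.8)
The plan is to prove \eqref{stab} directly from the partition function definition \eqref{f-pf}, by showing that the lattice computing $f_{\mu+k^n}(z_1,\dots,z_n)$ is obtained from the one computing $f_{\mu}(z_1,\dots,z_n)$ by prepending $k$ extra columns on the left, each of which is \emph{forced} to carry a ``frozen'' path configuration contributing the multiplicative factor $\prod_{i=1}^{n}\frac{z_i-s}{1-sz_i}$. This is the relevant mechanism because the shift $\mu\mapsto\mu+k^n$ moves the exit column of every colour $i$ from $\mu_i$ to $\mu_i+k\geq k$, so no path can exit the top of columns $0,1,\dots,k-1$ in the numbering of \eqref{f-pf}.

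The key step is the claim that every path configuration contributing to $f_{\mu+k^n}$ restricts, on columns $0,\dots,k-1$, to the unique configuration in which a single horizontal path of colour $j$ runs straight through row $j$ for all $1\leq j\leq n$, with every vertical edge inside these columns empty. To verify this, start with column $0$ and let $v_0,\dots,v_n$ be the number of paths on its successive vertical edges ($v_0$ the bottom edge, $v_n$ the top edge). The bottom boundary of \eqref{f-pf} is empty, so $v_0=0$, and no path exits the top of column $0$, so $v_n=0$; path conservation \eqref{conserve} at the vertex in row $j$ reads $h^{\mathrm{in}}_j+v_{j-1}=h^{\mathrm{out}}_j+v_j$, where $h^{\mathrm{in}}_j,h^{\mathrm{out}}_j\in\{0,1\}$ record whether a path enters (resp. leaves) that vertex horizontally — these are at most $1$ because horizontal edges carry at most one path. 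Summing over $j$ and using $v_0=v_n=0$ gives $\sum_j h^{\mathrm{in}}_j=\sum_j h^{\mathrm{out}}_j=n$ (the left edges of column $0$ carry colours $1,\dots,n$, one per row), which forces $h^{\mathrm{out}}_j=1$ for all $j$, hence $v_j=v_{j-1}$ and so all $v_j=0$. With no vertical paths present, colour $j$ stays in row $j$, so the right edges of column $0$ again carry colours $1,\dots,n$, one per row — precisely the input data for column $1$ — and the same argument propagates through columns $1,\dots,k-1$.

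Finally I would read off the weights: every vertex in the frozen region has the form $L_{z_i,q,s}(\bm{0},i;\bm{0},i)$, whose value from the table \eqref{s-weights-intro} (take $\I=\bm{0}$, so $I_i=0$ and $\Is{i+1}{n}=0$) is $\frac{z_i-s}{1-sz_i}$; one such vertex occurs in each of the $n$ rows of each of the $k$ frozen columns, producing the prefactor $\prod_{i=1}^{n}\left(\frac{z_i-s}{1-sz_i}\right)^k$. Deleting these columns and relabelling columns $k,k+1,\dots$ as $0,1,\dots$ leaves a lattice with left-edge data ``colour $j$ in row $j$'', empty bottom and right boundaries, and colour $i$ exiting the top of column $(\mu_i+k)-k=\mu_i$ — i.e. exactly the lattice defining $f_{\mu}(z_1,\dots,z_n)$ with the same assignment of $z_i$ to row $i$ — so the weighted sum over its configurations equals $f_{\mu}(z_1,\dots,z_n)$, and \eqref{stab} follows. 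The only delicate point is the second step: one must use that for the functions $f$ the paths travel strictly rightward and upward, so that the flux count genuinely pins the first $k$ columns down; the remainder is bookkeeping. As a consistency check, when $\mu$ is anti-dominant the identity also drops out of Proposition~\ref{prop:factor1}, since $m_j(\mu+k^n)=m_{j-k}(\mu)$ leaves the $q$-Pochhammer prefactor unchanged while shifting each exponent $\mu_i$ to $\mu_i+k$.
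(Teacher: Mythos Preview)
Your argument is correct and follows precisely the route the paper indicates: the paper's proof is a single sentence (``another easy consequence of the partition function definition \eqref{f-pf}'') deferring to \cite{borodin_coloured_2018}, and you have supplied the details of that freezing-of-columns argument. The flux-counting to force all vertical edges in columns $0,\dots,k-1$ to be empty, and the subsequent reading of the vertex weight $L_{z_i,q,s}(\bm{0},i;\bm{0},i)=(z_i-s)/(1-sz_i)$ from \eqref{s-weights-intro}, are exactly what is needed.
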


\begin{proof}
	This is another easy consequence of the partition function definition \eqref{f-pf}; we refer the reader to \cite[Proposition 9.1.1]{borodin_coloured_2018}.
\end{proof}

In view of the property \eqref{stab}, we may extend the definition of the functions $f_{\mu}$ to integer compositions (whose parts may now be negative). For any composition $\mu = (\mu_1,\dots,\mu_n) \in \mathbb{Z}^n$, we define
\begin{align}
	\label{shift-defn}
	f_{\mu}(z_1,\dots,z_n)
	=
	\left( \frac{1-sz_i}{z_i-s} \right)^k
	\cdot
	f_{\mu+k^n}(z_1,\dots,z_n),
	\qquad
	\forall\ k \geq \left|\min_{1 \leq i \leq n}(\mu_i)\right|.
\end{align}
Unless otherwise stated, we will work with generic integer compositions in what follows, for which the definition \eqref{shift-defn} applies. Note that both of the earlier results \eqref{f-factor} and \eqref{p-factor} continue to hold as stated if the participating compositions are now viewed as integer compositions.

\subsubsection{Orthogonality}

The functions $f_{\mu}$ have a remarkable self-orthogonality property with respect to certain $n$-fold contour integration. To state this result, we need to set up a certain class of {\it admissible contours}.

\begin{defn}[Admissible contours]
	\label{def:admiss}
	Let $\{C_1,\dots,C_n\}$ be a collection of contours in the complex plane, and fix two complex parameters $q,s \in \mathbb{C}$. We say that the set $\{C_1,\dots,C_n\}$ is admissible with respect to $(q,s)$ if the following conditions are met:
	\begin{itemize}
		\item The contours $\{C_1,\dots,C_n\}$ are closed, positively oriented and pairwise non-intersecting.
		\item The contours $C_i$ and $q \cdot C_i$ are both contained within contour $C_{i+1}$ for all $1 \leq i \leq n-1$, where $q \cdot C_i$ denotes the image of $C_i$ under multiplication by $q$.
		\item All contours surround the point $s$, and no contour surrounds $s^{-1}$.
	\end{itemize}
	%
\end{defn}

\begin{thm}
	Let $\{C_1,\dots,C_n\}$ be a set of admissible contours. For any two integer compositions $\mu = (\mu_1,\dots,\mu_n) \in \mathbb{Z}^n$ and $\nu = (\nu_1,\dots,\nu_n) \in \mathbb{Z}^n$, there holds
	\begin{align}
		\label{orthog}
		\frac{1}{\left( 2\pi\ii \right)^n}
		\oint_{C_1}
		\frac{\dd z_1}{z_1}
		\cdots 
		\oint_{C_n}
		\frac{\dd z_n}{z_n}
		\prod_{1 \leq i<j \leq n}
		\left(
		\frac{z_j-z_i}{z_j-q z_i}
		\right)
		f_{\nu}(z^{-1}_1,\dots,z^{-1}_n)
		g^{*}_{\mu}(z_1,\dots,z_n)
		=
		\bm{1}_{\mu=\nu},
	\end{align}
	where the function $g^{*}_{\mu}(z_1,\dots,z_n) \equiv g^{*}_{\mu}(z_1,\dots,z_n;q,s)$ is defined via the formula
	\begin{align}
		\label{g}
		g^{*}_{\mu}(z_1,\dots,z_n;q,s)
		=
		\frac{q^{{\rm inv}(\mu)}}{\prod_{j \geq 0} (s^{-2};q^{-1})_{m_j(\mu)}}
		\cdot
		\prod_{i=1}^{n} (-sz_i)^{-1}
		\cdot
		f_{\tilde\mu}(z_n^{-1},\dots,z_1^{-1};q^{-1},s^{-1}),
	\end{align}
	with $\tilde{\mu} = (\mu_n,\dots,\mu_1)$ denoting the reverse ordering of $\mu$ and 
	${\rm inv}(\mu) = |\{i < j: \mu_i < \mu_j\}|$.
\end{thm}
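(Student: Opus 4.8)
The plan is to prove \eqref{orthog} in two stages: first reduce to the case where $\nu$ is anti-dominant (weakly increasing), and then evaluate the resulting integral by iterated residues. For the reduction, recall from \cite{borodin_coloured_2018} that the functions $f_\mu$ obey exchange relations under which interchanging two adjacent parts of the indexing composition is implemented, on the corresponding pair of variables, by a Hecke-type operator $T_i$ built from the transposition $s_i$ of $(z_i,z_{i+1})$ — explicitly of the shape $T_i = \tfrac{z_{i+1}-q z_i}{z_{i+1}-z_i}(s_i-\id)+\id$ up to an overall rational factor — and that the dual functions $g^{*}_\mu$ obey a companion relation. The key point is that the pairing in \eqref{orthog} is balanced with respect to these actions: the kernel $\prod_{i<j}\tfrac{z_j-z_i}{z_j-q z_i}$ is precisely the $q$-symmetrizing weight for which the $T_i$ are self-adjoint in the standard way, so, taking into account that $f_\nu$ enters with inverted arguments, the operator moving $\nu_i\leftrightarrow\nu_{i+1}$ can be transferred under the integral sign to the operator moving $\mu_i\leftrightarrow\mu_{i+1}$ on $g^{*}_\mu$. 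The combinatorial normalizations ($q^{\mathrm{inv}(\mu)}$, the $m_j$- and $\mathrm{inv}$-data) built into \eqref{g} are exactly what makes the prefactors on the two sides reciprocal, so the pairing is invariant under the simultaneous permutation of the two index vectors; since $\bm 1_{\mu=\nu}$ is likewise invariant, it suffices to prove \eqref{orthog} when $\nu$ is anti-dominant.

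Once $\nu$ is anti-dominant, Proposition~\ref{prop:factor1}, applied with $(q,s)$ replaced by $(q^{-1},s^{-1})$, at arguments $z_i^{-1}$, and combined with the shift convention \eqref{shift-defn}, turns $f_\nu(z_1^{-1},\dots,z_n^{-1})$ into an explicit product of single-variable factors times a constant built from $q$-Pochhammer symbols. Substituting this together with the definition \eqref{g} of $g^{*}_\mu$, the integrand of \eqref{orthog} splits as the kernel times a product over $i$ of rational factors in the individual $z_i$, with the $q$-Pochhammer constant from \eqref{f-factor} cancelling the one in \eqref{g}. One then integrates the variables one at a time, innermost contour first; the admissibility conditions (each $C_i$ encloses $s$ but not $s^{-1}$, and $C_i$ together with $q C_i$ lies inside $C_{i+1}$) guarantee that at each stage the only enclosed poles are at $z_i=s$ and, from the kernel, at $z_i=q z_j$ for not-yet-integrated $j$. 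The residue at $z_i=s$ reproduces the $n=1$ computation: up to reindexing, the relevant single-variable integrand is $(1-s^2)(1-s z_i)^{a_i-1}(z_i-s)^{-(a_i+1)}$ with $a_i$ a difference of a part of $\nu$ and a part of $\mu$; when $a_i\geq 1$ the numerator is a polynomial of degree $a_i-1$, too low to feed the order-$(a_i+1)$ pole at $z_i=s$, so the residue vanishes; when $a_i\leq -1$ there is no pole inside $C_i$ at all; and when $a_i=0$ the residue equals $1$. Multiplying these Kronecker deltas over all $i$ yields $\bm 1_{\mu=\nu}$.

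I expect the main obstacle to be the treatment of the ``cross'' residues at $z_i=q z_j$ produced by the kernel: one must show that these contributions either vanish because of the precise exponents appearing in the factorized integrand, or cancel pairwise once all variables have been integrated. On the algebraic side this is exactly the self-adjointness of the Hecke generators $T_i$ with respect to the pairing, so it is equivalent to verifying the compatibility asserted in the first step, and it is the place where the exact form of the kernel and of the exchange-relation prefactors from \cite{borodin_coloured_2018} must be matched. Once this compatibility is established, everything else is bookkeeping with $q$-Pochhammer symbols and the $\mathrm{inv}(\mu)$- and $m_j(\mu)$-factors in \eqref{g}. An alternative route, bypassing the reduction to anti-dominant $\nu$, is a direct induction on $n$: integrate out the outermost variable $z_n$ over $C_n$, collect the residues at $z_n\in\{0,s\}\cup\{q z_i: i<n\}$ using the row-branching structure of the partition functions defining $f_\nu$ and $g^{*}_\mu$, and check that they reassemble the $(n-1)$-variable instance of \eqref{orthog} with compositions of length $n-1$; here too the crux is controlling the cross-residues, now governed by admissibility of the contours together with the Yang--Baxter relation underlying the vertex model.
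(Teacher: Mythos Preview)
Your outline follows the paper's indicated strategy at the first stage --- using Hecke exchange relations to reduce to the case of an anti-dominant $\nu$ --- and this is correct in spirit (it is what \cite[Chapter 8]{borodin_coloured_2018} does). The gap is in the second stage. After making $\nu$ anti-dominant, Proposition~\ref{prop:factor1} does factorize $f_\nu(z_1^{-1},\dots,z_n^{-1})$ into a product of single-variable pieces. But the dual function $g^*_\mu$ involves $f_{\tilde\mu}(z_n^{-1},\dots,z_1^{-1};q^{-1},s^{-1})$, and Proposition~\ref{prop:factor1} only applies when $\tilde\mu$ is anti-dominant, i.e.\ when $\mu$ is \emph{dominant}. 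You cannot arrange this simultaneously with $\nu$ anti-dominant by a common permutation of $\mu$ and $\nu$: the ``simultaneous permutation invariance'' you invoke forces both vectors to be shuffled the same way, so for generic $\mu$ the function $g^*_\mu$ remains a genuine sum over $S_n$ with nontrivial mixing of variables. Consequently your claim that ``the integrand of \eqref{orthog} splits as the kernel times a product over $i$ of rational factors in the individual $z_i$'' is false, and the subsequent one-variable residue calculation (with exponent $a_i$ a difference of parts of $\mu$ and $\nu$) does not apply. Relatedly, the $q$-Pochhammer ``cancellation'' you assert does not occur: the constant from \eqref{f-factor} depends on the multiplicities $m_j(\nu)$ while the one in \eqref{g} depends on $m_j(\mu)$, and these agree only when the multiplicity profiles coincide.

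What the proof actually needs at this point is precisely the second ingredient the paper names: an explicit sum formula for $f_{\mu}$ (a Bethe-ansatz-type expansion over $S_n$, derived in \cite[Chapter 5]{borodin_coloured_2018}) that makes the poles in each $z_i$ transparent term by term. With that formula for the non-factorized side, one carries out the iterated contour integration against the factorized $f_\nu$; the residues at $z_i=s$ reduce the rank, and the cross residues at $z_i=q z_j$ are controlled via the Hecke structure. Your third paragraph locates the difficulty correctly, but the setup you feed into it is not the one that arises.
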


\begin{proof}
	The proof is long and technical; it requires exchange relations for the partition functions \eqref{f-pf} under the action of Hecke algebra generators, as well as an explicit sum formula for $f_{\mu}(z_1,\dots,z_n)$ which displays more directly the structure of its poles. We refer the reader to \cite[Chapters 5, 6, 8]{borodin_coloured_2018} for the details.
\end{proof}

\subsubsection{Symmetrization}

\begin{thm}
	For any signature $\lambda = (\lambda_1 \geq \cdots \geq \lambda_n) \in \mathbb{Z}^n$, define the following rational symmetric function:
	\begin{align}
		\label{F-sym}
		F_{\lambda}(z_1,\dots,z_n)
		&\equiv
		F_{\lambda}(z_1,\dots,z_n;q,s)
		\\
		&=
		\frac{(1-q)^n}{\prod_{i=1}^{n}(1-s z_i)}
		\cdot
		\prod_{j \in \mathbb{Z}}
		\frac{(s^2;q)_{m_j(\lambda)}}{(q;q)_{m_j(\lambda)}}
		\cdot
		\sum_{\sigma \in S_n}
		\sigma \left\{
		\prod_{1 \leq i<j \leq n}
		\left( \frac{z_i-q z_j}{z_i-z_j} \right)
		\prod_{i=1}^{n}
		\left( \frac{z_i-s}{1-sz_i} \right)^{\lambda_i}
		\right\}.
		\nonumber
	\end{align}
	One then has the symmetrization formula
	\begin{align}
		\label{f-to-F}
		\sum_{\mu:\mu^{+}=\lambda}
		f_{\mu}(z_1,\dots,z_n)
		=
		F_{\lambda}(z_1,\dots,z_n),
	\end{align}
	where the sum is taken over all compositions 
	$\mu = (\mu_1,\dots,\mu_n) \in \mathbb{Z}^n$ whose parts may be rearranged in weakly decreasing order to yield $\lambda$.
\end{thm}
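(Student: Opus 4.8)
The plan is to reduce the identity \eqref{f-to-F} to the single, explicitly factorized partition function attached to the anti-dominant rearrangement of $\lambda$ — using the action of the Hecke algebra on the functions $f_\mu$ — and then to evaluate the resulting symmetrizer. Write $\delta=(\delta_1\le\cdots\le\delta_n)$ for the weakly increasing (anti-dominant) rearrangement of $\lambda$ and let $S_\lambda\subset S_n$ be its stabilizer, so that the compositions $\mu$ with $\mu^{+}=\lambda$ are exactly the $\tau\cdot\delta$ with $\tau$ running over the minimal-length coset representatives of $S_n/S_\lambda$.

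The first ingredient is the \emph{exchange relation} for the partition functions \eqref{f-pf}: there is a Demazure--Lusztig operator $T_i=q-\frac{z_i-q z_{i+1}}{z_i-z_{i+1}}(1-\mathfrak{s}_i)$ (with $\mathfrak{s}_i$ the transposition $z_i\leftrightarrow z_{i+1}$, satisfying $(T_i-q)(T_i+1)=0$) such that $f_{\mathfrak{s}_i\mu}(z_1,\dots,z_n)=T_i\,f_\mu(z_1,\dots,z_n)$ whenever $\mu_i<\mu_{i+1}$, while $f_\mu$ is symmetric in $z_i\leftrightarrow z_{i+1}$ whenever $\mu_i=\mu_{i+1}$; this is established in \cite{borodin_coloured_2018} from the Yang--Baxter equation for the weights \eqref{s-weights-intro}. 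Iterating along a reduced word gives $f_{\tau\cdot\delta}=T_\tau\,f_\delta$ for each minimal coset representative $\tau$, hence
\[
\sum_{\mu:\mu^{+}=\lambda} f_\mu=\sum_{\tau\in S_n/S_\lambda} T_\tau\,f_\delta .
\]
Next I would convert this partial sum into the full Hecke symmetrizer. Since $\delta_i=\delta_{i+1}$ forces $\mathfrak{s}_i\in S_\lambda$ and $f_\delta$ symmetric in $z_i\leftrightarrow z_{i+1}$, one has $T_i f_\delta=q f_\delta$ for $\mathfrak{s}_i\in S_\lambda$, so $\sum_{w\in S_\lambda}T_w f_\delta=\big(\prod_j [m_j(\lambda)]_q!\big) f_\delta$ with $[m]_q!=(q;q)_m/(1-q)^m$; using $T_{\tau w}=T_\tau T_w$ and additivity of lengths along $S_n=(S_n/S_\lambda)\cdot S_\lambda$ this yields
\[
\sum_{\mu:\mu^{+}=\lambda} f_\mu=\frac{(1-q)^n}{\prod_j (q;q)_{m_j(\lambda)}}\sum_{\sigma\in S_n} T_\sigma\,f_\delta ,
\]
which is the source of the $q$-Pochhammer denominators in \eqref{F-sym}. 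Finally, by Proposition~\ref{prop:factor1}, $f_\delta=\prod_j (s^2;q)_{m_j(\lambda)}\cdot\prod_{i=1}^{n}\frac{1}{1-sz_i}\big(\frac{z_i-s}{1-sz_i}\big)^{\delta_i}$; since $T_i$ commutes with multiplication by the totally symmetric factor $\prod_i(1-sz_i)^{-1}$, it remains to evaluate $\sum_{\sigma\in S_n}T_\sigma$ on the monomial $\prod_i y_i^{\delta_i}$ with $y_i:=\frac{z_i-s}{1-sz_i}$. This is a classical Hall--Littlewood-type identity: the full $q$-symmetrizer turns a monomial in the anti-dominant composition $\delta$ into $\sum_{\sigma\in S_n}\sigma\big\{\prod_{1\le i<j\le n}\frac{z_i-q z_j}{z_i-z_j}\prod_i y_i^{\lambda_i}\big\}$ with $\lambda$ the dominant rearrangement (one verifies the case $n=2$ directly and proceeds by induction on $n$). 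Collecting the prefactors and rewriting $y_i$ in terms of $z_i$ reproduces exactly the right-hand side of \eqref{F-sym}.

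I expect the main obstacle to be the second step: one must track carefully how the sum over distinct compositions $\mu$ with $\mu^{+}=\lambda$ (i.e.\ over $S_n/S_\lambda$) relates to $\sum_{\sigma\in S_n}T_\sigma$, and confirm that the $q$-multiplicities $\prod_j[m_j(\lambda)]_q!$ produced by the stabilizer combine with the prefactor $\prod_j(s^2;q)_{m_j(\lambda)}$ from \eqref{f-factor} to give precisely the factor $\prod_j\frac{(s^2;q)_{m_j(\lambda)}}{(q;q)_{m_j(\lambda)}}$ together with the overall $(1-q)^n$ in \eqref{F-sym}. Step one is quoted from \cite{borodin_coloured_2018} and step three is the standard Hall--Littlewood symmetrization. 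An alternative that avoids the Hecke algebra is to pair both sides with the dual functions $g^{*}_\nu$ and invoke the orthogonality relation \eqref{orthog}, reducing the claim to $\oint(\text{kernel})\,F_\lambda(z^{-1})\,g^{*}_\nu(z)=\mathbf{1}_{\nu^{+}=\lambda}$; making this rigorous, however, requires a completeness/density argument for the $f_\mu$ and does not appear to shorten the proof.
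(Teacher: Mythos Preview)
Your argument is correct and follows a genuinely different route from the paper's own proof. The paper does \emph{not} use the Hecke exchange relations directly; instead it invokes the \emph{colour-blindness} property of the vertex model: summing $f_\mu$ over all $\mu$ with $\mu^{+}=\lambda$ amounts to summing over all permutations of the colours at the top boundary of the partition function \eqref{f-pf}, and such a sum is known to collapse the $n$-coloured model to the rank-$1$ (single-colour) model. The resulting uncoloured partition function is then identified with the explicit formula \eqref{F-sym} by citing its direct evaluation in \cite{borodin2017family,BorodinPetrov}.

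Your approach trades this combinatorial reduction for an algebraic one: the exchange relation $f_{\mathfrak{s}_i\mu}=T_i f_\mu$ (which you correctly attribute to \cite{borodin_coloured_2018}), the stabilizer identity $T_i f_\delta=q f_\delta$ for $\delta_i=\delta_{i+1}$, and the Hall--Littlewood symmetrizer $\sum_{\sigma}T_\sigma$. The advantage is that your argument makes the origin of the prefactor $(1-q)^n\prod_j(q;q)_{m_j(\lambda)}^{-1}$ transparent as a Poincar\'e polynomial of $S_\lambda$, whereas in the paper's route this factor is buried inside the evaluation of the rank-$1$ partition function. The paper's route, on the other hand, is shorter once colour-blindness is granted, and it avoids having to verify the symmetrizer identity in step three (your ``classical Hall--Littlewood-type identity'') from scratch. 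Both routes ultimately rest on machinery from \cite{borodin_coloured_2018}; yours uses the Hecke-algebra chapter, the paper's uses the colour-merging results.
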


\begin{proof}
	See \cite[Proposition 3.4.4]{borodin_coloured_2018}; this theorem is proved in two steps. In the first step, one notices that summing $f_{\mu}$ over compositions $\mu$ translates to summing over partition functions of the form \eqref{f-pf}, in which the boundary conditions along the top of the lattice change. More precisely, summing over $\mu$ such that $\mu^{+} = \lambda$ is equivalent to summing over all ways to permute colours at the top of the lattice \eqref{f-pf}, while holding fixed the positions where paths exit. Such types of sums are known to produce {\it colour-blind} partition functions; see \cite[Chapter 2, Section 4]{borodin_coloured_2018}. The precise statement is that
	\begin{align}
		\label{f-sum}
		\sum_{\mu:\mu^{+}=\lambda}
		f_{\mu}(z_1,\dots,z_n)
		=
		\tikz{0.8}{
			\foreach\y in {1,...,5}{
				\draw[lgray,line width=1.5pt,->] (1,\y) -- (8,\y);
			}
			\foreach\x in {2,...,7}{
				\draw[lgray,line width=4pt,->] (\x,0) -- (\x,6);
			}
			\node[left] at (0.5,1) {$z_1 \rightarrow$};
			\node[left] at (0.5,2) {$z_2 \rightarrow$};
			\node[left] at (0.5,3) {$\vdots$};
			\node[left] at (0.5,4) {$\vdots$};
			\node[left] at (0.5,5) {$z_n \rightarrow$};
			\node[below] at (7,0) {$\cdots$};
			\node[below] at (6.1,0) {$\cdots$};
			\node[below] at (5.1,0) {$\cdots$};
			\node[below] at (4.1,0) {$\cdots$};
			\node[below] at (3,0) {\footnotesize$\bm{0}$};
			\node[below] at (2,0) {\footnotesize$\bm{0}$};
			\node[above] at (7,6) {$\cdots$};
			\node[above] at (6.1,6) {$\cdots$};
			\node[above] at (5.1,6) {$\cdots$};
			\node[above] at (4.1,6) {$\cdots$};
			\node[above] at (3,6) {\footnotesize$\bm{m}_1$};
			\node[above] at (2,6) {\footnotesize$\bm{m}_0$};
			\node[right] at (8,1) {$0$};
			\node[right] at (8,2) {$0$};
			\node[right] at (8,3) {$\vdots$};
			\node[right] at (8,4) {$\vdots$};
			\node[right] at (8,5) {$0$};
			\node[left] at (1,1) {$1$};
			\node[left] at (1,2) {$1$};
			\node[left] at (1,3) {$\vdots$};
			\node[left] at (1,4) {$\vdots$};
			\node[left] at (1,5) {$1$};
		}
	\end{align}
	where the partition function on the right hand side of \eqref{f-sum} differs from that of \eqref{f-pf} in two ways: 1. The paths which enter via the left edges of the lattice all have the same colour; 2. The state at the top of column $i$ is defined to be $\bm{m}_i = m_i(\lambda) \bm{e}_1$, for all $i \geq 0$.
	
	In the second step, one explicitly evaluates the resulting partition function; see \cite{borodin2017family,BorodinPetrov} for a direct evaluation of \eqref{f-sum}, that yields the formula \eqref{F-sym}.
	
\end{proof}

\subsubsection{The family $G_{\mu/\nu}$}

The other functions that we consider are indexed by pairs of compositions, and defined in terms of the partition function \eqref{G-pf}. The compositions may be taken, directly, over all integers.

\begin{defn}[The function $G_{\mu/\nu}$]
	Fix a pair of integer compositions $\mu = (\mu_1,\dots,\mu_n) \in \mathbb{Z}^n$, $\nu = (\nu_1\dots,\nu_n) \in \mathbb{Z}^n$ such that $\mu_i \geq \nu_i$ for all $1 \leq i \leq n$. Similarly to \eqref{AB-states1}, for all $k \in \mathbb{Z}$ define vectors
	\begin{align}
		\label{AB-states2}
		\bm{A}(k) = \sum_{j=1}^{n} (\bm{1}_{\mu_j = k}) \bm{e}_j,
		\quad\quad
		\bm{B}(k) = \sum_{j=1}^{n} (\bm{1}_{\nu_j = k}) \bm{e}_j.
	\end{align}
	Let $(y_1,\dots,y_{\ell})$ be a collection of complex parameters, where $\ell \geq 1$ is arbitrary (in particular, it is independent of $n$). We define $G_{\mu/\nu}(y_1,\dots,y_\ell) \equiv G_{\mu/\nu}(y_1,\dots,y_\ell;q,s)$ as the following partition function:
	\begin{align}
		\label{G-pf}
		G_{\mu/\nu}(y_1,\dots,y_\ell)
		&=
		\tikz{0.8}{
			\foreach\y in {1,...,5}{
				\draw[lgray,line width=1.5pt,<-] (0,\y) -- (8,\y);
			}
			\foreach\x in {1,...,7}{
				\draw[lgray,line width=4pt,->] (\x,0) -- (\x,6);
			}
			\node[left] at (-0.5,1) {$y_1 \leftarrow$};
			\node[left] at (-0.5,2) {$y_2 \leftarrow$};
			\node[left] at (-0.5,3) {$\vdots$};
			\node[left] at (-0.5,4) {$\vdots$};
			\node[left] at (-0.5,5) {$y_\ell \leftarrow$};
			\node[above] at (7,6) {$\cdots$};
			\node[above] at (6,6) {$\cdots$};
			\node[above] at (5,6) {\footnotesize$\bm{B}(1)$};
			\node[above] at (4,6) {\footnotesize$\bm{B}(0)$};
			\node[above] at (2.9,6) {\footnotesize$\bm{B}(-1)$};
			\node[above] at (1.9,6) {$\cdots$};
			\node[above] at (0.9,6) {$\cdots$};
			\node[below] at (7,0) {$\cdots$};
			\node[below] at (6,0) {$\cdots$};
			\node[below] at (5,0) {\footnotesize$\bm{A}(1)$};
			\node[below] at (4,0) {\footnotesize$\bm{A}(0)$};
			\node[below] at (2.9,0) {\footnotesize$\bm{A}(-1)$};
			\node[below] at (1.9,0) {$\cdots$};
			\node[below] at (0.9,0) {$\cdots$};
			\node[right] at (8,1) {$0$};
			\node[right] at (8,2) {$0$};
			\node[right] at (8,3) {$\vdots$};
			\node[right] at (8,4) {$\vdots$};
			\node[right] at (8,5) {$0$};
			\node[left] at (0,1) {$0$};
			\node[left] at (0,2) {$0$};
			\node[left] at (0,3) {$\vdots$};
			\node[left] at (0,4) {$\vdots$};
			\node[left] at (0,5) {$0$};
		}
	\end{align}
	In this picture, columns are assigned integer labels. A path of colour $i$ enters via the bottom edge of column $\mu_i$ of the lattice; it leaves via the top edge of column $\nu_i$. Since the parts of the indexing compositions are bounded above and below, all motion of paths happens between columns $b$ and $a$, where $b = \min_{1 \leq i \leq n}(\nu_i)$ and $a = \max_{1 \leq i \leq n}(\mu_i)$; it is in this sense that the partition function \eqref{G-pf} is well-defined, despite being constructed using rows of vertices which are infinite in both directions.  The vertex weights are given by \eqref{sym1-intro}, where we set $z \equiv y_i$ for a vertex in row $i$ of the lattice.
\end{defn}

\subsubsection{Cauchy identity}

The functions $f_{\mu}$ and $G_{\mu/\nu}$ satisfy a Cauchy-type summation identity; this, together with the orthogonality result \eqref{orthog} allows one to write an integral formula for $G_{\mu/\nu}(y_1,\dots,y_{\ell})$ for arbitrary $\mu,\nu$ and $\ell$, as we see in the next section.

\begin{thm}
	Fix two collections of complex parameters $(y_1,\dots,y_\ell)$ and $(z_1,\dots,z_n)$ such that
	\begin{align*}
		\left| \frac{y_i-s}{1-s y_i} \cdot \frac{z_j-s}{1-s z_j} \right| < 1,
	\end{align*} 
	for all $1 \leq i \leq \ell$ and $1 \leq j \leq n$. Then for any composition 
	$\nu = (\nu_1,\dots,\nu_n) \in \mathbb{Z}^n$ we have the identity
	\begin{align}
		\label{cauchy}
		\sum_{\kappa \in \mathbb{Z}^n}
		f_{\kappa}(z_1,\dots,z_n)
		G_{\kappa/\nu}(y_1,\dots,y_\ell)
		=
		q^{-\ell n}
		\prod_{i=1}^{\ell}
		\prod_{j=1}^{n}
		\frac{1-q y_i z_j}{1- y_i z_j}
		\cdot
		f_{\nu}(z_1,\dots,z_n).
	\end{align}
\end{thm}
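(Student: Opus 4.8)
The plan is to recognize the left-hand side of \eqref{cauchy} as a single partition function on a combined lattice, and then to evaluate that partition function by a Yang--Baxter (``train''/``zipper'') argument that strips off the $\ell$ leftward rows one at a time.

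\textbf{Setting up the combined lattice.} For each $\kappa \in \mathbb{Z}^n$, the vertical boundary data $\bm{A}(k)$ along the top of the lattice \eqref{f-pf} computing $f_\kappa$ coincides with the bottom boundary data of the lattice \eqref{G-pf} computing $G_{\kappa/\nu}$; hence (after a harmless common shift of column labels to align the semi-infinite and doubly-infinite domains) the sum $\sum_{\kappa} f_\kappa(z_1,\dots,z_n)\, G_{\kappa/\nu}(y_1,\dots,y_\ell)$ is exactly the sum over all intermediate vertical-edge states along a horizontal cut, and therefore equals the partition function $Z$ of the lattice obtained by placing the $n$ rightward $L$-rows (spectral parameters $z_1,\dots,z_n$) beneath the $\ell$ leftward $M$-rows (spectral parameters $y_1,\dots,y_\ell$), subject to: vacuum along the bottom boundary; the colour-$i$ path entering the left edge of $L$-row $i$, nothing entering either side of any $M$-row, and $0$ along the entire right boundary; and the state encoding $\nu$ along the top. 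Absolute convergence of this a priori infinite sum is precisely what the hypothesis $\bigl|\tfrac{y_i-s}{1-sy_i}\cdot\tfrac{z_j-s}{1-sz_j}\bigr|<1$ provides, via the geometric decay of configuration weights as paths drift to $\pm\infty$; I would record this bound before proceeding.

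\textbf{The train argument.} The local input is a Yang--Baxter equation intertwining the families $L_{z,q,s}$ and $M_{y,q,s}$: there is a cross-vertex $R_{y,z}$ with
\[
(M\text{-row }y)\,(L\text{-row }z)\,R_{y,z} \;=\; R_{y,z}\,(L\text{-row }z)\,(M\text{-row }y),
\]
a finite identity among the explicit weights \eqref{s-weights-intro}, \eqref{sym1-intro}, which I would verify directly or quote from \cite{borodin_coloured_2018}. To interchange a given $(M$-row $y$, $L$-row $z)$ pair, insert $R_{y,z}$ at the far right end of the pair, where it acts on the state $(0,0)$ and contributes a scalar $\rho_{+}(y,z)$, then use the Yang--Baxter relation to drag it to the far left end, where it acts on ``colour $i$ on the $L$-line, $0$ on the $M$-line'' and, because $R$ is triangular on such states, contributes a scalar $\rho_{-}(y,z)$ while leaving the colour-$i$ input intact. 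Thus interchanging that pair multiplies $Z$ by $\rho_{+}(y,z)/\rho_{-}(y,z)$, which a short weight computation should identify as $q^{-1}\tfrac{1-qyz}{1-yz}$. Bubbling every $M$-row past every $L$-row then produces the total factor $\prod_{i=1}^{\ell}\prod_{j=1}^{n} q^{-1}\tfrac{1-qy_iz_j}{1-y_iz_j} = q^{-\ell n}\prod_{i,j}\tfrac{1-qy_iz_j}{1-y_iz_j}$; afterwards each $M$-row lies beneath the vacuum boundary with nothing entering it from any side, so its unique configuration is empty and contributes $1$, and the surviving lattice (the $n$ $L$-rows carrying vacuum below and the state $\nu$ above) is precisely the partition function \eqref{f-pf} for $f_\nu(z_1,\dots,z_n)$. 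This is \eqref{cauchy}. For bookkeeping one may instead prove the case $\ell=1$ using a single cross-vertex and then induct on $\ell$, via the concatenation identity $G_{\mu/\nu}(y_1,\dots,y_\ell)=\sum_\kappa G_{\mu/\kappa}(y_1,\dots,y_{\ell-1})\,G_{\kappa/\nu}(y_\ell)$ (cut the $M$-lattice horizontally) together with Fubini on the $\kappa$-sums.

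\textbf{Main obstacle.} The crux is the second step: identifying the correct $R$-intertwiner for the $L$/$M$ pair and extracting the boundary scalars $\rho_\pm$ — in particular verifying that $R_{y,z}$ really does act triangularly (sending the ``one colour in, nothing else'' boundary state to a scalar multiple of itself) and getting the $q^{-1}$ per crossing exactly right, so that the $q^{-\ell n}$ prefactor emerges. A secondary technical point is justifying the truncation of the doubly-infinite rows of $G$ to a large finite window while running the train argument, with the outside contributions trivial — which again rests on the convergence hypothesis above. Once these are in place the remainder is routine.
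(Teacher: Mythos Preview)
Your proposal is correct and is precisely the approach the paper points to: the paper's own proof consists only of the sentence ``The proof is via the Yang--Baxter equation for the vertex model \eqref{s-weights-intro}, which we do not discuss in the present work; see \cite[Proposition 4.5.1]{borodin_coloured_2018},'' and what you have sketched is exactly that argument --- stack the $f$- and $G$-lattices, sum over the intermediate cut to obtain a single partition function, then use the Yang--Baxter intertwiner to commute each $M$-row past each $L$-row, collecting the scalar $q^{-1}\dfrac{1-qy_iz_j}{1-y_iz_j}$ per crossing. Your identification of the two genuine technical points (triangularity of the cross vertex at the coloured left boundary, and the finite-window truncation justified by the convergence hypothesis) is accurate and matches how the cited reference handles them.
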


\begin{proof}
	The proof is via the Yang--Baxter equation for the vertex model \eqref{s-weights-intro}, which we do not discuss in the present work; see \cite[Proposition 4.5.1]{borodin_coloured_2018}.
\end{proof}

\subsection{A discrete-time Markov process}
\label{ssec:transition}


Let us now define a discrete-time Markov process, via the partition functions \eqref{G-pf}. The states of the Markov process will be integer compositions. For any two compositions $\mu,\nu \in \mathbb{Z}^n$, the probability 
$\mathbb{P}_y(\mu \rightarrow \nu)$ of transitioning from state $\mu$ to state $\nu$ after a single time step is defined to be
\begin{align}
	\label{prob-wt}
	\mathbb{P}_y(\mu \rightarrow \nu)
	=
	(-s)^{|\nu|-|\mu|}
	G_{\mu/\nu}(y),
\end{align}
where the weights \eqref{prob-wt} depend on the parameters $y,q,s \in \mathbb{R}_{>0}$, chosen such that $y < s < 1 < q$. In order to verify that \eqref{prob-wt} defines a valid probability measure, we need the following result:
\begin{prop}
	For any fixed composition $\mu \in \mathbb{Z}^n$, one has
	\begin{align}
		\label{G-prob}
		\sum_{\nu \in \mathbb{Z}^n}
		(-s)^{|\nu|-|\mu|}
		G_{\mu/\nu}(y)
		=
		1,
	\end{align}
	and the summand is non-negative for all $\nu$. Here we have defined the weight of a composition by $|\mu| = \sum_{i=1}^{n} \mu_i$.
\end{prop}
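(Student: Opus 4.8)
The plan is to deduce both claims of the proposition from properties already established in the excerpt, namely the Cauchy identity \eqref{cauchy}, the stochasticity property \eqref{sum-unity2} of the weights $M_{z,q,s}$, and the sign/positivity structure of those weights from Proposition~\ref{prop:pos}. The key observation is that the partition function $G_{\mu/\nu}(y_1,\dots,y_\ell)$ is built row-by-row out of vertices carrying weights $M_{y_i,q,s}$, so the sum over all output compositions $\nu$ should collapse via repeated application of \eqref{sum-unity2}.

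First I would treat the single-variable case $\ell=1$, which is the crux. Here $G_{\mu/\nu}(y)$ is a partition function on a single (bi-infinite) row of $M_{y,q,s}$ vertices: a path of colour $i$ enters at the bottom of column $\mu_i$, and the configuration is determined by a choice of exit columns, i.e.\ by $\nu$. Summing $(-s)^{|\nu|-|\mu|} G_{\mu/\nu}(y)$ over $\nu \in \mathbb{Z}^n$ amounts to summing the vertex weights over all admissible states $(\K,l)$ at each vertex as one sweeps the row from right to left (or left to right), with the incoming left state $0$ fixed at the far right. Since $|\nu|-|\mu|$ counts the total horizontal displacement, which in turn equals (up to sign) the number of vertices at which a path exits leftward, the prefactor $(-s)^{|\nu|-|\mu|}$ distributes as exactly one factor of $(-s)^{-\bm 1_{j\ge 1}}$ per vertex bookkeeping the incoming horizontal colour. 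Applying \eqref{sum-unity2} vertex-by-vertex — starting from the rightmost occupied column, where the horizontal input is $0$, and telescoping leftwards — collapses the whole row sum to $1$. This is the step I expect to be the main obstacle: one must set up the telescoping carefully, keeping track of the bi-infinite lattice (only finitely many vertices are non-trivial), verifying that the exponent $|\nu|-|\mu|$ really decomposes into one $(-s)^{\pm 1}$ factor per vertex consistent with \eqref{sum-unity2}, and confirming that the relevant sums converge (each vertex sum is finite by path conservation \eqref{conserve}).

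For general $\ell$, I would argue inductively on the number of rows. Writing $G_{\mu/\nu}(y_1,\dots,y_\ell)$ via the branching/composition structure of the partition function, one has $G_{\mu/\nu}(y_1,\dots,y_\ell) = \sum_{\kappa} G_{\mu/\kappa}(y_1) \, G_{\kappa/\nu}(y_2,\dots,y_\ell)$, summing over the intermediate composition $\kappa$ recorded between rows $1$ and $2$. Then
\[
\sum_{\nu} (-s)^{|\nu|-|\mu|} G_{\mu/\nu}(y_1,\dots,y_\ell)
= \sum_{\kappa} (-s)^{|\kappa|-|\mu|} G_{\mu/\kappa}(y_1)
\sum_{\nu} (-s)^{|\nu|-|\kappa|} G_{\kappa/\nu}(y_2,\dots,y_\ell),
\]
and the inner sum is $1$ by the inductive hypothesis, while the outer sum is $1$ by the $\ell=1$ case; this gives the normalization \eqref{G-prob}. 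An alternative, slicker route is to substitute $(z_1,\dots,z_n)$ into the Cauchy identity \eqref{cauchy} at the special point where $f_\kappa$ becomes a generating function that ``reads off'' $(-s)^{|\kappa|}$ — concretely, use the factorization \eqref{f-factor} together with a limiting specialization of the $z_i$ making $\frac{z_i-s}{1-sz_i}\to -s$, which turns $f_\kappa(z)/f_\nu(z)$ into $(-s)^{|\kappa|-|\nu|}$ times a constant; then \eqref{cauchy} directly yields $\sum_\kappa (-s)^{|\kappa|-|\nu|} G_{\kappa/\nu}(y) = 1$ after checking the $q$-Pochhammer and $q^{-\ell n}$ prefactors cancel. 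I would present the vertex-telescoping argument as the main proof and mention the Cauchy-identity derivation as a remark.

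Finally, non-negativity of each summand $(-s)^{|\nu|-|\mu|} G_{\mu/\nu}(y)$ follows from Proposition~\ref{prop:pos}: for the chosen parameter range $y<s<1<q$, each individual vertex weight $M_{y,q,s}(\I,j;\K,l)(-s)^{-\bm 1_{j\ge 1}}$ is non-negative, and $G_{\mu/\nu}(y)$ is a sum of products of such weights (after redistributing the global $(-s)^{|\nu|-|\mu|}$ as one $(-s)^{-\bm 1_{j\ge1}}$ per vertex, exactly as in the normalization computation). Hence $(-s)^{|\nu|-|\mu|} G_{\mu/\nu}(y) \geq 0$, and together with \eqref{G-prob} this shows \eqref{prob-wt} is a genuine probability distribution.
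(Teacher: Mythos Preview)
Your core argument for the $\ell=1$ case is correct and essentially identical to the paper's: redistribute $(-s)^{|\nu|-|\mu|}$ as one factor of $(-s)^{-\bm{1}_{j\ge 1}}$ per vertex (using that the total number of occupied incoming horizontal edges equals $|\mu|-|\nu|$), then apply the sum-to-unity identity \eqref{sum-unity2} iteratively from right to left, with non-negativity coming directly from Proposition~\ref{prop:pos}. Note that the proposition as stated is only about a single variable $y$, so your inductive extension to general $\ell$ and the alternative Cauchy-identity route are extraneous here (the paper records the multi-row statement separately as \eqref{G-prob-ell}, by concatenation).
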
 

\begin{proof}
	The quantity $(-s)^{|\nu|-|\mu|} G_{\mu/\nu}(y)$ may be computed as a partition function of the form \eqref{G-pf}, with $\ell=1$, $y_1 = y$, where the vertex weights are of modified form $M_{y,q,s}(\I,j;\K,l) (-s)^{-\bm{1}_{j \geq 1}}$. Indeed, the factor $(-s)^{-\bm{1}_{j \geq 1}}$ assigns an extra multiplicative weight of $(-s)^{-1}$ whenever the incoming horizontal edge of a vertex \eqref{generic-M-intro} is occupied by a path; the total number of such edges in the partition function $G_{\mu/\nu}(y)$ is $|\mu| - |\nu|$.
	
	The sum-to-unity property \eqref{G-prob} then follows by applying the identity \eqref{sum-unity2} iteratively to each vertex in the one-row partition function, working from right to left; the non-negativity of the summand follows immediately from Proposition \ref{prop:pos}.
\end{proof}

In a similar vein, let $\mathbb{P}_{\{y_1,\dots,y_{\ell}\}}(\mu \rightarrow \nu)$ denote the probability of transitioning from state $\mu$ to state $\nu$ after $\ell$ time steps, where the transition probabilities at the $i$-th time step are chosen to be \eqref{prob-wt} with $y=y_i$. By concatenation of one-row partition functions, we conclude that
\begin{align}
	\label{G-prob-ell}
	\mathbb{P}_{\{y_1,\dots,y_{\ell}\}}(\mu \rightarrow \nu)
	=
	(-s)^{|\nu|-|\mu|}
	G_{\mu/\nu}(y_1,\dots,y_{\ell}).
\end{align}
Equation \eqref{G-prob-ell} is the cornerstone result in this part of the text. As we show in the following section, it has a limit to the (continuous-time) multi-species ASEP, allowing us to compute transition probabilities within the latter. Before moving on to this, we provide an explicit integral formula for the probabilities \eqref{G-prob-ell} in the case $\mu_1 \geq \cdots \geq \mu_n$:

\begin{thm}
	\label{thm}
	Let $\mu \in \mathbb{Z}^n$ be chosen such that $\mu_1 \geq \cdots \geq \mu_n$, and fix a further composition $\nu \in \mathbb{Z}^n$ (with no restrictions imposed upon the ordering of its parts). We then have the integral formula
	\begin{multline}
		\label{Gmunu-int}
		\mathbb{P}_{\{y_1,\dots,y_{\ell}\}}(\mu \rightarrow \nu)
		=
		\frac{(-s)^{|\nu|-|\mu|} q^{-\ell n}}{(2\pi\ii)^n}
		\oint_{C_1}
		\frac{\dd z_1}{z_1}
		\cdots 
		\oint_{C_n}
		\frac{\dd z_n}{z_n}
		\\
		\prod_{1 \leq i<j \leq n}
		\left(
		\frac{z_j-z_i}{z_j-q z_i}
		\right)
		\prod_{i=1}^{\ell}
		\prod_{j=1}^{n}
		\frac{z_j - q y_i}{z_j - y_i}
		\prod_{i=1}^{n}
		\frac{1}{1-s z_i}
		\left(
		\frac{z_i-s}{1-sz_i}
		\right)^{\mu_i}
		f_{\nu}(z^{-1}_1,\dots,z^{-1}_n),
	\end{multline}
	provided that all points in the set $\{y_1,\dots,y_{\ell}\}$ are sufficiently close to $s$, whereby they are enclosed by each of the contours $\{C_1,\dots,C_n\}$, which are admissible with respect to $(q,s)$.
	
\end{thm}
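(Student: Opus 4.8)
The plan is to combine the Cauchy identity \eqref{cauchy} with the orthogonality relation \eqref{orthog} to extract $G_{\mu/\nu}(y_1,\dots,y_\ell)$ as a contour integral, and then substitute the explicit factorized form of $f_\mu$ valid when $\mu$ is a signature. First I would start from \eqref{cauchy} and observe that, since the collection $\{f_\kappa\}$ is self-orthogonal via \eqref{orthog}, we may recover an individual term from the sum $\sum_{\kappa} f_\kappa(z_1,\dots,z_n) G_{\kappa/\nu}(y_1,\dots,y_\ell)$ by integrating against $g^*_{\mu}$ over a set of admissible contours; concretely, I would multiply both sides of \eqref{cauchy} by $\prod_{i<j}\left(\frac{w_j-w_i}{w_j-qw_i}\right) g^*_{\mu}(w_1,\dots,w_n)$ (with $z_i \mapsto w_i^{-1}$ as in the statement of \eqref{orthog}) and integrate each $w_i$ over $C_i$. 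On the left-hand side only the $\kappa=\mu$ term survives by \eqref{orthog}, leaving $G_{\mu/\nu}(y_1,\dots,y_\ell)$, while the right-hand side becomes an $n$-fold integral involving $q^{-\ell n}\prod_{i,j}\frac{1-q y_i z_j}{1-y_i z_j}$, the function $f_\nu(z_1,\dots,z_n)$ (evaluated at the inverted variables), and $g^*_\mu$.

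The second step is to insert the definition \eqref{g} of $g^*_\mu$ and, crucially, to use the hypothesis $\mu_1 \geq \cdots \geq \mu_n$. For such a signature, $\tilde\mu = (\mu_n,\dots,\mu_1)$ is anti-dominant, so $f_{\tilde\mu}$ factorizes completely via Proposition~\ref{prop:factor1} (applied with parameters $q^{-1},s^{-1}$ and the inverted, reversed alphabet). This collapses the Pochhammer prefactor against the $\prod_{j\ge0}(s^{-2};q^{-1})_{m_j(\mu)}$ in the denominator of \eqref{g} and against the $q^{{\rm inv}(\mu)}$ factor, leaving a clean product $\prod_i \frac{1}{1-sz_i}\left(\frac{z_i-s}{1-sz_i}\right)^{\mu_i}$ — exactly the shape appearing in \eqref{Gmunu-int}. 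I would then absorb the overall power of $(-s)$ produced by $\prod_i(-sz_i)^{-1}$ in \eqref{g} into the $(-s)^{|\nu|-|\mu|}$ prefactor of \eqref{G-prob-ell}, matching the normalization in the claimed formula. The change of variables from the integration variables of \eqref{orthog} (which carried $\dd z_i/z_i$ and inverted arguments of $f_\nu$) to those displayed in \eqref{Gmunu-int} needs to be tracked carefully, but it is a bookkeeping matter; the factor $\prod_{i,j}\frac{z_j-qy_i}{z_j-y_i}$ arises directly from $\prod_{i,j}\frac{1-qy_iz_j}{1-y_iz_j}$ after this substitution.

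The final step is to check the contour conditions. The Cauchy identity \eqref{cauchy} was stated under the analytic constraint $\left|\frac{y_i-s}{1-sy_i}\cdot\frac{z_j-s}{1-sz_j}\right|<1$, which is what guarantees convergence of the sum over $\kappa\in\mathbb{Z}^n$; this forces the $y_i$ to lie close to $s$ (so that the corresponding factor is small) and to be enclosed by the admissible contours $\{C_1,\dots,C_n\}$, which is precisely the hypothesis recorded in Theorem~\ref{thm}. I would note that admissibility of the contours with respect to $(q,s)$ is exactly what is needed for \eqref{orthog} to hold, and that enclosing the $y_i$ does not conflict with the nesting and $s$-versus-$s^{-1}$ requirements provided the $y_i$ are taken sufficiently near $s$. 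I expect the main obstacle to be the second step: carefully reconciling the two different conventions — the inverted/reversed arguments and the $(q,s)\mapsto(q^{-1},s^{-1})$ substitution hidden inside $g^*_\mu$, versus the "straight" arguments of $f_\nu$ and the $\dd z_i$ (rather than $\dd z_i/z_i$) in \eqref{Gmunu-int} — so that every stray factor of $z_i$, $q$, and $-s$ lands in the right place and the Pochhammer symbols genuinely cancel. Once the signature factorization of $f_{\tilde\mu}$ is in hand, however, the rest is a direct computation.
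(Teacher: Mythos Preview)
Your proposal is correct and follows essentially the same route as the paper: apply the orthogonality relation \eqref{orthog} to the Cauchy identity \eqref{cauchy} (with $z_i\mapsto z_i^{-1}$) to isolate $G_{\mu/\nu}$, then use the dominance of $\mu$ so that $\tilde\mu$ is anti-dominant and $f_{\tilde\mu}$ factorizes via Proposition~\ref{prop:factor1}, collapsing $g^*_\mu$ to the product $\prod_i \frac{1}{1-sz_i}\bigl(\frac{z_i-s}{1-sz_i}\bigr)^{\mu_i}$. Two small bookkeeping remarks: the factor $\prod_i(-sz_i)^{-1}$ in \eqref{g} does not feed into the $(-s)^{|\nu|-|\mu|}$ prefactor but instead combines with $\prod_i \frac{1}{1-s^{-1}z_i^{-1}}$ from the factorized $f_{\tilde\mu}$ to produce $\prod_i\frac{1}{1-sz_i}$ exactly (the $(-s)^{|\nu|-|\mu|}$ comes solely from \eqref{G-prob-ell}); and there is no change of integration variables to perform, since \eqref{Gmunu-int} already carries $\dd z_i/z_i$ and $f_\nu(z_1^{-1},\dots,z_n^{-1})$.
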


\begin{proof}
	Start from the the Cauchy identity \eqref{cauchy} with $z_i \mapsto z_i^{-1}$ for all $1 \leq i \leq n$ and multiply it by $g^{*}_{\mu}(z_1,\dots,z_n)$, as given by \eqref{g}, where $\mu \in \mathbb{Z}^n$ is an arbitrary composition: 
	\begin{multline*}
		\sum_{\kappa \in \mathbb{Z}^n}
		f_{\kappa}(z_1^{-1},\dots,z_n^{-1})
		G_{\kappa/\nu}(y_1,\dots,y_\ell)
		g^{*}_{\mu}(z_1,\dots,z_n)
		=
		\\
		q^{-\ell n}
		\prod_{i=1}^{\ell}
		\prod_{j=1}^{n}
		\frac{z_j-q y_i}{z_j- y_i}
		\cdot
		f_{\nu}(z_1^{-1},\dots,z_n^{-1})
		g^{*}_{\mu}(z_1,\dots,z_n).
	\end{multline*}
	Integrating as in equation \eqref{orthog}, the sum over $\kappa$ collapses to a single term corresponding to $\kappa = \mu$, in view of the indicator function on the right hand side of \eqref{orthog}. We then have
	\begin{multline}
		\label{G-intermediate}
		G_{\mu/\nu}(y_1,\dots,y_{\ell})
		=
		\frac{q^{-\ell n}}{(2\pi\ii)^n}
		\oint_{C_1}
		\frac{\dd z_1}{z_1}
		\cdots 
		\oint_{C_n}
		\frac{\dd z_n}{z_n}
		\\
		\prod_{1 \leq i<j \leq n}
		\left(
		\frac{z_j-z_i}{z_j-q z_i}
		\right)
		\prod_{i=1}^{\ell}
		\prod_{j=1}^{n}
		\frac{z_j - q y_i}{z_j - y_i}
		f_{\nu}(z^{-1}_1,\dots,z^{-1}_n)
		g^{*}_{\mu}(z_1,\dots,z_n).
	\end{multline}
	Finally, choose $\mu$ such that $\mu_1 \geq \cdots \geq \mu_n$; by virtue of \eqref{g} and \eqref{f-factor} we find that
	\begin{multline*}
		g^{*}_{\mu}(z_1,\dots,z_n)
		=
		\frac{q^{{\rm inv}(\mu)}}{\prod_{j \geq 0} (s^{-2};q^{-1})_{m_j(\mu)}}
		\cdot
		\prod_{i=1}^{n} (-sz_i)^{-1}
		\\
		\times
		\prod_{j \geq 0} (s^{-2};q^{-1})_{m_j(\mu)}
		\cdot
		\prod_{i=1}^{n}
		\frac{1}{1-s^{-1} z_i^{-1}}
		\left( \frac{z_i^{-1}-s^{-1}}{1-s^{-1} z_i^{-1}} \right)^{\mu_i}.
	\end{multline*}
	Since ${\rm inv}(\mu) = 0$, this simplifies to 
	\begin{align*}
		g^{*}_{\mu}(z_1,\dots,z_n)
		=
		\prod_{i=1}^{n}
		\frac{1}{1-s z_i}
		\left( \frac{z_i-s}{1-s z_i} \right)^{\mu_i}.
	\end{align*}
	Substituting this into \eqref{G-intermediate} and using \eqref{G-prob-ell}, we deduce the result \eqref{Gmunu-int}.
	
\end{proof}

\subsection{Reduction to the multi-species asymmetric simple exclusion process}
\label{ssec:reduction-asep}

In this section we degenerate the results \eqref{G-prob-ell} and \eqref{Gmunu-int} to the multi-species ASEP on the integer lattice. This is achieved via a certain limiting procedure of the parameters $\{y_1,\dots,y_{\ell}\}$ and $s$. In Section \ref{sssec:reduction} we examine what happens to the stochastic vertex weights $M_{y,q,s}(\I,j;\K,l) (-s)^{-\bm{1}_{j \geq 1}}$ under these specific choices of $y$ and $s$, and from this we infer Proposition \ref{prop:asep-prob}, which yields the reduction of \eqref{G-prob-ell} to the multi-species ASEP. In Section \ref{sssec:int} we carry out this limiting procedure directly on the integral \eqref{Gmunu-int}, thus obtaining an integral formula for the transition probabilities in the multi-species ASEP. 

\subsubsection{Reduction of the weights}
\label{sssec:reduction}

Let us begin by fixing some new notation for our vertex weights. We write
\begin{align}
	{\sf M}_y(\I,j;\K,l)
	=
	M_{q^{-\frac{1}{2}}y,q,q^{-\frac{1}{2}}}(\I,j;\K,l) (-q^{\frac{1}{2}})^{\mathbf{1}_{j\ge 1}}.
\end{align}
These are simply the stochastic weights $M_y(\I,j;\K,l)(-s)^{-\mathbf{1}_{j\ge 1}} $ with the replacements $y \mapsto q^{-\frac{1}{2}} y$ and $s \mapsto q^{-\frac{1}{2}}$. Using \eqref{s-weights-intro} and the definition \eqref{sym1-intro}, we may explicitly tabulate ${\sf M}_y(\I,j;\K,l)$ for all choices of $\I,j,\K,l$ which respect the conservation requirement \eqref{conserve}. We obtain the table below:
\begin{align}
	\label{dual-s-weights}
	\begin{tabular}{|c|c|c|}
		\hline
		\quad
		\tikz{0.7}{
			\draw[lgray,line width=1.5pt,<-] (-1,0) -- (1,0);
			\draw[lgray,line width=4pt,->] (0,-1) -- (0,1);
			\node[left] at (-1,0) {\tiny $0$};\node[right] at (1,0) {\tiny $0$};
			\node[below] at (0,-1) {\tiny $\I$};\node[above] at (0,1) {\tiny $\I$};
		}
		\quad
		&
		\quad
		\tikz{0.7}{
			\draw[lgray,line width=1.5pt,<-] (-1,0) -- (1,0);
			\draw[lgray,line width=4pt,->] (0,-1) -- (0,1);
			\node[left] at (-1,0) {\tiny $i$};\node[right] at (1,0) {\tiny $i$};
			\node[below] at (0,-1) {\tiny $\I$};\node[above] at (0,1) {\tiny $\I$};
		}
		\quad
		&
		\quad
		\tikz{0.7}{
			\draw[lgray,line width=1.5pt,<-] (-1,0) -- (1,0);
			\draw[lgray,line width=4pt,->] (0,-1) -- (0,1);
			\node[left] at (-1,0) {\tiny $i$};\node[right] at (1,0) {\tiny $0$};
			\node[below] at (0,-1) {\tiny $\I$};\node[above] at (0,1) {\tiny $\I^{-}_i$};
		}
		\quad
		\\[1.3cm]
		\quad
		$\dfrac{q^{-\Is{1}{n}}-q^{-1}y}{1-q^{-1}y}$
		\quad
		& 
		\quad
		$\dfrac{(1-yq^{-I_i}) q^{-\Is{i+1}{n}}}{1-q^{-1}y}$
		\quad
		& 
		\quad
		$\dfrac{(1-q^{-I_i}) q^{-\Is{i+1}{n}}}{1-q^{-1}y}$
		\quad
		\\[0.7cm]
		\hline
		\quad
		\tikz{0.7}{
			\draw[lgray,line width=1.5pt,<-] (-1,0) -- (1,0);
			\draw[lgray,line width=4pt,->] (0,-1) -- (0,1);
			\node[left] at (-1,0) {\tiny $0$};\node[right] at (1,0) {\tiny $i$};
			\node[below] at (0,-1) {\tiny $\I$};\node[above] at (0,1) {\tiny $\I^{+}_i$};
		}
		\quad
		&
		\quad
		\tikz{0.7}{
			\draw[lgray,line width=1.5pt,<-] (-1,0) -- (1,0);
			\draw[lgray,line width=4pt,->] (0,-1) -- (0,1);
			\node[left] at (-1,0) {\tiny $j$};\node[right] at (1,0) {\tiny $i$};
			\node[below] at (0,-1) {\tiny $\I$};\node[above] at (0,1) {\tiny $\I^{+-}_{ij}$};
		}
		\quad
		&
		\quad
		\tikz{0.7}{
			\draw[lgray,line width=1.5pt,<-] (-1,0) -- (1,0);
			\draw[lgray,line width=4pt,->] (0,-1) -- (0,1);
			\node[left] at (-1,0) {\tiny $i$};\node[right] at (1,0) {\tiny $j$};
			\node[below] at (0,-1) {\tiny $\I$};\node[above] at (0,1) 
			{\tiny $\I^{+-}_{ji}$};
		}
		\quad
		\\[1.3cm]
		\quad
		$\dfrac{y(q^{-\Is{1}{n}}-q^{-1})}{1-q^{-1}y}$
		\quad
		& 
		\quad
		$\dfrac{(1-q^{-I_j}) q^{-\Is{j+1}{n}}}{1-q^{-1}y}$
		\quad
		&
		\quad
		$\dfrac{y(1-q^{-I_i})q^{-\Is{i+1}{n}}}{1-q^{-1}y}$
		\quad
		\\[0.7cm]
		\hline
	\end{tabular} 
\end{align}
where it is assumed that $1 \leq i < j \leq n$. As we remarked earlier, the bottom leftmost of these weights vanishes when $\Is{1}{n} = 1$, thereby eliminating the possibility to inject more than one path into the vertical edge of a vertex (this is the only vertex that allows such injections). Accordingly, we may restrict our attention to states on vertical edges that contain at most one path; computing ${\sf M}_y(\bm{e}_i,j;\bm{e}_k,l)$ for $i,j,k,l$ such that $\bm{e}_i+\bm{e}_j = \bm{e}_k+\bm{e}_l$, we obtain the following table:
\begin{align}
	\label{fundamental-wt}
	\begin{tabular}{|c|c|c|c|c|}
		\hline
		\quad
		\tikz{0.7}{
			\draw[lgray,line width=1.5pt,<-] (-1,0) -- (1,0);
			\draw[lgray,line width=4pt,->] (0,-1) -- (0,1);
			\node[left] at (-1,0) {\tiny $i$};\node[right] at (1,0) {\tiny $i$};
			\node[below] at (0,-1) {\tiny $\bm{e}_i$};
			\node[above] at (0,1) {\tiny $\bm{e}_i$};
		}
		\quad
		&
		\quad
		\tikz{0.7}{
			\draw[lgray,line width=1.5pt,<-] (-1,0) -- (1,0);
			\draw[lgray,line width=4pt,->] (0,-1) -- (0,1);
			\node[left] at (-1,0) {\tiny $i$};\node[right] at (1,0) {\tiny $i$};
			\node[below] at (0,-1) {\tiny $\bm{e}_j$};
			\node[above] at (0,1) {\tiny $\bm{e}_j$};
		}
		\quad
		&
		\quad
		\tikz{0.7}{
			\draw[lgray,line width=1.5pt,<-] (-1,0) -- (1,0);
			\draw[lgray,line width=4pt,->] (0,-1) -- (0,1);
			\node[left] at (-1,0) {\tiny $j$};\node[right] at (1,0) {\tiny $j$};
			\node[below] at (0,-1) {\tiny $\bm{e}_i$};
			\node[above] at (0,1) {\tiny $\bm{e}_i$};
		}
		\quad
		&
		\quad
		\tikz{0.7}{
			\draw[lgray,line width=1.5pt,<-] (-1,0) -- (1,0);
			\draw[lgray,line width=4pt,->] (0,-1) -- (0,1);
			\node[left] at (-1,0) {\tiny $j$};\node[right] at (1,0) {\tiny $i$};
			\node[below] at (0,-1) {\tiny $\bm{e}_j$};
			\node[above] at (0,1) {\tiny $\bm{e}_i$};
		}
		\quad
		&
		\quad
		\tikz{0.7}{
			\draw[lgray,line width=1.5pt,<-] (-1,0) -- (1,0);
			\draw[lgray,line width=4pt,->] (0,-1) -- (0,1);
			\node[left] at (-1,0) {\tiny $i$};\node[right] at (1,0) {\tiny $j$};
			\node[below] at (0,-1) {\tiny $\bm{e}_i$};
			\node[above] at (0,1) {\tiny $\bm{e}_j$};
		}
		\quad
		\\[1.3cm]
		\quad
		$1$
		\quad
		& 
		\quad
		$\dfrac{q^{-1}(1-y)}{1-q^{-1}y}$
		\quad
		&
		\quad
		$\dfrac{1-y}{1-q^{-1}y}$
		\quad
		&
		\quad
		$\dfrac{1-q^{-1}}{1-q^{-1}y}$
		\quad
		&
		\quad
		$\dfrac{y(1-q^{-1})}{1-q^{-1}y}$
		\quad
		\\[0.7cm]
		\hline
	\end{tabular}
\end{align}
with $0 \leq i < j \leq n$. The tabulated weights are those of the $U_{q^{-1}}(\widehat{\mathfrak{sl}}_{n+1})$ stochastic vertex model with fundamental representations of both horizontal and vertical lattice lines; see \cite[Chapter 2]{borodin_coloured_2018}.

Now let us take the parameter $y$ to be given by
\begin{align}
	y = 1 + (1-q) \epsilon,
\end{align} 
where $\epsilon \in \mathbb{R}_{>0}$ is a small positive real number. Making this replacement in \eqref{fundamental-wt} and truncating to first order in $\epsilon$, we recover the following:
\begin{align*}
	\begin{tabular}{|c|c|c|c|c|}
		\hline
		\quad
		\tikz{0.7}{
			\draw[lgray,line width=1.5pt,<-] (-1,0) -- (1,0);
			\draw[lgray,line width=4pt,->] (0,-1) -- (0,1);
			\node[left] at (-1,0) {\tiny $i$};\node[right] at (1,0) {\tiny $i$};
			\node[below] at (0,-1) {\tiny $\bm{e}_i$};
			\node[above] at (0,1) {\tiny $\bm{e}_i$};
		}
		\quad
		&
		\quad
		\tikz{0.7}{
			\draw[lgray,line width=1.5pt,<-] (-1,0) -- (1,0);
			\draw[lgray,line width=4pt,->] (0,-1) -- (0,1);
			\node[left] at (-1,0) {\tiny $i$};\node[right] at (1,0) {\tiny $i$};
			\node[below] at (0,-1) {\tiny $\bm{e}_j$};
			\node[above] at (0,1) {\tiny $\bm{e}_j$};
		}
		\quad
		&
		\quad
		\tikz{0.7}{
			\draw[lgray,line width=1.5pt,<-] (-1,0) -- (1,0);
			\draw[lgray,line width=4pt,->] (0,-1) -- (0,1);
			\node[left] at (-1,0) {\tiny $j$};\node[right] at (1,0) {\tiny $j$};
			\node[below] at (0,-1) {\tiny $\bm{e}_i$};
			\node[above] at (0,1) {\tiny $\bm{e}_i$};
		}
		\quad
		&
		\quad
		\tikz{0.7}{
			\draw[lgray,line width=1.5pt,<-] (-1,0) -- (1,0);
			\draw[lgray,line width=4pt,->] (0,-1) -- (0,1);
			\node[left] at (-1,0) {\tiny $j$};\node[right] at (1,0) {\tiny $i$};
			\node[below] at (0,-1) {\tiny $\bm{e}_j$};
			\node[above] at (0,1) {\tiny $\bm{e}_i$};
		}
		\quad
		&
		\quad
		\tikz{0.7}{
			\draw[lgray,line width=1.5pt,<-] (-1,0) -- (1,0);
			\draw[lgray,line width=4pt,->] (0,-1) -- (0,1);
			\node[left] at (-1,0) {\tiny $i$};\node[right] at (1,0) {\tiny $j$};
			\node[below] at (0,-1) {\tiny $\bm{e}_i$};
			\node[above] at (0,1) {\tiny $\bm{e}_j$};
		}
		\quad
		\\[1.3cm]
		\quad
		$1$
		\quad
		& 
		\quad
		$\epsilon + O(\epsilon^2)$
		\quad
		&
		\quad
		$q \epsilon + O(\epsilon^2)$
		\quad
		&
		\quad
		$1-\epsilon+O(\epsilon^2)$
		\quad
		&
		\quad
		$1-q\epsilon+O(\epsilon^2)$
		\quad
		\\[0.7cm]
		\hline
	\end{tabular}
\end{align*}
with $0 \leq i < j \leq n$. As $\epsilon \rightarrow 0$, the second and third of these vertices have weights which tend to zero. In this limit paths will therefore tend to travel in zig-zag fashion through the lattice \eqref{G-pf}; either entering a vertex from the bottom edge and leaving via the left one, or entering a vertex from the right edge and leaving via the top one. To make the limit non-trivial, one needs to combine the choice $y_i = q^{-\frac{1}{2}}[1+(1-q)\epsilon]$, $1 \leq i \leq \ell$ of parameters in \eqref{G-pf} with a rescaling by $\epsilon^{-1}$ of the number of rows. In this limit, the discrete-time Markov process introduced in Section \ref{ssec:transition} reduces to the (continuous-time) multi-species ASEP, as we formalize below.  

\begin{defn}[Transition probability $\mathbb{P}(\mu \rightarrow \nu;t)$]
	Consider the multi-species ASEP on the integer lattice, with $n$ particles, all of different types. Let $(i,j)$ denote the occupation data of two neighbouring sites in the lattice, where $0 \leq i \not= j \leq n$. Then $(i,j) \mapsto (j,i)$ in time $t$, where $t$ is a random variable chosen from the exponential distribution with parameter $q$ if $0 \leq i < j \leq n$ and with parameter $1$ if $n \geq i > j \geq 0$.
	
	A state of the system is described by a strict\footnote{Having pairwise distinct parts.} integer composition $\mu = (\mu_1,\dots,\mu_n)$, where $\mu_i \in \mathbb{Z}$ is the coordinate of particle $i$. We denote the probability that the system is in state $\nu$ at time $t$, having started in state $\mu$ initially, by $\mathbb{P}(\mu \rightarrow \nu;t)$.
\end{defn}

\begin{prop}
	\label{prop:asep-prob}
	Fix two integer compositions $\mu,\nu \in \mathbb{Z}^n$ and a small parameter $\epsilon \in \mathbb{R}_{>0}$. Define
	\begin{align}
		\label{sfG}
		{\sf P}_{\ell;\epsilon}(\mu \rightarrow \nu)
		=
		\mathbb{P}_{\{y_1,\dots,y_{\ell}\}}(\mu \rightarrow \nu-\ell^n)
		\Big|_{s \mapsto q^{-\frac{1}{2}}}
		\Big|_{y_1,\dots,y_{\ell} \mapsto q^{-\frac{1}{2}} [1+(1-q)\epsilon]}
	\end{align}
	where every variable $y_1,\dots,y_{\ell}$ is substituted to the value shown.\footnote{For $q \in \mathbb{R}_{>1}$ it is clear that these substitutions yield $0 < y_i < s < 1$, which allows us to claim the positivity of weights as in Proposition \ref{prop:pos}.} The function ${\sf P}_{\ell;\epsilon}(\mu \rightarrow \nu)$ has analytic dependence on $\ell$; we analytically continue $\ell$ to generic complex values. For fixed $t \in \mathbb{R}_{>0}$ we then have
	\begin{align}
		\label{eps-limit}
		\mathbb{P}(\mu \rightarrow \nu;t)
		=
		\lim_{\epsilon \rightarrow 0}
		{\sf P}_{t/\epsilon;\epsilon}(\mu \rightarrow \nu).
	\end{align}
\end{prop}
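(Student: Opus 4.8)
The plan is to recognise \eqref{eps-limit} as an instance of the standard convergence of a discrete-time Markov chain to a continuous-time one under a rescaling of the number of steps, the only twist being the deterministic drift of one lattice site per step that is built into the partition functions \eqref{G-pf}.

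First I would record the one-row transfer operator. By \eqref{G-prob-ell}, the quantity ${\sf P}_{\ell;\epsilon}(\mu\to\nu)$ is, up to the shift $\nu\mapsto\nu-\ell^n$ and the substitutions of \eqref{sfG}, the partition function of $\ell$ stacked rows of ${\sf M}$-weights; write $\mathsf{T}_\epsilon$ for the (stochastic) transfer matrix of a single such row, acting on probability measures over strict integer compositions, and $\mathsf{S}$ for the operator that shifts every coordinate by one. The table of weights \eqref{fundamental-wt}, expanded to first order in $\epsilon$ after the substitutions $y\mapsto q^{-1/2}[1+(1-q)\epsilon]$, $s\mapsto q^{-1/2}$ (as already carried out just before the Definition of $\mathbb{P}(\mu\to\nu;t)$), shows that the only configurations carrying weight $1+O(\epsilon)$ are those in which every path turns left exactly once and then immediately turns up, so that generically each particle advances by precisely one site per row; any departure from this---a particle advancing by zero or two sites, or a colour exchange at an adjacent pair---carries a weight of the form $\epsilon$, $q\epsilon$, $1-\epsilon$ or $1-q\epsilon$, and every remaining configuration has weight $O(\epsilon^2)$. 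Collecting these contributions gives
\[
\mathsf{T}_\epsilon\,\mathsf{S} \;=\; \mathrm{id} + \epsilon\,\mathcal{L} + O(\epsilon^2),
\]
where $\mathcal{L}$ is precisely the generator of the multi-species ASEP of the Definition: its off-diagonal entry corresponding to $(i,j)\mapsto(j,i)$ at a pair of neighbouring sites equals $q$ when $i<j$ and $1$ when $i>j$ (with the empty state read as colour $0$), and its diagonal entries are minus the associated total outgoing rates. Since each row of $\mathcal{L}$, and of the $O(\epsilon^2)$ remainder, has a number of nonzero entries bounded by a function of $n$ alone, and these entries are uniformly bounded, the displayed identity holds in the $\ell^\infty\to\ell^\infty$ operator norm with a uniform error.

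Next I would iterate. For $\epsilon$ sufficiently small $\mathsf{T}_\epsilon\mathsf{S}$ is substochastic, hence a contraction on $\ell^\infty$, and by translation invariance of the vertex-model construction $\mathsf{S}$ commutes with $\mathsf{T}_\epsilon$; thus $(\mathsf{T}_\epsilon)^{\ell}=\mathsf{S}^{-\ell}\,(\mathrm{id}+\epsilon\,\mathcal{L}+O(\epsilon^2))^{\ell}$. A telescoping estimate of Trotter--Kato / Chernoff type then yields $(\mathrm{id}+\epsilon\,\mathcal{L}+O(\epsilon^2))^{\ell}\to e^{t\mathcal{L}}$ as $\epsilon\to0$ along integers $\ell$ with $\epsilon\ell\to t$, uniformly in the operator norm. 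Reading off the $(\mu,\nu-\ell^n)$ matrix entry, the factor $\mathsf{S}^{-\ell}$ reinstates exactly the shift present in \eqref{sfG}, and the limit of the remaining factor is $[e^{t\mathcal{L}}]_{\mu\nu}=\mathbb{P}(\mu\to\nu;t)$ by definition of the ASEP semigroup. Finally, to reach the value at the non-integer argument $\ell=t/\epsilon$ one notes that, for fixed $\mu,\nu$ and $\epsilon$, the function ${\sf P}_{\ell;\epsilon}(\mu\to\nu)$ has a manifest analytic continuation in $\ell$---from \eqref{Gmunu-int} with all $y_i$ equal, in which $\ell$ enters only through $\prod_j\big((z_j-qy)/(z_j-y)\big)^{\ell}$, equivalently from the combinatorial sum over path configurations, in which $\ell$ appears only through binomial coefficients $\binom{\ell}{k}$---and that the limit is continuous in $t$, so that the value at $\ell=t/\epsilon$ shares the limit already computed at $\ell=\lceil t/\epsilon\rceil$.

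The step I expect to be the main obstacle is the uniformity required to pass from the one-row estimate to the $(t/\epsilon)$-fold iterate: on the infinite state space one must rule out that the $O(\epsilon^2)$ corrections accumulate to something of order one over $t/\epsilon$ rows. This is exactly what the finite-range, uniformly-bounded structure of the weight table delivers---it upgrades the one-row expansion from an entrywise statement to an $\ell^\infty\to\ell^\infty$ operator bound---but checking it carefully, together with the interchange of the $\epsilon\to0$ limit and the analytic continuation in $\ell$, is the part that genuinely requires work; the rest is bookkeeping of the unit drift and a routine Taylor expansion of the vertex weights.
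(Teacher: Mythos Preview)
Your approach is correct and substantially more detailed than what the paper actually does. The paper's proof consists of two sentences: for analyticity in $\ell$ it points forward to the explicit integral formula \eqref{analytic-ell} (derived in the proof of Theorem~\ref{th:ASEPtransition}), in which $\ell$ appears only through the factor $\prod_j [1+\epsilon\cdot(\cdots)]^\ell$ and is therefore manifestly analytic; for the limit \eqref{eps-limit} it simply cites the literature (Aggarwal's convergence result for the stochastic six-vertex model to ASEP).

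Your route is a genuine Trotter--Chernoff argument carried out on the one-row transfer operator, and it is essentially what underlies the cited result. What it buys is self-containment: you do not need the integral representation at all, only the local weight expansion already displayed in the paper. What the paper's route buys is that the passage from integer $\ell$ to $\ell=t/\epsilon$ becomes trivial---once the integral formula is in hand, one takes the limit directly inside the integrand and never needs to worry about uniformity of the $O(\epsilon^2)$ remainder over $t/\epsilon$ rows. Your identification of that uniformity as the crux is exactly right, and your observation that the finite-range, uniformly-bounded structure of the weight table upgrades the entrywise expansion to an $\ell^\infty\!\to\!\ell^\infty$ operator bound is the correct mechanism. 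One small point: your alternative justification of analyticity in $\ell$ via ``binomial coefficients $\binom{\ell}{k}$ in the combinatorial sum'' is vaguer than it needs to be---the integral-formula argument you also mention (all $y_i$ equal, $\ell$ only in an exponent) is the clean one, and is precisely what the paper uses.
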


\begin{proof}
	The analyticity of the function \eqref{sfG} in $\ell$ is proved in the following subsection; see equation \eqref{analytic-ell}, which gives an explicit formula for ${\sf P}_{\ell;\epsilon}(\mu \rightarrow \nu)$, from which the claim is immediate. The final statement \eqref{eps-limit} is well known in the literature; see, for example, \cite{aggarwal2017convergence}. 
\end{proof}

\subsubsection{Integral formula for ASEP transition probabilities}
\label{sssec:int}

\begin{thm}
	\label{th:ASEPtransition}
	Let $\mu \in \mathbb{Z}^n$ be chosen such that $\mu_1 > \cdots > \mu_n$, and fix a further strict composition $\nu \in \mathbb{Z}^n$ (with no restrictions imposed upon the ordering of its parts). We then have the integral formula
	\begin{multline}
		\label{asep-int}
		\mathbb{P}(\mu \rightarrow \nu;t)
		=
		\frac{(-q^{-\frac{1}{2}})^{|\nu|}}{(2\pi\ii)^n}
		\oint_{C}
		\frac{\dd z_1}{z_1}
		\cdots 
		\oint_{C}
		\frac{\dd z_n}{z_n}
		\prod_{1 \leq i<j \leq n}
		\left(
		\frac{z_j-z_i}{z_j-q z_i}
		\right)
		\\
		\times
		\prod_{j=1}^{n}
		\exp\left[\frac{(1-q)^2 z_j t}{(1-z_j)(1-q z_j)}\right]
		\frac{1}{1-z_j}
		\left(
		\frac{1-qz_j}{1-z_j}
		\right)^{\mu_j}
		f_{\nu}\left(q^{-\frac{1}{2}}z^{-1}_1,\dots,q^{-\frac{1}{2}}z^{-1}_n;q,q^{-\frac{1}{2}}\right),
	\end{multline}
	where all integration contours are the same; $C$ is a small negatively oriented circle centred on $1$.
\end{thm}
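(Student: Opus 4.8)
The plan is to derive \eqref{asep-int} by performing the limiting procedure of Proposition~\ref{prop:asep-prob} directly on the discrete-time integral formula \eqref{Gmunu-int} of Theorem~\ref{thm}. I would start from \eqref{Gmunu-int} applied to $\mathbb{P}_{\{y_1,\dots,y_\ell\}}(\mu\to\nu-\ell^n)$, which by \eqref{sfG} is exactly ${\sf P}_{\ell;\epsilon}(\mu\to\nu)$ once the substitutions $s\mapsto q^{-1/2}$ and $y_i\mapsto q^{-1/2}[1+(1-q)\epsilon]$ are made, and then take $\ell=t/\epsilon$, $\epsilon\to 0$ as in \eqref{eps-limit}.

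First I would dispose of the shift by $\ell^n$: using $|\nu-\ell^n|=|\nu|-\ell n$ together with the identity $\tfrac{1-sz^{-1}}{z^{-1}-s}=\tfrac{z-s}{1-sz}$ and the stability property \eqref{stab} in the form $f_{\nu-\ell^n}(z^{-1})=\prod_{j=1}^{n}\big(\tfrac{z_j-s}{1-sz_j}\big)^{\ell}f_{\nu}(z^{-1})$, all the $(\nu-\ell^n)$-dependence collapses into $f_{\nu}(z^{-1})$ together with an extra factor $\prod_j\big(\tfrac{z_j-s}{1-sz_j}\big)^{\ell}$, which I would absorb into the $\ell$-dependent prefactors $(-s)^{-\ell n}q^{-\ell n}$. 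Next I would substitute $s=q^{-1/2}$, $y_1=\dots=y_\ell=q^{-1/2}[1+(1-q)\epsilon]=:y$, and rescale the integration variables by $z_i\mapsto q^{1/2}z_i$. This rescaling leaves the measure $\prod_i\tfrac{\dd z_i}{z_i}$ and the crossing factor $\prod_{i<j}\tfrac{z_j-z_i}{z_j-qz_i}$ invariant, turns the argument of $f_{\nu}$ into $q^{-1/2}z_i^{-1}$ as required, and converts $\tfrac{1}{1-sz_i}\big(\tfrac{z_i-s}{1-sz_i}\big)^{\mu_i}$ into $(-q^{-1/2})^{\mu_i}\tfrac{1}{1-z_i}\big(\tfrac{1-qz_i}{1-z_i}\big)^{\mu_i}$. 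After these manipulations the entire $\ell$-dependence is concentrated in a product $\prod_{j=1}^{n}B_j(\epsilon)^{\ell}$, where a short computation gives
\[
B_j(\epsilon)
=
-\,\frac{\bigl(z_j-1-(1-q)\epsilon\bigr)(qz_j-1)}{\bigl(qz_j-1-(1-q)\epsilon\bigr)(1-z_j)}.
\]

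The heart of the argument is then the elementary asymptotics $B_j(0)=1$ and $\partial_\epsilon\log B_j(\epsilon)\big|_{\epsilon=0}=\tfrac{(1-q)^2z_j}{(1-z_j)(1-qz_j)}$, so that with $\ell=t/\epsilon$ one has $B_j(\epsilon)^{t/\epsilon}=\exp\!\big(\tfrac{t}{\epsilon}\log B_j(\epsilon)\big)\to\exp\!\big[\tfrac{(1-q)^2z_jt}{(1-z_j)(1-qz_j)}\big]$ as $\epsilon\to0$. Invoking the analyticity of ${\sf P}_{\ell;\epsilon}$ in $\ell$ (Proposition~\ref{prop:asep-prob}) to legitimize the choice $\ell=t/\epsilon$ and passing the limit through the contour integral, the integrand becomes precisely that of \eqref{asep-int}; the surviving constant $(-s)^{|\nu|-|\mu|}$ from \eqref{Gmunu-int} together with the $(-q^{-1/2})^{|\mu|}$ produced by the rescaling of the $\mu_i$-th power combine to give the overall prefactor $(-q^{-1/2})^{|\nu|}$.

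The points requiring genuine care — and where I expect the main difficulty — are the passage to the limit under the integral sign and the identification of the contour. For the interchange of limit and integral one needs a uniform estimate: on a fixed small circle around $1$ that stays bounded away from $q^{-1}$, the functions $B_j(\epsilon)^{t/\epsilon}$ converge uniformly to the stated exponential while all remaining factors are $\epsilon$-independent, so dominated convergence applies. For the contour, note that the admissible contours of \eqref{Gmunu-int} surround $s=q^{-1/2}$, hence after the rescaling $z_i\mapsto q^{1/2}z_i$ they surround $q^{-1}$; one must deform these to the small negatively oriented circle $C$ around $1$ of the statement. This relies on the fact that, in each variable $z_j$, the limiting integrand has essential singularities only at $z_j=1$ and $z_j=q^{-1}$ (the two factors in the exponent), that the factor $f_\nu(q^{-\frac12}z^{-1})$ contributes at most a pole at $z_j=q^{-1}$ and nothing in a neighbourhood of the segment from $q^{-1}$ to $1$, and that the apparent singularities at $z_j=0$ and $z_j=qz_i$ are harmless near $(1,\dots,1)$ when $q\neq1$; a residue argument (the relevant residues at $q^{-1}$ and $1$ summing to zero, given decay at infinity) then trades the positively oriented contour around $q^{-1}$ for the negatively oriented circle around $1$. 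Pinning down the precise pole structure of $f_\nu(q^{-\frac12}z^{-1})$ that makes this deformation valid is, in my view, the one non-routine ingredient; the rest is the bookkeeping and short asymptotics sketched above.
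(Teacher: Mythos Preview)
Your proposal is correct and follows essentially the same route as the paper: apply the stability relation \eqref{stab} to eliminate the shift by $\ell^n$, substitute $s=q^{-1/2}$ and $y_i=q^{-1/2}[1+(1-q)\epsilon]$, isolate the $\ell$-dependence in a factor of the form $[1+\epsilon A_j]^{\ell}$, and take $\ell=t/\epsilon$, $\epsilon\to0$ to produce the exponential; the paper merely performs the rescaling $z_i\mapsto q^{1/2}z_i$ after rather than before the limit. For the contour step you flag as non-routine, the paper's device is to realize each admissible contour as $O_i\cup I$, with $O_i$ a large positively oriented loop enclosing both $q^{\pm1/2}$ and $I$ a small negatively oriented circle around $q^{1/2}$, then send $O_i$ to infinity (zero residue there) and rescale $I$ to the circle $C$ around $1$ --- which is precisely your residue-at-infinity deformation phrased homologically.
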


\begin{proof}
	We begin by explicitly writing down ${\sf P}_{\ell;\epsilon}(\mu \rightarrow \nu)$. By direct computation using \eqref{Gmunu-int} and \eqref{sfG}, we have
	\begin{multline}
		\label{P-intermediate}
		{\sf P}_{\ell;\epsilon}(\mu \rightarrow \nu)
		=
		\frac{(-q^{-\frac{1}{2}})^{|\nu|-\ell n-|\mu|} q^{-\ell n}}{(2\pi\ii)^n}
		\oint_{C_1}
		\frac{\dd z_1}{z_1}
		\cdots 
		\oint_{C_n}
		\frac{\dd z_n}{z_n}
		\prod_{1 \leq i<j \leq n}
		\left(
		\frac{z_j-z_i}{z_j-q z_i}
		\right)
		\\
		\times
		\prod_{j=1}^{n}
		\left( 
		\frac{z_j-q^{\frac{1}{2}}[1+(1-q)\epsilon]}
		{z_j-q^{-\frac{1}{2}}[1+(1-q)\epsilon]} 
		\right)^\ell
		\frac{1}{1-q^{-\frac{1}{2}} z_j}
		\left(
		\frac{z_j-q^{-\frac{1}{2}}}{1-q^{-\frac{1}{2}}z_j}
		\right)^{\mu_j}
		f_{\nu-\ell^n}\left(z^{-1}_1,\dots,z^{-1}_n;q,q^{-\frac{1}{2}}\right).
	\end{multline}
	In order to respect the admissibility requirements of the integration contours (Definition \ref{def:admiss}), each contour $C_i$ is the union of a positively oriented outer contour $O_i$ that surrounds both $s=q^{-\frac{1}{2}}$ and $s^{-1} = q^{\frac{1}{2}}$, and a negatively oriented inner contour $I$ that surrounds only $q^{\frac{1}{2}}$. Next, we use the shift formula \eqref{stab} with $k \mapsto \ell$, $\mu \mapsto \nu-\ell^n$, $s \mapsto q^{-\frac{1}{2}}$, $z_i \mapsto z_i^{-1}$: 
	\begin{align*}
		f_{\nu-\ell^n}\left(z^{-1}_1,\dots,z^{-1}_n;q,q^{-\frac{1}{2}}\right)
		=
		\prod_{j=1}^{n}
		\left( \frac{1-q^{-\frac{1}{2}}z_j^{-1}}{z_j^{-1}-q^{-\frac{1}{2}}} \right)^\ell
		f_{\nu}\left(z^{-1}_1,\dots,z^{-1}_n;q,q^{-\frac{1}{2}}\right).
	\end{align*}
	Applying this formula to the integrand of \eqref{P-intermediate} and rearranging factors, we obtain
	\begin{multline}
		\label{analytic-ell}
		{\sf P}_{\ell;\epsilon}(\mu \rightarrow \nu)
		=
		\frac{(-q^{-\frac{1}{2}})^{|\nu|}}{(2\pi\ii)^n}
		\oint_{C_1}
		\frac{\dd z_1}{z_1}
		\cdots 
		\oint_{C_n}
		\frac{\dd z_n}{z_n}
		\prod_{1 \leq i<j \leq n}
		\left(
		\frac{z_j-z_i}{z_j-q z_i}
		\right)
		\\
		\times
		\prod_{j=1}^{n}
		\left[ 1 + \epsilon \frac{(1-q)^2 z_j}{(q^{\frac{1}{2}}-z_j)(1-q^{\frac{1}{2}} z_j)} \right]^\ell
		\frac{1}{1-q^{-\frac{1}{2}} z_j}
		\left(
		\frac{1-q^{\frac{1}{2}} z_j}{1-q^{-\frac{1}{2}}z_j}
		\right)^{\mu_j}
		f_{\nu}\left(z^{-1}_1,\dots,z^{-1}_n;q,q^{-\frac{1}{2}}\right).
	\end{multline}
	The result \eqref{asep-int} follows by computing the limit \eqref{eps-limit} and sending each outer contour $O_i$ to infinity (the residue at infinity is zero); we also rescale our integration variables, by making the replacements $z_i \mapsto q^{\frac{1}{2}}z_i$ for all $1 \leq i \leq n$ in the preceding integral (this maps $I$ to $C$).
\end{proof}

\begin{rmk}
	An important special case of \eqref{asep-int} is recovered by imposing the ordering constraint $\nu_1 < \cdots < \nu_n$ on the parts of the outgoing composition $\nu$. In that case we may invoke the factorization formulae \eqref{f-factor} for the function $f_{\nu}$; applying this result within the integrand of \eqref{asep-int}, after some simplifications we obtain
	\begin{multline}
		\label{n-colours-total}
		\mathbb{P}(\mu \rightarrow \nu;t)
		=
		\left(\frac{1-q}{2\pi\ii}\right)^n
		\oint_{C}
		\dd z_1
		\cdots 
		\oint_{C}
		\dd z_n
		\prod_{1 \leq i<j \leq n}
		\left(
		\frac{z_j-z_i}{z_j-q z_i}
		\right)
		\\
		\times
		\prod_{j=1}^{n}
		\exp\left[ \frac{(1-q)^2 z_j t}{(1-z_j)(1-qz_j)}\right]
		\frac{1}{(1-z_j)(1-q z_j)}
		\left(
		\frac{1-qz_j}{1-z_j}
		\right)^{\mu_j-\nu_j}.
	\end{multline}
	This formula gives the probability that a system of $n$ distinguishable particles with initial positions $\mu_1 > \cdots > \mu_n$ transitions, after time $t$, to the state with particles at positions $\nu_1 < \cdots < \nu_n$; it is our first example of a {\it total crossing probability}, in the sense the particles totally reverse their relative ordering after time $t$.
	
	One striking feature of the formula \eqref{n-colours-total} is that it is manifestly invariant under the simultaneous shifting
	\begin{align*}
		\mu_i \mapsto \mu_i + \Delta,
		\qquad
		\nu_i \mapsto \nu_i + \Delta,
	\end{align*}
	for any $i \in [1,n]$ and $\Delta \in \mathbb{Z}$, provided that this shifting does not alter the total ordering of particles; in other words, provided that $\mu_{i-1} > \mu_i + \Delta > \mu_{i+1}$ and $\nu_{i-1} < \nu_i + \Delta < \nu_{i+1}$. This is an explicit example of the {\it shift invariance} phenomenon which has been topical in recent literature \cite{borodin2019colorposition,borodin2020shiftinvariance,galashin2020symmetries}.
\end{rmk}

\subsection{Symmetrization}
\label{ssec:symmetrize}

\begin{defn}[Admissibility and the transition probability $\mathbb{P}_t(\vec{\mu} \rightarrow \vec{\nu})$]
	Fix an integer $r \geq 1$ and let $(n_1,\dots,n_r)$ be a vector of positive integers. Let $\vec{\mu}$ be a $r$-dimensional vector, given by
	\begin{align*}
		\vec{\mu} = \left( \mu^{(1)}, \dots, \mu^{(r)} \right),
	\end{align*}
	where for all $1 \leq i \leq r$ one has
	\begin{align}
		\label{strict}
		\mu^{(i)} = \left( \mu^{(i)}_1 > \cdots > \mu^{(i)}_{n_i} \right);
	\end{align}
	that is, the elements of $\vec{\mu}$ are strict signatures. We say that $\vec{\mu}$ is {\it admissible} if it satisfies, in addition to \eqref{strict}, the pairwise distinctness property $\mu^{(i)}_a \not= \mu^{(j)}_b$, for all $i \not= j$ and all integers 
	$a \in \{1,\dots,n_i\}$ and $b \in \{1,\dots,n_j\}$.
	
	Consider the multi-species ASEP on the integer lattice with $n = n_1 + \cdots + n_r$ particles, where $n_i$ particles are of type $i$ (that is, they have the colour $i$), for all $1 \leq i \leq r$. States in this system are naturally indexed by admissible vectors $\vec{\mu}$. We let $\mathbb{P}_t(\vec{\mu} \rightarrow \vec{\nu})$ denote the probability that the system is in state $\vec{\nu}$ at time $t$, given that it started in state $\vec{\mu}$, where both $\vec{\mu}$ and $\vec{\nu}$ are admissible.
\end{defn}

\begin{prop}
	Fix an integer $r \geq 1$ and let $(n_1,\dots,n_r)$ be a vector of positive integers such that $\sum_{k=1}^{r} n_k = n$. Let $\mu$ be a strict integer composition of length $n$ that we partition into $r$ blocks:
	\begin{align}
		\label{mu-blocks}
		\mu
		=
		\mu^{(1)}
		\cup
		\cdots
		\cup
		\mu^{(r)}
		=
		\left( \mu^{(1)}_1,\dots,\mu^{(1)}_{n_1} \right)
		\cup
		\cdots
		\cup
		\left( \mu^{(r)}_1,\dots,\mu^{(r)}_{n_r} \right)
		.
	\end{align}
	To this we associate an admissible vector $\vec{\mu}^{+}$, obtained by taking the dominant ordering of each sub-composition $\mu^{(i)}$:
	\begin{align*}
		\vec{\mu}^{+}
		=
		\left( \mu^{(1)+}, \dots, \mu^{(r)+} \right).
	\end{align*}
	Fix a further admissible vector
	\begin{align*}
		\vec{\lambda}
		=
		\left( \lambda^{(1)}, \dots, \lambda^{(r)} \right),
	\end{align*}
	where each $\lambda^{(i)}$ is a strict signature. We have the following identity between transition probabilities in the $n$-species ASEP (with $n$-distinct particles) and the $r$-species ASEP:
	\begin{align}
		\label{n-to-p}
		\sum_{\nu:\nu^{(1)+} = \lambda^{(1)},\dots, \nu^{(r)+} = \lambda^{(r)}}
		\mathbb{P}_t(\mu \rightarrow \nu)
		=
		\mathbb{P}_t\left(\vec{\mu}^{+} \rightarrow \vec{\lambda}\right),
	\end{align}
	where the composition $\nu$ appearing on the right hand side is given by
	\begin{align}
		\label{nu-blocks}
		\nu
		=
		\nu^{(1)}
		\cup
		\cdots
		\cup
		\nu^{(r)}
		=
		\left( \nu^{(1)}_1,\dots,\nu^{(1)}_{n_1} \right)
		\cup
		\cdots
		\cup
		\left( \nu^{(r)}_1,\dots,\nu^{(r)}_{n_r} \right),
	\end{align}
	and where the sum is taken over blocks $\nu^{(i)}$ which are permutable to $\lambda^{(i)}$, for each $1 \leq i \leq r$.
\end{prop}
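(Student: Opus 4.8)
The statement is an instance of the \emph{colour-merging} (or lumpability) phenomenon: once particles whose indices lie in the same block of \eqref{mu-blocks} are declared to have the same colour, the $n$-species ASEP with $n$ distinct particles projects onto the $r$-species ASEP, so that summing the transition probability over all microstates $\nu$ sitting above a given coloured state $\vec\lambda$ reproduces the coloured transition probability. The plan is to establish this first at the level of the discrete-time process of Section~\ref{ssec:transition}, where it is a statement about vertex-model partition functions, and then to transport it to the continuous-time ASEP via the degeneration of Proposition~\ref{prop:asep-prob}.

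First I would work with the discrete-time process. Recall from \eqref{G-prob-ell} that $\mathbb{P}_{\{y_1,\dots,y_\ell\}}(\mu\to\nu)=(-s)^{|\nu|-|\mu|}G_{\mu/\nu}(y_1,\dots,y_\ell)$, the partition function being built from the $n$-coloured vertex weights \eqref{sym1-intro}; running exactly the same construction with colours drawn from $\{1,\dots,r\}$ produces an $r$-coloured discrete-time process with transition probabilities $(-s)^{|\vec\lambda|-|\vec\mu|}G^{(r)}_{\vec\mu/\vec\lambda}(y_1,\dots,y_\ell)$. The key identity to prove is the colour-merging relation
\[
\sum_{\nu:\ \nu^{(i)+}=\lambda^{(i)},\ 1\le i\le r} G_{\mu/\nu}(y_1,\dots,y_\ell)
=
G^{(r)}_{\vec\mu/\vec\lambda}(y_1,\dots,y_\ell),
\]
where on the left $\mu$ carries the block structure \eqref{mu-blocks} with particle $j$ in block $k$ assigned colour $j$, and $\vec\mu$ denotes the $r$-coloured state having colour $k$ at the columns of $\mu^{(k)}$. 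I would prove this by the same mechanism used in the proof of the symmetrization formula \eqref{f-to-F}: summing $G_{\mu/\nu}$ over all $\nu$ with the prescribed block-classes amounts to summing over all ways of assigning the colours of block $i$ to the exit columns forming the set $\lambda^{(i)}$ while holding all other boundary data fixed, and such a restricted colour sum collapses the $n$-coloured partition function onto the corresponding $r$-coloured one (cf. \cite[Chapter~2]{borodin_coloured_2018}). Since $|\nu|=\sum_i|\lambda^{(i)}|$ is constant over the sum and $|\mu|=|\vec\mu|=|\vec\mu^{+}|$, the sign prefactors factor out; combined with $G^{(r)}_{\vec\mu/\vec\lambda}=G^{(r)}_{\vec\mu^{+}/\vec\lambda}$ (the $r$-coloured incoming data depends only on the sets $\mu^{(k)}$, not their internal order), this yields
\[
\sum_{\nu:\ \nu^{(i)+}=\lambda^{(i)},\ 1\le i\le r}\mathbb{P}_{\{y_1,\dots,y_\ell\}}(\mu\to\nu)
=
\mathbb{P}^{(r)}_{\{y_1,\dots,y_\ell\}}\!\left(\vec\mu^{+}\to\vec\lambda\right).
\]

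Next I would pass to the ASEP limit by applying the degeneration of Proposition~\ref{prop:asep-prob} — that is, $s\mapsto q^{-1/2}$, $y_i\mapsto q^{-1/2}[1+(1-q)\epsilon]$, $\ell=t/\epsilon$, $\epsilon\to 0$ — to both sides of the last display. Since that procedure uses only the vertex weights \eqref{sym1-intro}, it applies verbatim to the $r$-coloured process and sends its discrete-time transition probability to $\mathbb{P}_t(\vec\mu^{+}\to\vec\lambda)$; on the left the sum over $\nu$ is finite, so the limit passes inside it and each summand tends to $\mathbb{P}_t(\mu\to\nu)$. This proves \eqref{n-to-p}. (The same identity can be obtained more directly from strong lumpability of the continuous-time ASEP under the projection that relabels the colours block-wise, the lumping criterion holding because the ASEP jump rates depend on the occupation states of two adjacent sites only through their relative order, so that swaps of equal-colour particles become trivial after projection.)

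I expect the main obstacle to be the colour-merging identity for the $G$-partition functions in the first step: one must pin down precisely why summing the $n$-coloured $G$ over colour assignments within each block reproduces the $r$-coloured $G$, i.e. that the $U_q(\widehat{\mathfrak{sl}}_{n+1})$ vertex weights collapse onto the $U_q(\widehat{\mathfrak{sl}}_{r+1})$ ones under this partial colour sum. This is the $G$-analogue of the colour-blindness statement already invoked for \eqref{f-to-F} and is available in \cite{borodin_coloured_2018}; the alternative lumpability argument trades this structural input for a short but case-by-case verification of the lumping criterion on the ASEP generator.
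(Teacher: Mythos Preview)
Your proposal is correct, but it takes a longer route than the paper. The paper argues directly at the level of the continuous-time ASEP: the left-hand side of \eqref{n-to-p} is the probability that, starting from $\mu$, the colours in each contiguous block $\{N_{k}+1,\dots,N_{k}+n_k\}$ occupy the positions $\{\lambda^{(k)}_1,\dots,\lambda^{(k)}_{n_k}\}$ at time $t$ without regard to their relative ordering, and ignoring the internal ordering of a contiguous block of colours is the same as declaring those colours equal in the dynamics (colour-blindness). This is precisely the lumpability argument you mention parenthetically at the end, and is all the paper provides, together with a pointer to \cite[Chapter~2]{borodin_coloured_2018}.

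Your main line --- proving the colour-merging identity for the $G$-partition functions in the discrete-time model and then passing to the ASEP limit via Proposition~\ref{prop:asep-prob} --- is a perfectly valid alternative, and in some ways more in keeping with the vertex-model machinery of Section~\ref{se:vertex models}. It has the advantage of making the structural input (the partial colour sum of the $U_q(\widehat{\mathfrak{sl}}_{n+1})$ weights collapsing onto the $U_q(\widehat{\mathfrak{sl}}_{r+1})$ weights) explicit, at the cost of an extra limiting step and the need to justify exchanging the finite sum with the $\epsilon\to 0$ limit. The paper's direct argument is shorter and requires no detour through the discrete model, but leaves the verification of colour-blindness at the level of rates to the reader or to the cited reference.
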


\begin{proof}
	This is a manifestation of a key property of coloured stochastic vertex models, known as {\it colour blindness}; see \cite[Chapter 2]{borodin_coloured_2018}. 
	
	A direct explanation of \eqref{n-to-p} may be sketched as follows. Consider the left hand side of \eqref{n-to-p}, and in particular the first block $\nu^{(1)}$ of the composition $\nu$. The quantity $\sum_{\nu:\nu^{(1)+} = \lambda^{(1)}} \mathbb{P}_t(\mu \rightarrow \nu)$ computes the probability that the multi-species ASEP, initialized in state $\mu$, has colours $\{1,\dots,n_1\}$ distributed over positions $\{\lambda^{(1)}_1,\dots,\lambda^{(1)}_{n_1}\}$ at time $t$, without paying heed to the relative ordering of these colours within the final state. Ignoring relative ordering of this contiguous block of colours is equivalent to setting all of these colours equal in the dynamics. A similar argument applies when imposing restrictions on all of the blocks $\nu^{(1)},\dots,\nu^{(r)}$; setting colours equal within each of the $r$ blocks yields the right hand side of \eqref{n-to-p}.
\end{proof}

\begin{thm}[Total crossing of blocks]
	\label{thm:total-cross}
	Fix an integer $r \geq 1$ and let $(n_1,\dots,n_r)$ be a vector of positive integers such that $\sum_{k=1}^{r} n_k = n$. Fix two vectors of strict signatures
	\begin{align*}
		\vec{\mu} &= \left( \mu^{(1)}, \dots, \mu^{(r)} \right),
		\qquad
		\mu^{(i)} = \left( \mu^{(i)}_1 > \cdots > \mu^{(i)}_{n_i} \right),
		\\
		\vec{\lambda} &= \left( \lambda^{(1)}, \dots, \lambda^{(r)} \right),
		\qquad
		\lambda^{(i)} = \left( \lambda^{(i)}_1 > \cdots > \lambda^{(i)}_{n_i} \right),
	\end{align*}
	which satisfy the constraints $\mu^{(i)}_a > \mu^{(j)}_b$, $\lambda^{(i)}_a < \lambda^{(j)}_b$ for all $1\leq i <j \leq r$ and all $a \in \{1,\dots,n_i\}$, $b \in \{1,\dots,n_j\}$ (these constraints imply, in particular, admissibility of both vectors). We then have the following formula:
	\begin{multline}
		\label{main-formula}
		\mathbb{P}(\vec{\mu} \rightarrow \vec{\lambda};t)
		=
		\left(\frac{1-q}{2\pi\ii}\right)^n
		\oint_{C}
		\dd z_1
		\cdots 
		\oint_{C}
		\dd z_n
		\prod_{1 \leq i<j \leq n}
		\left(
		\frac{z_j-z_i}{z_j-q z_i}
		\right)
		\\
		\times
		\prod_{j=1}^{n}
		\exp\left[\frac{(1-q)^2 z_j t}{(1-z_j)(1-q z_j)}\right]
		\frac{1}{(1-z_j)(1-qz_j)}
		\prod_{k=1}^{r}
		\xi_{\mu^{(k)}} 
		\left( z^{(k)}_1,\dots,z^{(k)}_{n_k} \right)
		{\sf F}_{\lambda^{(k)}}
		\left( z^{(k)}_1,\dots,z^{(k)}_{n_k} \right)
	\end{multline}
	where we have defined the rational functions
	\begin{align}
		\nonumber
		\xi_{\mu}(u_1,\dots,u_N)
		&=
		\prod_{j=1}^{N}
		\left(
		\frac{1-qu_j}{1-u_j}
		\right)^{\mu_j},
		\\
		{\sf F}_{\lambda}(u_1,\dots,u_N)
		&=
		\sum_{\sigma \in S_N}
		\prod_{1 \leq i<j \leq N}
		\left( \frac{u_{\sigma_j} - q u_{\sigma_i}}{u_{\sigma_j} - u_{\sigma_i}} \right)
		\prod_{i=1}^{N}
		\left( \frac{1-u_{\sigma_i}}{1-qu_{\sigma_i}} \right)^{\lambda_i}.
		\label{F}
	\end{align}
\end{thm}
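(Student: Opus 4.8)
The plan is to derive \eqref{main-formula} by combining the integral formula of Theorem~\ref{th:ASEPtransition} for the $n$-distinguishable-particle ASEP with the colour-merging identity \eqref{n-to-p}, and then collapsing the resulting sum with the block factorisation of Proposition~\ref{prop:mu-block} and the symmetrisation identity \eqref{f-to-F}.

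First I would view the initial datum $\vec\mu$ as a single strict integer composition $\mu=\mu^{(1)}\cup\cdots\cup\mu^{(r)}$ of length $n$. The constraints $\mu^{(i)}_a>\mu^{(j)}_b$ for $i<j$, together with the strict decrease inside each block, force $\mu_1>\cdots>\mu_n$, so Theorem~\ref{th:ASEPtransition} applies to $\mu$; since each block is already dominant, $\vec\mu^{+}=\vec\mu$. Applying \eqref{n-to-p} with this $\mu$ gives
\[
\mathbb{P}(\vec{\mu}\rightarrow\vec{\lambda};t)
=
\sum_{\nu:\ \nu^{(k)+}=\lambda^{(k)}\ \forall k}
\mathbb{P}(\mu\rightarrow\nu;t),
\]
where $\nu=\nu^{(1)}\cup\cdots\cup\nu^{(r)}$ and each block $\nu^{(k)}$ runs over all rearrangements of $\lambda^{(k)}$. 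Substituting \eqref{asep-int} for each $\mathbb{P}(\mu\rightarrow\nu;t)$ and exchanging the finite sum with the compact-contour integral, one observes that every $\nu$ in the sum satisfies $|\nu|=\sum_k|\lambda^{(k)}|=:|\vec\lambda|$, so the prefactor $(-q^{-1/2})^{|\nu|}$ is constant and factors out; the only $\nu$-dependence left in the integrand is through $f_\nu\big(q^{-1/2}z_1^{-1},\dots,q^{-1/2}z_n^{-1};q,q^{-1/2}\big)$.

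The structural heart of the argument is the evaluation of $\sum_\nu f_\nu(\cdots)$, and here the total-crossing hypothesis does the work: because $\lambda^{(i)}_a<\lambda^{(j)}_b$ for $i<j$, every $\nu$ occurring in the sum has strictly order-separated blocks with block $1$ carrying the smallest coordinates, so Proposition~\ref{prop:mu-block} factorises $f_\nu$ as $\prod_{k=1}^{r}f_{\nu^{(k)}}\big(q^{-1/2}(z^{(k)}_1)^{-1},\dots,q^{-1/2}(z^{(k)}_{n_k})^{-1}\big)$ in the variable partition $(z_1,\dots,z_n)=z^{(1)}\cup\cdots\cup z^{(r)}$ already used in the statement. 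The multi-sum then splits into a product of independent block sums, and \eqref{f-to-F} turns each block sum into $F_{\lambda^{(k)}}\big(q^{-1/2}(z^{(k)}_1)^{-1},\dots,q^{-1/2}(z^{(k)}_{n_k})^{-1};q,q^{-1/2}\big)$. Finally I would insert the definition \eqref{F-sym} with $s=q^{-1/2}$: strictness of $\lambda^{(k)}$ makes every multiplicity $m_j(\lambda^{(k)})$ equal to $0$ or $1$, so the $q$-Pochhammer prefactor collapses to an explicit power of $q$, and the substitution $w_i=q^{-1/2}(z^{(k)}_i)^{-1}$ together with the elementary identities $\tfrac{w_i-s}{1-sw_i}=-q^{1/2}\tfrac{1-z^{(k)}_i}{1-qz^{(k)}_i}$, $\tfrac{1}{1-sw_i}=-\tfrac{qz^{(k)}_i}{1-qz^{(k)}_i}$, $\tfrac{w_i-qw_j}{w_i-w_j}=\tfrac{z^{(k)}_j-qz^{(k)}_i}{z^{(k)}_j-z^{(k)}_i}$ turns $F_{\lambda^{(k)}}$ into $(1-q)^{n_k}(-q^{1/2})^{|\lambda^{(k)}|}\prod_i\tfrac{z^{(k)}_i}{1-qz^{(k)}_i}$ times ${\sf F}_{\lambda^{(k)}}(z^{(k)}_1,\dots,z^{(k)}_{n_k})$ as defined in \eqref{F}. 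Assembling over $k$, the factor $(-q^{1/2})^{|\vec\lambda|}$ cancels the $(-q^{-1/2})^{|\vec\lambda|}$ from the previous step, $\prod_j z_j$ cancels the $\prod_j z_j^{-1}$ of the measure $\prod_j\dd z_j/z_j$ in \eqref{asep-int}, $\prod_j\tfrac{1}{1-z_j}$ combines with $\prod_j\tfrac{1}{1-qz_j}$, and $\prod_j\big(\tfrac{1-qz_j}{1-z_j}\big)^{\mu_j}$ regroups as $\prod_k\xi_{\mu^{(k)}}(z^{(k)})$; together with the unchanged factors $\prod_{i<j}\tfrac{z_j-z_i}{z_j-qz_i}$, the exponentials, and the overall $(1-q)^n=\prod_k(1-q)^{n_k}$, this is precisely the integrand of \eqref{main-formula}.

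I expect the only genuinely delicate part to be this last bookkeeping: tracking the signs and half-integer powers of $q$ generated by the specialisation $s=q^{-1/2}$ and the inversion $z\mapsto q^{-1/2}z^{-1}$ of the arguments of $f$ and $F$, and checking that they cancel globally so that the clean formula \eqref{main-formula} emerges. Everything else is routine: the ordering hypotheses needed to invoke Theorem~\ref{th:ASEPtransition} and Proposition~\ref{prop:mu-block} follow immediately from the stated constraints on $\vec\mu$ and $\vec\lambda$, and the interchange of the finite sum with the integral over the compact contour $C$ requires no justification beyond finiteness.
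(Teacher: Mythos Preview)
Your proposal is correct and follows essentially the same route as the paper: apply Theorem~\ref{th:ASEPtransition} to the concatenated composition $\mu$, invoke the colour-merging identity \eqref{n-to-p}, use Proposition~\ref{prop:mu-block} to factorise $f_\nu$ under the total-crossing hypothesis, symmetrise each block via \eqref{f-to-F}, and then simplify $F_{\lambda^{(k)}}$ at $s=q^{-1/2}$ to ${\sf F}_{\lambda^{(k)}}$. The paper compresses your explicit bookkeeping into the single stated identity
\[
(-q^{-1/2})^{|\lambda|}\prod_{i=1}^N u_i^{-1}\cdot F_\lambda\big(q^{-1/2}u_1^{-1},\dots,q^{-1/2}u_N^{-1};q,q^{-1/2}\big)
=\prod_{i=1}^N\frac{1-q}{1-qu_i}\cdot{\sf F}_\lambda(u_1,\dots,u_N),
\]
which your elementary substitutions correctly reproduce.
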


\begin{proof}
	Start with two integer compositions $\mu$ and $\nu$; $\mu$ is given by \eqref{mu-blocks} and $\nu$ is given by \eqref{nu-blocks}. We impose on the parts of $\mu$ the constraints $\mu^{(i)}_a > \mu^{(i)}_{a+1}$ and $\mu^{(i)}_a > \mu^{(j)}_b$ for all $i<j$  (this is equivalent to demanding that $\mu_1 > \cdots > \mu_n$). On the parts of $\nu$ we impose the constraints $\nu^{(i)}_a < \nu^{(j)}_b$ for all $i<j$ (we say nothing about the ordering of parts within a common block).
	
	We then apply equation \eqref{asep-int} to compute 
	$\mathbb{P}_t(\mu \rightarrow \nu)$; the block-ordering assumption $\nu^{(i)}_a < \nu^{(j)}_b$ allows us to invoke Proposition \ref{prop:mu-block}, and we have
	\begin{multline*}
		\mathbb{P}(\mu \rightarrow \nu;t)
		=
		\frac{(-q^{-\frac{1}{2}})^{|\nu|}}{(2\pi\ii)^n}
		\oint_{C}
		\frac{\dd z_1}{z_1}
		\cdots 
		\oint_{C}
		\frac{\dd z_n}{z_n}
		\prod_{1 \leq i<j \leq n}
		\left(
		\frac{z_j-z_i}{z_j-q z_i}
		\right)
		\prod_{j=1}^{n}
		\exp\left[\frac{(1-q)^2 z_j t}{(1-z_j)(1-q z_j)}\right]
		\frac{1}{1-z_j}
		\\
		\times
		\prod_{k=1}^{r}
		\xi_{\mu^{(k)}} 
		\left( z^{(k)}_1,\dots,z^{(k)}_{n_k} \right)
		f_{\nu^{(k)}}
		\left(
		q^{-\frac{1}{2}}/z^{(k)}_1,\dots,q^{-\frac{1}{2}}/z^{(k)}_{n_k};q,q^{-\frac{1}{2}}
		\right).
	\end{multline*}
	We then perform the summation \eqref{n-to-p}; namely, we sum each block $\nu^{(i)}$ over all compositions which are permutable to $\lambda^{(i)}$, where $\lambda^{(1)},\dots,\lambda^{(r)}$ are the signatures given in the statement of the theorem. We may carry out these sums using \eqref{f-to-F}:
	\begin{multline*}
		\sum_{\nu:\nu^{(1)+} = \lambda^{(1)},\dots, \nu^{(r)+} = \lambda^{(r)}}
		\mathbb{P}(\mu \rightarrow \nu;t)
		=
		\frac{(-q^{-\frac{1}{2}})^{|\vec{\lambda}|}}{(2\pi\ii)^n}
		\oint_{C}
		\frac{\dd z_1}{z_1}
		\cdots 
		\oint_{C}
		\frac{\dd z_n}{z_n}
		\prod_{1 \leq i<j \leq n}
		\left(
		\frac{z_j-z_i}{z_j-q z_i}
		\right)
		\\
		\times
		\prod_{j=1}^{n}
		\exp\left[\frac{(1-q)^2 z_j t}{(1-z_j)(1-q z_j)}\right]
		\frac{1}{1-z_j}
		\\
		\times
		\prod_{k=1}^{r}
		\xi_{\mu^{(k)}} 
		\left( z^{(k)}_1,\dots,z^{(k)}_{n_k} \right)
		F_{\lambda^{(k)}}
		\left(
		q^{-\frac{1}{2}}/z^{(k)}_1,\dots,q^{-\frac{1}{2}}/z^{(k)}_{n_k};q,q^{-\frac{1}{2}}
		\right),
	\end{multline*}
	where $F_{\lambda^{(k)}}$ is given by \eqref{F-sym} and $|\vec{\lambda}| = \sum_{k=1}^{r} |\lambda^{(k)}|$. The claim \eqref{main-formula} follows from \eqref{n-to-p} and the identity
	\begin{align*}
		(-q^{-\frac{1}{2}})^{|\lambda|}
		\prod_{i=1}^{N} u_i^{-1}
		\cdot
		F_{\lambda}
		\left(q^{-\frac{1}{2}}u_1^{-1},\dots,q^{-\frac{1}{2}}u_N^{-1};q,q^{-\frac{1}{2}}\right)
		=
		\prod_{i=1}^{N}
		\frac{1-q}{1-q u_i}
		\cdot
		{\sf F}_{\lambda}(u_1,\dots,u_N),
	\end{align*}
	which holds for all strict signatures $\lambda = (\lambda_1 > \cdots > \lambda_N)$.
\end{proof}

A number of special cases of the formula \eqref{main-formula} are of interest. One of these has already been seen in \eqref{n-colours-total}; this is the $r=n$ version of \eqref{main-formula}, when all composition blocks have size $1$, and leads to a totally factorized integrand. 

At the other extreme, we may also consider $r=1$, when there is a single block consisting of $n$ parts. In that case we have the following result:
\begin{cor}
	Let $\vec{\lambda} = (\lambda)$ and $\vec{\mu} = (\mu)$ be one-dimensional vectors, where $\lambda$ and $\mu$ are strict signatures of length $n$. Transition probabilities in the (single-species) ASEP are given by the formula
	\begin{multline*}
		\mathbb{P}(\vec{\mu} \rightarrow \vec{\lambda};t)
		=
		\left(\frac{1-q}{2\pi\ii}\right)^n
		\oint_{C}
		\dd z_1
		\cdots 
		\oint_{C}
		\dd z_n
		\prod_{1 \leq i<j \leq n}
		\left(
		\frac{z_j-z_i}{z_j-q z_i}
		\right)
		\\
		\times
		\prod_{j=1}^{n}
		\exp\left[\frac{(1-q)^2 z_j t}{(1-z_j)(1-q z_j)}\right]
		\frac{1}{(1-z_j)(1-qz_j)}
		\xi_{\mu} 
		\left( z_1,\dots,z_n \right)
		{\sf F}_{\lambda}
		\left( z_1,\dots,z_n \right).
	\end{multline*}
	This recovers the Tracy--Widom formula for ASEP transition probabilities \cite{tracy2008integral}.
\end{cor}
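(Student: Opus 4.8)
The plan is to recognise the claimed identity as the $r=1$ case of \eqref{main-formula}, and then to match the resulting integral with the Tracy--Widom formula of \cite{tracy2008integral} by an explicit change of integration variable. The first step is bookkeeping: when $r=1$ there is a single block of size $n_1=n$, so $(z^{(1)}_1,\dots,z^{(1)}_n)=(z_1,\dots,z_n)$ and the product $\prod_{k=1}^{r}$ in \eqref{main-formula} collapses to the single factor $\xi_{\mu}(z_1,\dots,z_n)\,{\sf F}_{\lambda}(z_1,\dots,z_n)$, which is exactly the displayed formula. (One could equally avoid Theorem~\ref{thm:total-cross} altogether and instead start from \eqref{asep-int} with $\mu_1>\cdots>\mu_n$, symmetrising the outgoing composition over all rearrangements of $\lambda$ via \eqref{f-to-F}; this is literally the $r=1$ instance of the proof of Theorem~\ref{thm:total-cross}.) Hence the only genuine content is that this integral coincides with the known Tracy--Widom expression.

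For the identification I would apply the Möbius substitution $\xi_i=\frac{1-z_i}{1-q z_i}$, with inverse $z_i=\frac{1-\xi_i}{1-q\xi_i}$, which is an involution and maps the small negatively oriented circle $C$ around $z=1$ to a small positively oriented circle around $\xi=0$. One then tracks each factor of the integrand. Since $\frac{1-q z_i}{1-z_i}=\xi_i^{-1}$, the factor $\xi_{\mu}$ becomes $\prod_i\xi_i^{-\mu_i}$ and the power factors of ${\sf F}_{\lambda}$ become $\prod_i\xi_{\sigma_i}^{\lambda_i}$. A short computation gives $(1-z_i)(1-q z_i)=\frac{(1-q)^2\xi_i}{(1-q\xi_i)^2}$, so the exponent simplifies to $\frac{(1-q)^2 z_i t}{(1-z_i)(1-q z_i)}=t\bigl(\xi_i^{-1}+q\xi_i-1-q\bigr)$, which is $(1+q)t$ times the Tracy--Widom dispersion $\epsilon(\xi_i)=p\,\xi_i^{-1}+q'\xi_i-1$ under the usual normalisation $p=\tfrac{1}{1+q}$, $q'=\tfrac{q}{1+q}$ (so the parameter $q$ here is $\tau=q'/p$ and the time carries the rescaling $t\mapsto(1+q)t$). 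The same computation shows $(1-q)\frac{\dd z_i}{(1-z_i)(1-q z_i)}=-\frac{\dd\xi_i}{\xi_i}$, which, with the orientation reversal, accounts for the measure.

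It remains to handle the rational prefactors. Using that the exponential factors and the measure $\prod_i\frac{\dd\xi_i}{\xi_i}$ are symmetric, I would pull the sum over $S_n$ defining ${\sf F}_{\lambda}$ out to the front; in each $\sigma$-summand one is then left with $\prod_{i<j}\frac{z_j-z_i}{z_j-q z_i}\cdot\prod_{i<j}\frac{z_{\sigma_j}-q z_{\sigma_i}}{z_{\sigma_j}-z_{\sigma_i}}$, and relabelling the integration variables turns this into a product over pairs $a<b$ that equals $1$ when $\sigma(a)<\sigma(b)$ and equals $-\frac{z_b-q z_a}{z_a-q z_b}$ when $\sigma(a)>\sigma(b)$. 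In the $\xi$-variables, $-\frac{z_b-q z_a}{z_a-q z_b}=-\frac{1+q\xi_a\xi_b-(1+q)\xi_b}{1+q\xi_a\xi_b-(1+q)\xi_a}$, which is precisely the $p+q'=1$ rescaling of the Tracy--Widom scattering factor $S_{ab}=-\frac{p+q'\xi_a\xi_b-\xi_b}{p+q'\xi_a\xi_b-\xi_a}$. Collecting the $-1$'s over the inversions of $\sigma$ produces a factor $\sign(\sigma)$, and after relabelling the strictly decreasing signatures $\mu,\lambda$ into the increasing position vectors $Y,X$ (conjugation by the longest element) one arrives at the Tracy--Widom form $\sum_{\sigma\in S_n}A_\sigma\prod_i\xi_{\sigma(i)}^{x_i-y_{\sigma(i)}}\,e^{\sum_i\epsilon(\xi_i)t}\prod_i\frac{\dd\xi_i}{\xi_i}$.

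The conceptually trivial but error-prone part — the one I would write out with most care — is this last step: confirming that the interplay of the non-symmetrised Cauchy-type product $\prod_{i<j}\frac{z_j-z_i}{z_j-q z_i}$, the symmetrised sum ${\sf F}_{\lambda}$ and the factor $\xi_{\mu}$ assembles into precisely the Tracy--Widom amplitude $A_\sigma$ with the permutation acting on the correct (initial) coordinates, together with pinning down the rate normalisation ($p+q'=1$ versus forward rate $1$, backward rate $q$), the time rescaling by $1+q$, and the contour orientation. There is no structural obstacle, only a need to keep the indexing and signs straight.
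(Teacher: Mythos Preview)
Your derivation of the integral formula is exactly the paper's proof: the paper writes only ``This is the $r=1$ degeneration of \eqref{main-formula}'' and stops. So on the stated corollary you and the paper agree.

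Where you diverge is in the second half of your proposal: the explicit change of variables $\xi_i=\tfrac{1-z_i}{1-qz_i}$ and the term-by-term matching with the Tracy--Widom integral. The paper does not carry this out at all; it simply cites \cite{tracy2008integral} and asserts the identification. Your computations (the involution, $(1-z)(1-qz)=\tfrac{(1-q)^2\xi}{(1-q\xi)^2}$, the dispersion $t(\xi^{-1}+q\xi-1-q)$, the measure $(1-q)\tfrac{dz}{(1-z)(1-qz)}=-\tfrac{d\xi}{\xi}$, and the scattering factor) are correct, and the strategy of pulling out the $S_n$-sum in ${\sf F}_\lambda$, relabelling, and assembling the Tracy--Widom amplitudes $A_\sigma$ is the standard route. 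So you are supplying an argument the paper omits; your own caveat about keeping track of which coordinates the permutation acts on, the $p+q'=1$ normalisation, and the orientation is well placed, but there is no structural gap.
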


\begin{proof}
	This is the $r=1$ degeneration of \eqref{main-formula}.
\end{proof}

\subsection{Reduction to TASEP}
\label{ssec:tasep}

We conclude by studying the $q \rightarrow 0$ case of the results from the previous section; this limit is straightforwardly taken.

\begin{prop}
	At $q=0$, the function \eqref{F} is given by a determinant:
	\begin{align}
		\label{F-det}
		{\sf F}_{\lambda}(u_1,\dots,u_N;q=0)
		=
		\prod_{1 \leq i<j \leq N}
		\frac{1}{u_j-u_i}
		\cdot
		\det\left[
		u_j^{i-1}
		(1-u_j)^{\lambda_i}
		\right]_{1 \leq i,j \leq N}.
	\end{align}
\end{prop}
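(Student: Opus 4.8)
The plan is to substitute $q=0$ directly into the permutation sum \eqref{F} and to recognise the result as a Vandermonde factor times a determinant.

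First, at $q=0$ the numerator product $\prod_{1\leq i<j\leq N}(u_{\sigma_j}-qu_{\sigma_i})$ collapses to $\prod_{1\leq i<j\leq N}u_{\sigma_j}=\prod_{j=1}^{N}u_{\sigma_j}^{\,j-1}$, since for each $j$ the variable $u_{\sigma_j}$ occurs once for each of the $j-1$ indices $i<j$; likewise $\prod_{i=1}^{N}\bigl((1-u_{\sigma_i})/(1-qu_{\sigma_i})\bigr)^{\lambda_i}$ becomes $\prod_{i=1}^{N}(1-u_{\sigma_i})^{\lambda_i}$. Next I would use the elementary identity $\prod_{1\leq i<j\leq N}(u_{\sigma_j}-u_{\sigma_i})=(-1)^{\binom{N}{2}}\sign(\sigma)\prod_{1\leq i<j\leq N}(u_i-u_j)$, which lets the entire denominator be pulled outside the sum over $\sigma$. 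Combining the two occurrences of $(-1)^{\binom{N}{2}}$ (one from this identity, one from $\prod_{1\leq i<j\leq N}(u_i-u_j)=(-1)^{\binom{N}{2}}\prod_{1\leq i<j\leq N}(u_j-u_i)$) produces exactly the prefactor $\prod_{1\leq i<j\leq N}\frac{1}{u_j-u_i}$, while the residual $1/\sign(\sigma)=\sign(\sigma)$ stays inside the sum.

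After regrouping $u_{\sigma_i}^{\,i-1}$ with $(1-u_{\sigma_i})^{\lambda_i}$ over the common index $i$, the inner sum reads $\sum_{\sigma\in S_N}\sign(\sigma)\prod_{i=1}^{N}u_{\sigma_i}^{\,i-1}(1-u_{\sigma_i})^{\lambda_i}$, which is precisely the Leibniz expansion of $\det\bigl[u_j^{i-1}(1-u_j)^{\lambda_i}\bigr]_{1\leq i,j\leq N}$ (the permutation $\sigma$ sending row $i$ to column $\sigma_i$). Assembling the pieces yields \eqref{F-det}. The computation is entirely routine; the only point demanding attention is the sign bookkeeping---tracking the $(-1)^{\binom{N}{2}}$ arising from reversing the order of factors in the Vandermonde, and noting that $1/\sign(\sigma)=\sign(\sigma)$, so that the surviving permutation sum is a determinant rather than a permanent. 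I do not anticipate any genuine obstacle.
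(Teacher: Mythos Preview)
Your proposal is correct and follows essentially the same approach as the paper's proof: substitute $q=0$, recognise the permuted Vandermonde in the denominator as $\sign(\sigma)\prod_{i<j}(u_j-u_i)$, and identify the remaining signed permutation sum as the Leibniz expansion of the stated determinant. The paper's proof is more terse (it writes the post-substitution expression in one line and immediately identifies the determinant), while you spell out the sign bookkeeping via an intermediate $(-1)^{\binom{N}{2}}$ detour, but the argument is the same.
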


\begin{proof}
	We substitute $q=0$ into \eqref{F}, which yields
	\begin{align*}
		{\sf F}_{\lambda}(u_1,\dots,u_N;q=0)
		&=
		\prod_{1 \leq i<j \leq N}
		\frac{1}{u_j-u_i}
		\cdot
		\sum_{\sigma \in S_N}
		(-1)^{\abs{\sigma}}
		\cdot
		\prod_{i=1}^{N}
		u_{\sigma_i}^{i-1}
		\left(1-u_{\sigma_i}\right)^{\lambda_i},
	\end{align*}
	which is the Laplace expansion of the determinant \eqref{F-det}
\end{proof}

\begin{cor}
	\label{cor:pTASEPcross}
	Let $\vec{\mu}$ and $\vec{\lambda}$ be vectors of strict signatures, prescribed in the same way as in the statement of Theorem \ref{thm:total-cross}. The probability of total crossing of blocks in the multi-species TASEP is given by
	\begin{multline*}
		\mathbb{P}(\vec{\mu} \rightarrow \vec{\lambda};t)
		=
		\left(\frac{1}{2\pi\ii}\right)^n
		\oint_{C}
		\dd z_1
		\cdots 
		\oint_{C}
		\dd z_n
		\prod_{j=1}^{n}
		\exp\left[\frac{z_j t}{1-z_j}\right]
		\prod_{1 \leq k<\ell \leq r}
		\prod_{i=1}^{n_k}
		\prod_{j=1}^{n_{\ell}}
		\left( z^{(\ell)}_j-z^{(k)}_i \right)
		\\
		\times
		\prod_{k=1}^{r}
		\det\left[ z^{(k)\left(i-j-N_k\right)}_j 
		\left(1-z^{(k)}_j\right)^{\lambda^{(k)}_i-\mu^{(k)}_j-1} \right]_{1 \leq i,j \leq n_k}
	\end{multline*}
	where for each $1 \leq k \leq r$ we define $N_k = \sum_{a=1}^{k-1} n_a$.
\end{cor}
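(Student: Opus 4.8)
The plan is to obtain the claimed formula as the $q\to 0$ degeneration of \eqref{main-formula}: under the hypotheses on $\vec\mu,\vec\lambda$ (which are exactly those of Theorem~\ref{thm:total-cross}), equation \eqref{main-formula} applies verbatim, so what remains is to set $q=0$, substitute the determinantal identity \eqref{F-det} for each factor ${\sf F}_{\lambda^{(k)}}$, and reorganize. First I would justify that the $q\to 0$ limit may be taken inside the $n$-fold integral: on the contour $C$ (a small circle around $1$) the quantities $z_j-qz_i$, $1-qz_j$ and $1-qu_i$ stay bounded away from $0$ for all sufficiently small $q$, so the integrand of \eqref{main-formula} is continuous on $C^n$ times a neighbourhood of $q=0$ and depends analytically on $q$ there; hence the limit and the integration commute. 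Evaluating the elementary factors at $q=0$ gives $(1-q)^n\to 1$ (so the prefactor becomes $(2\pi\ii)^{-n}$), $\prod_j\exp[\tfrac{(1-q)^2 z_j t}{(1-z_j)(1-qz_j)}]\to\prod_j\exp[\tfrac{z_j t}{1-z_j}]$, $\prod_j\tfrac{1}{(1-z_j)(1-qz_j)}\to\prod_j\tfrac{1}{1-z_j}$, $\xi_{\mu^{(k)}}\to\prod_{b}(1-z^{(k)}_b)^{-\mu^{(k)}_b}$, while $\prod_{1\le i<j\le n}\tfrac{z_j-z_i}{z_j-qz_i}\to\big(\prod_{1\le i<j\le n}(z_j-z_i)\big)\prod_{j=1}^{n}z_j^{-(j-1)}$, and by \eqref{F-det} one has ${\sf F}_{\lambda^{(k)}}(z^{(k)}_1,\dots,z^{(k)}_{n_k};q{=}0)=\prod_{1\le a<b\le n_k}(z^{(k)}_b-z^{(k)}_a)^{-1}\det[z^{(k)(a-1)}_b(1-z^{(k)}_b)^{\lambda^{(k)}_a}]_{1\le a,b\le n_k}$.

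Next I would split the two surviving $z$-only products along the block structure, writing $z^{(k)}_a=z_{N_k+a}$ with $N_k=\sum_{a=1}^{k-1}n_a$. A pair $i<j$ in a common block $k$ contributes $z_j-z_i=z^{(k)}_b-z^{(k)}_a$ with $a<b$, whereas a pair with $i$ in block $k$ and $j$ in block $\ell>k$ contributes $z^{(\ell)}_b-z^{(k)}_a$; since these cases are mutually exclusive and exhaustive,
\[
\prod_{1\le i<j\le n}(z_j-z_i)=\prod_{k=1}^{r}\ \prod_{1\le a<b\le n_k}\big(z^{(k)}_b-z^{(k)}_a\big)\ \cdot\ \prod_{1\le k<\ell\le r}\ \prod_{i=1}^{n_k}\prod_{j=1}^{n_\ell}\big(z^{(\ell)}_j-z^{(k)}_i\big),
\]
and similarly $\prod_{j=1}^{n}z_j^{-(j-1)}=\prod_{k=1}^{r}\prod_{b=1}^{n_k}\big(z^{(k)}_b\big)^{-(N_k+b-1)}$. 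The within-block Vandermonde factors $\prod_{1\le a<b\le n_k}(z^{(k)}_b-z^{(k)}_a)$ then cancel exactly, with no residual sign, against the factors $\prod_{1\le a<b\le n_k}(z^{(k)}_b-z^{(k)}_a)^{-1}$ coming from \eqref{F-det}, and the cross-block product is already in the form appearing in the statement.

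It then remains to absorb, for each block $k$, the single-variable factors attached to $z^{(k)}_b$ — namely $\big(z^{(k)}_b\big)^{-(N_k+b-1)}$ from the degenerate cross-ratio, $(1-z^{(k)}_b)^{-1}$ from $\prod_j(1-z_j)^{-1}$, and $(1-z^{(k)}_b)^{-\mu^{(k)}_b}$ from $\xi_{\mu^{(k)}}$ — into the $b$-th column of the $k$-th determinant. Using $(a-1)-(N_k+b-1)=a-b-N_k$, the $(a,b)$ entry becomes $z^{(k)(a-b-N_k)}_b(1-z^{(k)}_b)^{\lambda^{(k)}_a-\mu^{(k)}_b-1}$, which upon renaming $a\mapsto i$, $b\mapsto j$ is precisely the $(i,j)$ entry of the determinant in the statement. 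Assembling the prefactor $(2\pi\ii)^{-n}$, the exponential factors $\prod_j\exp[z_jt/(1-z_j)]$, the cross-block product, and the $r$ determinants then yields the asserted formula. There is no serious obstacle here: the only steps needing genuine care are the interchange of the $q\to 0$ limit with the contour integration (handled above via uniform bounds on $C$) and the bookkeeping of the index shift $(a-1)-(N_k+b-1)=a-b-N_k$ together with the sign-free cancellation of the within-block Vandermondes; everything else is routine rearrangement.
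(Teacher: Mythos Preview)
Your proposal is correct and follows the same approach as the paper: set $q=0$ in \eqref{main-formula}, invoke the determinant formula \eqref{F-det} for each ${\sf F}_{\lambda^{(k)}}$, split the degenerate cross-ratio $\prod_{i<j}(z_j-z_i)/z_j$ into within-block and cross-block pieces, cancel the within-block Vandermondes against those in \eqref{F-det}, and absorb the remaining column-dependent factors into the determinants. The paper's proof is a two-sentence sketch (``the $q=0$ reduction of \eqref{main-formula}\dots partial cancellation between Vandermonde factors''); you have simply filled in the details, including a justification for commuting the $q\to 0$ limit with the contour integration that the paper leaves implicit.
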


\begin{proof}
	This is the $q=0$ reduction of \eqref{main-formula}; it makes use of the determinant formula \eqref{F-det}, and arises after noting partial cancellation between Vandermonde factors in the integrand and those appearing in \eqref{F-det}.
\end{proof}

\section{Cumulative total crossing probability for the 2-TASEP under step and Bernoulli initial conditions}
\label{2TASEP step initial condition section}

A key observable for the 2-TASEP is the probability that all particles of type 2 start out to the left of all particles of type 1 and end up to the right of those of type 1. The latter is a stable final configuration because particles of type 1 are not allowed to overtake those of type 2. We thus consider the following cumulative probability of total crossing subject to both Bernoulli and step initial conditions using the transition probability derived in Proposition \ref{2TASEP crossing det formula statement} and Corollary \ref{cor:pTASEPcross}. 
\begin{defn}
	\label{2TASEP Pcross step definition statement}
	For some $s_1,s_2 \in \Z$, we define the cumulative total crossing probability for 2-TASEP with initial condition $\mu\in \Z^n$ as 
	\begin{equation}
	\label{2TASEP Pcross definition}
	\Pcross 
	= \sum_{s_1\leq \nu_1<\cdots<\nu_{n-m}< s_2 }\hspace{0.1cm}\sum_{s_2\leq \nu_{n-m+1}<\cdots<\nu_n} \mathbb{P}(\mu\to\nu,\inp\to p;t),
	\end{equation}
	where $\inp_j=j$ and $p_j = n-m+j$.
\end{defn}

The probability $\Pcross$ is a sum over all configurations where all particles of type 1 end up between position $s_1$ and $s_2$, and all particles of type 2 end up to the right of $s_2$. We may regard $s_1$ and $s_2$ as fixed walls in space or consider them moving in time if we scale $s_1$ and $s_2$ accordingly. Note that to get a non-zero probability we must have $s_2-s_1\geq n-m$.

We now proceed to evaluate $\Pcross$.
\begin{prop}
	\label{2TASEP step Pcross formula proposition}
	\begin{multline}
		\label{2TASEP Pcross 1}
		\Pcross=\oint \prod_{i=1}^{m} \frac{\dd z_{i}}{2 \pi \ii} \prod_{i=1}^{n-m} \frac{\dd w_{i}}{2 \pi \ii} \prod_{i=1}^{m} \prod_{j=1}^{n-m}\left(w_{j}-z_{i}\right)\prod_{1\leq i<j\leq m}(z_j-z_i) \\ \times \prod_{i=1}^{m} \frac{e^{(z_i^{-1}-1)t}z_{i}^{s_2-1-\mu_i}}{\left(1-z_{i}\right)^{n-i+1}} \prod_{i=1}^{n-m}\frac{e^{(w_i^{-1}-1)t}w_{i}^{s_1-1-\mu_{i+m}}}{\left(1-w_{i}\right)^{n-m-i+1}}\operatorname{det}\left(w_{i}^{j-1}-w_{i}^{s_2-s_1}\right)_{1 \leq i,j \leq n-m}.
	\end{multline}
\end{prop}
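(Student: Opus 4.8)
The plan is to insert the total-crossing transition probability of Proposition~\ref{2TASEP crossing det formula statement} into \eqref{2TASEP Pcross definition} and perform the two sums over final configurations. Recall that, with $\inp_j=j$ and $p_j=n-m+j$, Proposition~\ref{2TASEP crossing det formula statement} writes $\mathbb{P}(\mu\to\nu,\inp\to p;t)$ as a $(2\pi\ii)^{-n}$-normalised contour integral over small loops around the origin, whose integrand is the product of a ``$z$-determinant'' $\det(z_i^{\nu_{n-m+j}-\mu_i-1}(1-z_i)^{i-j})_{1\le i,j\le m}$ --- depending on $\nu$ only through the type-$2$ coordinates $\nu_{n-m+1},\dots,\nu_n$ --- a ``$w$-determinant'' $\det(w_i^{\nu_j-\mu_{m+i}-1}(1-w_i)^{i-j})_{1\le i,j\le n-m}$ --- depending on $\nu$ only through the type-$1$ coordinates $\nu_1,\dots,\nu_{n-m}$ --- and the $\nu$-independent factors $\prod_i(1-z_i)^{-(n-m)}e^{(z_i^{-1}-1)t}$, $\prod_i e^{(w_i^{-1}-1)t}$ and $\prod_{i,j}(w_j-z_i)$. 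Since the constraint $\nu_{n-m}<s_2\le\nu_{n-m+1}$ already guarantees $\nu_{n-m}<\nu_{n-m+1}$, the double sum in \eqref{2TASEP Pcross definition} factorises as a product of an ``upper'' sum over $s_2\le\nu_{n-m+1}<\cdots<\nu_n$ acting on the $z$-determinant and a ``lower'' sum over $s_1\le\nu_1<\cdots<\nu_{n-m}\le s_2-1$ acting on the $w$-determinant. Because the contours enclose only $0$, $|z_i|$ and $|w_j|$ are small, so all of the series below converge absolutely and may be taken under the integral sign.

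The first core step evaluates the upper sum. Pulling $(1-z_i)^i z_i^{-\mu_i-1}$ out of row $i$ of the $z$-determinant reduces it to establishing the identity
\[
\sum_{s_2\le a_1<\cdots<a_m}\det\big(z_i^{a_j}(1-z_i)^{-j}\big)_{1\le i,j\le m}
=
\prod_{i=1}^{m}\frac{z_i^{s_2}}{(1-z_i)^{m+1}}\prod_{1\le i<j\le m}(z_j-z_i).
\]
I would prove this by Laplace-expanding the determinant over $S_m$, carrying out the nested geometric sums in the order $a_m,a_{m-1},\dots,a_1$ (each convergent for $|z_i|<1$), and collecting the denominators $1-z_{\sigma(k)}z_{\sigma(k+1)}\cdots z_{\sigma(m)}$ that this produces; one then resymmetrises the resulting sum over $\sigma\in S_m$ to show that every such ``spurious'' denominator cancels and that the Vandermonde $\prod_{i<j}(z_j-z_i)$ together with the stated power of $1-z_i$ is left. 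This cancellation can be checked directly for small $m$ and pushed through in general by induction on $m$ (removing the $a_m$-sum and reassembling an $m\times m$ determinant); I expect it to be the main obstacle of the whole argument, everything else being bookkeeping.

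The second core step, the lower sum, is the same computation with \emph{finite} geometric series, running over $a_j\in\{s_1,\dots,s_2-1\}$. After pulling $(1-w_i)^i w_i^{-\mu_{m+i}-1}$ out of row $i$ of the $w$-determinant, the content is the identity
\[
\sum_{s_1\le a_1<\cdots<a_{n-m}\le s_2-1}\det\big(w_i^{a_j}(1-w_i)^{-j}\big)_{1\le i,j\le n-m}
=
\prod_{i=1}^{n-m}\frac{w_i^{s_1}}{(1-w_i)^{n-m+1}}\det\big(w_i^{j-1}-w_i^{s_2-s_1}\big)_{1\le i,j\le n-m},
\]
in which the lower endpoint $s_1$ produces the monomials $w_i^{j-1}$ while the common upper endpoint $s_2-1$ produces the $\nu$-independent subtrahend $w_i^{s_2-s_1}$ appearing in every column. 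Finally, reinserting these two evaluated sums, merging the row factors $(1-z_i)^i$ and $(1-w_i)^i$ with the leftover powers of $1-z_i$, $1-w_i$ and with the factor $\prod_i(1-z_i)^{-(n-m)}$ coming from Proposition~\ref{2TASEP crossing det formula statement}, and simplifying exponents via $(1-z_i)^{i-n-1}=(1-z_i)^{-(n-i+1)}$ and $(1-w_i)^{i-n+m-1}=(1-w_i)^{-(n-m-i+1)}$, produces exactly \eqref{2TASEP Pcross 1}.
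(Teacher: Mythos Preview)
Your proposal is correct and follows essentially the same route as the paper: it too inserts Proposition~\ref{2TASEP crossing det formula statement} into \eqref{2TASEP Pcross definition}, factors the double sum into the ``upper'' and ``lower'' pieces you call $A$ and $B$, evaluates $A$ by Laplace expansion plus the nested geometric series identity (their Lemma~\ref{summation identity 1}) followed by the symmetrisation identity (their Lemma~\ref{symmetrization identity 1}, proved by the induction you anticipate), and then reassembles the exponents exactly as you describe. The one place where the paper is more concrete than your sketch is the lower sum: rather than ``the same computation with finite geometric series'', it sums the columns of the $w$-determinant one at a time and after each $\nu_j$-sum performs the column operation $C_j\mapsto C_j+C_{j+1}$ to kill the $\nu_{j+1}$-dependent boundary term, arriving directly at $\det(w_i^{j-1}-w_i^{s_2-s_1})$ without any symmetrisation step.
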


\begin{proof}
	Consider initially Proposition \ref{2TASEP crossing det formula statement}, from which we find the crossing probability as
	\begin{multline}
		\label{2TASEP Pcross 1 Pcross}
		\Pcross = \oint \prod_{i=1}^{m} \frac{\dd z_i }{2\pi \ii} \prod_{j=1}^{n-m} \frac{\dd w_j }{2\pi \ii} \prod_{i=1}^m\frac{e^{(z_i^{-1}-1)t}z_i^{-\mu_i-1}}{\left(1-z_{i}\right)^{n-m-i}} \prod_{i=1}^{n-m}\frac{e^{(w_i^{-1}-1)t}w_i^{-\mu_{m+i}-1}}{\left(1-w_i\right)^{-i}}\prod_{i=1}^{m} \prod_{j=1}^{n-m} \left(w_{j}-z_{i}\right) \\ \times A(z_1,\dots,z_m) B(w_1,\dots,w_{n-m}),
	\end{multline}
	where
	\begin{align*}
		A(z_1,\dots,z_m) & = \sum_{\nu_n = s_2+m-1}^\infty \sum_{\nu_{n-1} = s_2+m-2}^{\nu_n-1}  \cdots \sum_{\nu_{n-m+1}=s_2}^{\nu_{n-m+2}-1} \det \left(\frac{z_{i}^{\nu_{n-m+j}}}{ \left(1-z_{i}\right)^{j}}\right)_{1\leq i,j \leq m},\\
		B(w_1,\dots,w_{n-m}) & = \sum_{\nu_{n-m} = s_1+n-m-1}^{s_2-1} \cdots \sum_{\nu_1 = s_1}^{\nu_2 -1} \det\left(\frac{w_{i}^{\nu_{j}}}{\left(1-w_{i}\right)^{j}}\right)_{1 \leq i,j \leq n-m}.
	\end{align*}
	
	To evaluate $A$ we make use of the following two lemmas. We first express the determinant in $A$ as a sum over permutations and apply the following nested geometric series identity.
	\begin{lem}
		\label{summation identity 1}
		\begin{equation}
		\label{summation identity 1 equation}
		\sum_{\nu_m=s_2+m-1}^\infty \cdots \sum_{\nu_1=s_2}^{\nu_2-1} \prod_{i=1}^m z_i^{\nu_i} = \prod_{i=1}^m\frac{z_i^{s_2+i-1}}{1-\prod_{j=i}^m z_j}.
		\end{equation}
	\end{lem}
	\noindent The proof of this result is given in Appendix \ref{Geometric series proof appendix}. This result gives 
	\begin{equation}
	\label{2TASEP Pcross formula proof 2}
	A(z_1,\dots,z_m) = \sum_{\sigma\in S_m}(-1)^\abs{\sigma} \prod_{i=1}^m \left[\frac{1}{1-\prod_{j=i}^m z_{\sigma_j}}\frac{z_{\sigma_i}^{s_2+i -1}}{(1-z_{\sigma_i})^{i}}\right],
	\end{equation}
	\noindent Next we apply a symmetrization identity which is adapted from a result in \cite{tracy_integral_2008}.
	\begin{lem}
		\label{symmetrization identity 1}
		\begin{equation}
		\label{2TASEP Pcross 1 proof A}
		A(z_1,\dots,z_m)=\sum_{\sigma\in S_m}(-1)^\abs{\sigma} \prod_{i=1}^m \left[\frac{1}{1-\prod_{j=i}^m z_{\sigma_j}}\frac{z_{\sigma_i}^{s_2+i -1}}{(1-z_{\sigma_i})^{i}}\right] = \prod_{1\leq i < j \leq m}\left(z_j-z_i\right) \prod_{i=1}^m\frac{z_i^{s_2}}{(z_i-1)^{m+1}}.
		\end{equation}
	\end{lem}
	\noindent A proof of this result is given in Appendix \ref{Symmetrization identity proof appendix}. To evaluate $B$ we factorize 
	\begin{multline}
		B(w_1,\dots,w_{n-m}) = \\ \prod_{i=1}^{n-m}\frac{1}{(1-w_i)^{n-m+1}}\sum_{\nu_{n-m} = s_1+n-m-1}^{s_2-1} \cdots \sum_{\nu_1 = s_1}^{\nu_2 -1} \det\left(w_{i}^{\nu_{j}}\left(1-w_{i}\right)^{n-m-j+1}\right)_{1 \leq i,j \leq n-m}.
		\label{2TASEP Pcross 1 proof B 1}
	\end{multline}
	Noting that the $j$-th column in the determinant in \eqref{2TASEP Pcross 1 proof B 1} depends only on $\nu_j$, we evaluate the sum over $\nu_1$ as follows 
	\begin{align*}
		\sum_{\nu_1 = 1}^{\nu_2 -1} & \det\left(w_{i}^{\nu_{j}}\left(1-w_{i}\right)^{n-m-j+1}\right)_{1 \leq i,j \leq n-m} \\ & \hspace{0.5cm} = \det\left(\left[\left(w_i^{s_1}-w_i^{\nu_2}\right)(1-w_i)^{n-m-1}\right]_{1\leq i \leq n-m}\Big|\left[w_{i}^{\nu_{j}}\left(1-w_{i}\right)^{n-m-j+1}\right]_{\substack{1\leq i \leq n-m \\2\leq j \leq n-m}}\right)
		\\ & \hspace{0.5cm} = \det\left(\left[ w_i^{s_1}(1-w_i)^{n-m-1}\right]_{1\leq i \leq n-m}\Big|\left[w_{i}^{\nu_{j}}\left(1-w_{i}\right)^{n-m-j+1}\right]_{\substack{1\leq i \leq n-m \\2\leq j \leq n-m}}\right),
	\end{align*}
	where in the last line we performed the column operation $C_1 \mapsto C_1 + C_2$. Continuing with the procedure of summing over $\nu_j$ and the performing the column operation $C_j \mapsto C_j + C_{j+1}$ for $1\leq j \leq n-m-1$, and summing over $\nu_{n-m}$ without performing a column operation gives	
	\begin{multline}
		\label{2TASEP Pcross 1 proof B 2}
		B(w_1,\dots,w_{n-m})=\prod_{i=1}^{n-m}\frac{1}{(1-w_i)^{n-m+1}}\\
		\times \det(\left(w_i^{s_1+j-1}(1-w_i)^{n-m-j}\right)_{\substack{1\leq i \leq n-m \\1\leq j \leq n-m-1}}\Big| \left(w_i^{s_1+n-m-1}-w_i^{s_2}\right)_{1\leq i \leq n-m}).
	\end{multline}
	The determinant above may be expressed in an alternative form by again undergoing a series of column operations. The $i$-th step in the column operation scheme maps
	\[C_j \mapsto C_j+C_{j+1} \text{ for $1 \leq j \leq n-m-i-1$ and }C_{n-m-i} \mapsto C_{n-m-i}+C_{n-m}.\]
	Performing $n-m-1$ steps in the scheme reduces \eqref{2TASEP Pcross 1 proof B 2} to 
	\begin{equation}
	\label{2TASEP Pcross 1 proof B 3}
	B(w_1,\dots,w_{n-m})=\prod_{i=1}^{n-m}\frac{1}{(1-w_i)^{n-m+1}} \det(w_i^{j+s_1-1}-w_i^{s_2})_{1\leq i \leq n-m}.
	\end{equation}
	Finally combining \eqref{2TASEP Pcross 1 proof A} and \eqref{2TASEP Pcross 1 proof B 3} into \eqref{2TASEP Pcross 1 Pcross} gives the result. 
\end{proof}

A further simplification is obtained by choosing special initial conditions. We will choose Bernoulli distributed initial conditions for type 2 particles and step initial conditions for particles of type 1.
\begin{defn}
	Consider a random variable $\chi_x$ which has probability $\rho$ of a particle occupancy and $\rho-1$ of a vacancy at site $x\in \bbZ$. The Bernoulli initial data is the initial occupancy
	\begin{equation}
	\eta_x(0) = \begin{cases}
	2\chi_x & \text{for } x< 0 \\
	1 & \text{for } 0\leq x \leq n-m-1 \\
	0 & \text{for } x \geq n-m \\
	\end{cases}.
	\end{equation}
\end{defn}
The probability of a complete set of initial conditions at the coordinates $\mu \in \bbZ^n$ is then given by
\begin{multline}
	\mathbb{P}(\mu;0) =\\ \rho^m(1-\rho)^{-\mu_m-1}\prod_{i=1}^{m-1}(1-\rho)^{\mu_{m-i+1}-\mu_{m-i}-1} \prod_{i=1}^{n-m} \delta_{\mu_{m+i},i-1}  = \rho^m(1-\rho)^{-\mu_1-m} \prod_{i=1}^{n-m} \delta_{\mu_{m+i},i-1} .
\end{multline}
\begin{defn}
	The cumulative total crossing probability of the 2-TASEP subject to the Bernoulli initial data for particles of type 2 is defined by
	\begin{equation}
	\label{Pcross Bernoulli defn}
	\PB := \sum_{\mu} \mathbb{P}(\mu;0)\Pcross = \sum_{\mu_1 < \mu_2 < \cdots <\mu_m<0} \mathbb{P}(\mu;0)\Pcross,
	\end{equation}
	where in the second equality we have $\mu_{m+i}=i-1$. We refer to this as the Bernoulli total crossing probability.
\end{defn}
\begin{prop}
	\label{prop:cumtotalcross}
	The cumulative total crossing probability with Bernouilli-step initial condition is given by
	\begin{multline}
		\label{Bernoulli Pcross 1}
		\PB = \frac{\rho^m}{m!}\oint \prod_{i=1}^{m} \frac{\dd z_{i}}{2 \pi \ii} \prod_{i=1}^{n-m} \frac{\dd w_{i}}{2 \pi \ii} \prod_{i=1}^{m} \prod_{j=1}^{n-m}\left(w_{j}-z_{i}\right)\prod_{ i\neq j}(z_j-z_i) \\ \times \prod_{i=1}^{m} \frac{e^{(z_i^{-1}-1)t}z_{i}^{s_2}}{\left(1-z_{i}\right)^{n}(1-(1-\rho)z_i)} \prod_{i=1}^{n-m}\frac{e^{(w_i^{-1}-1)t}w_{i}^{s_1-i}}{\left(1-w_{i}\right)^{n-m-i+1}}\operatorname{det}\left(w_{i}^{j-1}-w_{i}^{s_2-s_1}\right)_{1 \leq i,j \leq n-m}.
	\end{multline}
\end{prop}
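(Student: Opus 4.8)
The plan is to substitute the Bernoulli--step initial data into the total crossing transition probability \eqref{2TASEP Pcross 1} of Proposition~\ref{2TASEP step Pcross formula proposition} and then perform the sum over initial configurations $\mu$. Since $\mathbb{P}(\mu;0)=\rho^m(1-\rho)^{-\mu_1-m}\prod_{i=1}^{n-m}\delta_{\mu_{m+i},i-1}$, the Kronecker deltas force $\mu_{m+i}=i-1$, which turns the exponent $s_1-1-\mu_{m+i}$ appearing in \eqref{2TASEP Pcross 1} into $s_1-i$; every other factor --- the $w$--exponentials $e^{(w_i^{-1}-1)t}$, the powers $(1-w_i)^{-(n-m-i+1)}$, the factor $\prod_{i,j}(w_j-z_i)$ and the final determinant --- is carried over unchanged, so the $w$--part of \eqref{Bernoulli Pcross 1} appears verbatim. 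It therefore remains to show that
\[
\rho^m\!\!\!\sum_{\mu_1<\cdots<\mu_m<0}\!\!\!(1-\rho)^{-\mu_1-m}\,\prod_{1\le i<j\le m}(z_j-z_i)\,\prod_{i=1}^{m}\frac{e^{(z_i^{-1}-1)t}z_i^{\,s_2-1-\mu_i}}{(1-z_i)^{\,n-i+1}}
\]
equals
\[
\frac{\rho^m}{m!}\prod_{i\ne j}(z_j-z_i)\,\prod_{i=1}^{m}\frac{e^{(z_i^{-1}-1)t}z_i^{\,s_2}}{(1-z_i)^{n}\bigl(1-(1-\rho)z_i\bigr)}
\]
as an identity under the $z$--contour integration (all of whose contours are identical small circles about the origin).

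First I would evaluate the nested geometric sum $\Sigma:=\sum_{\mu_1<\cdots<\mu_m<0}(1-\rho)^{-\mu_1}\prod_i z_i^{-\mu_i}$. The substitution $k_i:=-\mu_{m+1-i}$ turns the domain into $1\le k_1<k_2<\cdots<k_m$ and absorbs the weight $(1-\rho)^{-\mu_1}=(1-\rho)^{k_m}$ into the last variable, so that the sum acquires exactly the shape of Lemma~\ref{summation identity 1} with $s_2\mapsto 1$; it evaluates to $\Sigma=(1-\rho)^m\prod_{i=1}^{m}\tfrac{z_i^{\,m+1-i}}{1-(1-\rho)\prod_{l=1}^{i}z_l}$. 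Substituting this, writing $(1-z_i)^{-(n-i+1)}=(1-z_i)^{-n}(1-z_i)^{i-1}$, then symmetrising the resulting integrand over $S_m$ (legitimate because the $z$--contours coincide) and using $\prod_{i<j}(z_{\sigma_j}-z_{\sigma_i})=(-1)^{\abs{\sigma}}\prod_{i<j}(z_j-z_i)$ to pull out the Vandermonde together with the symmetric pieces $e^{(z_i^{-1}-1)t}$, $z_i^{s_2}$, $(1-z_i)^{-n}$, the required identity collapses to the purely rational symmetrisation statement
\[
\sum_{\sigma\in S_m}(-1)^{\abs{\sigma}}\,\prod_{i=1}^{m}\frac{z_{\sigma_i}^{\,m-i}(1-z_{\sigma_i})^{\,i-1}}{1-(1-\rho)\prod_{l=1}^{i}z_{\sigma_l}}=\prod_{1\le i<j\le m}(z_i-z_j)\,\prod_{i=1}^{m}\frac{1}{1-(1-\rho)z_i},
\]
after which $\prod_{i<j}(z_j-z_i)\cdot\prod_{i<j}(z_i-z_j)=\prod_{i\ne j}(z_j-z_i)$ produces the Vandermonde--squared prefactor of \eqref{Bernoulli Pcross 1}.

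I expect the proof of this last symmetrisation identity to be the main obstacle; it is the Bernoulli--twisted analogue of Lemma~\ref{symmetrization identity 1} (itself adapted from \cite{tracy_integral_2008}), and I would establish it by the same techniques --- either induction on $m$, splitting the sum according to the value of $\sigma^{-1}(m)$ and recombining the telescoping denominators $1-(1-\rho)\prod_{l\le i}z_{\sigma_l}$ by partial fractions, or, perhaps more transparently, by checking that each side is a rational function of every $z_i$ whose only finite pole is the simple pole at $z_i=(1-\rho)^{-1}$ (the ``spurious'' zeros $\prod_{l\in S}z_l=(1-\rho)^{-1}$ of the denominators cancelling upon antisymmetrisation), and then matching the residue there together with the values at $z_i=0$; the cases $m=1,2$ are immediate and serve as the induction base. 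Finally I would record the elementary remark that for $0<\rho<1$ and all $z_i$ on a sufficiently small circle about the origin every geometric series above converges absolutely, so that no contour deformation is needed in any of these manipulations; passing from \eqref{Bernoulli Pcross 1} to the form \eqref{Bernoulli Pcross 2} quoted in the main results is then the purely cosmetic change of variables $z_i\mapsto z_i^{-1}$, $w_i\mapsto w_i^{-1}$.
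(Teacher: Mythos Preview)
Your proposal is correct and follows essentially the same route as the paper: start from Proposition~\ref{2TASEP step Pcross formula proposition}, fix $\mu_{m+i}=i-1$, evaluate the nested geometric sum via (a corollary of) Lemma~\ref{summation identity 1}, and then symmetrise using the Bernoulli-twisted analogue of Lemma~\ref{symmetrization identity 1}. The only difference is that the paper does not re-prove that symmetrisation identity but quotes it as a known corollary (Lemma~4.4 of \cite{chen_exact_2019}), whereas you outline an independent proof by induction or residue-matching; your version of the identity is equivalent to the paper's after multiplying both sides by the symmetric factor $\prod_i (1-z_i)/z_i^m$.
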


\begin{proof}
	Initially consider $\PB$ as defined in \eqref{Pcross Bernoulli defn} where we use Proposition \ref{2TASEP step Pcross formula proposition}. We fix the initial positions of the type 1 particles at $\mu_{m+i}=i-1$. Then making use of the following two corollaries, we perform a similar process to the proof of Proposition \ref{2TASEP step Pcross formula proposition}.
	\begin{cor}[of Lemma \ref{summation identity 1}]
		\begin{equation}
		\sum_{\mu_1 < \mu_2 < \cdots <\mu_m<0} (1-\rho)^{-\mu_1}\prod_{i=1}^m z_i^{-\mu_i} = (1-\rho)^m \prod_{i=1}^m \frac{z_i^{m-i+1}}{1-(1-\rho)\prod_{j=1}^i z_j}.
		\end{equation}
	\end{cor}
	\noindent Incorporating this lemma into \eqref{Pcross Bernoulli defn} gives the sum over initial conditions. We then symmetrize the integrand using the following result.
	\begin{cor}[of Lemma \ref{symmetrization identity 1}, see Lemma 4.4 from \cite{chen_exact_2019}].
		\begin{equation}
		\sum_{\sigma\in S_m}(-1)^\abs{\sigma} \prod_{i=1}^m \left[\left(\frac{1-z_{\sigma_i}}{z_{\sigma_i}}\right)^i \frac{1}{1-(1-\rho)\prod_{j=1}^i z_{\sigma_j}}\right] = \prod_{1\leq i < j \leq m}\left(z_i-z_j\right) \prod_{i=1}^m\frac{1-z_i}{z_i^m(1-(1-\rho)z_i)}.
		\end{equation}
	\end{cor}
	\noindent This gives the result. 
\end{proof}
Finally, it is a simple corollary to obtain the result for step-step initial condition:
\begin{cor}
	The total crossing probability for the step initial condition can be calculated by taking the limit
	\begin{equation}
	\PS = \lim_{\rho\to 1}\PB.
	\end{equation}
\end{cor}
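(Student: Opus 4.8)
The plan is to exploit the probabilistic meaning of $\PB$ as the average of $\Pcross$ over the Bernoulli-distributed positions of the type~$2$ particles, and to observe that this average concentrates on a single configuration as $\rho \to 1$. Write $\mu^{*}$ for the fully packed (step--step) initial condition, namely $\mu^{*}_i = i-m-1$ for $1 \le i \le m$ (type~$2$ particles at the sites $-m,\dots,-1$) and $\mu^{*}_{m+i} = i-1$ for $1 \le i \le n-m$ (type~$1$ particles at $0,\dots,n-m-1$); here $\PS := \mathbb{P}_{s_1,s_2}(\mu^{*};t)$ is by definition the cumulative total crossing probability for the step--step initial condition.

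First I would record two elementary facts about the measure $\mathbb{P}(\mu;0)=\rho^{m}(1-\rho)^{-\mu_1-m}\prod_{i=1}^{n-m}\delta_{\mu_{m+i},\,i-1}$ from \eqref{Pcross Bernoulli defn}. It is a genuine probability distribution on $\{\mu_1<\cdots<\mu_m<0\}$: grouping configurations by the value of $\mu_1=-k$ (so $k\ge m$) and counting the $\binom{k-1}{m-1}$ admissible interior positions gives $\sum_{\mu}\mathbb{P}(\mu;0)=\rho^{m}\sum_{k\ge m}\binom{k-1}{m-1}(1-\rho)^{k-m}=1$. Its value at the step configuration is $\mathbb{P}(\mu^{*};0)=\rho^{m}$, since there $-\mu^{*}_1-m=0$; and for every admissible $\mu\ne\mu^{*}$ one has $-\mu_1-m\ge 1$, so $\mathbb{P}(\mu;0)\to 0$ as $\rho\to 1$.

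Next I would split off the step term in \eqref{Pcross Bernoulli defn}, writing
\[
\PB=\rho^{m}\,\PS+\sum_{\mu\ne\mu^{*}}\mathbb{P}(\mu;0)\,\Pcross .
\]
Since $\Pcross$ is a sum of transition probabilities with a fixed final ordering it satisfies $0\le\Pcross\le 1$, and $\sum_{\mu\ne\mu^{*}}\mathbb{P}(\mu;0)=1-\rho^{m}$, so the remainder sum lies in $[0,\,1-\rho^{m}]$ and
\[
\bigl|\PB-\PS\bigr|\le(1-\rho^{m})\,\PS+(1-\rho^{m})\le 2(1-\rho^{m}),
\]
which tends to $0$ as $\rho\to 1$. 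This proves the corollary; the only analytic input beyond algebra is the uniform bound above, which legitimises interchanging $\lim_{\rho\to1}$ with the infinite sum over $\mu$.

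An equivalent route is to pass to the limit directly in the contour formula \eqref{Bernoulli Pcross 1} (or \eqref{Bernoulli Pcross 2}): the factors $1-(1-\rho)z_i$ tend to $1$ (equivalently, the pole at $1-\rho$ merges with the one at the origin and $\oint_{0,1,1-\rho}$ degenerates to $\oint_{0,1}$), $\rho^{m}\to1$, and one checks that the resulting integrand coincides with that of Proposition~\ref{2TASEP crossing det formula statement}, equivalently Proposition~\ref{2TASEP step Pcross formula proposition}, specialised to $\mu=\mu^{*}$ after undoing the symmetrisation. I do not anticipate any genuine obstacle; consistent with the status of a short corollary the work is entirely routine, the only mild care being the bookkeeping of the colliding contours in this second route.
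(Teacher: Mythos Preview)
Your proof is correct. The paper does not supply a proof for this corollary; it merely announces it as ``a simple corollary'' immediately after the integral formula \eqref{Bernoulli Pcross 1}, so the implicit argument is your second route: set $\rho\to 1$ in \eqref{Bernoulli Pcross 1}, whereby the factors $1-(1-\rho)z_i$ become $1$ and the prefactor $\rho^m$ becomes $1$, recovering the specialisation of Proposition~\ref{2TASEP step Pcross formula proposition} to the packed initial data $\mu^{*}$.

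Your primary argument is a genuinely different (and in some ways cleaner) route: rather than manipulating the integral, you use only the probabilistic definition \eqref{Pcross Bernoulli defn}, the normalisation of the Bernoulli measure, and the uniform bound $0\le\Pcross\le 1$ to show directly that the average concentrates on $\mu^{*}$. This has the advantage of being entirely soft---no contour bookkeeping is needed, and the argument would apply verbatim to any bounded observable in place of $\Pcross$. The paper's implicit route, by contrast, is tied to the specific integral representation but has the virtue of immediately handing you an explicit formula for $\PS$.
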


\subsection{The case of one relevant wall}
When $s_1 \leq -m$ this wall can be considered trivial and of no effect as no configurations will be excluded because particles can no longer be to the left of $s_1$. This effect is illustrated in Figure \ref{fig:trivial wall diagram}. In this case the cumulative total crossing probability further simplifies.

First note that we can express the cumulative total crossing probability \eqref{2TASEP Pcross 1} in an alternative form by inverting the integration variables.
\begin{lem}
	\label{Bernoulli Pcross integration inversion}
	Making the change of variables $z_i \mapsto z_i^{-1},w_i \mapsto w_i^{-1}$ in \eqref{Bernoulli Pcross 1} gives the total crossing probability as 
	\begin{multline}
		\label{Bernoulli Pcross 2}
		\PB = \frac{\rho^m}{m!}\oint_{0,1,1-\rho} \prod_{i=1}^m \frac{\dd z_{i}}{2 \pi \ii} \prod_{i \neq j} (z_j-z_i) \prod_{i=1}^m \frac{e^{(z_i-1)t}z_i^{-s_2-m+1}}{(z_i-1)^{n}(z_i-1+\rho)}\\ 
		\times \oint_{0,1} \prod_{i=1}^{n-m} \frac{\dd w_{i}}{2 \pi \ii}  \prod_{i=1}^m\prod_{j=1}^{n-m} (z_i-w_j) \prod_{i=1}^{n-m} \frac{e^{(w_i-1)t}w_i^{-s_1-m}}{(w_i-1)^{n-m-i+1}} \\
		\times \det\left(w_i^{n-m-j} - w_i^{n-m+s_1-s_2-1}\right)_{1\leq i,j \leq n-m}.
	\end{multline}
\end{lem}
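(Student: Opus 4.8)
The plan is to perform the stated substitution directly in \eqref{Bernoulli Pcross 1}, tracking the effect on each factor of the integrand; I would organise this in three steps.

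\textbf{Step 1 (transform the integrand factor by factor).} Substituting $z_i\mapsto z_i^{-1}$ for $1\le i\le m$ and $w_i\mapsto w_i^{-1}$ for $1\le i\le n-m$ in \eqref{Bernoulli Pcross 1}, one has $e^{(z_i^{-1}-1)t}\mapsto e^{(z_i-1)t}$, $z_i^{s_2}\mapsto z_i^{-s_2}$, $(1-z_i)^{-n}\mapsto z_i^{n}(z_i-1)^{-n}$ and $(1-(1-\rho)z_i)^{-1}\mapsto z_i(z_i-1+\rho)^{-1}$, together with the analogous $e^{(w_i^{-1}-1)t}\mapsto e^{(w_i-1)t}$, $w_i^{s_1-i}\mapsto w_i^{i-s_1}$ and $(1-w_i)^{-(n-m-i+1)}\mapsto w_i^{\,n-m-i+1}(w_i-1)^{-(n-m-i+1)}$. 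Clearing denominators in the two difference products gives $\prod_{i\neq j}(z_j-z_i)\mapsto \prod_{i\neq j}(z_j-z_i)\cdot\prod_{i=1}^{m}z_i^{-2(m-1)}$ (after relabelling $i\leftrightarrow j$ in the numerator) and $\prod_{i=1}^{m}\prod_{j=1}^{n-m}(w_j-z_i)\mapsto \prod_{i=1}^{m}\prod_{j=1}^{n-m}(z_i-w_j)\cdot\prod_{i=1}^{m}z_i^{-(n-m)}\prod_{j=1}^{n-m}w_j^{-m}$. Finally, pulling $w_i^{-(n-m-1)}$ out of row $i$ of the determinant turns $\det\left(w_i^{j-1}-w_i^{s_2-s_1}\right)_{1\le i,j\le n-m}$ into $\prod_{i=1}^{n-m}w_i^{-(n-m-1)}\det\left(w_i^{\,n-m-j}-w_i^{\,n-m+s_1-s_2-1}\right)_{1\le i,j\le n-m}$, which is precisely the determinant in \eqref{Bernoulli Pcross 2}.

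\textbf{Step 2 (collect monomials; no sign).} I would then gather, for each variable, all monomial contributions, including the Jacobian factors $\dd z_i\mapsto z_i^{-2}\dd z_i$ and $\dd w_j\mapsto w_j^{-2}\dd w_j$. The total exponent of $z_i$ is $(n+1-s_2)-2-2(m-1)-(n-m)=-s_2-m+1$ and that of $w_i$ is $(n-m+1-s_1)-2-m-(n-m-1)=-s_1-m$, matching \eqref{Bernoulli Pcross 2}. No overall sign survives: the differences are recorded in \eqref{Bernoulli Pcross 2} in precisely the orientation $(z_i-w_j)$, $(z_j-z_i)$ produced by the substitution, the row rescalings of the determinant are by positive scalars, the prefactor $\rho^m/m!$ is inert, and the $-1$ in $\dd(z_i^{-1})=-z_i^{-2}\dd z_i$ is cancelled by the orientation reversal of the corresponding contour treated in Step 3.

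\textbf{Step 3 (contours).} In \eqref{Bernoulli Pcross 1} each $z$- and $w$-contour is a small positively oriented circle about the origin, so that $z_i=1$, $z_i=1/(1-\rho)$ and $w_i=1$ lie outside. Under $z\mapsto z^{-1}$ such a circle maps to a large circle about the origin carrying the negative orientation; restoring positive orientation supplies the sign used in Step 2. The transformed integrand is, in each $z_i$, meromorphic on $\mathbb{C}$ with finite poles only at $z_i=0$, $z_i=1$, $z_i=1-\rho$ (the exponential being entire), and in each $w_i$ only at $w_i=0$, $w_i=1$; hence each large circle contracts, without meeting a singularity, to the contour $\oint_{0,1,1-\rho}$, respectively $\oint_{0,1}$, of \eqref{Bernoulli Pcross 2}. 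Reassembling Steps 1--3 yields \eqref{Bernoulli Pcross 2}. The argument is essentially bookkeeping; the only non-mechanical point is the contour analysis of Step 3 — identifying exactly which poles of the inverted integrand are enclosed and checking that the deformation crosses none of them.
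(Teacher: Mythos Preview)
Your proposal is correct and is precisely the direct substitution the paper has in mind; the paper simply declares the lemma ``straightforward'' without giving details, and your Steps 1--3 supply exactly those details. One minor wording issue: in Step 2 the row rescalings of the determinant are by $w_i^{-(n-m-1)}$, which are complex rather than ``positive'' scalars, but your point---that factoring a scalar from each row introduces no extra sign---stands.
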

\noindent The proof of this lemma is straightforward.

We may now simplify this expression for the crossing probability using our next result. 
\begin{prop}
	\label{2TASEP Pcross integral collapse}
	When $s_1 \leq -m$ the $(n-m)$-fold integral over the $w$-variables in Proposition \ref{Bernoulli Pcross integration inversion} collapses into one integral 
	\begin{multline}
		\label{2TASEP Pcross 3}
		\PB= \frac{\rho^m}{m!} \oint_{0,1,1-\rho} \prod_{i=1}^{m} \frac{\dd z_{i}}{2 \pi \ii} \prod_{i \neq j} \left(z_{j}-z_{i}\right) \prod_{i=1}^{m} \frac{\mathrm{e}^{\left(z_{i}-1\right)t}z_{i}^{-s_2-m+1}}{\left(z_{i}-1\right)^{m+1}(z_i-1+\rho)} \\
		\times \oint_{0} \frac{\dd w}{2 \pi \ii} \frac{\mathrm{e}^{(w-1) t}w^{n-2m-s_2-1}}{(w-1)} \prod_{i=1}^{m}\left(w-z_{i}\right).
	\end{multline}
\end{prop}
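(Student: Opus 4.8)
The plan is to start from the integral representation \eqref{Bernoulli Pcross 2} of $\PB$ and perform the $(n-m)$-fold $w$-integration explicitly. Throughout I write $k=n-m$ and $C=k+s_1-s_2-1$, and set $\tilde h(w)=\prod_{l=1}^m(z_l-w)\,e^{(w-1)t}w^{-s_1-m}$, so that the $w$-integrand of \eqref{Bernoulli Pcross 2} reads $\prod_{i=1}^k\tilde h(w_i)(w_i-1)^{-(k-i+1)}\cdot\det\!\big(w_i^{k-j}-w_i^{C}\big)_{1\le i,j\le k}$. The hypothesis $s_1\le -m$ enters in an essential way here: it forces $-s_1-m\ge 0$, hence $\tilde h$ is an \emph{entire} function of $w$; the non-zero-probability constraint $s_2-s_1\ge k$ likewise forces $C\le -1$, so $\phi(w):=\sum_{p=C}^{-1}w^{p}=\frac{1-w^{C}}{w-1}$ is a Laurent polynomial with a pole only at $w=0$.

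\textbf{Rewriting the determinant and decoupling one row.} Using the telescoping identity $w^{k-j}-w^{C}=(w-1)\sum_{p=C}^{k-1-j}w^{p}$, I would factor $(w_i-1)$ out of the $i$-th row and then apply the column operations $C_j\mapsto C_j-C_{j+1}$ for $j=1,\dots,k-1$, obtaining
\[
\det\!\big(w_i^{k-j}-w_i^{C}\big)_{1\le i,j\le k}
=\prod_{i=1}^k(w_i-1)\cdot\det\!\Big(\big[w_i^{\,k-1-j}\big]_{1\le j\le k-1}\ \Big|\ \big[\phi(w_i)\big]\Big)_{1\le i,j\le k}.
\]
Combining $\prod_i(w_i-1)$ with the prefactor leaves the scalar $\tilde h(w_i)(w_i-1)^{-(k-i)}$ attached to row $i$. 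For $i=k$ this scalar is regular at $w_k=1$, and the only column entry in row $k$ with any singularity is the last one, $\phi(w_k)$; every other row-$k$ entry is entire, so its $\oint_{0,1}$-integral vanishes. Hence in the Laplace expansion only permutations fixing $k$ survive, and the $k$-fold $w$-integral factorizes into the scalar integral $\oint_{0}\phi(w)\tilde h(w)\,\frac{\dd w}{2\pi\ii}$ times the $(k-1)$-fold integral of the complementary minor in $w_1,\dots,w_{k-1}$.

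\textbf{Evaluating the two pieces and assembling.} For the scalar factor, $\phi(w)w^{-s_1-m}=\big(w^{-s_1-m}-w^{\,n-2m-s_2-1}\big)/(w-1)$, and since $-s_1-m\ge0$ the first summand is regular at the origin, so $\oint_{0}\phi(w)\tilde h(w)\,\frac{\dd w}{2\pi\ii}$ equals, up to sign, the single integral $\oint_{0}\frac{\dd w}{2\pi\ii}\frac{e^{(w-1)t}w^{\,n-2m-s_2-1}}{w-1}\prod_{l=1}^m(w-z_l)$ of \eqref{2TASEP Pcross 3} (after $\prod_l(z_l-w)=(-1)^m\prod_l(w-z_l)$). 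For the minor, each $w_i$ ($1\le i\le k-1$) now has a pole only at $w_i=1$, so the $\oint_{0,1}$-integrals collapse to residues at $1$ and the integrated minor becomes $\det\!\big(\res{w=1}\,w^{\,k-1-j}\tilde h(w)(w-1)^{-(k-i)}\big)_{1\le i,j\le k-1}$; setting $w=1+x$ and $g(x):=\tilde h(1+x)$, its $(i,j)$-entry is $[x^{\,k-i-1}](1+x)^{k-1-j}g(x)$, and replacing $(1+x)^{k-1-j}$ by $x^{k-1-j}$ via a triangular column transformation turns the matrix into the upper-triangular Toeplitz matrix $(g_{j-i})_{1\le i,j\le k-1}$ with diagonal $g_0=\tilde h(1)=\prod_{l=1}^m(z_l-1)$. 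Thus the $(k-1)$-fold integral collapses to the constant $\big(\prod_{l=1}^m(z_l-1)\big)^{k-1}$ (the $\pm1$ determinants of the column transformation and of the attendant column reversal cancelling). Substituting both evaluations back into \eqref{Bernoulli Pcross 2}, the factor $\prod_l(z_l-1)^{k-1}$ turns $(z_l-1)^{-n}$ into $(z_l-1)^{-(m+1)}$, the scalar factor supplies the collapsed $w$-integral, and $\prod_{i\ne j}(z_j-z_i)$, $e^{(z_i-1)t}$, $z_i^{-s_2-m+1}$, $(z_i-1+\rho)^{-1}$ are carried along unchanged, producing \eqref{2TASEP Pcross 3}.

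\textbf{Main obstacle.} The crux is the collapse of the $(k-1)$-fold $w$-integral: recognizing that after the triangular column reduction the derivative matrix $\big([x^{k-i-1}](1+x)^{k-1-j}g(x)\big)_{i,j}$ is similar to the triangular Toeplitz matrix $(g_{j-i})_{i,j}$, so that the entire $(k-1)$-fold integral degenerates to the single scalar $g_0^{\,k-1}=\prod_l(z_l-1)^{k-1}$. Showing cleanly that exactly one row ($i=k$) decouples — the point where the staircase of pole orders $k-i$ meets the entireness of $\tilde h$ guaranteed by $s_1\le -m$ — is the other delicate step, and one must track the signs carefully (the $(-1)^m$ from $\prod(z_l-w)$, the cofactor sign, and the various column-reversal signs) to check that they combine to the sign displayed in \eqref{2TASEP Pcross 3}.
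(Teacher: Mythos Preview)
Your argument is correct and follows the same overall strategy as the paper (decouple the variable $w_{n-m}$, then collapse the remaining $(n-m-1)$-fold integral to $\prod_l(z_l-1)^{n-m-1}$), but the technical execution differs in two places. First, to expose the removability of the pole at $w_{n-m}=1$ the paper permutes columns and subtracts the first column from the others, whereas you factor $(w_i-1)$ out of every row at once via the telescoping identity and then take successive column differences; both arrive at the same decoupling conclusion. Second, and more substantively, for the remaining $(n-m-1)$-fold integral the paper evaluates the complementary minor explicitly as $\prod_i(w_i-1)\prod_{i<j}(w_j-w_i)$ and then integrates the variables one at a time, using the Vandermonde factor to successively knock the pole at $w_i=1$ down to a simple pole at each stage. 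You instead pull the integrals inside the determinant and observe that after a unitriangular column change the resulting residue matrix is upper-triangular Toeplitz with diagonal $g_0=\tilde h(1)$, which gives the same product $\prod_l(z_l-1)^{n-m-1}$ in one stroke. The paper's route is perhaps more transparent for tracking signs; your Toeplitz argument is cleaner algebraically and avoids the induction, at the cost of the delicate sign bookkeeping you flag at the end.
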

\begin{figure}
	\begin{center}
		\begin{tikzpicture}[scale=0.70]
		\draw[->] (0,0) -- (8,0);
		\draw[->] (0,0) -- (-6,0);
		\draw[->] (0,0) -- (0,8);
		
		\draw[fill=red,thick] (-3,0) circle (0.3cm); 
		\draw[fill=red,thick] (-2,0) circle (0.3cm); 
		\draw[fill=red,thick] (-1,0) circle (0.3cm);
		\draw[fill=blue,thick] (0,0) circle (0.3cm); 
		\draw[fill=blue,thick] (1,0) circle (0.3cm); 
		\draw[fill=blue,thick] (2,0) circle (0.3cm);  
		
		\draw[thick,red,->] (-3,0) -- (-3,2) -- (2.5,7.5);
		\draw[thick,red,->] (-2,0) -- (-2,1) -- (4.5,7.5);
		\draw[thick,red,->] (-1,0) -- (6.5,7.5);
		\draw[thick,blue,->] (0,0) -- (-3,3) -- (-3,5) -- (-0.5,7.5); 
		\draw[thick,blue,->] (1,0) -- (1,1) -- (-2,4) -- (-2,5) -- (0.5,7.5);
		\draw[thick,blue,->] (2,0) -- (2,2) -- (-1,5) -- (1.5,7.5);  
		
		\draw[green] (-4,0) -- (-4,8);
		
		\node (x) at (8.2,0) {$\nu$};
		\node (t) at (0.2,8) {$t$};
		\node (s1) at (-4.2,4.5) {$s_1$};
		
		\node (2) at (-3,0) {2};
		\node (2) at (-2,0) {2};
		\node (2) at (-1,0) {2};
		\node (1) at (0,0) {1};
		\node (1) at (1,0) {1};
		\node (1) at (2,0) {1};
		\end{tikzpicture}
	\end{center}
	\caption{Space-time diagram total crossing configuration with an irrelevant wall at $s_1\leq -m$.}
	\label{fig:trivial wall diagram}
\end{figure}
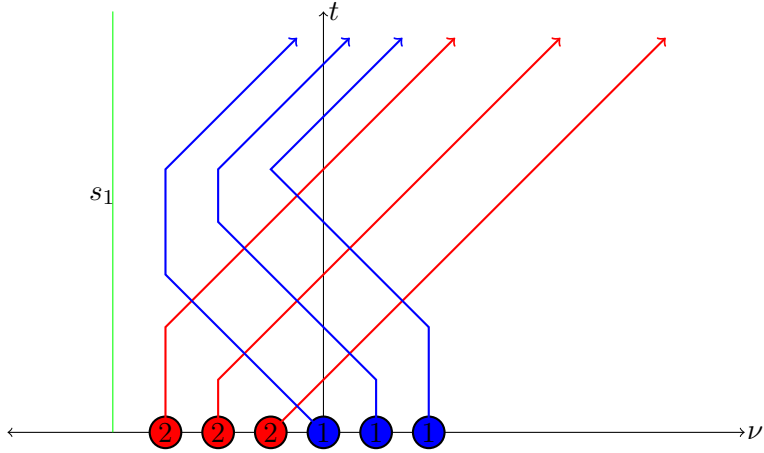

\begin{rmk}
	Consider the case of 1 particle of type 1 by setting $n=m+1$. We take the limit $\rho \to 1$ so that we deal with the step initial condition and impose the restriction $s_1 \leq -m$. By re-labelling the integration variable $w=z_n$, it is observed that the crossing probability has the form
	\begin{equation}
	\label{2TASEP connection with TASEP crossing prob}
	\PS = \Gamma_{n-1}(s_2) - \Gamma_{n}(s_2).
	\end{equation}
	Here $\Gamma_n(s)$ is the probability that the all particles of the $n$-particle single species TASEP with step initial condition ($\mu_i = i$) cross a wall at position $s>n$, which can be obtained by summing the TASEP transition probability \cite{schutz_exact_1997} over configurations where all particles are beyond $s$ and then symmetrizing the result. This TASEP crossing probability is given by
	\begin{equation}
	\Gamma_{n}(s) = \frac{1}{n!}\oint_{0,1}\prod_{i=1}^n \frac{\dd z_i}{2 \pi \ii} \prod_{i=1}^n \frac{e^{(z_i-1)t}z_i^{1-s}}{(z_i-1)^{n}} \prod_{i \neq j} (z_j-z_i).
	\end{equation}
	Under these restrictions with an appropriate asymptotic analysis on $\Gamma_{n}(s)$, the Tracy-Widom for the GUE, $F_2$, emerges. In the scaling limit we should recover the results of Nejjar \cite{nejjar_kpz_2020} from \eqref{2TASEP connection with TASEP crossing prob}.
\end{rmk}

\begin{proof}[Proof of Proposition \ref{2TASEP Pcross integral collapse}]
	Starting with \eqref{Bernoulli Pcross 2} with $s_1 \leq -m$, we permute the columns in the determinant by mapping $j\mapsto n-m-j+1$ which has signature $(-1)^{(n-m)(n-m-1)/2}$
	\begin{multline}
	\det\left(w_i^{n-m-j} - w_i^{n-m+s_1-s_2-1}\right)_{1\leq i,j \leq n-m} \\
	= (-1)^{(n-m)(n-m-1)/2} \det\left(w_i^{j-1} - w_i^{n-m+s_1-s_2-1}\right)_{1\leq i,j \leq n-m}.
	\end{multline}
	We then notice that the singularity at $w_{n-m} = 1$ is removable due to a zero in the determinant. This can be seen by performing the column operations $C_i \mapsto C_i - C_1$ for $i=2,\dots,n-m$ on the determinant as follows
	\begin{multline*}
		\det\left(w_i^{j-1} - w_i^{n-m+s_1-s_2-1}\right)_{1\leq i,j \leq n-m} \\ = \Scale[0.9]{\left|
			\begin{array}{c c c c}
				1 - w_1^{n-m+s_1-s_2-1} & w_1 - w_1^{n-m+s_1-s_2-1} & \cdots & w_1^{n-m-1} - w_1^{n-m+s_1-s_2-1} \\
				\vdots & \vdots & & \vdots \\
				1 - w_{n-m}^{n-m+s_1-s_2-1} & w_{n-m} - w_{n-m}^{n-m+s_1-s_2-1} & \cdots & w_{n-m}^{n-m-1} - w_{n-m}^{n-m+s_1-s_2-1} \\
			\end{array}\right|} \\
		= \left|
		\begin{array}{c c c c}
			1 - w_1^{n-m+s_1-s_2-1} & w_1 -1 & \cdots & w_1^{n-m-1} - 1 \\
			\vdots & \vdots & & \vdots \\
			1 - w_{n-m}^{n-m+s_1-s_2-1} & w_{n-m} - 1 & \cdots & w_{n-m}^{n-m-1} - 1 \\
		\end{array}\right|,
	\end{multline*}
	where is becomes clear that $w_{n-m} = 1$ is a root of every entry along the bottom row of the determinant which cancels the $(1-w_{n-m})^{-1}$ factor in the integrand. Proceeding to expand the determinant along the bottom row gives
	\begin{multline}
		\label{2TASEP Pcross 4}
		\det\left(w_i^{j-1} - w_i^{n-m+s_1-s_2-1}\right)_{1\leq i,j \leq n-m} \\ 
		= (-1)^{n-m-1} \left(1 - w_{n-m}^{n-m+s_1-s_2-1} \right) \det(w_i^j-1)_{1 \leq i,j \leq n-m-1} \\
		+ \sum_{k=2}^{n-m} \left(w_{n-m}^{k-1}-1\right)(-1)^{n-m-k} Q_k(w_1, \dots, w_{n-m-1}),
	\end{multline}
	for some rational functions $Q_k$, none of which depend on $w_{n-m}$. We then notice that the only term in \eqref{2TASEP Pcross 4} which gives a singularity at $w_{n-m} = 0$ is the first one since $s_2-s_1>n-m$.
	
	The remaining determinant may be evaluated as
	\begin{equation}
	\det(w_i^j-1)_{1 \leq i,j \leq n-m-1} = \prod_{i=1}^{n-m-1} (w_i-1) \prod_{1 \leq i < j \leq n-m-1} (w_j-w_i), 
	\end{equation}
	where we have used column operations and the identity for $\ell\in \N$ 
	\[\zeta^\ell-1 = (\zeta-1) \sum_{k=0}^{\ell-1}\zeta^k.\]
	It follows that \eqref{Bernoulli Pcross 2} becomes
	\begin{multline}
		\label{2TASEP Pcross 5}
		\PB = (-1)^{(n-m)(n-m+1)/2} \frac{\rho^m}{m!} \oint_{0,1,1-\rho} \prod_{i=1}^m \frac{\dd z_{i}}{2 \pi \ii} \prod_{i \neq j} (z_j-z_i) \prod_{i=1}^m \frac{e^{(z_i-1)t} z_i^{-s_2-m+1}}{(z_i-1)^{n}(z_i-1+\rho)}\\ 
		\times \oint_{1} \prod_{i=1}^{n-m-1} \frac{\dd w_{i}}{2 \pi \ii} \oint_{0} \frac{\dd w_{n-m}}{2 \pi \ii} w_{n-m}^{n-m+s_1-s_2-1}\prod_{i=1}^m\prod_{j=1}^{n-m} (w_j-z_i) 
		\prod_{i=1}^{n-m} \frac{ e^{(w_i-1)t}w_i^{-s_1-m}}{(w_i-1)^{n-m-i+1}} \\
		\times \prod_{i=1}^{n-m-1}(w_i-1)\prod_{1 \leq i < j \leq n-m-1} (w_j-w_i).
	\end{multline}
	The contour in $w_i$ for $1\leq i \leq n-m-1$ has been deformed so that it only surrounds the singularity at $w_i=1$ since the integrand is analytic at $w_i = 0$ as $s_1 \leq -m$.
	
	From here we are able to successively evaluate the $w$-integrals in \eqref{2TASEP Pcross 5}, starting with $w_{n-m-1}$ where there is a simple pole at $w_{n-m-1}=1$. The Vandermonde factor $\prod_{1 \leq i < j \leq n-m-1} (w_j-w_i)$ ensures that integrating in $w_{n-m-k}$ reduces the order of the pole at $w_{n-m-k-1}=1$ so that each integration in $w_i$ for $1\leq i \leq n-m-1$ is only about a simple pole at $w_i =1$. Performing these successive integrals and accounting for the sign yields the result \eqref{2TASEP Pcross 3}. 
\end{proof}

\begin{cor}
	\label{cor:cumtotalcross}
	When $s_1 \leq -m$ we have 
	\begin{equation}
	\PB = \rho^m \oint_{0}\frac{\dd w}{2\pi\ii} \frac{e^{(w-1)t}w^{n-2m-s_2-1}}{w-1} 
	\det(\oint_{0,1,1-\rho}\frac{\dd z}{2\pi\ii} \frac{e^{(z-1)t}z^{i+j-s_2-m-1}}{(z-1)^{m+1}(z-1+\rho)}(w-z))_{1\leq i,j \leq m}.
	\end{equation}
\end{cor}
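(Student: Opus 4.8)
The plan is to recognise the $m$-fold $z$-integral in \eqref{2TASEP Pcross 3} as a single instance of the Andréief (Heine--Gram) identity
\[
\frac{1}{N!}\oint\!\cdots\!\oint\ \prod_{k=1}^{N}\frac{\dd z_k}{2\pi\ii}\,\omega(z_k)\ \det\!\big[z_j^{\,i-1}\big]_{1\le i,j\le N}\det\!\big[z_j^{\,i-1}\big]_{1\le i,j\le N}
=
\det\!\Big[\oint\frac{\dd z}{2\pi\ii}\,z^{\,i+j-2}\,\omega(z)\Big]_{1\le i,j\le N},
\]
which is purely algebraic — one expands both Vandermonde determinants and integrates term by term — and hence valid for contour integrals with any common contour and any one-variable weight $\omega$.

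First I would interchange the order of integration in \eqref{2TASEP Pcross 3}, moving the single $w$-integral to the outside of the $m$-fold $z$-integral; this is harmless since $w$ and the $z_i$ run over fixed disjoint cycles, $z=w$ is a zero (not a pole) of the integrand thanks to the factor $\prod_i(w-z_i)$, and for fixed $w$ the $z$-integrand is a rational function times $e^{(z-1)t}$ whose only singularities sit at $z=0,1,1-\rho$, all enclosed by $\oint_{0,1,1-\rho}$. With $w$ frozen as a parameter, I would absorb the $w$-dependent factor into the one-variable weight, setting
\[
\omega_w(z):=\frac{e^{(z-1)t}\,z^{-s_2-m+1}}{(z-1)^{m+1}(z-1+\rho)}\,(w-z),
\]
so that the $z$-dependence of the integrand in \eqref{2TASEP Pcross 3} is $\prod_{i\ne j}(z_j-z_i)\,\prod_{i=1}^{m}\omega_w(z_i)$. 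The key step is to factorise the ordered-pair product as a square of Vandermonde determinants,
\[
\prod_{i\ne j}(z_j-z_i)=(-1)^{\binom{m}{2}}\prod_{1\le i<j\le m}(z_j-z_i)^2=(-1)^{\binom{m}{2}}\det\!\big[z_j^{\,i-1}\big]_{1\le i,j\le m}\det\!\big[z_j^{\,i-1}\big]_{1\le i,j\le m},
\]
which puts the $z$-integrand into exactly the Andréief form above (with $N=m$, $\omega=\omega_w$), up to the global constant $(-1)^{\binom{m}{2}}$.

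Applying the identity then collapses the $z$-integral into $\det\big[\oint_{0,1,1-\rho}\frac{\dd z}{2\pi\ii}\,z^{\,i+j-2}\,\omega_w(z)\big]_{1\le i,j\le m}$, the prefactor $1/m!$ from \eqref{2TASEP Pcross 3} being precisely consumed in the process. Since $z^{\,i+j-2}\omega_w(z)=\dfrac{e^{(z-1)t}\,z^{\,i+j-s_2-m-1}}{(z-1)^{m+1}(z-1+\rho)}\,(w-z)$, this is exactly the inner determinant of the statement; combining it with the $w$-integrand $\dfrac{e^{(w-1)t}\,w^{\,n-2m-s_2-1}}{w-1}$ inherited unchanged from \eqref{2TASEP Pcross 3} and the factor $\rho^m$ yields the claimed formula. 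I expect essentially no conceptual obstacle: the substance is the one-line application of Andréief, and the only thing needing care is the elementary bookkeeping of the overall sign $(-1)^{\binom{m}{2}}$ produced when the ordered product is rewritten as a square (together with tracking the order $m+1$ of the pole at $z=1$ through the weight), which one verifies against the stated prefactor.
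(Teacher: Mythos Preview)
Your approach is essentially the same as the paper's. The paper's proof is one sentence: factor $\prod_{i\neq j}(z_j-z_i)$ as a product of two Vandermonde determinants and apply the Cauchy--Binet identity; your Andr\'eief/Heine--Gram identity is precisely the integral form of Cauchy--Binet, and the steps you spell out (freeze $w$, absorb the symmetric weight, recognise the squared Vandermonde, collapse the $m$-fold integral into an $m\times m$ determinant of single integrals) are exactly what that sentence is shorthand for. Your flag about the residual sign $(-1)^{\binom{m}{2}}$ is appropriate: it does not cancel against anything in \eqref{2TASEP Pcross 3}, and the stated Hankel-type determinant indeed carries it; the paper simply suppresses this bookkeeping.
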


\begin{proof}
	Starting with \eqref{2TASEP Pcross 3}, we factor $\prod_{i\neq j}(z_j-z_i)$ into the product of two Vandermonde determinants. The result then follows by applying the Cauchy-Binet identity.
\end{proof}

\appendix

\section{Proof of nested Bethe ansatz formula for the transition probability}
\label{2TASEP Green's function proof section}
This section presents a proof of Theorem \ref{2TASEP Green's function Theorem} which was deferred from Section \ref{2TASEP integral formula section}.
\begin{proof}[Proof of Theorem \ref{2TASEP Green's function Theorem}]
	To prove the theorem we show the Green's function \eqref{2TASEP Greens with P} satisfies the following:
	\begin{enumerate}[label=\textbf{\roman*.},wide=0pt, leftmargin=\parindent]
		\item The free evolution \eqref{2TASEP master equation}.
		\item The boundary conditions \eqref{2TASEP BC1},\eqref{2TASEP BC2} and \eqref{2TASEP BC3}.
		\item The initial condition \eqref{2TASEP initial condition}.
	\end{enumerate}
	In what follows we let the 2-TASEP eigenfunction from \eqref{2TASEP_P} take the form
	\begin{equation}
	\label{2TASEP_P_ABC}
	P(\nu,p;t) = \sum_{\pi \in S_n} A_\pi B_\pi(p) \prod_{i=1}^n z_{\pi_i}^{\nu_i}e^{(z_i^{-1}-1)t} \sum_{\sigma \in S_m} C_{\pi,\sigma}(p),
	\end{equation}
	where 
	\begin{align*}
		A_\pi & = (-1)^{\abs{\pi}} \prod_{i=1}^n \left(\frac{1-z_i}{1-z_{\pi_i}} \right)^i,\\
		B_\pi(p) & = \prod_{i=1}^m \prod_{j=1}^{p_i} \frac{1}{1-z_{\pi_j}},\\
		C_{\pi,\sigma}(p) & = (-1)^{\abs{\sigma}} \prod_{i=1}^m \left(\frac{1-u_i}{1-u_{\sigma_i}} \right)^i  \prod_{i=1}^m \prod_{j=1}^{p_i-1} (u_{\sigma_i} - z_{\pi_j}).
	\end{align*}
	We prove here that the eigenfunction $\eqref{2TASEP_P_ABC}$ satisfies the free evolution and the boundary conditions, which automatically imply the the transition probability \eqref{2TASEP Greens with P} also satisfies them.
	
	We also recall that $s_i$ is a simple transposition which swaps $i$ and $i+1$ while leaving all other indices fixed. The signature of the permutation $\pi$ satisfies $(-1)^{\abs{\pi  s_i}}  = -(-1)^{\abs{\pi}}$. 
	
	\medskip\noindent\textbf{i. Free evolution \eqref{2TASEP master equation}:} 
	The only $t$-dependence in the transition probability is from is from the exponential factor in \eqref{2TASEP_P_ABC} so that the integrand of \eqref{2TASEP Greens with P} is differentiable in $t$. The result follows from a straight forward expansion.
	
	\medskip\noindent \textbf{ii. Boundary conditions}\\
	\textbf{Boundary condition \eqref{2TASEP BC1}:}
	Let $l$ be fixed such that $l \in p$ and $l+1 \notin p$. For fixed $\pi \in S_n$ with $\nu_{l+1}=\nu_l$ we note that in \eqref{2TASEP_P_ABC} $\nu_l$ appears only as the exponent for $z_{\pi_l},z_{\pi_{l+1}}$. 
	
	We note here that for all $\sigma \in S_m$, $C_{\pi,\sigma}(p)$ is invariant under swapping $z_{\pi_l}$ and $z_{\pi_{l+1}}$. However this is not true of $B_\pi(p)$, where there is for an extra factor of $(1-z_{\pi_l})^{-1}$. So we consider
	\begin{equation}
	\label{2TASEP BC1 proof_1}
	\left(\frac{1}{1-z_{\pi_l}}\right)^l \left(\frac{1}{1-z_{\pi_{l+1}}}\right)^{l+1} z_{\pi_l}^{\nu_l} z_{\pi_{l+1}}^{\nu_l} \frac{1}{1-z_{\pi_l}} = \left(\frac{1}{1-z_{\pi_l}} \frac{1}{1-z_{\pi_{l+1}}}\right)^{l+1} (z_{\pi_l} z_{\pi_{l+1}})^{\nu_l} 
	\end{equation}
	where the first two factors give the dependence of $A_\pi$ on $z_{\pi_l},z_{\pi_{l+1}}$, and the $(1-z_{\pi_l})^{-1}$ factor comes from $B_\pi(p)$. The right-hand side of \eqref{2TASEP BC1 proof_1} is then invariant under swapping $z_{\pi_l}$ and $z_{\pi_{l+1}}$. Then given the dependence on the signature of the permutation, the term associated with $\pi$ and the one associated with $\pi s_l$ cancel and the result follows.
	
	\medskip\noindent\textbf{Boundary condition \eqref{2TASEP BC2}:} Let $l$ be fixed with $l \notin p$ and $l+1 \in p$. We also let $p' = p - \{l+1\} + \{l\}$. We then consider the appropriate combination of eigenfunctions which factorize as
	\begin{multline}
		\label{2TASEP BC2 proof_1}
		P(\{\nu_1, \dots, \nu_l,\nu_{l+1} = \nu_l , \dots , \nu_n\}, p;t) - P(\{\nu_1, \dots, \nu_l,\nu_{l+1} = \nu_l +1 , \dots , \nu_n\}, p;t) \\
		- P(\{\nu_1, \dots, \nu_l,\nu_{l+1} = \nu_l +1 , \dots , \nu_n\}, p';t) \\ = \sum_{\pi \in S_n} A_\pi B_\pi(p') \prod_{i=1}^n z_{\pi_i}^{\nu_i} \sum_{\sigma \in S_m} C_{\pi,\sigma}(p') 
		\left[\frac{u_{\sigma_r} - z_{\pi_l}}{1-z_{\pi_{l+1}}} - \frac{z_{\pi_{l+1}}(u_{\sigma_r} - z_{\pi_l})}{1-z_{\pi_{l+1}}} - z_{\pi_{l+1}}\right],
	\end{multline}
	where $r$ is the integer such that $p_r = l+1$ and $p'_r = l$. For each $\pi \in S_n, \sigma \in S_m$ the factor in square brackets in \eqref{2TASEP BC2 proof_1} simplifies to be invariant under swapping $z_{\pi_l}$ and $z_{\pi_{l+1}}$
	\begin{equation*}
		\frac{u_{\sigma_r} - z_{\pi_l}}{1-z_{\pi_{l+1}}} - \frac{z_{\pi_{l+1}}(u_{\sigma_r} - z_{\pi_l})}{1-z_{\pi_{l+1}}} - z_{\pi_{l+1}} = u_{\sigma_r} - z_{\pi_l} - z_{\pi_{l+1}}.
	\end{equation*}
	For the term in \eqref{2TASEP BC2 proof_1} corresponding to $\pi \in S_n, \sigma \in S_m$, we only need to consider
	\begin{multline}
		\left(\frac{1}{1-z_{\pi_l}}\right)^l \left(\frac{1}{1-z_{\pi_{l+1}}}\right)^{l+1} z_{\pi_l}^{\nu_l} z_{\pi_{l+1}}^{\nu_l} \frac{1}{1-z_{\pi_l}} (u_{\sigma_r} - z_{\pi_l} - z_{\pi_{l+1}}) \\ = \left(\frac{1}{1-z_{\pi_l}} \frac{1}{1-z_{\pi_{l+1}}}\right)^{l+1} (z_{\pi_l} z_{\pi_{l+1}})^{\nu_l} (u_{\sigma_r} - z_{\pi_l} - z_{\pi_{l+1}})
	\end{multline}
	where the first factors are from the $z_{\pi_l},z_{\pi_{l+1}}$-dependence in $A_\pi \prod_{i} z_{\pi_i}^{\nu_i}$. The $(1-z_{\pi_l})^{-1}$ factor comes from $B_\pi(p')$, as it is the only unpaired factor involving $z_{\pi_l}$ and $z_{\pi_{l+1}}$. The right-hand side of the above is invariant under swapping $z_{\pi_l}$ and $z_{\pi_{l+1}}$, and so the condition is satisfied once the permutation signature of $\pi s_l$ is taken into account.
	
	\medskip\noindent\textbf{Boundary condition \eqref{2TASEP BC3}:}
	Let $l$ be fixed so that either $l,l+1 \in p$ or $l,l+1 \notin p$. We then consider the boundary condition as
	\begin{multline}
		P(\{\nu_1,\dots,\nu_l,\nu_{l+1}=\nu_l,\dots,\nu_n\},p;t)- P(\{\nu_1,\dots,\nu_l,\nu_{l+1}=\nu_l+1,\dots,\nu_n\},p;t) \\
		= \sum_{\pi \in S_n} A_\pi  B_\pi(p) (1-z_{\pi_{l+1}}) \prod_{i=1}^n z_{\pi_i}^{\nu_i} \sum_{\sigma \in S_m}  C_{\pi,\sigma}(p),
	\end{multline}
	where for all $\pi \in S_n,\sigma \in S_m$, $B_\pi(p)$ and $C_{\pi,\sigma}(p)$ are then equal for both terms on the left-hand side above. So for some $\pi \in S_n$, we consider the dependence from $A_\pi \prod_{i} z_{\pi_i}^{\nu_i}$:
	\begin{equation}
	\label{2TASEP BC3 proof_1}
	\left(\frac{1}{1-z_{\pi_l}}\right)^l \left(\frac{1}{1-z_{\pi_{l+1}}}\right)^{l+1} z_{\pi_l}^{\nu_l} z_{\pi_{l+1}}^{\nu_l} (1-z_{\pi_{l+1}}) = \left(\frac{1}{1-z_{\pi_l}} \frac{1}{1-z_{\pi_{l+1}}}\right)^{l} (z_{\pi_l} z_{\pi_{l+1}})^{\nu_l} 
	\end{equation} 
	which is invariant under swapping $z_{\pi_l}$ and $z_{\pi_{l+1}}$, so the result follows when also considering the permutation signature of $\pi s_l$.
	
	\medskip\noindent\textbf{iii. Initial condition \eqref{2TASEP initial condition}:}
	Here we denote the group identity in the symmetric group $S_n$ by $\id_n$. For $\pi \in S_n,\sigma \in S_m$ let us define the integral
	\begin{equation}
	\label{2TASEP inital condition integrand}
	I(\pi,\sigma) = \oint \prod_{i=1}^m \frac{\dd u_i}{2 \pi \ii}  \prod_{i=1}^n \frac{\dd z_i}{2 \pi \ii}   A_\pi B_\pi(p) C_{\pi,\sigma}(p) \prod_{i=1}^n z_{\pi_i}^{\nu_i} z_i^{-\mu_i -1}  \prod_{i=1}^m (1-z_{i})^{m-i+1} \prod_{j=1}^{\inp_i} \frac{1}{u_i - z_j},
	\end{equation}
	so that the initial transition probability \eqref{2TASEP Greens with P} becomes
	\begin{equation}
	\mathbb{P}(\mu\to\nu,\inp\to p;0) = \sum_{\pi \in S_n} \sum_{\sigma \in S_m} I(\pi,\sigma).
	\end{equation}
	We note that when $t=0$ there are no longer essential singularities at $z_i=0$ in the integrand, so we may evaluate the transition probability simply by calculating residues at the origin. To proceed we first consider the following result.
	\begin{lem}
		\label{2TASEP Initial condition Lemma 1 }
		For $\pi \in S_n, \sigma \in S_m$ where either $\pi \neq \id_n$ or $\nu \neq \mu$, then $I(\pi, \sigma) = 0$ for any set of coordinates within the standard regime.
	\end{lem}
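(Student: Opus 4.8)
The plan is to evaluate $I(\pi,\sigma)$ by performing the $z$-integrations first and reading off residues at the origin. At $t=0$ the exponential factor $e^{(z_i^{-1}-1)t}$ appearing in \eqref{2TASEP inital condition integrand} equals $1$, so the essential singularity at each $z_i=0$ disappears and the integrand is a rational function of every $z_j$. Fix $u_1,\dots,u_m$ on their contours; since these contours enclose all of the $z$-contours, each $u_i$ lies \emph{outside} the $z_j$-contour. The only poles of the integrand in the variable $z_j$ are at $z_j=1$ (produced by $A_\pi$, $B_\pi(p)$ and the factor $(1-z_i)^{m-i+1}$) and at $z_j=u_i$ (produced by the factor $1/(u_i-z_j)$), and all of these lie outside the $z_j$-contour. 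Hence $\oint \frac{\dd z_j}{2\pi\ii}(\cdots)=\res{z_j=0}(\cdots)$ for each $j$, for any fixed admissible position of the $u$-variables.

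The next step is a simple monomial count in each $z_j$. From $\prod_{i=1}^n z_{\pi_i}^{\nu_i}$ the variable $z_j$ carries exponent $\nu_{\pi^{-1}(j)}$, and from $\prod_{i=1}^n z_i^{-\mu_i-1}$ it carries exponent $-\mu_j-1$; all of the remaining factors $A_\pi$, $B_\pi(p)$, $C_{\pi,\sigma}(p)$, $\prod_{i=1}^m(1-z_i)^{m-i+1}$ and $\prod 1/(u_i-z_j)$ are analytic at $z_j=0$ (using only $1\neq 0$ and $u_i\neq 0$). Thus near $z_j=0$ the integrand equals $z_j^{\,\nu_{\pi^{-1}(j)}-\mu_j-1}$ times a function holomorphic at $z_j=0$, so its residue there vanishes unless $\nu_{\pi^{-1}(j)}-\mu_j-1\leq -1$, i.e.\ unless $\nu_{\pi^{-1}(j)}\leq\mu_j$. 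Consequently, if $\nu_{\pi^{-1}(j_0)}>\mu_{j_0}$ for even a single index $j_0$, the inner $z_{j_0}$-integral vanishes identically and $I(\pi,\sigma)=0$.

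It remains to translate this into the statement of the lemma. Suppose $I(\pi,\sigma)\neq 0$. Then $\nu_{\pi^{-1}(j)}\leq\mu_j$ for every $j$; reindexing by $k=\pi^{-1}(j)$ gives $\nu_k\leq\mu_{\pi_k}$ for all $k$. Combining this with the standard-regime inequalities $\mu_k\leq\nu_k$ yields $\mu_k\leq\mu_{\pi_k}$ for all $k$, and since $\mu_1<\cdots<\mu_n$ this forces $k\leq\pi_k$ for all $k$. A permutation of $\{1,\dots,n\}$ with $\pi_k\geq k$ for every $k$ is necessarily the identity (compare $\sum_k\pi_k$ with $\sum_k k$), so $\pi=\id_n$; and then $\mu_k\leq\nu_k\leq\mu_{\pi_k}=\mu_k$ gives $\nu=\mu$. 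Hence $I(\pi,\sigma)\neq 0$ implies $\pi=\id_n$ and $\nu=\mu$, which is exactly the contrapositive of the claim. The argument is uniform in $\sigma$ and does not require carrying out the $u$-integrations at all; the only mild obstacle is the bookkeeping — verifying that every factor other than the two monomials $z_j^{\nu_{\pi^{-1}(j)}}$ and $z_j^{-\mu_j-1}$ is regular at $z_j=0$, and that the $u$-contours genuinely exclude the points $z_j=u_i$ — after which the residue count together with the elementary fact about non-decreasing permutations completes the proof.
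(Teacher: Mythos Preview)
Your proof is correct and uses essentially the same mechanism as the paper's: at $t=0$ the only pole of the integrand inside the $z_j$-contour is at the origin, and the local exponent $\nu_{\pi^{-1}(j)}-\mu_j-1$ governs whether that residue can be nonzero. The paper packages this as an induction, integrating out $z_1,z_2,\dots$ in order and deducing step by step that $\pi_h=h$ and $\nu_h=\mu_h$; you instead collect the inequalities $\nu_k\le\mu_{\pi_k}$ for all $k$ in one shot and finish with the elementary fact that a permutation with $\pi_k\ge k$ for all $k$ must be the identity. Your route is slightly slicker in that it avoids the inductive bookkeeping and the need to track the limits $z_1,\dots,z_{h-1}\to 0$ of the analytic factors, at the cost of invoking Fubini to choose which $z_j$ to integrate first; both arguments rest on exactly the same regularity check (that $A_\pi$, $B_\pi(p)$, $C_{\pi,\sigma}(p)$, $(1-z_i)^{m-i+1}$ and $1/(u_i-z_j)$ are all holomorphic at $z_j=0$).
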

	\begin{proof}[Proof of Lemma \ref{2TASEP Initial condition Lemma 1 }.]
		We proceed by induction on the index in the permutation $\pi$. First we consider $\pi \in S_n$ where $\pi_1 = 1$. Then by integrating $z_1$ first in equation \eqref{2TASEP inital condition integrand}, and noting that $A_\pi, B_\pi(p), C_{\pi,\sigma}(p)$ and $\prod_{i=1}^m (1-z_{i})^{m-i+1} \prod_{j=1}^{\inp_i} \frac{1}{u_i - z_j}$ are rational functions in $z_1$ and analytic at $z_1 = 0$, the only contribution to the residue at $z_1 = 0$ is from the factor 
		\[z_1^{\nu_k-\mu_1 -1},\]
		where $k$ is defined as the index satisfying $\pi_k = 1$. In the standard regime we must have $\nu_k \geq \mu_k \geq \mu_1$, where $\mu_k = \mu_1$ if and only if $k=1$. So there is only a contribution to the residue at $z_1 = 0$ if $\nu_1 = \mu_1$ and $\pi_1 = 1$.
		
		Then consider some $\pi \in S_n$ such that $\pi_i = i$ for all $1 \leq i < h$ for some index $h\leq n$. We assume that $I(\pi,\sigma) = 0$ if $\nu_i \neq \mu_i$ for all $1 \leq i < h$. This amounts to expressing the integral \eqref{2TASEP inital condition integrand} in the form
		\begin{equation}
		I(\pi,\sigma) = \prod_{i=1}^{h-1} \delta_{\nu_i,\mu_i} \oint \prod_{i=1}^m \frac{\dd u_i}{2 \pi \ii}  \prod_{i=h}^n \frac{\dd z_i}{2 \pi \ii} D_{\pi,\sigma}^{h-1} \prod_{i=h}^n z_{\pi_i}^{\nu_i} z_i^{\mu_i -1},
		\end{equation}
		where we have performed the integration in the variables $z_1,\dots, z_{h-1}$. Where also
		\[D_{\pi,\sigma}^{h-1} = \lim_{z_{h-1} \rightarrow 0} \dots \lim_{z_{1}\rightarrow 0} A_\pi B_\pi(p) C_{\pi,\sigma}(p) \prod_{j=1}^m (1-z_{j})^{m-j+1} \prod_{k=1}^{\inp_j} \frac{1}{u_j - z_k},  \]
		which comes from evaluating the residues. We note here that $D_{\pi,\sigma}^{h-1}$ is an analytic function in $z_h$ at $z_h = 0$, so that the only contribution to the residue at $z_r = 0$ is from the factor $z_{\pi_k}^{\nu_k} z_h^{-\mu_h -1}$, where $\pi_k = h$. But since $k \geq h$, in the standard regime $\nu_k \geq \nu_h \geq\mu_h$, so we only have a non-zero residue if $k=h$ and $\nu_h = \mu_h$. We conclude if $\pi_h \neq h$ and $\nu_h \neq \mu_h$ then $I(\pi, \sigma) = 0$, so that by induction we arrive at the result. 
	\end{proof} 
	Lemma \ref{2TASEP Initial condition Lemma 1 } implies that 
	\begin{equation}
	\sum_{\pi \in S_n\backslash\{\id_n\}} \sum_{\sigma \in S_m} I(\pi,\sigma) = 0.
	\end{equation}
	Then by performing all the $z$-integrals in $I(\id_n,\sigma)$, we see that 
	\begin{equation}
	\label{2TASEP initial condition proof integral}
	I(\id_n,\sigma) = \prod_{i=1}^{n} \delta_{\nu_i,\mu_i} \oint \prod_{i=1}^m \frac{\dd u_i}{2 \pi \ii} (-1)^{\abs{\sigma}} \prod_{i=1}^{m} \left(\frac{1-u_i}{1-u_{\sigma_i}}\right)^i  u_{\sigma_i}^{p_i} u_i^{-\inp_i-1},
	\end{equation}
	where the $\prod_i u_{\sigma_i}^{p_i} u_i^{-\inp_i-1}$ factors come from evaluating the $z$-residues. Then the integral in \eqref{2TASEP initial condition proof integral} can be identified with the transition probability for the single species TASEP with $m$ particles at time $t=0$ from \cite{tracy_integral_2008}. This TASEP transition probability satisfies a similar delta function initial condition so that with Lemma \ref{2TASEP Initial condition Lemma 1 } we have the desired result
	\begin{equation}
	\mathbb{P}(\mu\to\nu,\inp\to p;0) = \prod_{i=1}^n \delta_{\nu_i,\mu_i}\sum_{\sigma \in S_m} I(\id_n,\sigma) = \prod_{i=1}^{n} \delta_{\nu_i,\mu_i} \prod_{j=1}^m \delta_{p_j,\inp_j}.
	\end{equation}
\end{proof}

\section{Proof of Lemma \ref{2TASEP u_i = z_i residue lemma}}
\label{u_i=z_i proof section}
\begin{proof}
	First we note that for the initial conditions $\inp_i=i$ we have the factor in the transition probability \eqref{2TASEP Greens with P}
	\[\prod_{i=1}^m \prod_{j=1}^{\inp_i} \frac{1}{u_i - z_j} = \prod_{1\leq j \leq i \leq m}\frac{1}{u_i - z_j}.\]
	We then proceed by induction on the index $l$ by performing the integration in $u_l$. The integration is performed in the order $u_
	1,\dots,u_m$. For the base case $l=1$, we perform the $u_1$-integration first and note that in the transition probability integrand we have the factor $\prod_{i=1}^m(u_1-z_i)^{-1}$. We show here that the singularities at $u_i = z_1$ for $i>1$ are removable.
	
	We perform the integration in $u_1$ for the term associated with $\pi\in S_n,\sigma \in S_m$ where for simplicity of notation we set $\tau = \sigma^{-1}$. For some $1<k\leq m$, the $u_1,u_k$-dependence of this term becomes 
	\begin{multline}
		\label{2TASEP u_i=z_i lemma base case 1}
		\oint_{z_1} \frac{\dd u_1}{2\pi\ii} (1-u_1)^{1-\tau_1}(1-u_k)^{k-\tau_k} \prod_{j=1}^{p_{\tau_1}-1}(u_1-z_{\pi_j}) \prod_{j=1}^{p_{\tau_k}-1}(u_k-z_{\pi_j}) \frac{1}{u_1-z_1} \prod_{j=1}^k\frac{1}{u_k-z_j}\\= (1-z_1)^{1-\tau_1}(1-u_k)^{k-\tau_k}\prod_{j=1}^{p_{\tau_1}-1}(z_1-z_{\pi_j}) \prod_{j=1}^{p_{\tau_k}-1}(u_k-z_{\pi_j})\prod_{j=1}^k\frac{1}{u_k-z_j}.
	\end{multline}
	For this permutation $\sigma$ we associate the permutation $\sigma T_{1,k}$, where $T_{1,k}\in S_m$ swaps 1 and $k$ and leaves all other indices fixed. Without loss of generality we are able to choose $\sigma$ such that $\sigma^{-1}_1 > \sigma^{-1}_k$, where we must then consider the cases of $\left(\sigma T_{1,k}\right)^{-1}_1 < \left(\sigma T_{1,k}\right)^{-1}_k$, and later its converse. After noting that $(-1)^\abs{\sigma T_{1,k}}=-(-1)^\abs{\sigma}$ we add the term associated with $\sigma$ form \eqref{2TASEP u_i=z_i lemma base case 1} and the one associated with $\sigma T_{1,k}$ which factors as
	\begin{multline}
		\label{2TASEP u_i=z_i lemma base case 2}
		(1-z_1)^{1-\tau_1-\tau_k}(1-u_k)^{k-\tau_1-\tau_k} \prod_{j=1}^{p_{\tau_k}-1}\left[(z_1-z_{\pi_j})(u_k-z_{\pi_j})\right] \prod_{j=1}^k\frac{1}{u_k-z_j} \\
		\times\left[(1-z_1)^{\tau_k}(1-u_k)^{\tau_1}\prod_{j=p_{\tau_k}}^{p_{\tau_1}-1}(z_1-z_{\pi_j})-(1-z_1)^{\tau_1}(1-u_k)^{\tau_k}\prod_{j=p_{\tau_k}}^{p_{\tau_1}-1}(u_k-z_{\pi_j})\right].
	\end{multline}
	We then are able to note that $u_k=z_1$ is a root of the last factor above, which makes the singularity in the transition probability \eqref{2TASEP Greens with P} at $u_k=z_1$ removable. The case of $\left(\sigma T_{1,k}\right)^{-1}_1 > \left(\sigma T_{1,k}\right)^{-1}_k$ can be performed in a similar manner which also yields the same result.
	
	We then may conclude that this cancels out for all terms associated with a permutation $\sigma$ and thus for the transition probability \eqref{2TASEP Greens with P} in general. Since this analysis was performed for arbitrary $k$ it follows for all $k$ which gives the base case.
	
	For the induction step we assume that the integration in the first $l-1$ $u$-variables has been carried out by evaluating residues of simple poles at $u_i=z_i$ for $1\leq i\leq l-1$. Now we look to integrate in $u_l$ where we may assume that singularities at $u_l = z_i$ for $1\leq i\leq l-1$ are removable, but it remains to show that the poles at $u_k = z_l$ for $l< k\leq m$ are removable. 
	
	In the same way as for the base case, we consider the term associated with the permutation $\sigma$ and the one associated with $\sigma T_{l,k}$ in the integrand of the transition probability. Similarly, without loss of generality we may choose $\sigma \in S_m$ such that $\sigma^{-1}_l > \sigma^{-1}_k$ and choose the case $\left(\sigma T_{l,k}\right)^{-1}_l < \left(\sigma T_{l,k}\right)^{-1}_k$. The sum of these terms will factorize and just as in \eqref{2TASEP u_i=z_i lemma base case 2} as
	\begin{multline}
		\label{2TASEP u_i=z_i lemme induction step 1}
		(1-z_l)^{l-\tau_l-\tau_k}(1-u_k)^{k-\tau_l-\tau_k} \prod_{j=1}^{p_{\tau_k}-1}\left[(z_l-z_{\pi_j})(u_k-z_{\pi_j})\right] \prod_{j=l}^k\frac{1}{u_k-z_j} \\
		\times\left[(1-z_l)^{\tau_k}(1-u_k)^{\tau_l}\prod_{j=p_{\tau_k}}^{p_{\tau_l}-1}(z_l-z_{\pi_j})-(1-z_l)^{\tau_l}(1-u_k)^{\tau_k}\prod_{j=p_{\tau_k}}^{p_{\tau_l}-1}(u_k-z_{\pi_j})\right],
	\end{multline}
	where $\tau = \sigma^{-1}$. We note that $u_k=z_l$ is a root of the last factor above so that $u_k=z_l$ is a removable singularity. Just as with the base case we may perform the same step for the case $\left(\sigma T_{l,k}\right)^{-1}_l > \left(\sigma T_{l,k}\right)^{-1}_k$ which yields the same factor. We then conclude by induction that the only non-removable singularities are those at $u_i=z_i$.
\end{proof}

\section{Proof of Lemma \ref{summation identity 1}}
\label{Geometric series proof appendix}

\begin{proof}[Proof of Lemma \ref{summation identity 1}]
	We proceed by induction on $m$. For $m=1$, \eqref{summation identity 1 equation} holds by evaluating the series
	\begin{equation}
	\label{Geometric series identity proof 1}
	\sum_{\nu_1 = s_2}^\infty z_1^{\nu_i} = \frac{z_1^{s_2}}{1-z_1}.
	\end{equation}
	For the general case we define
	\[f_{m,s_2}(z_1, \dots, z_m) = \sum_{\nu_m = s_2+m-1}^{\infty} \cdots  \sum_{\nu_1 = s_2}^{\nu_2-1}\prod_{i=1}^m z_i^{\nu_i}\]
	which is the left-hand side of equation \eqref{summation identity 1 equation}. Now note 
	\begin{equation}
	f_{m+1,s_2}(z_1, \dots, z_{m+1}) = \sum_{\nu_1=s_2}^\infty z_1^{\nu_1}f_{m,\nu_1+1}(z_2 ,\dots, z_{m+1}).
	\end{equation}
	We then assume \eqref{summation identity 1 equation}, which is our induction hypothesis for $m$. This gives
	\begin{equation}
	f_{m+1,s_2}(z_1, \dots, z_{m+1}) = \sum_{\nu_1=s_2}^\infty z_1^{\nu_1}\prod_{i=2}^{m+1}\left[  \frac{z_i^{\nu_1+i}}{1-\prod_{j=i}^{m+1} z_j}\right].
	\end{equation}
	The using the standard geometric series \eqref{Geometric series identity proof 1} for $z_1 \cdots z_{m+1}$, we arrive at 
	\begin{equation}
	f_{m+1,s_2}(z_1, \dots, z_{m+1}) = \prod_{i=1}^{m+1} \frac{z_i^{s_2+i-1}}{1-\prod_{j=i}^{m+1} z_j},
	\end{equation}
	which gives us the result by induction.
\end{proof}

\section{Proof of Symmetrization Identity Lemma \ref{symmetrization identity 1}}
\label{Symmetrization identity proof appendix}

\begin{proof}[Proof of Lemma \ref{symmetrization identity 1}]
	We proceed to prove the following statement by induction on $m$
	\begin{equation}
	\label{symmetrization proof statement}
	\sum_{\sigma\in S_m}(-1)^\abs{\sigma} \prod_{i=1}^m \left[\frac{z_{\sigma_i}^{i-1}(1-z_{\sigma_i})^{m-i+1}}{1-\prod_{j=i}^m z_{\sigma_j}}\right] = \prod_{1\leq i < j \leq m}\left(z_j-z_i\right),
	\end{equation}
	which is equivalent to \eqref{2TASEP Pcross 1 proof A}.
	
	The statement obviously holds for $m=1$. We then define $f_m(z_1,\dots,z_m)$ as the left-hand side of \eqref{symmetrization proof statement}. Then for each $\sigma \in S_m$ we define $k = \sigma_i$ and decompose the sum over permutations as a sum over $k$ and a sum over a subset of permutations $S^{(k)}_{m} \subset S_{m}$, which consists of all $\pi \in S_m$ such that $\pi_1=k$. Elements of $S_m^{(k)}$ may be regarded as permutations of $\{1,\dots,k-1,k+1,\dots,m\}$ so that $S_m^{(k)} \cong S_{m-1}$. We note that the signature of the permutation $(k,1,\dots,k-1,k+1,\dots,m)$ is $(-1)^{k-1}$, which gives the decomposition as 
	\begin{equation}
	f_m(z_1,\dots,z_m) = \frac{1}{1-\prod_{j=1}^m z_j}\sum_{k=1}^m(-1)^{k-1} (1-z_k)^m\sum_{\pi \in S^{(k)}_{m}} (-1)^{\abs{\pi}} \prod_{i=2}^m \left[\frac{z_{\pi_i}^{i-1}(1-z_{\pi_{i}})^{n-i+1}}{1-\prod_{j=i}^n z_{\pi_j}}\right].
	\end{equation}
	We then recognize that the sum over $S_n^{(k)}$ above as $f_{m-1}(z_1, \dots, z_{k-1},z_{k+1}, \dots, z_m) \prod_{i\neq k} z_i$ so that
	\begin{equation}
	\label{symmetrization proof 1}
	f_m(z_1,\dots,z_m) = {1-\prod_{j=1}^n z_j}\sum_{k=1}^m(-1)^{k-1} (1-z_k)^m  f_{m-1}(z_1, \dots, z_{k-1},z_{k+1}, \dots, z_m) \prod_{i\neq k} z_i.
	\end{equation}
	Then using the induction hypothesis and
	\[\prod_{\substack{1 \leq i<j \leq m\\ i,j\neq k}}(z_j-z_i) = (-1)^{k-1} \frac{\prod_{1 \leq i<j \leq j} (z_j-z_i)}{\prod_{i \neq k}(z_i-z_k)},\]
	equation \eqref{symmetrization proof 1} becomes
	\begin{equation}
	\label{symmetrization proof 2}
	f_m(z_1,\dots,z_m) = \frac{\prod_{1 \leq i<j \leq m}(z_j-z_i)}{1-\prod_{j=1}^m z_j}  \sum_{k=1}^m (1-z_k)^m \prod_{i \neq k}\frac{z_i}{z_i-z_k}.
	\end{equation}
	We then simplify the sum above by defining the integral
	\[
	\sum_{k=1}^m (1-z_k)^m \prod_{i \neq k}\frac{z_i}{z_i-z_k} = \prod_{i=1}^m z_i \oint_{\mathcal{C}} \frac{\dd w}{2\pi\ii w} \frac{(1-w)^m}{\prod_{i=1}^m(z_i-w)},
	\]
	where the closed contour $\mathcal{C}$ is positively oriented and encloses the all of the points $\{z_i\}_{i=1}^m \subset \bbC$ but not the origin. We may then reverse the direction of the contour so that we now have a contour surrounding simple poles at $w=0,\infty$
	\begin{equation}
	\prod_{i=1}^m z_i\oint_{\mathcal{C}} \frac{\dd w}{2\pi\ii w} \frac{(1-w)^m}{\prod_{i=1}^m(z_i-w)} = - \prod_{i=1}^m z_i\oint_{0,\infty} \frac{\dd w}{2\pi\ii w} \frac{(1-w)^m}{\prod_{i=1}^m(z_i-w)}. 
	\end{equation}
	We may then evaluate this integral using residue calculus. Firstly about the origin
	\[-\prod_{i=1}^m z_i\oint_{0} \frac{\dd w}{2\pi\ii w} \frac{(1-w)^m}{\prod_{i=1}^m(z_i-w)} = -1,\]
	and then about $\infty$ with a change of coordinates $w \mapsto v^{-1}$ 
	\[-\prod_{i=1}^m z_i\oint_{\infty} \frac{\dd w}{2\pi\ii w} \frac{(1-w)^m}{\prod_{i=1}^m(z_i-w)} = \prod_{i=1}^n z_i\oint_{0} \frac{\dd v}{2\pi\ii v} \frac{(v-1)^m}{\prod_{i=1}^m(z_iv-1)} = \prod_{i=1}^m z_i.\]
	Using this together with \eqref{symmetrization proof 2} gives the desired result. 
\end{proof}

\newpage

\bibliographystyle{amsplain} 
\addcontentsline{toc}{chapter}{Bibliography}
\bibliography{Library}

\end{document}